\documentclass[12pt]{article}

\usepackage{amsmath,amssymb,amsthm}
\usepackage{mathtools}
\usepackage[round,comma,authoryear]{natbib}

\usepackage{graphicx}
\usepackage{booktabs}
\usepackage{float}
\usepackage{algorithm}
\usepackage{algorithmic}
\usepackage{multirow}
\usepackage{threeparttable}

\usepackage[margin=1in]{geometry}
\usepackage{setspace}
\usepackage{enumitem}

\usepackage{xurl}
\usepackage{hyperref}

\mathtoolsset{showonlyrefs}

\floatstyle{ruled}
\restylefloat{algorithm}

\hypersetup{
    colorlinks=true,
    linkcolor=blue,
    citecolor=blue,
    urlcolor=blue
}

\newtheorem{theorem}{Theorem}[section]
\newtheorem{lemma}[theorem]{Lemma}
\newtheorem{proposition}[theorem]{Proposition}
\newtheorem{corollary}[theorem]{Corollary}

\theoremstyle{plain}
\newtheorem{assumption}{Assumption}

\theoremstyle{definition}
\newtheorem{definition}{Definition}

\theoremstyle{remark}
\newtheorem{remark}{Remark}

\newcommand{\corr}[1]{{\color{black}#1}}

\newcommand{\thetahat}{\widehat{\theta}}

\newcommand{\E}{\mathbb{E}}
\renewcommand{\P}{\mathbb{P}}
\newcommand{\R}{\mathbb{R}}
\newcommand{\N}{\mathbb{N}}

\newcommand{\Var}{\operatorname{Var}}
\newcommand{\Cov}{\operatorname{Cov}}

\DeclareMathOperator{\KL}{KL}

\newcommand{\Pnz}{P_n^0}

\newcommand{\PX}{P_X}
\newcommand{\QX}{Q_X}

\newcommand{\EP}{\E_P}
\newcommand{\EQ}{\E_Q}
\newcommand{\Em}{\E_m}
\newcommand{\En}{\E_n}
\newcommand{\Enk}[1]{\E_n^{(-#1)}}

\newcommand{\calC}{\mathcal{C}}
\newcommand{\calX}{\mathcal{X}}

\newcommand{\calI}{\mathcal{I}}
\newcommand{\calK}{\mathcal{K}}
\newcommand{\calT}{\mathcal{T}}

\newcommand{\rdag}{r^{\dagger}}

\newcommand{\muhat}{\hat{\mu}_0}
\newcommand{\rhat}{\hat{r}}
\newcommand{\that}{\widehat{\tau}}
\newcommand{\Lhat}{\widehat{L}}

\newcommand{\dr}{\delta r}
\newcommand{\dmu}{\delta\mu}

\newcommand{\norm}[2]{\left\|#1\right\|_{#2}}

\newcommand{\anon}{1}

\begin{document}

\if1\anon
  \title{\bf Transporting Randomized Trial Effects to Real-World Populations\\
via Riesz-Calibrated Optimal Transport}

  \author{
    Anik Burman$^{1}$,
    Margaret Gamalo$^{2}$,
    Promit Ghosal$^{3\ast}$,
    Prosenjit Kundu$^{2}$\thanks{Address for correspondence:
      \href{prosenjit.kundu@pfizer.com}{$^2$prosenjit.kundu@pfizer.com};
            \href{promit@uchicago.edu}{$^3$promit@uchicago.edu}.}\\[4pt]
    $^{1}$Department of Biostatistics, Johns Hopkins University\\
    $^{2}$Statistics, Inflammation \& Immunology, Pfizer Inc.\\
    $^{3}$Department of Statistics, University of Chicago
  }
  \date{}
  \maketitle
\fi

\if0\anon
  \bigskip\bigskip\bigskip
  \begin{center}
    {\LARGE\bf An Optimal-Transport Framework for Integrating
     RCT and Real-World Evidence}
  \end{center}
  \medskip
\fi

\begin{abstract}
\noindent
Randomized trials support causal inference, but differences between trial and target populations can limit the transportability of treatment effects to real-world settings. Many existing approaches model the propensity of trial participation and can therefore be sensitive to model misspecification and weak overlap of the covariate distributions. Optimal Transport (OT) offers a different route by comparing the trial and target populations directly in covariate space. We develop \textsc{RiCOT}, a Riesz-calibrated OT procedure transporting treatment effects to a treated target population. We consider a semi-unbalanced OT with entropic regularization where the source marginals are relaxed. We show that the uncalibrated OT introduces a bias which does not shrink with increasing sample size. \textsc{RiCOT} removes this bias by imposing calibration equations directly within the transport problem. With a growing calibration sieve, the calibrated weight consistently estimates the target-to-trial density ratio, equivalently the Riesz representer of the target expectation functional, even when the entropic and source-relaxation parameters remain fixed and positive. Combined with outcome regression, the resulting estimator is doubly robust and attains the semiparametric efficiency bound under suitable rate conditions. Its variance is estimated directly from the influence function, without resampling or repeated OT optimization. Simulations show low bias and near-nominal coverage across a range of overlap and misspecification settings, including settings in which sampling-score methods perform poorly. We illustrate \textsc{RiCOT} in a real-world application involving a rare progressive cardiomyopathy, comparing conventional IPW and AIPW estimators with our OT-based IPW and doubly robust estimators for transporting the randomized treatment effect to a real-world population receiving the same treatment.
\end{abstract}

\noindent%
{\it Keywords:} Causal inference; Double robustness; Entropic regularization;
Optimal transport; Real-world evidence; Riesz representer; Semiparametric efficiency;
Transportability.

\newpage
 
\section{Introduction}
\label{sec:intro}
 
Randomized controlled trials (RCTs) provide internally valid estimates of
treatment effects, but the patients enrolled in a trial may differ substantially
from those encountered in routine care
\citep{rothwell2005external, stuart2011use}. Eligibility restrictions,
selective participation, and under-representation of important subgroups can
limit the generalizability of trial results. Real-world data (RWD), including
electronic health records and registries, provide information on patients seen
in practice \citep{collins2020real}, but generally lack the randomization needed
for causal comparisons. Together, these challenges raise a natural question. What would the average causal treatment effect have been in the real-world population had treatment been randomized there?

We consider a two-sample setting in which an RCT provides baseline covariates, randomized treatment, and outcome, while an independent RWD sample represents a treated target population in which the same baseline covariates and the treated outcome are observed. Our target estimand is the average treatment effect in the treated real-world population, comparing the potential outcomes under treatment and control, denoted by $$\tau_Q
\coloneqq
\EQ\!\left[Y^{\mathrm Q}(1)-Y^{\mathrm Q}(0)\right],$$  where $Q$ denotes the law of the target observation, \(Y^{\mathrm Q}(a)\) denote the potential outcome under treatment level \(a\in\{0,1\}\) for the target individual. The outcome under treatment is observed in the target population, so the main identification problem is the corresponding counterfactual outcome under control. Under standard transportability conditions, this quantity can be obtained by averaging a trial control regression over \(Q_X\), or equivalently by reweighting the trial covariate law with the target-to-trial density ratio
$$\rdag\coloneq \frac{dQ_X}{dP_X},$$
 where $P_X$ and $Q_X$ denotes the trial and target covariate laws respectively.
The statistical problem is therefore not merely to align two covariate
distributions, but to estimate the particular population weight that makes
target expectations recoverable from the trial. This distinction is especially relevant when the population shift is modeled parametrically. The mechanism governing study participation is unknown, so it can be difficult to know in advance which main effects, interactions, or nonlinearities are needed in a sampling propensity score model. In regulatory settings, transparent specification of design and analysis choices is also important. Current FDA M14 guidance for non-interventional studies using RWD emphasizes early documentation of key design and analytic decisions and prespecified sensitivity analyses in the protocol \citep{fda2026m14}. Although that guidance is directed primarily to non-interventional safety studies, it notes that its principles may also be relevant to effectiveness studies. This motivates an analysis whose key design and tuning choices can be fixed in advance without requiring the correct parametric form of the study-participation mechanism.

We therefore seek an estimator that is robust to misspecification of the sampling propensity score model, adapts to imperfect overlap, and supports valid semiparametric inference. Optimal transport provides a natural starting point because it aligns the trial and target populations directly in covariate space. The methodological challenge is to ensure that the weights induced by regularized transport also target the density ratio required for efficient estimation of our target estimand. To set up our approach, we first consider how conventional weighting and balancing methods address the population shift and how optimal transport differs from them.
 
\subsection{Weighting, Balance, and Optimal Transport}

 The estimand fixes what the weight must accomplish, and existing approaches meet some of these aims but not all. The most direct is sampling propensity score weighting, which reweights trial participants toward the target through a scalar participation score \citep{stuart2011use,tipton2013improving,westreich2017transportability}. Outcome-regression methods instead model the conditional outcome in the trial and average the fitted values over the target population
 \citep{dahabreh2019extending}, while doubly robust methods combine the two strategies \citep{bang2005doubly,dahabreh2020extending}. Sampling propensity score weighting is appealing because it summarizes the population shift through a scalar score, but its performance depends on adequate specification of the sampling propensity score model. If that model fails to capture important nonlinearities or interactions, the resulting weights may converge to a misspecified limit, inducing bias in the transported treatment-effect estimate, while weak overlap can further amplify weight instability.

 Balancing methods take a different route, solving directly for weights that match chosen functions of the covariates between the two samples rather than fitting a propensity score model \citep{hainmueller2012,imai2014covariate,zubizarreta2015stable,chan2016globally}. This avoids committing to a score specification, but because balance is imposed only on a finite set of functions, the weight is constrained along those directions rather than across the full covariate law.

 Optimal transport (OT) addresses both limitations at once. Rather than reducing the covariates to a fitted score, it couples the trial and target populations directly in covariate space \citep{santambrogio2015optimal,peyre2019computational}. The mass leaving each trial observation induces a trial-side weight, a natural analogue of density-ratio weighting, and unlike local matching \citep{abadie2006large,portier2024nearest} the coupling is global rather than neighbourhood-by-neighbourhood. A source-relaxed entropic formulation sharpens this further. It is a semi-unbalanced problem in which the target marginal is held fixed, so the population defining the estimand does not move, while the trial marginal is relaxed so that poorly aligned trial observations may receive little mass. Entropic regularization smooths the coupling and makes Sinkhorn computation practical. Geometric alignment of this kind is exactly what we want to retain. It is not, however, the same as the weighting condition inference requires. Aligning the two populations well in covariate space does not guarantee that the induced weight is the density ratio the transported estimand depends on.
 
\subsection{The Inferential Gap}

At fixed entropic and source-relaxation parameters, regularized OT generally induces a density-ratio estimator that differs from the true target-to-trial density ratio. This discrepancy is a population-level bias and therefore does not disappear as the sample size increases. Although reducing the regularization can lessen this bias, taking the entropic parameter very small can lead to less favorable statistical and computational behavior
\citep{genevay2019sample,mena2019statistical,weed2019sharp}. Thus, increasing the sample size alone is not sufficient to recover the density ratio required for transporting the treatment effect.

This motivates imposing additional structure on the transport weights. The target-to-trial density ratio \(r^\dagger\) is characterized by the Riesz moment conditions
\begin{equation}
\E_{P_X}\bigl[\rdag(X)h(X)\bigr]
 =
\E_{Q_X}\bigl[h(X)\bigr],
\qquad h\in L^2(P_X).
\label{eq:riesz-intro}
\end{equation}
for suitable functions $h$.
These conditions have a direct balancing interpretation. After weighting the trial by \(r^\dagger\), averages of functions of the covariates reproduce their corresponding averages in the target population. The same density ratio also appears as the Riesz representer in the efficient influence function for
the transported treatment effect.

Regularized OT does not enforce these moment conditions directly because its weights are determined by a tradeoff between transport cost and regularization. We therefore calibrate the transport weights to the Riesz moment conditions. With a sufficiently rich calibration space, this removes the regularization-induced distortion and recovers the target-to-trial density ratio while allowing the entropic and source-relaxation parameters to remain
fixed and positive. The calibration step therefore closes the inferential gap between geometric alignment of the two populations and the weighting conditions required for semiparametric inference.
We summarize our contributions below.

\subsection{Contributions}
We develop \textsc{RiCOT}, a Riesz-calibrated semi-unbalanced (source-relaxed) entropic OT
procedure for transporting randomized treatment effects to a treated target population. The contribution is not entropic OT or calibration separately, but the characterization of the inferential distortion induced by regularized transport and a calibration mechanism that removes that distortion while retaining fixed positive transport regularization.

\textit{First}, we derive an exact nonlinear fixed-point representation of the population trial-side weight and use it to obtain a local regularization-bias expansion. The expansion separates the source-relaxation contribution from the entropic-smoothing contribution. We then translate the resulting weight error into an explicit violation of the Riesz moment equations. This identifies, at the population level, why a raw regularized transport weight need not be the weight required by the transported estimand.

\textit{Second}, we impose the Riesz equations directly within the transport program. The calibration multiplier appears as a multiplicative exponential tilt of a transport-generated offset, which preserves positivity and leads to a finite-dimensional convex estimation problem. With a growing calibration sieve, the calibrated weight converges in \(L^2(P_X)\) to \(\rdag\) even when the entropic and source-relaxation parameters remain fixed and positive.

\textit{Third}, we connect the calibrated weight to semiparametric estimation of the target estimand. The efficient influence function and an exact product remainder yield double robustness and the usual product-rate requirement for root-sample-size inference. We establish asymptotic normality and consistency of a direct influence-function variance estimator, and extend the construction to incidence-rate estimands.

\subsection{Related Work}

Our work connects the literatures on trial generalization, balancing and
calibration, causal optimal transport, unbalanced transport, and direct
estimation of Riesz representers.

Methods for extending randomized-trial effects to target populations include outcome standardization, inverse probability of sampling weighting, and doubly robust combinations of the two
\citep{cole2010generalizing,stuart2011use,tipton2013improving,
westreich2017transportability,dahabreh2019extending,
dahabreh2020extending}. Related work studies the use of observational or
population samples to adjust randomized trials for differences between trial participants and target populations \citep{hartman2015sate,lee2023improving}. These approaches establish the
identification and semiparametric foundations on which our inferential
construction builds. Our focus differs in that the principal nuisance object is estimated through a regularized transport problem rather than through an explicit model for the sampling propensity.

Balancing and calibration methods instead estimate weights by directly
matching functions of the covariates. Important examples include entropy
balancing, covariate balancing propensity scores, stable balancing weights, and globally efficient calibration estimators
\citep{hainmueller2012,imai2014covariate,zubizarreta2015stable,
chan2016globally}. Calibration has also been developed specifically for
transportability and trial generalization
\citep{josey2021transporting,lee2023improving,chen2023entropy}.
In particular, \citet{lee2023improving} propose calibration weights for
combining randomized and observational samples and develop a doubly robust, locally efficient augmented estimator with nonparametric sieve extensions. \citet{chen2023entropy} study entropy balancing for causal generalization when target information is available through covariate summaries. These results show that calibration itself is not new in trial generalization. Its role here is different. We begin with a weight induced by source-relaxed entropic transport, characterize the population distortion created by the transport regularization, and use the Riesz equations to correct that distortion while retaining the transport structure.

Our construction also builds on entropic and unbalanced OT. Entropic
regularization enables efficient Sinkhorn computation \citep{cuturi2013sinkhorn,peyre2019computational}, while unbalanced OT relaxes marginal constraints through divergence penalties
\citep{chizat2018unbalanced}. The source-relaxed formulation used here is a semi-unbalanced problem in which only the trial marginal is relaxed because the target distribution defines the estimand and must remain fixed. These OT ingredients are therefore not themselves contributions of the present paper. Our interest is in the statistical target of the trial-side marginal and in what fixed regularization implies for estimation of the target-to-trial density ratio.

Within the causal OT literature, \citet{dunipace2021optimal} is closest to the present work. Causal Optimal Transport constructs weights by minimizing a Sinkhorn discrepancy between weighted empirical distributions and can additionally impose balance restrictions on prespecified covariate functions. It connects OT weighting to importance weighting, distributional balance, and semiparametric causal estimation. \citet{gunsilius2022matching} use multimarginal unbalanced OT for causal matching, allowing poorly matched observations to receive less mass, while related work has developed OT methods for other causal estimands and counterfactual distributional problems \citep{torous2024optimal,gunsilius2025primer}. More recently,
\citet{yan2024reducing} relate Wasserstein discrepancy to causal balancing
error and jointly learn transport weights and representations. Our distinction from this literature is not the use of OT or covariate restrictions by themselves. We study the population weight produced by source-relaxed entropic OT at fixed regularization, quantify its discrepancy from the Riesz representer required by the transported estimand, and develop growing Riesz calibration that removes this discrepancy without requiring the transport regularization to vanish.

Finally, our analysis is related to direct estimation of Riesz representers. \citet{chernozhukov2021automatic} develop Riesz regression for automatic debiased machine learning, while \citet{chernozhukov2026adversarial} develop adversarial estimators of Riesz representers over general function classes with semiparametric inference guarantees. In our setting, the relevant Riesz representer has the explicit interpretation \(r^\dagger=dQ_X/dP_X\). Rather than estimating it through a separate regression or adversarial criterion, we begin with the weight induced by regularized OT and calibrate that weight to the Riesz equations. The resulting estimator therefore combines the geometric structure of transport with the moment representation required for efficient causal inference.
 
\subsection{Organization}
Section~\ref{sec:identification} establishes identification of the target
estimand, and Section~\ref{sec:eif} derives the efficient influence function and exact remainder. Section~\ref{sec:riesz} develops the Riesz-calibrated OT weight, followed by estimation and inference in Section~\ref{sec:estimation} and computation and tuning in Section~\ref{sec:implementation}.
Section~\ref{sec:rate} gives the rate-scale extension,
Sections~\ref{sec:sims} and \ref{sec:realdata} present the simulation and data analyses, and Section~\ref{sec:discussion} concludes. Proofs and the formal versions of all statements are in the appendices, whose organization is
described at the head of Appendix~\ref{app:A}.
 
\section{Causal Identification}
\label{sec:identification}
 
We observe two independent samples. The randomized trial consists of
\(O_i^{\mathrm R}=(X_i,T_i,Y_i)\), \(i=1,\ldots,n\), drawn independently from
\(P\), where \(X_i\) denotes baseline covariates, \(T_i\in\{0,1\}\) is
randomized treatment, and \(Y_i=Y_i(T_i)\) is the observed outcome. The target
RWD sample consists of \(O_j^{\mathrm Q}=(X^Q_j,Y_j^{\mathrm Q})\),
\(j=1,\ldots,m\), drawn independently from \(Q\); all target individuals are
treated, so \(Y_j^{\mathrm Q}=Y_j^{\mathrm Q}(1)\).
Our target estimand is
\begin{equation}
\tau_Q
\coloneqq
\EQ\!\left[Y^{\mathrm Q}(1)-Y^{\mathrm Q}(0)\right],
\label{eq:tauQ}
\end{equation}
the average treatment effect in the treated real-world target population.
Expectations under the two laws are written \(\EP\) and \(\EQ\), with
\((X,T,Y)\) reserved for a generic trial observation and
\((X^Q,Y^{\mathrm Q})\) for a generic target observation; thus \(\EP[f(X)]\)
integrates \(f\) against the trial covariate law and \(\EQ[f(X)]\) against the
target covariate law. Their empirical counterparts are the sample averages denoted by 
\(\E_n\) and \(\E_m\). Set \(N=n\wedge m\) and suppose
\(n/m\to\eta\in(0,\infty)\), so that the two samples grow at comparable rates.
For the trial, define $e_0(x)\coloneqq \P(T=0\mid X=x)$ to be the randomization probability of being assigned to control,
known by design in a trial; $A \coloneq \frac{1-T}{e_0(X)}$ is the corresponding inverse-probability
weight, which is zero on the treated arm and \(1/e_0(X)\) on the control arm;
and $\mu_0(x) \coloneqq \EP(Y\mid X=x,\,T=0)$ is the \emph{control response surface}, the object the trial
supplies and the target sample cannot.
 
\begin{assumption}[Causal identification]\label{ass:transport}
(i) \emph{Consistency:} \(Y=TY(1)+(1-T)Y(0)\) in the trial, and
\(Y^{\mathrm Q}=Y^{\mathrm Q}(1)\) in the target sample Q-almost surely. 
(ii)  \corr{$0<\underline e \le \overline e<1$},
\corr{$\underline e\le e_0(x) \le \overline e$} for $P_X$-almost every $x$. 
(iii) \emph{Transportability of the control response:} \(\EQ[Y^{\mathrm Q}(0)\mid X^Q=x]=\mu_0(x)\) for \(Q\)-almost every \(x\).
(iv)  \emph{Population overlap:} the target covariate law $Q_X$ is absolutely
continuous with respect to the trial covariate law $P_X$, with bounded density ratio
\(\rdag\). 
\end{assumption}
\corr{Moment conditions sufficient for the limit theory are collected in Assumption~\ref{ass:id} of the Supplementary Material, of which this assumption is an informal counterpart.}
 Conditions (i)--(ii) are internal to the trial and identify $\mu_0$ from
randomized data. Condition (iii) is the only cross-population assumption; it
restricts the control arm alone, because the treated response in the target
population is observed. Condition (iv) guarantees that $\mu_0$, identified
only where the trial has covariate support, is defined wherever the target
population lives. \corr{Randomization of treatment within the trial holds by
design and is therefore not listed as a separate assumption: $\mu_0$ is
defined from the observed data as the control-arm regression, and
randomization is what gives it the causal reading
$\mu_0(x)=\EP\{Y(0)\mid X=x\}$, making (iii) a substantive transportability
condition rather than a definition.}
 
\begin{proposition}[Identification]
\label{prop:id}
Under Assumption~\ref{ass:transport},
\begin{equation}
\tau_Q
=
\EQ\bigl[Y^{\mathrm Q}\bigr]-\EQ\bigl[\mu_0(X^Q)\bigr]
\label{eq:identification}
\end{equation}
\end{proposition}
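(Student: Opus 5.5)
By linearity of expectation, I will split the estimand as
\[
\tau_Q=\EQ[Y^{\mathrm Q}(1)]-\EQ[Y^{\mathrm Q}(0)]
\]
and identify the two terms separately. This requires both expectations to be finite, which is part of the moment conditions in Assumption~\ref{ass:id} of the Supplementary Material.

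For the first term, consistency (Assumption~\ref{ass:transport}(i)) gives $Y^{\mathrm Q}=Y^{\mathrm Q}(1)$ $Q$-almost surely. Hence $\EQ[Y^{\mathrm Q}(1)]=\EQ[Y^{\mathrm Q}]$, which is a functional of the observed target law.

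For the second term, I will use the tower property:
\[
\EQ[Y^{\mathrm Q}(0)]=\EQ\bigl[\EQ[Y^{\mathrm Q}(0)\mid X^Q]\bigr].
\]
By transportability (iii), the inner conditional expectation equals $\mu_0(X^Q)$ for $Q$-almost every value of $X^Q$. This yields $\EQ[Y^{\mathrm Q}(0)]=\EQ[\mu_0(X^Q)]$, and subtracting the two pieces gives \eqref{eq:identification}.

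The remaining work is to confirm that the right-hand side depends only on observed-data laws. This is the one delicate step.

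\begin{itemize}
\item $\mu_0(x)=\EP(Y\mid X=x,T=0)$ is a regression of the observed trial outcome. It is well defined $P_X$-almost everywhere because positivity (ii) gives $\P(T=0\mid X=x)\ge\underline e>0$, so conditioning on the control arm is legitimate on the trial covariate support.
\item The composition $\mu_0(X^Q)$ must be unambiguous under $Q$. By overlap (iv), $Q_X\ll P_X$, so any two versions of $\mu_0$ that agree $P_X$-a.e.\ also agree $Q_X$-a.e. The value of $\EQ[\mu_0(X^Q)]$ therefore does not depend on the version chosen.
\item Integrability of $\mu_0(X^Q)$ under $Q$ follows by writing
\[
\EQ|\mu_0(X^Q)|=\EP[\rdag(X)|\mu_0(X)|]\le \|\rdag\|_\infty\,\EP|\mu_0(X)|,
\]
where the last quantity is finite by the trial moment condition.
\end{itemize}

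Taken together, these points show that \eqref{eq:identification} expresses $\tau_Q$ purely in terms of $Q$ and $P$.

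I expect no genuine obstacle. The only point requiring care is this measure-theoretic bookkeeping with overlap (iv), since assumption (iii) is stated $Q$-a.e.\ while $\mu_0$ is only pinned down $P_X$-a.e. Randomization is not needed for the identity itself, only for the causal interpretation of $\mu_0$.
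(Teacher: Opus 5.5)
Your proof is correct and follows the paper's argument essentially step for step. You use consistency for the treated arm, the tower property plus transportability (iii) for the control arm, and overlap to ensure that $\mu_0$, pinned down only $P_X$-a.e.\ by positivity, is unambiguous $Q_X$-a.e. The only cosmetic difference is integrability: you bound $\EQ|\mu_0(X^Q)|$ by $\|\rdag\|_\infty\,\EP|\mu_0(X)|$, whereas the paper uses the boundedness of $\mu_0$ in Assumption~\ref{ass:id-mom} directly and derives $\EQ|Y^{\mathrm Q}|<\infty$ from the $(2+\delta)$ moment.
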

 
The treated arm of \eqref{eq:tauQ} is observed in the
population where the estimand is defined, so consistency gives
\(\EQ[Y^{\mathrm Q}(1)]=\EQ[Y^{\mathrm Q}]\) directly. For the control arm,
condition on the target covariate and replace the unobserved conditional mean
by the trial regression $\EQ\bigl[Y^{\mathrm Q}(0)\bigr]
=\EQ\Bigl[\EQ\bigl\{Y^{\mathrm Q}(0)\mid X^Q\bigr\}\Bigr]
=\EQ\bigl[\mu_0(X^Q)\bigr],$ the first equality by the tower property and the second by
Assumption~\ref{ass:transport}(iii). Randomization\corr{ --- which holds by design ---} and positivity identify
\(\mu_0\) from the trial, and overlap makes it defined \(Q\)-almost everywhere.
See Proposition~\ref{prop:C-id} for the detailed proof. 
 
Proposition~\ref{prop:id} reduces identification to an observed treated mean in
the target population and a trial control regression averaged under the target
covariate law. The next section asks how efficiently that average can be
estimated, and thereby isolates the weighting function the problem requires.
 
\section{Efficient Estimation and the Riesz Representer}
\label{sec:eif}
 
\subsection{Influence functions and the efficiency bound}
 
We start with recalling the notions of the influence bound and the efficiency bound used throughout. An
estimator \(\hat\tau\) of \(\tau_Q\) is \emph{asymptotically linear} with
\emph{influence function} \(\varphi\) if it can be written, up to a negligible
term, as a sample average of a fixed function of the data $\{O_i, i = 1,\ldots,N\}$. i.e.
$$\hat\tau-\tau_Q=N^{-1}\sum_i\varphi(O_i)+o_p(N^{-1/2}), \qquad \text{with}\qquad \E[\varphi]=0.$$ The influence function determines the limit distributions, and by the central limit theorem, we then get asymptotic variance
\(\Var\{\varphi\}\). So comparing estimators amounts to comparing influence
functions.
 
In a \emph{nonparametric} model, where the data distribution is unrestricted,
there is a unique influence function that is a derivative of the target
parameter along every smooth one-dimensional submodel through the truth. It is
called the \emph{efficient influence function}, and its variance is the
\emph{semiparametric efficiency bound}.
Because our data arrive in two independent samples, one from RCT and the other from the RWD, the influence function has
two components, one per sample, and the bound combines them with a weight
reflecting the relative sample sizes.
 
The second notion is representation. If \(L\) is a bounded linear functional on
a Hilbert space \(H\), the Riesz representation theorem supplies a unique
\(r\in H\) with \(L(h)=\langle r,h\rangle\) for all \(h\in H\); \(r\) is the
\emph{Riesz representer} of \(L\). Take \(H=L^2(P_X)\) with inner product
\(\langle f,g\rangle=\E_{P_X}[f(X)g(X)]\), and let \(L\) be the functional that
averages a covariate function over the \emph{target} population,
\(L(h)=\E_{Q_X}[h(X)]\). Under Assumption~\ref{ass:transport}(iv) this functional is
bounded, and comparing with \eqref{eq:riesz-intro} shows that its Riesz
representer is exactly the density ratio \(\rdag\). Later Lemma~\ref{lem:A-riesz}
also shows it is the \emph{only} element of \(L^2(P_X)\) with that property. This
uniqueness is used in Section~\ref{sec:estimation}.

\subsection{The efficient influence function}
Related efficient-influence-function constructions for transported trial
effects are given by \citet{dahabreh2019extending,dahabreh2020extending}; the
present two-sample structure differs because the treated outcome is observed
directly in the target sample.
\begin{theorem}[Efficient Influence Function]
\label{thm:eif}
Under Assumption~\ref{ass:transport}, the efficient influence function for
\(\tau_Q\) has target and trial components
\[
\varphi_Q(O^{\mathrm Q})\coloneqq Y^{\mathrm Q}-\mu_0(X^Q)-\tau_Q,
\qquad
\varphi_P(O^{\mathrm R})\coloneqq -A\,\rdag(X)\{Y-\mu_0(X)\},
\]
and, under \(n/m\to\eta\in(0,\infty)\), the efficiency bound for
\(\sqrt m\,(\hat\tau-\tau_Q)\) is
\[
V_{\mathrm{eff}}
\coloneqq
\Var_Q\bigl[\varphi_Q(O^{\mathrm Q})\bigr]
+\eta^{-1}\Var_P\bigl[\varphi_P(O^{\mathrm R})\bigr].
\]
\end{theorem}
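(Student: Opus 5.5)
The plan is to work in the nonparametric two-sample model, in which $P$ and $Q$ vary independently apart from the restrictions that are part of the design. The functional is $\tau_Q=\EQ[Y^{\mathrm Q}]-\EQ[\mu_0(X^Q)]$ from Proposition~\ref{prop:id}. It depends on $Q$ through the joint law of $(X^Q,Y^{\mathrm Q})$ and on $P$ only through $\mu_0$. I would use the standard pathwise-derivative argument. Take a smooth submodel $(P_t,Q_t)$ with scores $s_P(O^{\mathrm R})$ and $s_Q(O^{\mathrm Q})$, both mean zero. Because the samples are independent, the tangent space is the orthogonal direct sum of the $P$- and $Q$-tangent spaces. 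Hence it suffices to find mean-zero $\varphi_Q$ and $\varphi_P$ such that
\[
\frac{d}{dt}\tau_Q(P_t,Q_t)\Big|_{t=0}=\EQ[\varphi_Q s_Q]+\EP[\varphi_P s_P].
\]

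\textbf{Step 1 ($Q$-direction).} Holding $P$ fixed, $t\mapsto \EQ_t[Y^{\mathrm Q}-\mu_0(X^Q)]$ has derivative $\EQ[\{Y^{\mathrm Q}-\mu_0(X^Q)\}s_Q]$. Centering by $\tau_Q$ leaves this unchanged, since $s_Q$ has mean zero. This gives $\varphi_Q$.

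\textbf{Step 2 ($P$-direction).} Holding $Q$ fixed, the derivative is $-\EQ[\dot\mu_{0,t}(X^Q)]$, where $\dot\mu_{0,t}$ is the derivative of the control regression along the path. I would factor $s_P=s_X(X)+s_{T\mid X}(T,X)+s_{Y\mid T,X}(Y,T,X)$. The propensity $e_0$ is known by design, so $s_{T\mid X}=0$ (and including it would not change the result anyway). The usual regression computation gives
\[
\dot\mu_0(x)=\EP\bigl[\{Y-\mu_0(X)\}s_{Y\mid T,X}\mid X=x,T=0\bigr].
\]
Next I convert the $Q_X$-average into a $P$-average. This is where the Riesz representation of Section~\ref{sec:eif} and Assumption~\ref{ass:transport}(iv) enter: $\EQ[\dot\mu_0(X^Q)]=\EP[\rdag(X)\dot\mu_0(X)]$. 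Then I use the identity $\EP[g(X,Y)\mid X,T=0]=\EP[A\,g(X,Y)\mid X]$, which holds because $\EP[(1-T)\mid X]=e_0(X)$ and which needs positivity, Assumption~\ref{ass:transport}(ii). Combining these,
\[
-\EQ[\dot\mu_0(X^Q)]=\EP\bigl[-A\,\rdag(X)\{Y-\mu_0(X)\}s_{Y\mid T,X}\bigr].
\]
Finally I replace $s_{Y\mid T,X}$ by the full score $s_P$. This is valid because $\varphi_P=-A\rdag\{Y-\mu_0\}$ has conditional mean zero given $(X,T)$, so it is orthogonal to $s_X$ and to $s_{T\mid X}$. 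It also shows $\EP[\varphi_P]=0$.

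\textbf{Step 3 (efficiency and variance).} $\varphi_Q$ is an arbitrary mean-zero element of $L^2(Q)$. $\varphi_P$ lies in the $P$-tangent space: it is a conditional-mean-zero function of $(Y,T,X)$, and such functions are unrestricted scores for $Y\mid T,X$ when the model is nonparametric. The gradient is therefore the canonical one, i.e.\ the efficient influence function. The bound follows from the two-sample expansion
\[
\hat\tau-\tau_Q=\Em[\varphi_Q]+\En[\varphi_P]+o_p(N^{-1/2}).
\]
Multiplying by $\sqrt m$, the first term has variance $\Var_Q[\varphi_Q]$, and the second has variance $(m/n)\Var_P[\varphi_P]\to\eta^{-1}\Var_P[\varphi_P]$; independence of the samples removes the cross term. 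Square-integrability comes from bounded $\rdag$, $e_0\ge\underline e$, and the moment conditions in Assumption~\ref{ass:id}.

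\textbf{Main obstacle.} The delicate point is the two-sample efficiency notion itself. One has to justify that the bound for $\sqrt m(\hat\tau-\tau_Q)$ is the $\eta$-weighted sum. I would handle this by viewing the pooled data as one i.i.d.\ experiment with a sample-label indicator having limiting proportion $\eta/(1+\eta)$, or by citing the standard convolution theorem for independent samples with product tangent space. A secondary point is showing that $\varphi_P$ really lies in the closure of the tangent space given that $e_0$ is known; this follows because the $Y\mid T,X$ component is left unrestricted.
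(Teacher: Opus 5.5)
Your proposal is correct and follows essentially the same route as the paper: you split the pathwise derivative into target and trial channels, move $\EQ[\dot\mu_0]$ to the trial law with the Riesz identity, turn the control-arm conditional expectation into an $A$-weighted one by inverse randomisation, and conclude efficiency because $\varphi_P$ has conditional mean zero given $(X,T)$ and so lies in the outcome-score summand (which known $e_0$ leaves intact), with $V_{\mathrm{eff}}$ coming from independence and the two-sample convolution theorem. The differences are minor: the paper differentiates $\mu_{0,t}$ against the full score $\zeta_P$ rather than $s_{Y\mid T,X}$, and it explicitly checks $\dot\mu_0\in L^2(P)$ before invoking the Riesz identity, via $\dot\mu_0^2\le C_Y\,\EP[\zeta_P^2\mid X,T=0]$ together with positivity; you should add that admissibility check.
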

 
Perturbing the target law moves the outer expectation in \eqref{eq:identification} and yields \(\varphi_Q\) immediately. Perturbing the trial law moves only \(\mu_0\), and the resulting derivative \(\dot\mu_0\) must be transferred to the trial sample. The Riesz identity \eqref{eq:riesz-intro} carries the average from the target law to the trial law at the cost of a factor \(\rdag\), and the inverse-randomization identity of Lemma~\ref{lem:A-invrand} turns a conditional expectation given \(\{X,T=0\}\) into an unconditional one weighted by \(A\),
giving $\EQ\bigl[\dot\mu_0(X)\bigr]
=\EP\bigl[\rdag(X)\,\dot\mu_0(X)\bigr]
=\EP\bigl[A\,\rdag(X)\{Y-\mu_0(X)\}\,\zeta_P\bigr]$
where \(\zeta_P\) is the score of the trial submodel. The right-hand side is
\(-\EP[\varphi_P\zeta_P]\), which identifies \(\varphi_P\). See
Lemma~\ref{lem:C-deriv} and Theorem~\ref{thm:C-eif} and their proofs for more details.
 
Two features of Theorem~\ref{thm:eif} shape everything that follows. The trial
component is a weighted residual, so an estimator built from it will need an
estimate of \(\mu_0\). The weight in that component is \(\rdag\), and by
Remark~\ref{rem:C-riesz} it is there precisely as the Riesz representer of the
target averaging functional. 
 
\subsection{The estimating functional and its exact error}
 
Theorem~\ref{thm:eif} suggests estimating \(\tau_Q\) by the sample analogue of
the following functional, in which \(\mu\) and \(r\) stand for arbitrary
candidate values of the two nuisance functions \(\mu_0\) and \(\rdag\):
\begin{equation}
\Psi(\mu,r)
=
\EQ\bigl[Y^{\mathrm Q}-\mu(X^Q)\bigr]
-
\EP\bigl[A\,r(X)\{Y-\mu(X)\}\bigr].
\label{eq:onestep-functional}
\end{equation}
The first term standardizes the outcome model over the target population; the
second is a weighted average of trial control residuals, and it is what
corrects the first term when the outcome model is imperfect.
 
\begin{proposition}[Exact second-order remainder]
\label{prop:remainder}
Under Assumption~\ref{ass:transport}, for square-integrable \((\mu,r)\),
\begin{equation}
\Psi(\mu,r)-\tau_Q
=
\EP\bigl[\{r(X)-\rdag(X)\}\{\mu(X)-\mu_0(X)\}\bigr].
\label{eq:exact_remainder}
\end{equation}
\end{proposition}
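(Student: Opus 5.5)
The plan is to evaluate the two terms of \eqref{eq:onestep-functional} separately, reduce each to an expectation of covariate functions, and then collect terms. The tools are Proposition~\ref{prop:id}, the inverse-randomization identity of Lemma~\ref{lem:A-invrand}, and the Riesz identity \eqref{eq:riesz-intro}.

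\emph{Target term.} By Proposition~\ref{prop:id}, $\tau_Q=\EQ[Y^{\mathrm Q}]-\EQ[\mu_0(X^Q)]$. Hence
\[
\EQ\bigl[Y^{\mathrm Q}-\mu(X^Q)\bigr]-\tau_Q=\EQ\bigl[\mu_0(X^Q)-\mu(X^Q)\bigr].
\]
This is well defined because $\mu,\mu_0\in L^2(P_X)$ and $\rdag$ is bounded (Assumption~\ref{ass:transport}(iv)). Together these make every $L^2(P_X)$ function integrable under $Q_X$.

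\emph{Trial term.} I would condition on $X$ and apply Lemma~\ref{lem:A-invrand}, which gives $\EP[A\,g(X,Y)\mid X]=\EP[g(X,Y)\mid X,T=0]$. The factor $r(X)$ is a function of $X$ and pulls out of the conditional expectation, so
\[
\EP\bigl[A\,r(X)\{Y-\mu(X)\}\bigr]=\EP\bigl[r(X)\{\mu_0(X)-\mu(X)\}\bigr].
\]
Here the bounds $\underline e\le e_0\le\overline e$ of Assumption~\ref{ass:transport}(ii) keep $A$ bounded. The product $r\,(Y-\mu)$ is then integrable by Cauchy--Schwarz, provided $Y$ has a finite second moment on the control arm. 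I take that moment condition from the supplementary Assumption~\ref{ass:id}, so Fubini and the tower property are justified.

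\emph{Combining.} Apply the Riesz identity \eqref{eq:riesz-intro} with $h=\mu_0-\mu\in L^2(P_X)$ to rewrite the target term:
\[
\EQ[\mu_0-\mu]=\EP\bigl[\rdag(X)\{\mu_0(X)-\mu(X)\}\bigr].
\]
Subtracting the trial term then gives
\[
\Psi(\mu,r)-\tau_Q=\EP\bigl[\{\rdag-r\}\{\mu_0-\mu\}\bigr]=\EP\bigl[\{r-\rdag\}\{\mu-\mu_0\}\bigr],
\]
which is \eqref{eq:exact_remainder}.

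The only delicate step is the integrability bookkeeping. This means checking that $h=\mu_0-\mu$ lies in $L^2(P_X)$ so that \eqref{eq:riesz-intro} applies, and that the conditional-expectation step is legitimate. Boundedness of $\rdag$ and of $1/e_0$, together with the moment conditions in Assumption~\ref{ass:id}, handle both. The algebra itself is routine.
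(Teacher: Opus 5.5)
Your proposal is correct and follows essentially the same route as the paper's proof of Proposition~\ref{prop:C-rem}. The paper likewise rewrites the trial term via \eqref{eq:invrand2} with $h=r$, which holds for any $r$, and subtracts $\tau_Q$ using Proposition~\ref{prop:C-id}. It then converts $\EQ[\mu_0-\mu]$ via the Riesz identity with $h=\mu_0-\mu$, which holds for any $\mu$, and the integrability bookkeeping is supplied by Assumption~\ref{ass:id-mom} (bounded $\mu_0$ and the control-arm residual bound) exactly as you indicate.
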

 
 Randomization converts the trial term
into an average of \(r\) against the outcome-model error, whatever \(r\) may
be, and the Riesz identity converts the target term into an average of
\(\rdag\) against the same error, whatever \(\mu\) may be; subtracting leaves a
product:
\begin{equation*}
\underbrace{\EQ\bigl[\mu_0(X^Q)-\mu(X^Q)\bigr]}_{=\;\EP[\rdag(\mu_0-\mu)]}
\;-\;
\underbrace{\EP\bigl[A\,r(X)\{Y-\mu(X)\}\bigr]}_{=\;\EP[r(\mu_0-\mu)]}
\;=\;
\EP\bigl[(\rdag-r)(\mu_0-\mu)\bigr].
\end{equation*}
See Proposition~\ref{prop:C-rem} and its proof for more details. 

Notice that the identity \eqref{eq:exact_remainder} is exact. There is no first-order term and no expansion. Three
consequences, drawn out in Remark~\ref{rem:C-product}, organize the rest of the
paper. The error vanishes if \emph{either} factor does, which is double
robustness. Its size is the \emph{product} of the two nuisance errors, so
root-\(N\) inference needs only that product to be small, not each factor
separately. And \eqref{eq:exact_remainder} is the same identity whether \(r\)
comes from a fitted sampling score or from a transport plan, so the comparison
between such procedures reduces entirely to which approximates \(\rdag\) better
in \(L^2(P)\). The next section builds a transport-based estimate of \(\rdag\).

\section{Riesz-Calibrated Entropic Optimal Transport}
\label{sec:riesz}
\subsection{Couplings, entropic regularization, and smoothing}
 
A \emph{coupling} of the trial and target covariate distributions is a joint
distribution on \(\calX\times\calX\) whose two marginals are those
distributions; we represent it by a density \(\pi(x,z)\ge0\) with marginals
\(\pi_1\) and \(\pi_2\) as in \eqref{eq:marginals}. Optimal transport selects
the coupling minimizing an average cost \(\iint c(x,z)\pi(x,z)\,dx\,dz\). We will be using \(c(x,z)=\norm{x-z}{2}^2\) throughout.
 
Two modifications are needed. First, the linear program above is expensive and
its solution is a sparse, unstable object, so one adds an \emph{entropic
penalty}: for nonnegative functions \(f\) and \(g\) the Kullback--Leibler
divergence \(\KL(f\|g)\) of \eqref{eq:genKL} measures how far \(f\) is from
\(g\), and penalizing \(\KL(\pi\|p\otimes q)\) with weight \(\varepsilon>0\)
makes the problem strictly convex and solvable by the Sinkhorn algorithm
\citep{cuturi2013sinkhorn}. Second, as explained in
Section~\ref{sec:intro}, we hold the target marginal fixed at \(q\) and relax
the trial marginal by a further penalty \(\rho\,\KL(\pi_1\|p)\) with weight
\(\rho\ge0\). The resulting semi-unbalanced entropic problem is given as follows.
\begin{definition}[Unbalanced entropic transport]
\label{def:D-ubot}
Fix $\varepsilon>0$ and $\rho\ge0$. The unbalanced entropic plan is the
nonnegative density $\pi$ solving
\begin{equation}
  \min_{\pi\ge0,\ \pi_2=q}
  \Bigl\{\iint c(x,z)\,\pi(x,z)\,dx\,dz
        +\varepsilon\,\KL\bigl(\pi\,\|\,p\otimes q\bigr)
        +\rho\,\KL\bigl(\pi_1\,\|\,p\bigr)\Bigr\},
  \label{eq:ubot}
\end{equation}
where $(p\otimes q)(x,z)\coloneq p(x)q(z)$. The target marginal is constrained
exactly. The source marginal is free to deviate from $p$ at cost $\rho$. We
write
\begin{equation}
  r_{\varepsilon,\rho}(x)\coloneq \frac{\pi_1(x)}{p(x)}
  \label{eq:D-weightdef}
\end{equation}
for the induced \emph{source weight} (\emph{trial-side weight}), and $r_{\varepsilon,0}$ for the weight
obtained at $\rho=0$.
\end{definition}
Balanced OT fixes
\(\pi_1=p\) and hence yields the constant weight one. Relaxing the trial
marginal is what permits a non constant estimate of \(\rdag\). Throughout we
abbreviate $\gamma=\frac{\varepsilon}{\varepsilon+\rho}\in(0,1\corr{]}$\corr{, the exponent through which $\rho$ acts at a fixed dual potential ($\gamma=1$ corresponds to $\rho=0$)}.
 
For \(\varepsilon>0\), \emph{Gibbs smoothing}
\(S_\varepsilon\) is convolution with a Gaussian density of covariance
\((\varepsilon/2)I_d\) (explicitly defined in \eqref{eq:gibbs}); it averages a function over a neighbourhood of width \(\sqrt{\varepsilon/2}\). It appears because the exponential of the squared cost \(-c(x,z)/\varepsilon\) is a Gaussian kernel.
 
\subsection{Bias in uncalibrated weight estimates}
\begin{proposition}[Kernel representation]
\label{prop:nw}
\corr{Given the dual representation of Proposition~\ref{prop:D-closed}, t}he trial-side weight satisfies the exact fixed-point equation $r_{\varepsilon,\rho}
=
\Bigl[S_\varepsilon\Bigl\{
\frac{q}{S_\varepsilon(p\,r_{\varepsilon,\rho}^{-\rho/\varepsilon})}
\Bigr\}\Bigr]^{\gamma}$ and, when $\rho = 0$, $r_{\varepsilon,0}=S_\varepsilon\Bigl[\frac{q}{S_\varepsilon p}\Bigr]$. Moreover \(\EP[r_{\varepsilon,\rho}(X)]=1\) exactly\corr{. At a fixed dual
potential, $\rho$ acts through the exponent $\gamma$ as a power-law shrinkage
of the weight toward 1.
}.
\end{proposition}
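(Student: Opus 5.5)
Starting from the dual representation of Proposition~\ref{prop:D-closed}, I would write the optimal plan in Gibbs form
\[
\pi(x,z)=p(x)q(z)\,\exp\Bigl\{\frac{f(x)+g(z)-c(x,z)}{\varepsilon}\Bigr\},
\]
where \(f\) is the source potential and \(g\) the target potential. The source potential is tied to the relaxed marginal by the first-order condition of the \(\rho\,\KL(\pi_1\|p)\) penalty, namely \(f=-\rho\log(\pi_1/p)=-\rho\log r_{\varepsilon,\rho}\). The plan of proof is then to read off the two marginal equations and eliminate \(g\).

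\textbf{Target marginal.} The constraint \(\pi_2=q\) gives
\[
\int p(x)\,e^{f(x)/\varepsilon}\,e^{-c(x,z)/\varepsilon}\,dx=e^{-g(z)/\varepsilon}.
\]
Writing \(e^{-c/\varepsilon}\) as \((\pi\varepsilon)^{d/2}\) times the Gaussian kernel of covariance \((\varepsilon/2)I_d\), this reads \(e^{-g/\varepsilon}=\kappa\,S_\varepsilon(p\,e^{f/\varepsilon})\), with \(\kappa=(\pi\varepsilon)^{d/2}\). Since \(e^{f/\varepsilon}=r^{-\rho/\varepsilon}\), we get
\[
e^{g/\varepsilon}=\frac{1}{\kappa\,S_\varepsilon\bigl(p\,r^{-\rho/\varepsilon}\bigr)}.
\]

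\textbf{Source marginal.} We have
\[
r=\frac{\pi_1}{p}=e^{f/\varepsilon}\int q(z)\,e^{g(z)/\varepsilon}\,e^{-c/\varepsilon}\,dz=r^{-\rho/\varepsilon}\,\kappa\,S_\varepsilon\bigl(q\,e^{g/\varepsilon}\bigr).
\]
Substituting the expression for \(e^{g/\varepsilon}\), the constants \(\kappa\) cancel. Moving \(r^{-\rho/\varepsilon}\) to the left gives \(r^{(\varepsilon+\rho)/\varepsilon}=S_\varepsilon\{q/S_\varepsilon(p\,r^{-\rho/\varepsilon})\}\). Raising both sides to the power \(\gamma=\varepsilon/(\varepsilon+\rho)\) yields the stated fixed point.

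\textbf{The case \(\rho=0\).} Here \(\gamma=1\) and \(r^{-\rho/\varepsilon}\equiv1\), so the relation collapses to \(r_{\varepsilon,0}=S_\varepsilon[q/S_\varepsilon p]\).

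\textbf{Mass identity.} The identity \(\EP[r]=1\) follows from Fubini: \(\EP[r]=\int\pi_1=\iint\pi=\int\pi_2=\int q=1\). I would also verify it directly through the symmetry of the Gaussian kernel, \(\int p\,S_\varepsilon h=\int h\,S_\varepsilon p\).

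\textbf{Shrinkage interpretation.} If the potential \(g\) is held fixed, then \(r=[\kappa S_\varepsilon(q e^{g/\varepsilon})]^{\gamma}\) is a power \(\gamma\le1\) of the \(\rho=0\)-type weight. This is exactly a power-law shrinkage toward \(1\).

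\textbf{Main obstacle.} The delicate step is justifying the first-order relation \(f=-\rho\log r\) and the Gibbs form itself. This requires existence of the dual optimum, positivity of \(\pi_1\) \(p\)-almost everywhere, and integrability so that Fubini applies. I would take these from Proposition~\ref{prop:D-closed}, whose dual is the semi-unbalanced one with the source term \(-\rho\int p\,(e^{-f/\rho}-1)\). Its stationarity in \(f\) gives \(e^{-f/\rho}=\pi_1/p\), which is exactly the required relation.
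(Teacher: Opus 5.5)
Your proposal is correct and follows essentially the paper's route through Propositions~\ref{prop:D-closed} and~\ref{prop:D-kernel}: write the plan in Gibbs form with source factor $r_{\varepsilon,\rho}^{-\rho/\varepsilon}$, solve the target-marginal constraint for $e^{g/\varepsilon}=\{\kappa_\varepsilon S_\varepsilon(p\,r_{\varepsilon,\rho}^{-\rho/\varepsilon})\}^{-1}$, substitute into the source marginal so that $\kappa_\varepsilon$ cancels, raise to the power $\gamma$, and obtain $\EP[r_{\varepsilon,\rho}]=\int\pi_1=\int\pi_2=1$. Your auxiliary source potential $f=-\rho\log r_{\varepsilon,\rho}$ only relabels the factor that the paper reads off directly from the first variation $\rho\log r$ of $\rho\,\KL(\pi_1\|p)$, and your shrinkage statement correctly carries the paper's fixed-dual-potential qualification.
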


The optimality conditions express the coupling as a Gibbs
kernel times a dual factor depending only on \(z\), and the constraint that the
target marginal equal \(q\) determines that factor as the reciprocal of a
smoothed, tilted trial density,
\begin{equation}
h(z)
=
\Bigl\{\kappa_\varepsilon\,
S_\varepsilon\bigl(p\,r_{\varepsilon,\rho}^{-\rho/\varepsilon}\bigr)(z)
\Bigr\}^{-1},
\label{eq:pf-nw}
\end{equation}
with \(\kappa_\varepsilon\) the Gaussian normalizing constant. Substituting
\eqref{eq:pf-nw} into the closed form of the trial-side weight makes
\(\kappa_\varepsilon\) cancel and produces the displayed fixed point. See
Propositions~\ref{prop:D-closed} and~\ref{prop:D-kernel}. 
 
Reading the \(\rho=0\) formula from the inside out: \(S_\varepsilon p\) smooths
the trial density at the target point, the ratio \(q/S_\varepsilon p\) is
therefore a density-ratio-like quantity with a smoothed denominator, and the
outer \(S_\varepsilon\) carries it back to the trial point.
Remark~\ref{rem:D-interp} explains in what sense this is, and is not, a
Nadaraya--Watson smoother, and why for \(\rho>0\) the equation is genuinely
nonlinear so that \(r_{\varepsilon,\rho}\) is not a fixed power of
\(r_{\varepsilon,0}\).
 
Both forms of regularization can move the population target of the weight. The
next result separates the two contributions and measures what they cost.

\begin{theorem}[Regularization bias at fixed \((\varepsilon,\rho)\)]
\label{thm:raw-bias}
Let
\(b_{\varepsilon,\rho}(x)
\coloneq r_{\varepsilon,\rho}(x)-r^\dagger(x)\).
\corr{Under the regularity and uniformity conditions of
Theorem~\ref{thm:D-laplace} in the Supplementary Material --- including the
uniform-in-\(\varepsilon\) control of the source-relaxation branch and of
\(\mathcal B_\varepsilon\) stated there ---}
as \(\varepsilon\to0\) and \(\rho/\varepsilon\to0\),
\[
b_{\varepsilon,\rho}(x)
=
\frac{\rho}{\varepsilon}\mathcal B_\varepsilon(x)
+
\frac{\varepsilon}{4}
\left\{
\Delta r^\dagger(x)
-
r^\dagger(x)\frac{\Delta p(x)}{p(x)}
\right\}
+
O\left\{
\left(\frac{\rho}{\varepsilon}\right)^2+\varepsilon^2
\right\},
\]
uniformly over compact subsets of the interior of \(\mathcal X\), where
\[
\mathcal B_\varepsilon(x)
=
S_\varepsilon\left[
\frac{
q\,S_\varepsilon\{p\log r_{\varepsilon,0}\}
}{
(S_\varepsilon p)^2
}
\right](x)
-
r_{\varepsilon,0}(x)\log r_{\varepsilon,0}(x).
\]
In particular,
\(r_{\varepsilon,\rho}-r^\dagger
=O(\rho/\varepsilon+\varepsilon)\)
\corr{uniformly on compact subsets of the interior of \(\calX\)}
in the stated joint regime\corr{; near the boundary of the covariate space
the entropic bias is generally of larger order (see
Remark~\ref{rem:D-boundary} in the Supplementary Material)}.
\end{theorem}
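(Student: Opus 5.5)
The plan is to split the bias along the path $\rho=0\to\rho$:
\[
b_{\varepsilon,\rho}=\{r_{\varepsilon,\rho}-r_{\varepsilon,0}\}+\{r_{\varepsilon,0}-\rdag\}.
\]
The second bracket is the pure entropic term. The first is the source-relaxation term. Each piece will be expanded using the fixed-point equation of Proposition~\ref{prop:nw}.

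\emph{Entropic term.} At $\rho=0$ the weight has the closed form $r_{\varepsilon,0}=S_\varepsilon[q/S_\varepsilon p]$. Recall that $S_\varepsilon$ is Gaussian convolution with covariance $(\varepsilon/2)I_d$. On compact subsets of the interior we therefore have the second-order Taylor (Laplace) expansion
\[
S_\varepsilon f=f+\tfrac{\varepsilon}{4}\Delta f+O(\varepsilon^2),
\]
which holds uniformly for $f\in C^4$ with bounded derivatives. Apply it to $p$, so that $S_\varepsilon p=p\{1+\tfrac{\varepsilon}{4}\Delta p/p\}+O(\varepsilon^2)$. Writing $q=\rdag p$, this gives
\[
q/S_\varepsilon p=\rdag-\tfrac{\varepsilon}{4}\rdag\,\Delta p/p+O(\varepsilon^2).
\]
Applying the outer $S_\varepsilon$ adds $\tfrac{\varepsilon}{4}\Delta\rdag$, with an $O(\varepsilon^2)$ remainder. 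The result is exactly the displayed entropic term $\tfrac{\varepsilon}{4}\{\Delta\rdag-\rdag\Delta p/p\}$. The smoothness of $p$, $\rdag$ and their quotients, together with $p$ bounded below on the compact set, are the regularity conditions I would import from Theorem~\ref{thm:D-laplace}.

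\emph{Source-relaxation term.} Set $s=\rho/\varepsilon$, so that $\gamma=1/(1+s)$, and view $r_{\varepsilon,\rho}$ as the solution $r(s)$ of
\[
F(r,s)\coloneqq r-\bigl[S_\varepsilon\{q/S_\varepsilon(p\,r^{-s})\}\bigr]^{1/(1+s)}=0 .
\]
At $s=0$ the solution is $r_{\varepsilon,0}$. Differentiating in $s$ at $s=0$ gives two contributions. The exponent gives $-r_{\varepsilon,0}\log r_{\varepsilon,0}$, since $\partial_s(1+s)^{-1}=-1$ there. The inner tilt gives $r^{-s}=1-s\log r+\dots$, so that $S_\varepsilon(pr^{-s})=S_\varepsilon p-sS_\varepsilon(p\log r)$, and hence
\[
q/S_\varepsilon(pr^{-s})=q/S_\varepsilon p+s\,q\,S_\varepsilon(p\log r)/(S_\varepsilon p)^2+O(s^2).
\]
Also, $r$ itself moves with $s$. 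However, at $s=0$ the map $r\mapsto$ RHS does not depend on $r$, because the $r$-dependence enters only through $r^{-s}$. So $\partial_rF=I$ at $s=0$, and the implicit derivative is just the sum of the two explicit derivatives. That sum is exactly $\mathcal B_\varepsilon$.

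The main obstacle is the remainder $O(s^2)$, which must hold uniformly in $\varepsilon$. This requires the fixed-point map to be a uniform contraction in a sup-norm (or log-sup-norm) ball around $r_{\varepsilon,0}$ for small $s$. It also requires a bound on the second $s$-derivative in terms of $\|\log r\|_\infty$ and $\|r\|_\infty$. My first attempt would be to work in log coordinates, $u=\log r$. There the map has the form
\[
u\mapsto (1+s)^{-1}\log S_\varepsilon\{q/S_\varepsilon(pe^{-su})\}.
\]
Its $u$-Lipschitz constant is $O(s)$, because $\log S_\varepsilon(pe^{-su})$ is $s$-Lipschitz in sup norm. Banach's fixed-point theorem then gives existence, uniqueness and a Lipschitz dependence on $s$. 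A second-order Taylor expansion, with the bounds on $r_{\varepsilon,0}$ and $\log r_{\varepsilon,0}$ that are uniform in $\varepsilon$ (these are assumed in Theorem~\ref{thm:D-laplace}), gives the $O(s^2)$ remainder.

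Summing the two pieces yields the displayed expansion. The cross error terms are absorbed via $s\varepsilon\le s^2+\varepsilon^2$. For the final "in particular" claim, $\mathcal B_\varepsilon$ is uniformly bounded on the compact set, since $r_{\varepsilon,0}\to\rdag$, which is bounded. This gives $O(\rho/\varepsilon+\varepsilon)$.
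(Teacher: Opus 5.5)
Your decomposition, your first-order computation (including the observation that the $r$-dependence enters only through $r^{-s}$ and so drops out at $s=0$), and your heat-kernel treatment of $r_{\varepsilon,0}$ coincide with the paper's proof of Theorem~\ref{thm:D-laplace}. The real difference is how you get the $O\{(\rho/\varepsilon)^2\}$ remainder.

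\textbf{How the paper handles the remainder.} The paper does not derive it. It \emph{assumes} that $\delta\mapsto r_{\varepsilon,\varepsilon\delta}$ is a $C^2$ branch in $\ell^\infty(\calX)$ with two-sided bounds. For the joint regime it further assumes that the differentiability radius, the second derivative and $\sup_{\calK}|\mathcal B_\varepsilon|$ are bounded uniformly in $\varepsilon$. It then Taylor-expands.

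\textbf{What your contraction buys.} Your log-coordinate map $T_s(u)=(1+s)^{-1}\log S_\varepsilon\{q/S_\varepsilon(pe^{-su})\}$ turns those hypotheses into consequences. Positivity of the Gaussian kernel gives $\norm{T_s(u)-T_s(v)}{\infty}\le\{s/(1+s)\}\norm{u-v}{\infty}$ on all of $\ell^\infty(\calX)$. This is a global contraction with factor $\rho/(\varepsilon+\rho)$, so you need no ball around $\log r_{\varepsilon,0}$ and no smallness of $s$ for existence and uniqueness; $\log r_{\varepsilon,\rho}$ is the fixed point by uniqueness. Write $u_0=\log r_{\varepsilon,0}$ and $g(s)=T_s(u_0)$. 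The derivatives of the explicit map $g$ are bounded by polynomials in $\norm{u_0}{\infty}$: $|g'|\le2\norm{u_0}{\infty}$ and $|g''|\le4(\norm{u_0}{\infty}+\norm{u_0}{\infty}^2)$. Hence uniformity in $\varepsilon$ reduces to uniform two-sided bounds on $r_{\varepsilon,0}$ over $\calX$. The same bounds give $|\mathcal B_\varepsilon|=|r_{\varepsilon,0}\,g'(0)|\le2\norm{r_{\varepsilon,0}}{\infty}\norm{\log r_{\varepsilon,0}}{\infty}$. The paper's route is shorter and isolates the formal calculation, at the price of an unverified hypothesis; yours costs one contraction estimate and removes that hypothesis.

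\textbf{Steps to tighten.} First, Banach's theorem gives only Lipschitz dependence on $s$, so ``a second-order Taylor expansion'' of $u_s=\log r_{\varepsilon,\rho}$ is not yet licensed. Use the exact identity
\[
u_s-u_0-s\,g'(0)=\{T_s(u_s)-T_s(u_0)\}+\{g(s)-g(0)-s\,g'(0)\}.
\]
Its first bracket is at most $\{s/(1+s)\}\norm{u_s-u_0}{\infty}\le s\norm{g(s)-g(0)}{\infty}=O(s^2)$. Its second bracket is $O(s^2)$ by the bound on $g''$. Then exponentiate, using $r_{\varepsilon,0}\,g'(0)=\mathcal B_\varepsilon$.

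Second, your argument that $\sup_{\calK}|\mathcal B_\varepsilon|=O(1)$ because $r_{\varepsilon,0}\to\rdag$ is incomplete. $\mathcal B_\varepsilon$ integrates $\log r_{\varepsilon,0}$ over all of $\calX$, including the $\sqrt\varepsilon$ boundary collar where $r_{\varepsilon,0}\not\to\rdag$ (Remark~\ref{rem:D-boundary}). The global bound above settles this.

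Third, in the entropic step the expansion of $q/S_\varepsilon p$ holds only near $\calK$. The outer $S_\varepsilon$ must therefore be localised using $p\ge\underline p_0$ on $\calX$ and the Gaussian tail, as the paper does.

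Finally, the cross-term remark is unnecessary: the decomposition is exact and $\mathcal B_\varepsilon$ is kept $\varepsilon$-dependent.
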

\corr{Theorem~\ref{thm:raw-bias} is diagnostic: it quantifies, in a
small-regularisation regime, the population distortion that calibration will
remove. It is invoked by no later result as this shows the need for caliberation, and the standing convention that
$(\varepsilon,\rho)$ remain fixed and positive is unaffected.}

Add and subtract \(r_{\varepsilon,0}\). The source-relaxation term is obtained by expanding the fixed-point equation of Proposition~\ref{prop:nw} in \(\rho/\varepsilon\) around zero, giving \(r_{\varepsilon,\rho}-r_{\varepsilon,0}
=(\rho/\varepsilon)\mathcal B_\varepsilon
+O\{(\rho/\varepsilon)^2\}\). The entropic term follows from the heat-kernel expansion applied to \(r_{\varepsilon,0}=S_\varepsilon\{q/(S_\varepsilon p)\}\), yielding the \(O(\varepsilon)\) term stated above. Adding the two expansions gives the result.

By the Riesz identity the weight bias passes directly into a violation of \eqref{eq:riesz-intro}, and by Proposition~\ref{prop:remainder} into the estimator through the product remainder. Sample size alone does not remove it;
Remark~\ref{rem:D-why} develops this point and explains what any repair must preserve. We now remove it by constraint.
 
\subsection{Riesz calibration}
 
Since \eqref{eq:riesz-intro} involves infinitely many test functions, we impose
it on a finite-dimensional subspace and let that subspace grow with the sample.
Such a growing family of finite-dimensional approximating spaces is called a
\emph{sieve}. Let \(b_J(x)=\{b_{J,1}(x),\ldots,b_{J,J}(x)\}^{\top}\) be a
vector of \(J\) basis functions with \(b_{J,1}\equiv1\), and let
\(H_J=\mathrm{span}\{b_{J,1},\ldots,b_{J,J}\}\). The \emph{calibration
equations} are
\begin{equation}
\EP\bigl[r(X)b_J(X)\bigr]=\EQ\bigl[b_J(X^Q)\bigr],
\label{eq:calib-main}
\end{equation}
equivalently, by Lemma~\ref{lem:A-riesz}, the statement that the weight error
\(r-\rdag\) is orthogonal to \(H_J\) in \(L^2(P)\). Calibration therefore
enforces \eqref{eq:riesz-intro} exactly on \(H_J\), without implying
\(r=\rdag\) pointwise; the consequences of that gap are taken up in
Section~\ref{sec:estimation}. Including the constant \(b_{J,1}\equiv1\) costs
nothing, since Proposition~\ref{prop:nw} shows the uncalibrated weight already
integrates to one.
\begin{definition}[Riesz-calibrated entropic OT]
\label{def:rcot}
The \textsc{RiCOT} plan minimizes the semi-unbalanced entropic objective
subject to the target-marginal constraint \emph{and} the calibration
constraints \eqref{eq:calib-main}; its trial-side weight is
\(r_{J,\varepsilon,\rho}=\pi_1/p\). The empirical version uses the entire
trial covariate sample, with trial reference masses \(a_i=n^{-1}\), target
masses \(\omega_j=m^{-1}\), and costs \(C_{ij}=c(X_i,X^Q_j)\), as given in
Definition~\ref{def:emp}. Its calibration constraint is exactly the sample
analogue
\[
\En\bigl[\rhat(X)b_J(X)\bigr]
=
\E_m\bigl[b_J(X^Q)\bigr]
\]
of \eqref{eq:calib-main}. Calibration is enforced \emph{simultaneously} with
the transport cost minimization; there is no post hoc recalibration step, and
it is this that produces the multiplicative structure below.
\end{definition}

\begin{proposition}[Multiplicative calibration tilt]
\label{prop:kkt}
At a strictly interior empirical solution there are target dual variables
\(v\in\R^m\) and a calibration multiplier \(\lambda\in\R^J\) with
\[
\rhat(x)
=
\Bigl[\sum_{j=1}^m\omega_j
\exp\Bigl\{\frac{v_j-c(x,X_j^Q)-\lambda^{\top}b_J(x)}{\varepsilon}\Bigr\}
\Bigr]^{\gamma}.
\]
Calibration therefore modifies the raw transport weight by a
\emph{multiplicative} exponential factor, preserving positivity and, up to a
bounded constant \corr{(a supremum-norm bound on the tilt, verified at the
solution in the Supplementary Material)}, the trial-side downweighting supplied by OT.
\end{proposition}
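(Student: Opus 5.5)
We derive the formula from the KKT conditions of the empirical calibrated program of Definition~\ref{def:emp}, treated as a convex program in the plan \(\pi=(\pi_{ij})\). The objective is \(\sum_{ij}C_{ij}\pi_{ij}+\varepsilon\KL(\pi\|a\otimes\omega)+\rho\KL(\pi_1\|a)\), with \(\pi_{1,i}=\sum_j\pi_{ij}\). The constraints are linear: \(\sum_i\pi_{ij}=\omega_j\) for each \(j\), and \(\sum_i\pi_{1,i}b_J(X_i)=\E_m[b_J(X^Q)]\), where \(\rhat(X_i)=\pi_{1,i}/a_i\) so that \(\En[\rhat b_J]=\sum_i\pi_{1,i}b_J(X_i)\).

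\textbf{Lagrangian and stationarity.} Attach multipliers \(v_j\) to the target-marginal equations and \(\lambda\in\R^J\) to the calibration equations, and form the Lagrangian. At a strictly interior solution (\(\pi_{ij}>0\)) stationarity holds with equality and no sign multipliers are needed. Using \(\partial_{\pi_{ij}}\pi_{1,i}=1\), differentiation in \(\pi_{ij}\) gives
\[
C_{ij}+\varepsilon\log\frac{\pi_{ij}}{a_i\omega_j}+\rho\log\frac{\pi_{1,i}}{a_i}-v_j+\lambda^{\top}b_J(X_i)=0 .
\]
The constants from differentiating the generalized KL terms can be absorbed into \(v_j\).

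\textbf{Solving for \(\pi_{ij}\) and summing over \(j\).} The stationarity condition gives
\[
\pi_{ij}=a_i\omega_j\,\rhat(X_i)^{-\rho/\varepsilon}\exp\bigl\{(v_j-C_{ij}-\lambda^{\top}b_J(X_i))/\varepsilon\bigr\}.
\]
Summing over \(j\) and dividing by \(a_i\) yields
\[
\rhat(X_i)^{1+\rho/\varepsilon}=\sum_j\omega_j\exp\{\cdots\}.
\]
Since \(1+\rho/\varepsilon=1/\gamma\), raising both sides to the power \(\gamma\) gives the displayed formula at the sample points.

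\textbf{Extension off the sample.} The same formula, with \(x\) in place of \(X_i\), defines the natural extension \(\rhat(x)\). This matches the population closed form of Proposition~\ref{prop:D-closed}, with \(\lambda^{\top}b_J\) appearing as an additional source potential. Positivity is immediate because the right-hand side is a \(\gamma\)-th power of a sum of exponentials.

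\textbf{Interpretation as a multiplicative tilt.} The tilt factor \(\exp\{-\gamma\lambda^{\top}b_J(x)/\varepsilon\}\) factors out of the sum. Hence \(\rhat(x)\) equals this factor times the transport-generated offset \(\bigl[\sum_j\omega_j e^{(v_j-c(x,X^Q_j))/\varepsilon}\bigr]^{\gamma}\). The sup-norm bound on \(\lambda^{\top}b_J\) is deferred to the supplement, as the statement indicates.

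\textbf{Main obstacle: existence of the multipliers.} This requires a constraint qualification. I would use Slater's condition: there must exist a strictly positive plan satisfying both the marginal and calibration constraints. One way to guarantee this is to assume the target calibration vector lies in the relative interior of \(\{\sum_i w_ib_J(X_i):w_i>0\}\), with total mass one. Since the problem is convex with affine constraints, this yields existence of \((v,\lambda)\). Alternatively, one can work with the finite-dimensional convex dual directly.

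A secondary point is that the entropy term forces \(\pi_{ij}>0\) wherever the problem is feasible. The interiority hypothesis in the statement covers this.
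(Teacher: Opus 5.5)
Your proposal is correct and follows the same route as the paper's proof (Proposition~\ref{prop:E-kkt}): form the Lagrangian with multipliers $v_j$ and $\lambda$, write the stationarity condition at the interior optimiser, solve for $\pi_{ij}$, sum over $j$, raise to the power $\gamma$, and factor the $j$-free tilt out of the sum. The step you flag as the main obstacle, existence of the multipliers, needs no extra hypothesis. All constraints are affine and the nonnegativity constraints are inactive at the assumed strictly interior solution, so the objective's gradient there lies in the range of the adjoint of the constraint map and $(v,\lambda)$ exist by linear algebra alone. Equivalently, that interior solution is itself your Slater point, so no additional relative-interior assumption on $\E_m[b_J(X^Q)]$ is required; the paper simply introduces the multipliers.
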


To see the proof of the above proposition, differentiate the Lagrangian of the empirical program in
\(\pi_{ij}\), attaching \(v_j\) to the target-marginal constraints and
\(\lambda\) to the calibration constraints. Stationarity at an interior
optimum reads
\begin{equation}
C_{ij}
+\varepsilon\log\frac{\pi_{ij}}{a_i\omega_j}
+\rho\log\frac{\alpha_i}{a_i}
+\lambda^{\top}b_J(X_i)-v_j=0,
\qquad \alpha_i=\sum_j\pi_{ij},
\label{eq:pf-kkt}
\end{equation}
where \(a_i=n^{-1}\). Solving \eqref{eq:pf-kkt} for \(\pi_{ij}\), summing
over \(j\), and using \(\alpha_i=a_i\rhat_i\) gives the display. See
Proposition~\ref{prop:E-kkt} and its proof for more details. 

That the tilt is exponential rather than additive is a consequence of the
Kullback--Leibler penalties, and it matters. Corollary~\ref{cor:E-tilt} and
Remark~\ref{rem:E-mult} show that a small baseline weight remains small after a
bounded multiplicative tilt, whereas a squared-loss calibration would add
\(\lambda^{\top}b_J\) to the weight and could return positive mass to units the
transport had deliberately downweighted. Proposition~\ref{prop:kkt} also
supplies the out-of-sample extension of \(\rhat\) to held-out trial covariates.

The next observation is the structural core of the construction. Recall that a
family of densities of the form \(e^{\,\text{offset}+\theta^{\top}b}\), indexed
by a finite-dimensional parameter \(\theta\) multiplying a fixed vector of
functions \(b\), is an \emph{exponential tilt} of the offset.

\begin{proposition}[Calibration is convex estimation in a tilt parameter]
\label{prop:tiltform}
Write \(\theta=-\lambda/(\varepsilon+\rho)\) and let \(\Lhat\) denote the
empirical transport offset \eqref{eq:offset-hat}. Then
\[
\rhat(x)=\exp\bigl\{\gamma\Lhat(x)+\theta^{\top}b_J(x)\bigr\},
\]
so the calibrated weight is an exponential tilt of the transport offset. Its
calibration equations are precisely the first-order conditions
\(\nabla M_n(\theta)=0\) of the convex objective
\[
M_n(\theta)
=
\En\bigl[e^{\gamma\Lhat(X)+\theta^{\top}b_J(X)}\bigr]
-\theta^{\top} \E_m \bigl[b_J(X^Q)\bigr].
\]
The objective is strictly convex \corr{exactly when the empirical Gram
matrix \(\En[b_Jb_J^\top]\) is nonsingular, and in that case} it has a
\corr{(unique)} minimizer \corr{if and only if}
\(\E_m[b_J(X^Q)]\) lies in the relative interior of the convex hull of the trial
design points \(\{b_J(X_i):i=1,\ldots,n\}\).
\end{proposition}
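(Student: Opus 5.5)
Three things need to be established: the exponential-tilt form of \(\rhat\), the identification of the calibration equations with \(\nabla M_n(\theta)=0\), and the convexity and existence statements.

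\emph{Step 1: tilt form.} I start from Proposition~\ref{prop:kkt}, which gives
\[
\rhat(x)=\Bigl[\sum_j\omega_j\exp\{(v_j-c(x,X_j^Q))/\varepsilon\}\,e^{-\lambda^\top b_J(x)/\varepsilon}\Bigr]^{\gamma}.
\]
The factor \(e^{-\lambda^\top b_J(x)/\varepsilon}\) does not depend on \(j\), so I pull it out of the sum. Raising to the power \(\gamma=\varepsilon/(\varepsilon+\rho)\) turns it into \(\exp\{-\lambda^\top b_J(x)/(\varepsilon+\rho)\}=\exp\{\theta^\top b_J(x)\}\). What remains is \(\exp\{\gamma\Lhat(x)\}\), with
\[
\Lhat(x)=\log\sum_j\omega_j\exp\{(v_j-c(x,X_j^Q))/\varepsilon\}.
\]
I take this to be the content of \eqref{eq:offset-hat}. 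One point needs care: \(v\) itself depends on \(\lambda\). In \(M_n\) I will therefore treat \(\Lhat\) as fixed at the solution's target potentials. This is the standard profile or offset convention, and I will state it explicitly.

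\emph{Step 2: first-order conditions.} With \(\Lhat\) fixed, differentiation under the finite empirical average gives
\[
\nabla M_n(\theta)=\En\bigl[b_J(X)e^{\gamma\Lhat(X)+\theta^\top b_J(X)}\bigr]-\E_m[b_J(X^Q)]=\En[\rhat b_J]-\E_m[b_J],
\]
whose vanishing is exactly the empirical calibration constraint of Definition~\ref{def:rcot}.

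\emph{Step 3: convexity.} The Hessian is \(\nabla^2M_n(\theta)=\En[w_\theta b_Jb_J^\top]\) with weights \(w_\theta>0\). It is positive semidefinite, so \(M_n\) is convex. It is positive definite if and only if no nonzero \(u\) has \(u^\top b_J(X_i)=0\) for all \(i\), which is exactly nonsingularity of \(\En[b_Jb_J^\top]\). That yields strict convexity and uniqueness. If the Gram matrix is singular, \(M_n\) is affine along some direction \(u\); I expect this to break strict convexity in that direction, but the sign of \(u^\top\E_m b_J\) needs checking.

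\emph{Step 4: existence (main obstacle).} I plan to use the classical exponential-family/maximum-entropy duality argument. Set \(\bar b=\E_m[b_J(X^Q)]\).

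For sufficiency, suppose \(\bar b\) lies in the relative interior of \(\mathrm{conv}\{b_J(X_i)\}\), which under nonsingularity is the interior of an affine hull of full dimension. I will show \(M_n\) is coercive. For any unit direction \(u\) there is an \(i\) with \(u^\top b_J(X_i)>u^\top\bar b\). Then
\[
M_n(tu)\ge n^{-1}e^{\gamma\Lhat(X_i)}e^{t u^\top b_J(X_i)}-t u^\top\bar b\to\infty.
\]
A compactness argument over the unit sphere, using continuity in \(u\) of \(\max_i u^\top(b_J(X_i)-\bar b)>0\), makes this uniform. Coercivity then gives a minimizer.

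For necessity, suppose a minimizer \(\theta^*\) exists. Then \(\bar b=\sum_i w_i b_J(X_i)\) with \(w_i=n^{-1}e^{\gamma\Lhat(X_i)+\theta^{*\top}b_J(X_i)}>0\). The constant basis function \(b_{J,1}\equiv1\) forces \(\sum_i w_i=1\), since \(\E_m[1]=1\). So \(\bar b\) is a convex combination with strictly positive weights, and hence lies in the relative interior.

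The delicate points are the uniformity of the coercivity over directions and the bookkeeping that the offset is held fixed. I expect the second to require a sentence clarifying that \(M_n\) is defined conditionally on \(\Lhat\).
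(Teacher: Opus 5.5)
There is a concrete gap in the sufficiency half of Step~4. Because $b_{J,1}\equiv1$ and $\E_m[b_{J,1}(X^Q)]=1$, every design point $b_J(X_i)$ and $\bar b$ lie on the hyperplane $\{x\in\R^J:x_1=1\}$. When $\widehat G_J=\En[b_Jb_J^\top]$ is nonsingular, the affine hull of the design points is this $(J-1)$-dimensional hyperplane, so the convex hull has empty interior in $\R^J$.

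Your claim that every unit direction $u$ admits an $i$ with $u^\top b_J(X_i)>u^\top\bar b$ is therefore false for $u=\pm e_1$. In those directions $u^\top\{b_J(X_i)-\bar b\}=0$ for all $i$. So the minimum over the unit sphere of $\max_i u^\top\{b_J(X_i)-\bar b\}$ is exactly $0$, and your compactness argument for uniform coercivity cannot work. $M_n$ does diverge along $\pm e_1$, since $M_n(\theta+te_1)=e^{t}\,\En[e^{\gamma\Lhat+\theta^\top b_J}]-\theta^\top\bar b-t$, but not for the reason you give.

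The repair is to compare $u^\top b_J(X_i)$ with $0$ rather than with $u^\top\bar b$; this is the paper's recession-function criterion. For $u\ne0$ there are two cases.
\begin{itemize}
\item If some $u^\top b_J(X_i)>0$, the exponential term explodes.
\item If all $u^\top b_J(X_i)\le0$, the exponential term stays bounded. Write $\bar b=\sum_i\lambda_ib_J(X_i)$ with every $\lambda_i>0$; this is possible because the relative interior of the convex hull of finitely many points is the set of their strictly positive convex combinations. Then $u^\top\bar b\le0$, with equality only if $u^\top b_J(X_i)=0$ for all $i$, which nonsingularity excludes. Hence $-t\,u^\top\bar b\to+\infty$.
\end{itemize}
No uniformity is needed: a closed convex function that diverges along every ray from one point has bounded sublevel sets, hence a minimizer.

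The rest is sound. Steps~1--3 coincide with the paper's argument, including the profiled convention of holding the realised offset $\Lhat$ fixed.

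Your doubt at the end of Step~3 is unfounded. If $\widehat G_J$ is singular, $M_n$ is affine along a kernel direction $u$ whatever the sign of $u^\top\bar b$, and that alone rules out strict convexity. The sign only decides whether $M_n$ is unbounded below or has a flat, non-unique minimum.

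Your necessity argument is correct and more direct than the paper's. A minimizer satisfies $\nabla M_n=0$, so $\bar b=\sum_iw_ib_J(X_i)$ with $w_i>0$, and the intercept coordinate gives $\sum_iw_i=1$. The paper instead characterizes existence through the recession function, then identifies that condition with the relative interior by separation and supporting-hyperplane arguments, translating by $e_1$ to exploit the hyperplane $\{x_1=1\}$. Your route also shows that this direction does not need nonsingularity.
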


The multiplier \(\lambda^{\top}b_J(x)\) does not depend on
the target index \(j\), so it factors out of the sum in
Proposition~\ref{prop:kkt} before the outer power is taken:
\begin{equation}
\Bigl[e^{-\lambda^{\top}b_J(x)/\varepsilon}
\sum_{j}\omega_j e^{\{v_j-c(x,X^Q_j)\}/\varepsilon}\Bigr]^{\gamma}
=
e^{-\lambda^{\top}b_J(x)/(\varepsilon+\rho)}\;
e^{\gamma\Lhat(x)},
\label{eq:pf-tilt}
\end{equation}
where \(\gamma=\varepsilon/(\varepsilon+\rho)\). Differentiating \(M_n\)
then reproduces the calibration constraint, and the existence criterion is the
condition for \(M_n\) to be coercive. See
Proposition~\ref{prop:E-Mest} and its proof for more details. 

The decomposition into an \emph{offset} \(\gamma\Lhat\), carrying all the transport geometry, and a \emph{tilt} \(\theta^{\top}b_J\), carrying all the calibration, organizes the remaining analysis: smoothness is needed only of the offset, and is supplied by Lemmas~\ref{lem:A-lse} and~\ref{lem:dual-bdd}, while
the tilt is handled by finite-dimensional convex estimation. The feasibility criterion is also the practical diagnostic of Section~\ref{sec:implementation}: the calibration program stops having a solution when the target calibration moments leave the convex hull of the trial design points.

Let \( L \) denote the population transport offset~\ref{eq:offset}, and let \(a_{L,N}\downarrow0\) satisfy
\(\norm{\Lhat-L}{\infty}=O_p(a_{L,N})\).
Here, \(a_{L,N}\) measures the sampling error in estimating the transport offset. 

\newpage
\begin{theorem}[Calibration rate]
\label{thm:weight-rate}
Suppose \corr{Assumption~\ref{ass:id} holds, together with} the design, approximation, and dual-stability conditions of Assumption~\ref{ass:design} \corr{and two-sided overlap}. Let \(J_N\) denote the calibration dimension at sample
size \(N\). If, as \(N\to\infty\), along with
$
\sqrt{J_N}
\left\{
J_N^{-s/d}
+
\sqrt{\frac{J_N\log J_N}{N}}
+
a_{L,N}
\right\}
\to0,
$
then
\[
\|\rhat-\rdag\|_{L^2(P_X)}
=
O_p\left(
J_N^{-s/d}
+
\sqrt{\frac{J_N\log J_N}{N}}
+
a_{L,N}
\right).
\]
Balancing the first two terms gives
\(J_N\asymp
\left(\frac{N}{\log N}\right)^{d/(2s+d)}\) and thus we have
\(\|\rhat-\rdag\|_{L^2(P_X)}
=
O_p\left(
\left(\frac{\log N}{N}\right)^{s/(2s+d)}
+
a_{L,N}
\right)\).
\end{theorem}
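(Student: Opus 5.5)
The plan is to treat the calibrated weight as a sieve exponential-tilt M-estimator around a population pseudo-true tilt, and to decompose the error into three pieces. These are an approximation error for the best tilt in \(H_{J}\), a stochastic error from estimating \(\theta\), and a perturbation error from replacing \(L\) by \(\Lhat\).

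\emph{Population objective and pseudo-true tilt.} Define the population counterpart
\[
M(\theta)=\EP\bigl[e^{\gamma L(X)+\theta^{\top}b_J(X)}\bigr]-\theta^{\top}\EQ[b_J(X^Q)],
\]
with minimizer \(\theta_J^*\) and \(r_J^*=e^{\gamma L+\theta_J^{*\top}b_J}\). By the first-order conditions, \(r_J^*-\rdag\perp H_J\) in \(L^2(P_X)\), which is the population version of the statement after \eqref{eq:calib-main}. The approximation condition in Assumption~\ref{ass:design} will be used in the following form: there is \(\bar\theta_J\) with \(\norm{\gamma L+\bar\theta_J^{\top}b_J-\log\rdag}{\infty}\lesssim \sqrt{J}\,J^{-s/d}\). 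This holds if \(\log\rdag-\gamma L\) is \(s\)-smooth, which is plausible since \(L\) is a smoothed log-sum-exp (Lemmas~\ref{lem:A-lse},~\ref{lem:dual-bdd}). The \(L^\infty\) version costs the factor \(\sqrt J\), and the rate hypothesis makes \(\sqrt J J^{-s/d}\to0\). Two-sided overlap makes \(\rdag\) bounded above and below. Hence \(M\) is strongly convex on a sup-norm neighbourhood of \(\bar\theta_J\) with curvature comparable to the Gram matrix \(G=\EP[b_Jb_J^\top]\), whose eigenvalues are bounded above and below. A standard convex-localization argument then gives
\[
\norm{r_J^*-\rdag}{L^2(P_X)}\lesssim J^{-s/d}.
\]
One way to see this is that \(\nabla M(\bar\theta_J)=\EP[(\bar r_J-\rdag)b_J]\) has norm at most \(\norm{\bar r_J-\rdag}{L^2}\) by Bessel's inequality.

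\emph{Stochastic error.} Set \(\tilde M_n\) to be \(M_n\) with \(\Lhat\) replaced by \(L\). Its gradient at \(\theta_J^*\) is a centered difference of two independent sample means of bounded vectors in \(\R^J\). Therefore \(\|\nabla\tilde M_n(\theta_J^*)\|=O_p(\sqrt{J/N})\). A matrix Bernstein bound gives \(\|\En[b_Jb_J^\top]-G\|=O_p(\zeta_J\sqrt{\log J/N})\) with \(\zeta_J\asymp\sqrt J\). This is where the condition \(\sqrt{J}\sqrt{J\log J/N}\to0\) enters: it keeps the empirical Hessian uniformly positive definite on a ball of radius \(\delta/\sqrt J\) in \(\theta\), which corresponds to sup-norm radius \(\delta\) in the exponent. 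On that ball the weights stay within fixed multiplicative constants of \(\rdag\). The usual convexity trick then applies: if the gradient is small relative to curvature times radius, the minimizer lies in the ball. This yields \(\|\hat\theta-\theta_J^*\|=O_p(\sqrt{J\log J/N})\) and the matching \(L^2(P_X)\) bound on the weights, since \(\|(\hat\theta-\theta^*)^\top b_J\|_{L^2}\asymp\|\hat\theta-\theta^*\|\). The same argument also delivers existence and uniqueness through Proposition~\ref{prop:tiltform}.

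\emph{Offset perturbation.} Replacing \(L\) by \(\Lhat\) multiplies every summand by \(e^{\gamma(\Lhat-L)}=1+O_p(a_{L,N})\) uniformly. This perturbs \(\nabla M_n\) at \(\theta_J^*\) by a vector of norm \(O_p(a_{L,N})\); the bound comes from Bessel's inequality again, not \(\sqrt J a_{L,N}\), because it is an inner product of a bounded function with the orthonormalized basis. The Hessian is perturbed only by the factor \(1+o_p(1)\). The localization ball is unchanged provided \(\sqrt J a_{L,N}\to0\). Finally, \(\rhat-r_J^*\) picks up the direct term \(r_J^*(e^{\gamma(\Lhat-L)}-1)=O_p(a_{L,N})\) in sup norm, and the triangle inequality combines the three pieces. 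Balancing \(J^{-s/d}\) against \(\sqrt{J\log J/N}\) gives the stated \(J_N\).

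I expect the main obstacle to be the localization step. One must show that the empirical minimizer stays in a region where \(\gamma\Lhat+\theta^\top b_J\) is uniformly bounded, so that strong convexity with constants independent of \(J\) holds. The two-sided overlap, the sup-norm approximation, and the \(\sqrt{J}\)-inflated rate hypothesis all exist to make that ball argument close. Here I would rely on the dual-stability part of Assumption~\ref{ass:design} for the uniform boundedness of \(\Lhat\).
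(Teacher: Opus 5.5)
Your proof is correct, but it is organized differently from the paper's. The paper never introduces the population minimizer of $M$. It centres directly at the sieve approximant $\theta_J$ of Assumption~\ref{ass:des-approx} and splits $\nabla M_n(\theta_J)$ into four pieces:
\begin{itemize}
\item a centred empirical process indexed by the deterministic $W_L(\cdot;\theta_J)$;
\item the offset error;
\item the approximation bias $\EP[\{W_L(\cdot;\theta_J)-\rdag\}b_J]=O(J^{-s/d})$;
\item target sampling noise.
\end{itemize}
It then runs a single convex-localization step on a Euclidean ball of radius $c_0/(B\sqrt J)$ around $\theta_J$, followed by the log-to-weight transfer.

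You instead insert the pseudo-true tilt $\theta_J^*$ and localize twice. The first localization is deterministic ($\theta_J^*$ versus $\bar\theta_J$) and gives the bias. The second is stochastic ($\thetahat$ versus $\theta_J^*$), where the score is exactly centred. The offset enters as a perturbation of the score at $\theta_J^*$ plus a direct sup-norm term.

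What each route buys:
\begin{itemize}
\item Your route separates bias from variance cleanly. It also exhibits the fixed-$J$ limit $r_J^*$, with $r_J^*-\rdag\perp H_J$, which is exactly the object $r_J$ used in Corollary~\ref{cor:F-fixedJ} and Remark~\ref{rem:F-summary}.
\item The paper's route is shorter. The bias is just one more term in the score, so no existence or boundedness argument for a population minimizer is needed.
\item Your second-moment bound of order $\sqrt{J/N}$, with no $\log J$, does not come from the re-centering; the paper's term (Ia) is equally centred. It does, however, realize the log-free refinement the paper only mentions in Corollary~\ref{cor:E-J}.
\end{itemize}

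A few small corrections.
\begin{itemize}
\item Assumption~\ref{ass:des-approx} is already a sup-norm bound $C_sJ^{-s/d}$ on $\log\rdag-\gamma L-\theta_J^\top b_J$. The $\sqrt J$-inflated form you wrote would by itself give only $\norm{\bar r_J-\rdag}{L^2(P)}\lesssim\sqrt J\,J^{-s/d}$. Your bias step therefore silently needs an $L^2$ rate $J^{-s/d}$ for the same coefficients; the actual assumption supplies this, and no smoothness argument for $L$ is needed.
\item The $\sqrt J$ factor in the growth condition is not what keeps $\widehat G_J$ positive definite; that needs only $\sqrt{J\log J/N}\to0$. The factor is what lets the localization radius, which is of the order of the score, fit inside a ball of radius proportional to $J^{-1/2}$. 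It is used again in the sup-norm transfer.
\item Your Bessel-type bound for the offset perturbation involves an empirical inner product. It therefore runs through $\lambda_{\max}(\widehat G_J)$, which your matrix Bernstein step supplies.
\item For the balanced choice of $J_N$ you should verify the growth condition. Its first two pieces follow from $s>d/2$. The third, $\sqrt{J_N}\,a_{L,N}\to0$, is an extra requirement on the offset rate, as in Corollary~\ref{cor:E-J}.
\end{itemize}
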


To prove this result, we work with tilt parameter regime of Proposition~\ref{prop:tiltform}. Let
\(\theta_{J_N}\) approximate the population tilt
\(\log\rdag-\gamma L\) in \(H_{J_N}\). The gradient of \(M_n\) at
\(\theta_{J_N}\) satisfies
\[
\norm{\nabla M_n(\theta_{J_N})}{2}
=
O_p\left(
J_N^{-s/d}
+
\sqrt{\frac{J_N\log J_N}{N}}
+
a_{L,N}
\right),
\]
with contributions from sieve approximation, trial and target sampling variation, and estimation of the transport offset. The Hessian of \(M_n\) is a weighted empirical Gram matrix whose smallest eigenvalue is bounded away from zero with probability tending to one. Local strong convexity therefore gives
\begin{equation}
\norm{\hat\theta-\theta_{J_N}}{2}
\lesssim
\norm{\nabla M_n(\theta_{J_N})}{2}.
\label{eq:pf-rate}
\end{equation}
The growth condition in Theorem~\ref{thm:weight-rate} keeps the exponential tilt uniformly bounded, allowing this rate to be transferred from the log scale to \(\rhat\). See Theorem~\ref{thm:E-rate} and Corollaries~\ref{cor:E-J}--\ref{cor:E-window} and their proofs for more details. 

The three terms of the rate are, in order, sieve approximation error, calibration estimation error, and transport-offset estimation error. The regularization parameters affect the offset and the associated constants, but fixed positive \(\varepsilon\) and \(\rho\) do not generate a separate asymptotic bias term after calibration; see Remark~\ref{rem:E-three}.

Theorem~\ref{thm:weight-rate} shows that Riesz calibration converts the regularized transport weight into a consistent estimator of the target-to-trial density ratio.  
It separates the roles of the two kinds of regularization.Entropic regularization parameters \(\varepsilon\) and \(\rho\) stabilize the transport problem, while the growing calibration space controls the error that matters for inference. We now combine this weight with the outcome regression.
 
\section{Doubly Robust Estimation and Efficient Inference}
\label{sec:estimation}
 
\subsection{One-step estimation and cross-fitting}
 
A plug-in estimator of \(\tau_Q\) would substitute a fitted \(\muhat\) into
\eqref{eq:identification}. Such an estimator inherits the bias of \(\muhat\) at
first order. The \emph{one-step} construction removes that bias by adding the
sample average of the estimated influence function, which is a single Newton
step on the estimating equation; the resulting estimator is the sample analogue
of \(\Psi(\muhat,\rhat)\) in \eqref{eq:onestep-functional}, and by
Proposition~\ref{prop:remainder} its error is second order in the nuisances.
 
A second device is needed because \(\muhat\) is fitted on the same data used to
evaluate it, which can induce spurious correlation with the residuals it is
meant to center. \emph{Cross-fitting} avoids this: split the trial sample into
\(K\) folds, fit the nuisance on the other \(K-1\) folds, and evaluate it only
on the held-out fold. We use \(K\) fixed and average the fold-specific fits
where a single function is needed.
 
\subsection{The estimator}
 
Let \(I_1,\ldots,I_K\) partition the trial indices, let \(\muhat^{(-k)}\) be fitted on trial controls outside \(I_k\), and let \(\bar\mu_0(x)=K^{-1}\sum_k\muhat^{(-k)}(x)\). Let \(\rhat\) be the \textsc{RiCOT} weight of Definition~\ref{def:rcot}, fitted once using all trial
controls and all target covariates. The estimator is
\begin{equation}
\that
=
\frac1m\sum_{j=1}^m\bigl\{Y_j^{\mathrm Q}-\bar\mu_0(X_j^Q)\bigr\}
-\frac1n\sum_{k=1}^K\sum_{i\in I_k}
A_i\,\rhat(X_i)\bigl\{Y_i-\muhat^{(-k)}(X_i)\bigr\}.
\label{eq:drot}
\end{equation}
For inference, replace the unknowns in Theorem~\ref{thm:eif} by their estimates,
\[
\hat\varphi_{Q,j}=Y_j^{\mathrm Q}-\bar\mu_0(X_j^Q)-\that,
\qquad
\hat\varphi_{P,i}=-A_i\,\rhat(X_i)\bigl\{Y_i-\muhat^{(-k(i))}(X_i)\bigr\},
\]
and estimate the efficiency bound by the corresponding sample variances,
\begin{equation}
\hat V
=
\frac1m\sum_{j=1}^m\hat\varphi_{Q,j}^2
+\frac mn\cdot\frac1n\sum_{i=1}^n\hat\varphi_{P,i}^2 .
\label{eq:plugin}
\end{equation}
This requires a single pass over the data; no resampling and no re-solution of
the transport problem is involved. Write
\(a_N=\norm{\muhat-\mu_0}{L^2(P)}\) and \(b_N=\norm{\rhat-\rdag}{L^2(P)}\) for
the two nuisance errors.
 
\subsection{Double robustness}
 
An estimator is \emph{doubly robust} if it remains consistent when either one
of its two nuisance functions is estimated consistently, even if the other is
not.
 
\begin{theorem}[Double robustness]
\label{thm:dr-main}
Under Assumption~\ref{ass:transport} and foldwise laws of large numbers,
\(\that\xrightarrow{p}\tau_Q\) if either \(a_N=o_p(1)\) with \(b_N=O_p(1)\), or
\(b_N=o_p(1)\) with \(a_N=O_p(1)\). \corr{No rate condition on either
nuisance is needed beyond the displayed consistency and boundedness.}
\end{theorem}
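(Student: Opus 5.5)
The plan is to decompose $\that-\tau_Q$ into a sampling term and a population term evaluated at the estimated nuisances. The population term is then controlled by the exact product remainder of Proposition~\ref{prop:remainder}. For each fold $k$, define the foldwise empirical functional
\[
\widehat\Psi_k(\mu,r)=\E_m\bigl[Y^{\mathrm Q}-\mu(X^Q)\bigr]-\E_{n,k}\bigl[A\,r(X)\{Y-\mu(X)\}\bigr],
\]
where $\E_{n,k}$ averages over $I_k$. Then $\that=K^{-1}\sum_k\widehat\Psi_k(\muhat^{(-k)},\rhat)$ up to the fold-size weights $|I_k|/n$. The target term uses the average $\bar\mu_0$, which is linear, so it splits as the average of the foldwise target terms.

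We then write
\[
\that-\tau_Q=\sum_k\tfrac{|I_k|}{n}\Bigl[\{\widehat\Psi_k-\Psi\}(\muhat^{(-k)},\rhat)+\{\Psi(\muhat^{(-k)},\rhat)-\tau_Q\}\Bigr].
\]
By Proposition~\ref{prop:remainder}, the second bracket equals $\EP[(\rhat-\rdag)(\muhat^{(-k)}-\mu_0)]$, with the nuisances held fixed. Cauchy--Schwarz in $L^2(P_X)$ bounds it by $b_N\,a_N$, reading $a_N$ as the maximum foldwise error. Under either hypothesis the product is $o_p(1)\cdot O_p(1)=o_p(1)$. This step uses only consistency of one nuisance and boundedness of the other, which gives the last sentence of the theorem.

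The main work, and the expected obstacle, is showing that the sampling term $\{\widehat\Psi_k-\Psi\}(\muhat^{(-k)},\rhat)=o_p(1)$ without rates. I would invoke the \emph{foldwise laws of large numbers} hypothesis in the following sense: empirical averages over $I_k$ and over the target sample of integrable functions built from the estimated nuisances converge to their population counterparts. To make this credible, I would split each term into its value at the truth and an increment.

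At the truth, $(\mu_0,\rdag)$ are fixed functions. Integrability of $Y^{\mathrm Q}$, and of $A\rdag(Y-\mu_0)$ (since $A\le 1/\underline e$ and $\rdag$ is bounded by Assumption~\ref{ass:transport}(ii),(iv)), gives the ordinary weak law.

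For the increments, the target piece is $(\E_m-\EQ)(\bar\mu_0-\mu_0)$. The trial piece splits as
\[
(\E_{n,k}-\EP)\bigl[A(\rhat-\rdag)(Y-\muhat^{(-k)})\bigr]+(\E_{n,k}-\EP)\bigl[A\rdag(\mu_0-\muhat^{(-k)})\bigr].
\]
For terms involving only $\muhat^{(-k)}$, condition on the training folds; independence of the target sample and of $I_k$ then gives a conditional Chebyshev bound of order $\|\muhat^{(-k)}-\mu_0\|_{L^2}/\sqrt{N}$. When $\|\muhat^{(-k)}-\mu_0\|_{L^2}=O_p(1)$, this is already $o_p(1)$; note that $\|\cdot\|_{L^2(Q_X)}\le\|\rdag\|_\infty^{1/2}\|\cdot\|_{L^2(P_X)}$ transfers the bound to the target.

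The delicate piece involves $\rhat$, which is fitted on all data and is therefore not independent of $I_k$. Here I would rely on the structure from Proposition~\ref{prop:tiltform}: $\rhat=\exp\{\gamma\Lhat+\hat\theta^\top b_J\}$, with $\Lhat$ uniformly close to $L$ and a bounded tilt. This places $\rhat$ in a class of uniformly bounded functions with controlled complexity, on which a uniform LLN (the foldwise LLN assumption) applies. Combined with the moment bound on $Y-\muhat^{(-k)}$, this gives $o_p(1)$.

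If that uniform argument proves awkward, the fallback is to take the foldwise LLN assumption literally as the statement that these empirical-minus-population differences vanish. The theorem then follows from the decomposition and the remainder identity alone.

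Summing over the $K$ fixed folds completes the proof.
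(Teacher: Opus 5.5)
Your proposal is correct and follows essentially the paper's route (Theorem~\ref{thm:F-dr}). The paper makes the ``foldwise laws of large numbers'' hypothesis precise as the stochastic-equicontinuity condition $\that-\Psi(\muhat,\rhat)=o_p(1)$, which is exactly your fallback, and then combines the exact product remainder of Proposition~\ref{prop:remainder} with Cauchy--Schwarz. Your direct verification of the $\rhat$ term through the exponential-tilt class goes beyond the paper and would import the hypotheses behind Theorem~\ref{thm:weight-rate} (two-sided overlap, sieve growth, offset stability), which the theorem deliberately avoids; it is therefore the fallback reading that delivers the statement as written.
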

 
The sample average converges to the population functional,
whose error is exactly the product \eqref{eq:exact_remainder}; Cauchy--Schwarz
bounds that product by the product of the two nuisance errors,
\begin{equation}
\bigl|\Psi(\muhat,\rhat)-\tau_Q\bigr|
\;\le\;
\norm{\rhat-\rdag}{L^2(P)}\;\norm{\muhat-\mu_0}{L^2(P)}
\;=\;b_N\,a_N,
\label{eq:pf-dr}
\end{equation}
which is \(o_p(1)\) as soon as one factor vanishes and the other is bounded.
See Theorem~\ref{thm:F-dr} and its proof for more details.
 
This branch of the theory survives a \emph{fixed} calibration dimension, at
which \(\rhat\) converges to some \(r_J\ne\rdag\) and \(b_N\) need not vanish.
 
\subsection{Asymptotic normality and inference}
 
\begin{theorem}[Efficient limit distribution and plug-in variance]
\label{thm:singlefit}
Suppose in addition that the nuisances satisfy the product-rate condition
\(a_Nb_N=o_p(N^{-1/2})\), that the trial residual is bounded, and that
\corr{\(J_N\log N=o(\sqrt N)\)} and \(\sqrt{J_N\log N}\,\max(a_N,b_N)=o_p(1)\). Then
\(\that\) is asymptotically linear with the influence functions of
Theorem~\ref{thm:eif}, and
\[
\sqrt m\,(\that-\tau_Q)\xrightarrow{d}\mathcal N(0,V_{\mathrm{eff}}),
\qquad
\hat V\xrightarrow{p}V_{\mathrm{eff}},
\]
so \(\that\pm z_{1-\alpha/2}\sqrt{\hat V/m}\) is an asymptotically valid Wald
interval.
\end{theorem}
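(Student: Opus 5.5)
The plan is the standard one-step decomposition: split \(\that-\tau_Q\) into a linear term, an empirical-process term, and the exact product remainder of Proposition~\ref{prop:remainder}.

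\textbf{Decomposition.} Write \(\that-\tau_Q=(\E_m-\EQ)\varphi_Q+(\En-\EP)\varphi_P+R_1+R_2\). Here
\[
R_2=\sum_k \frac{|I_k|}{n}\bigl\{\Psi(\muhat^{(-k)},\rhat)-\tau_Q\bigr\},
\]
which by \eqref{eq:exact_remainder} and Cauchy--Schwarz is bounded by \(a_Nb_N=o_p(N^{-1/2})\). The term \(R_1\) collects centered empirical-process fluctuations of the differences \(\{\mu_0-\bar\mu_0\}\) on the target side and \(A\rhat(Y-\muhat^{(-k)})-A\rdag(Y-\mu_0)\) on the trial side.

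\textbf{Target side and \(\mu\)-part of the trial side.} The target term is \((\E_m-\EQ)(\mu_0-\bar\mu_0)\). Because \(\bar\mu_0\) is fitted only on trial data, it is fixed conditionally on the trial sample, and the target sample is independent. Chebyshev therefore gives order \(m^{-1/2}\|\bar\mu_0-\mu_0\|_{L^2(Q_X)}\le m^{-1/2}\|\rdag\|_\infty^{1/2}a_N=o_p(m^{-1/2})\). On the trial side, I split the difference as
\[
A(\rhat-\rdag)(Y-\mu_0)+A\rhat(\mu_0-\muhat^{(-k)}).
\]
For the second piece, use cross-fitting fold by fold. Conditional on the out-of-fold data, \(\muhat^{(-k)}\) is fixed, and \(\rhat\) is bounded by the sup-norm tilt control in Proposition~\ref{prop:kkt}. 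The conditional variance is then \(O(a_N^2)\), up to the \(\rhat\)-dependence on the fold, which I handle together with the first piece.

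\textbf{Main obstacle.} The hard part is that \(\rhat\) is fitted on the full sample, including the evaluation fold, so \(\En[A(\rhat-\rdag)(Y-\mu_0)]\) is not a conditionally centered average. I would exploit the tilt structure of Proposition~\ref{prop:tiltform}: \(\log\rhat=\gamma\Lhat+\hat\theta^\top b_J\). A Taylor expansion around \(\gamma L+\theta_{J_N}^\top b_J\) makes the dependence on data enter through (i) the finite-dimensional parameter \(\hat\theta\in\R^{J_N}\) and (ii) \(\Lhat\), which is controlled in sup norm by \(a_{L,N}\). The \(\hat\theta\)-contribution is then
\[
(\hat\theta-\theta_{J_N})^\top\En\bigl[A r b_J(Y-\mu_0)\bigr].
\]
The second factor is a mean-zero vector average with norm \(O_p(\sqrt{J_N\log N/N})\), by a matrix-Bernstein bound using bounded residuals. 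The first factor is \(O_p(b_N)\)-scale by \eqref{eq:pf-rate}. Their product is \(o_p(N^{-1/2})\) exactly under \(\sqrt{J_N\log N}\,b_N=o_p(1)\). The remaining pieces are handled as follows:
\begin{itemize}
\item the fixed-function approximation error \(r_{J_N}-\rdag\) is a fixed function, so Chebyshev gives a term of order \(N^{-1/2}\|r_{J_N}-\rdag\|_{L^2(P_X)}=o_p(N^{-1/2})\);
\item the \(\Lhat\)-part is bounded by a sup-norm argument;
\item the Taylor remainder is quadratic and is absorbed using \(J_N\log N=o(\sqrt N)\).
\end{itemize}
The cross term with \(\mu\)-errors is handled the same way, using \(\sqrt{J_N\log N}\,a_N=o_p(1)\).

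\textbf{Conclusion and variance.} Asymptotic linearity follows with influence functions \(\varphi_Q\) and \(\varphi_P\). Independence of the two samples and \(n/m\to\eta\) give, by the Lindeberg CLT applied to each sum, \(\sqrt m(\that-\tau_Q)\to\mathcal N(0,V_{\mathrm{eff}})\), since \(\sqrt m\,(\En-\EP)\varphi_P\) has variance \((m/n)\Var_P\varphi_P\to\eta^{-1}\Var_P\varphi_P\). For \(\hat V\), write \(\hat\varphi=\varphi+\delta\). The averages of \(\delta^2\) are \(O_p(a_N^2+b_N^2+(\that-\tau_Q)^2)=o_p(1)\) by boundedness of \(A\), \(\rhat\), and the residuals, so Cauchy--Schwarz on the cross terms and the LLN for \(\varphi^2\) give \(\hat V\to_p V_{\mathrm{eff}}\). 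Slutsky then yields validity of the Wald interval.
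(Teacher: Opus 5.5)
\textbf{There is a genuine gap at the step you identify as the crux.} Your skeleton is the paper's: the exact decomposition of Theorem~\ref{thm:F-decomp}, conditional Chebyshev for the cross-fitted $\muhat$ terms, Cauchy--Schwarz on the product remainder, and two independent CLTs. But the non-cross-fitted term $\En[A(\rhat-\rdag)(Y-\mu_0)]$ does not close along your Taylor route. Write $r_J=e^{\gamma L+\theta_J^\top b_J}$ and $\Delta=\gamma(\Lhat-L)+(\thetahat-\theta_J)^\top b_J$. Two pieces fail.
\begin{enumerate}
\item \emph{Offset piece.} For $\gamma\,\En[A\,r_J(\Lhat-L)(Y-\mu_0)]$, a sup-norm bound discards the centring and yields only $O_p(a_{L,N})$. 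That is not $o_p(N^{-1/2})$. Indeed $\widetilde L_m-L$ is itself an $m^{-1/2}$ empirical fluctuation, and Remark~\ref{rem:E-offset-rootN} records that $a_{L,N}\asymp N^{-1/2}$ is the best available; the theorem assumes nothing faster.
\item \emph{Taylor remainder.} Bounding $r_J(e^{\Delta}-1-\Delta)$ in sup norm gives $O_p(\norm{\Delta}{\infty}^2)=O_p(J\delta_N^2)$, because $\norm{(\thetahat-\theta_J)^\top b_J}{\infty}\le B\sqrt J\,\norm{\thetahat-\theta_J}{2}$. Since $\delta_N^2\ge J\log J/N$, this is $o(N^{-1/2})$ only if $J^2\log J=o(\sqrt N)$. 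Neither $J_N\log N=o(\sqrt N)$ nor $\sqrt{J_N\log N}\,b_N=o_p(1)$ supplies that.
\end{enumerate}
The $\rhat$-dependence inside $(\En-\EP)[A\rhat\dmu]$, which you defer to ``the first piece'', is therefore also left open.

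The paper never linearizes. It places $\rhat$, with probability tending to one, in a deterministic class $\mathcal R_J$: a $J$-dimensional tilt ball times $e^{\gamma\ell}$, with $\ell$ in a fixed $W^{s,\infty}$ ball (Lemma~\ref{lem:dual-bdd}, Corollary~\ref{cor:E-Lpop}). It then applies a bracketing maximal inequality. The Sobolev entropy $\epsilon^{-d/s}$ is integrable because $s>d/2$, which gives equicontinuity in the offset direction. The $J\log N/\sqrt n$ term of that inequality is where $J_N\log N=o(\sqrt N)$ comes from.

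\textbf{A cheaper repair.} Your diagnosis of the obstacle is itself off. By Definition~\ref{def:emp}, $\rhat$ depends on the trial sample only through $X_1,\dots,X_n$. Conditionally on all covariates, the summands $A_i(\rhat-\rdag)(X_i)\{Y_i-\mu_0(X_i)\}$ are therefore independent and centred (Lemma~\ref{lem:A-invrand}). The conditional variance of their average is at most $C_Y\underline e^{-1}n^{-1}\En[(\rhat-\rdag)^2]$. The tilt representation, with the empirical Gram bound \eqref{eq:E-Ghat} and the bounded exponents from Theorem~\ref{thm:E-rate}, gives $\En[(\rhat-\rdag)^2]=O_p(\delta_N^2)$. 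That disposes of the term with no Taylor expansion or entropy calculation.

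For $(\En-\EP)[A\rhat\dmu]$, split $A\rhat\dmu=(A-1)\rhat\dmu+\rdag\dmu+(\rhat-\rdag)\dmu$:
\begin{itemize}
\item the first summand is centred given the covariates and the out-of-fold data;
\item the second is handled by foldwise Chebyshev;
\item the third is bounded directly by Cauchy--Schwarz in empirical and population norms, i.e.\ by a product-rate bound.
\end{itemize}

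The same empirical-norm control is what your variance step is missing. For a non-cross-fitted $\rhat$, $\En[A^2(\rhat-\rdag)^2(Y-\mu_0)^2]$ is not bounded by the population quantity $b_N^2$ without such an argument; the paper uses a bracketing Glivenko--Cantelli step there.
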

 
The estimation error decomposes exactly into the two
efficient influence-function averages and three remainders,
\begin{equation}
\that-\tau_Q
=
(\Em-\EQ)\varphi_Q
+(\En-\EP)\varphi_P
+R_{1n}+R_{2n}+R_{3n},
\label{eq:pf-clt}
\end{equation}
where \(R_{3n}\) is the product remainder \eqref{eq:exact_remainder} and
\(R_{1n},R_{2n}\) are centered sample averages involving the fitted nuisances.
The product-rate condition kills \(R_{3n}\); cross-fitting makes the terms
involving \(\muhat\) conditionally centered with vanishing variance; and the
term involving \(\rhat\), which is \emph{not} cross-fitted, is controlled by a
maximal inequality over the exponential family of
Proposition~\ref{prop:tiltform}, which is a class of controlled complexity.
Two independent central limit theorems then give the limit. See
Theorem~\ref{thm:F-decomp}, Lemma~\ref{lem:F-rem} and
Theorem~\ref{thm:F-clt}; variance consistency is shown in Theorem~\ref{thm:F-var}, and
primitive sufficient conditions, amounting to \(\beta+s/(2s+d)>1/2\) with
\(s>d/2\) for an outcome-regression rate exponent \(\beta\), are discussed in
Corollary~\ref{cor:F-primitive}. 
 

\begin{remark}[Cross-fitting the transport weights]
\label{rem:main-crossfit}
Cross-fitting is essential for \(\muhat\), whose estimation error is otherwise correlated with the residuals it centers, but is not strictly required for \(\rhat\), which depends only on trial and target covariates. In the single-fit construction, the resulting empirical-process term is instead controlled by a maximal inequality over the exponential-tilting family. Cross-fitting \(\rhat\) nevertheless yields a simpler conditional argument, reduces sensitivity to in-sample overfitting and concentrated transport weights, and accommodates increasing calibration dimension or adaptive regularization without Donsker-type restrictions. It also removes the bounded-residual condition and the additional requirements
\(J_N\log N=o(\sqrt N)\), \(\sqrt{J_N\log N}\max(a_N,b_N)=o_p(1)\), the latter of which, under the rate-optimal choice of \(J_N\), entails the smoothness condition \(s>d/2\). The cost is that, with \(K\)-fold cross-fitting, the transport problem must be solved \(K\) times, where \(K\) denotes the number of folds.
\end{remark}

\begin{remark}[Practical choice of cross-fitting]
\label{rem:main-crossfit-practical}
With sufficiently large trial and target samples, we recommend cross-fitting both \(\muhat\) and \(\rhat\), typically using \(K=5\) or \(10\) folds, because the loss of training observations is modest and the foldwise construction provides greater robustness to overfitting. All tuning choices should be made within the training folds, and observations should be split at the independent-unit level. In small samples, however, cross-fitting may produce unstable nuisance estimates and transport weights because each training fold contains fewer observations. A single-fit weight estimator is then often preferable, provided that the complexity and rate conditions of Theorem~\ref{thm:singlefit} are plausible.
\end{remark}
 
 
 
\section{Computation and Tuning}
\label{sec:implementation}
 
To apply Riesz-Calibrated OT based estimation, three quantities must be chosen. The entropic parameter \(\varepsilon\), the
relaxation parameter \(\rho\), and the calibration dimension \(J\). The theory
assigns them sharply different roles, and this is the practical content of
Remark~\ref{rem:E-tuning}: \(J\) is the only parameter that appears in the
convergence rate, while \(\varepsilon\) and \(\rho\) select which stable member
of the calibrated family is computed.
 
\begin{algorithm}[ht]
\singlespacing
\caption{\textsc{RiCOT} with cross-fitted outcome regression}
\label{alg:main}
\begin{algorithmic}[1]
\REQUIRE Trial data $(X_i,T_i,Y_i)_{i=1}^n$; target data
$(X^Q_j,Y_j^{\mathrm Q})_{j=1}^m$; basis $b_J$; cost $c$; $\varepsilon,\rho$;
folds $K$
\STATE Standardize covariates using the prespecified trial-based scaling and
form the cost matrix $C_{ij}=c(X_i,X^Q_j)$.
\STATE Partition the trial indices into folds $I_1,\ldots,I_K$.
\FOR{$k=1,\ldots,K$}
  \STATE Fit $\muhat^{(-k)}$ using trial controls outside $I_k$.
\ENDFOR
\STATE Solve the calibrated transport problem of Definition~\ref{def:emp} once,
using all trial and the target real-world treated covariates, by alternating target-dual
Sinkhorn updates for $v$ with Newton updates for the tilt parameter $\theta$;
extract $\rhat$ and record the calibration residual and the plan and weight
ratios.
\STATE Average the fold-specific outcome predictions at each target covariate.
\STATE Compute $\that$ from \eqref{eq:drot} and $\hat V$ from
\eqref{eq:plugin}.
\RETURN $\that$, its Wald interval, and the diagnostics listed below.
\end{algorithmic}
\end{algorithm}
 
By Proposition~\ref{prop:tiltform}, step~5 is a strictly convex problem in the
\(J\)-dimensional tilt parameter for a given target dual, which is what makes
the alternating scheme well behaved; Proposition~\ref{prop:kkt} supplies the
out-of-sample extension of \(\rhat\). For the fully cross-fitted variant of
Remark~\ref{rem:main-crossfit}, the transport step is repeated within each
training fold while retaining the full target covariate sample.
 
\emph{Calibration dimension.} This is the parameter the theory determines.
Theorem~\ref{thm:weight-rate} trades approximation error \(J^{-s/d}\) against
estimation error \(\{J\log J/N\}^{1/2}\), balanced at the value of
\(J_N\asymp(N/\log N)^{d/(2s+d)}\) by Corollary~\ref{cor:E-J}; when a credible
outcome-regression rate exponent is available,
Corollary~\ref{cor:E-window} gives the admissible window directly. \corr{Three} upper
constraints bind. First, \(J_N^2\log J_N/N\to0\), that is
\(J_N=o(\sqrt{N/\log N})\), is required by Theorem~\ref{thm:weight-rate}
itself, because passing from the log scale on which the convex-estimation
argument operates back to the weight costs a factor \(\sqrt J\) in the supremum
norm. Second, by Proposition~\ref{prop:tiltform} the calibration program has a
solution only while \(\Em[b_J(Z)]\) remains inside the \corr{relative
interior of the} convex hull of the design points, which fails once the sieve
is too rich for the sample. \corr{Third, the offset error must stay small
relative to the sieve, $\sqrt{J}\,a_{L,N}\to0$; this is part of the growth
condition of Theorem~\ref{thm:weight-rate} and is automatic under root-$N$
offset stability.} These are
benchmarks rather than plug-in rules; we treat the displayed rate as a scale
and select within it using the diagnostics below.
 
\emph{Entropic regularization and source relaxation.} Neither is required to
vanish. By Remark~\ref{rem:E-tuning} they enter the asymptotics only through
the transport offset, whose stability is the content of
Assumption~\ref{ass:des-dual} and which contributes the term \(a_{L,N}\) to
Theorem~\ref{thm:weight-rate}; accordingly no condition of the form
\(\varepsilon_N\to0\) at a specified rate appears anywhere in the theory. We
select \(\varepsilon\) and \(\rho\) over prespecified bounded grids on the
scale of the transport cost, monitoring weight concentration, effective sample
size, and sensitivity of the final estimate. Should one wish to let either
parameter vary with \(N\), the stability constants in
Lemma~\ref{lem:dual-bdd} and the sequence \(a_{L,N}\) must be tracked
explicitly. No value of \(\rho\) substitutes for the support condition in
Assumption~\ref{ass:transport}(v).
 
\emph{Diagnostics.} Four quantities should be reported. The \emph{calibration
residual} of \eqref{eq:E-calib} is zero by construction when the constrained
problem is solved, so its departure from numerical zero is the numerical signature of the
convex-hull criterion in Proposition~\ref{prop:tiltform}, and is the natural
diagnostic for the upper edge of \(J\). The \emph{conditioning of the
calibration system}, that is the extreme eigenvalues of the empirical Gram
matrix, monitors Assumption~\ref{ass:des-basis}. The \emph{plan and weight
ratios} of Assumption~\ref{ass:des-dual} should be checked to be bounded away
from zero and infinity, since it is these quantitative bounds, and not mere
interiority of the solution, that Lemma~\ref{lem:dual-bdd} converts into the
offset stability the rate theory requires. Finally the \emph{effective sample
size} and covariate balance document the information cost of the overlap
restriction; for cross-fitted weights or an independent validation split,
held-out Riesz imbalance provides a further check.
Section~\ref{sec:sims} shows that effective sample size is not a safe
standalone diagnostic. We report sensitivity over a neighbourhood of
\((\varepsilon,\rho,J)\) alongside these quantities.
\section{Simulation Studies}
\label{sec:sims}
 
The theory yields four implications that we examine in finite samples. First,
Riesz calibration should recover the target weighting function even when a
parametric sampling-score model is misspecified. Second, with a sufficiently
rich calibration sieve, the plug-in variance should approach the
semiparametric efficiency bound, although overly rich calibration may lead to
instability. Third, when both working nuisance models are initially
misspecified, increasing the calibration dimension should restore valid
inference through consistency of the weighting branch, potentially at some
cost in efficiency. Finally, the source-relaxed estimator should remain stable
as overlap weakens, while commonly used weight diagnostics need not fully
reflect estimator performance. We study these implications under a common
simulation design, described in Section~\ref{sec:sim-design} and the
findings are summarized in Section~\ref{sec:sim-results}.
 
\subsection{Design}
\label{sec:sim-design}
 
Covariates are \(d=3\) dimensional, and the trial and target covariate laws
differ in one of three ways, ordered so that the Riesz representer \(\rdag\)
becomes progressively harder for a finite calibration space to represent. In
\(\mathrm{S}_0\) the two laws are Gaussian with common covariance and a mean
shift, so \(\log\rdag\) is affine and a logistic sampling score is correctly
specified; calibration ought to be unnecessary here, and the question is only
whether it costs anything. In \(\mathrm{S}_1\) a covariance tilt makes
\(\log\rdag\) quadratic, defeating the logistic score but still spanned exactly
by a degree-two Hermite basis. In \(\mathrm{S}_2\) the target is an asymmetric
two-component Gaussian mixture displaced from the trial law, so no finite basis
spans \(\log\rdag\) and the growing-sieve regime is genuine.
 
Outcomes are binary with control mean
\(\mu_0(x)=\mathrm{expit}(x^{\top}\vartheta+\tfrac12x^{\top}\Omega x)\) and an
\(x\)-varying treatment effect. Taking \(\Omega=0\) makes the fitted
main-effects logistic regression correct, while the starred scenarios
\(\mathrm{S}_0^{*},\mathrm{S}_1^{*},\mathrm{S}_2^{*}\) take \(\Omega\ne0\) so
that the same fitted model is wrong; since \(\mathrm{S}_1\) and
\(\mathrm{S}_2\) already defeat the logistic score, \(\mathrm{S}_1^{*}\) and
\(\mathrm{S}_2^{*}\) are the cells in which \emph{both} nuisances are
misspecified. An overlap index \(s_{\mathrm{ov}}\) scales the discrepancy
between the two laws --- the mean gap, the covariance tilt, and the mixture
displacement respectively --- with larger \(s_{\mathrm{ov}}\) meaning weaker
overlap and \(s_{\mathrm{ov}}=1\) the reference. Every configuration satisfies
\(\rdag\in L^2(P)\), and we verify the stronger fourth-moment condition that
stabilizes \(\hat V_P\).
 
The factorial crosses the six scenarios with
\(s_{\mathrm{ov}}\in\{0.5,1,1.5,2,2.5\}\), sample-size ratios
\(\eta=n/m\in\{0.5,2\}\), trial sizes \(n\in\{500,1000,2000,2500\}\) and
calibration dimensions \(J\in\{4,10,15,20\}\), giving \(960\) configurations at
\(500\) replicates each, or \(4.8\times10^{5}\) replicates in total, with
\(\P(T=0)=1/2\) throughout. We compare six estimators: the unadjusted
difference in means (\textsc{Naive}), outcome-regression standardization (\textsc{G-comp}), inverse
weighting by a logistic sampling-score density ratio (\textsc{IPW-PS}) or by
the calibrated OT weight (\textsc{IPW-OT}), and the two augmented estimators
built on those weights, \textsc{AIPW-PS} and the proposed \textsc{RiCOT}. The
outcome regression is cross-fitted with \(K=5\) folds; the OT weight is
computed once on the full trial sample, which Remark~\ref{rem:F-crossfit}
justifies. Both weightings are fit on the same source sample, so any difference
between them reflects the weighting method rather than the information used to
learn it. A detailed form of all the estimators have been provided in Section \ref{app:estimators} in the Appendix.
 
Two features of the implementation deserve mention because they affect what the
numbers can support. First, \(\tau_Q\), \(V_Q\) and \(V_P\) are computed
\emph{exactly} rather than by Monte Carlo: the tilted-Gaussian identity
\(q^2/p\propto\mathcal N(m_*,\Lambda^{-1})\) reduces \(V_P\) to the expectation
of a bounded function under a light-tailed proposal, whereas direct simulation
of \(V_P\) is dominated by the tail of \((\rdag)^2\) and converges far too
slowly to certify a variance ratio to three decimals. Second, the solver
returns the calibration residual \eqref{eq:E-calib}, which sits at numerical
zero when the calibration system is solvable and jumps to \(O(10^{-1})\) when
\(J\) is too large for the sample, exactly as Proposition~\ref{prop:tiltform}
predicts; we report only configurations whose residual falls below \(10^{-6}\)
and treat the failure boundary itself as an object of study. Reported
quantities are Monte Carlo bias, standard deviation, root mean squared error,
RMSE relative to the oracle standard deviation
\(\{V_{\mathrm{eff}}/m\}^{1/2}\), coverage of the nominal \(95\%\) Wald
interval built from \eqref{eq:plugin}, and the variance ratios
\(\hat V_Q/V_Q\), \(\hat V_P/V_P\) and \(\hat V/V_{\mathrm{eff}}\); with
\(500\) replicates a coverage estimate at the nominal level carries a standard
error of \(0.010\).
 
\subsection{Results}
\label{sec:sim-results}
 
\paragraph{Calibration restores consistency under sampling-score misspecification.}
Figure~\ref{fig:sim-benchmark} reports the reference configuration
\(s_{\mathrm{ov}}=1\), \(n=2000\), \(\eta=0.5\), \(J=10\).
 
Three patterns are visible. In \(\mathrm{S}_0\), where the logistic score is
correct, every adjusted estimator is unbiased and both augmented estimators
cover at the nominal rate; calibration costs nothing. In the unstarred
scenarios \(\mathrm{S}_1\) and \(\mathrm{S}_2\) the score is misspecified but
the outcome model is not, and double robustness protects \textsc{AIPW-PS}
exactly as Theorem~\ref{thm:dr-main} says it should --- its bias is negligible
even though \textsc{IPW-PS}, which has no outcome model to fall back on,
carries a bias of \(1.8\times10^{-2}\) in \(\mathrm{S}_2\).
 
The separation appears when both models fail. In \(\mathrm{S}_1^{*}\) the
\textsc{AIPW-PS} bias is \(-5.96\times10^{-2}\), some sixty times that of
\textsc{RiCOT}, and its intervals cover in \(2.2\%\) of replicates showing that the Wald
interval is not merely inefficient but inferentially void. \textsc{RiCOT}
retains a bias of \(-1.0\times10^{-3}\) and coverage of \(0.942\). The same
ordering, less extreme, holds in \(\mathrm{S}_2^{*}\) (\(0.880\) versus
\(0.946\)). Across all six scenarios \textsc{RiCOT} is the only estimator whose
bias never exceeds \(7\times10^{-3}\) and whose coverage stays within
\([0.940,0.956]\).
 
That \(\mathrm{S}_1^{*}\) is worse than \(\mathrm{S}_2^{*}\) is not an accident
of tuning. Double misspecification biases the one-step estimator through the
product remainder \eqref{eq:exact_remainder}, and its magnitude depends on
whether the two errors align. In \(\mathrm{S}_1^{*}\) the covariance tilt and
the outcome curvature push in the same direction and compound; in
\(\mathrm{S}_2^{*}\) the symmetric mixture leaves them partially orthogonal.
The lesson is that ``both models wrong'' is not a single regime, and reporting
only a favourable instance of it would understate the exposure of score-based
estimators.
 
\begin{figure}[t]
\centering
\includegraphics[width=\textwidth]{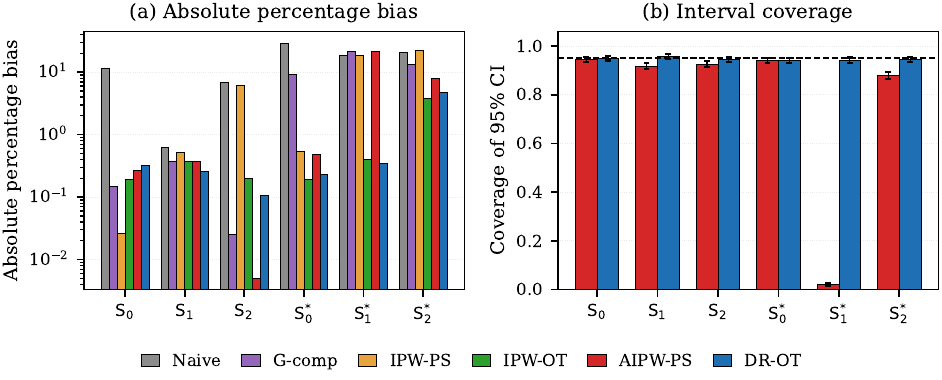}
\caption{Reference configuration ($s_{\mathrm{ov}}=1$, $n=2000$, $\eta=0.5$,
$J=10$; $500$ replicates). (a) Absolute bias on a logarithmic scale.
(b) Coverage of nominal $95\%$ Wald intervals for the two augmented estimators,
with Monte Carlo standard errors; the dashed line is the nominal level. Starred
scenarios have a misspecified outcome model. In $\mathrm{S}_1^{*}$ and
$\mathrm{S}_2^{*}$ both nuisances are misspecified.}
\label{fig:sim-benchmark}
\end{figure}
 
\paragraph{Plug-in variance recovery and the calibration-dimension tradeoff.}
Figure~\ref{fig:sim-efficiency} plots \(\hat V/V_{\mathrm{eff}}\) against
\(n\) for the correctly specified scenarios. For an adequate calibration space
the ratio approaches one from above as \(n\) grows. At \(J=10\) and \(n=2500\)
it is \(1.007\) in \(\mathrm{S}_1\), \(1.027\) in \(\mathrm{S}_0\) and
\(1.064\) in \(\mathrm{S}_2\), with coverage between \(0.935\) and \(0.953\)
throughout. The ordering matches the difficulty of the representer, and the
residual excess in \(\mathrm{S}_2\) is consistent with an approximation error
still shrinking at the largest sample size we simulate.
 
Two distinct failure modes bracket the useful range of \(J\), and they are
qualitatively different. A calibration space that is too coarse produces a
variance estimate biased \emph{downward} that does not improve with sample
size. At \(J=4\) the ratio \(\hat V_P/V_P\) equals \(0.794\) in
\(\mathrm{S}_1\) and \(0.827\) in \(\mathrm{S}_2\), drifting further from one
as \(n\) increases, because more data merely sharpens an estimate of a
functional the four-term basis cannot represent. 
The proportion of configurations with residual above \(10^{-6}\) rises from
\(0\%\) at \(J=4\), \(n\ge2000\) to \(45\%\) at \(J=20\), \(n=500\), and in
those cells \(\hat V_P/V_P\) takes values as large as four. This is the
empirical signature of the two upper constraints of
Section~\ref{sec:implementation}. The condition \(J_N^2\log J_N/N\to0\) of
Theorem~\ref{thm:weight-rate} and the convex-hull feasibility criterion of
Proposition~\ref{prop:tiltform}.
 
There is also a genuine cost to over-enrichment even when calibration succeeds.
In \(\mathrm{S}_0\), whose representer is log-linear,
\(\hat V/V_{\mathrm{eff}}\) deteriorates monotonically from \(0.962\) at
\(J=4\) to \(1.218\) at \(J=20\). The practical implication is
that \(J\) should track the complexity of the representer rather than be made
as large as the sample permits, and that the calibration residual --- available
at no extra cost --- is the natural diagnostic for the upper edge.
 
\begin{figure}[t]
\centering
\includegraphics[width=\textwidth]{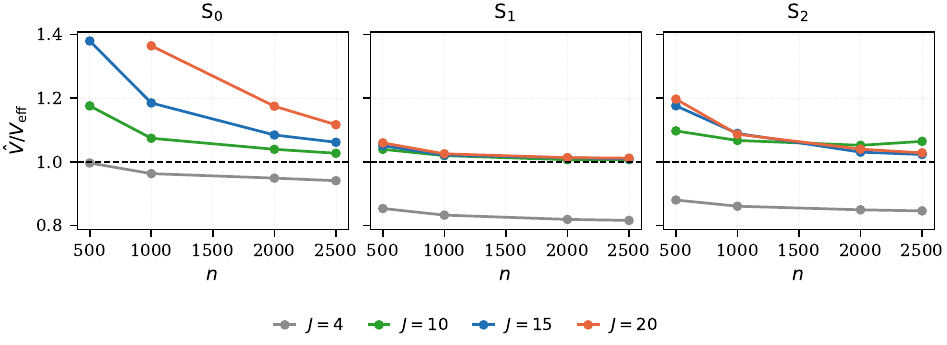}
\caption{Ratio of the plug-in variance to the semiparametric efficiency bound
against trial size, at $s_{\mathrm{ov}}=1$, for correctly specified outcome
models. Only configurations whose calibration residual is below $10^{-6}$ are
shown.}
\label{fig:sim-efficiency}
\end{figure}
 
\paragraph{Growing calibration restores validity but not efficiency under
outcome-model misspecification.}
The preceding comparison fixed \(J=10\). Because the calibrated weight is not a
fixed model but a sieve estimator, the interesting question in
\(\mathrm{S}_1^{*}\) and \(\mathrm{S}_2^{*}\) is what happens as \(J\) grows.
 
Coverage of \textsc{RiCOT} rises from \(0.823\) at \(J=4\) to \(0.954\) at
\(J=20\) in \(\mathrm{S}_1^{*}\), and from \(0.559\) to \(0.958\) in
\(\mathrm{S}_2^{*}\): enriching the calibration space repairs inference even
though the outcome model remains wrong throughout, because a consistent weight
is by itself sufficient for the one-step estimator. Coverage of
\textsc{AIPW-PS} is flat in \(J\) --- \(0.22\) in \(\mathrm{S}_1^{*}\) and
\(0.89\) in \(\mathrm{S}_2^{*}\) --- since no amount of computation improves a
parametric score that converges to the wrong limit.
 
The efficiency ratio behaves differently, and the difference is the point.
\(\hat V/V_{\mathrm{eff}}\) does not return to one; it plateaus strictly above,
at \(1.065\) in \(\mathrm{S}_1^{*}\) and \(1.182\) in \(\mathrm{S}_2^{*}\).
This is what Corollary~\ref{cor:F-fixedJ} and Remark~\ref{rem:F-summary} lead
one to expect that with a misspecified outcome regression the influence function is
no longer the canonical one, so the estimator is regular and asymptotically
linear but not efficient, and the plug-in variance correctly estimates the
larger sampling variance rather than the bound. The practical reading is that
under double misspecification the analyst recovers valid inference at a
quantified price of six to eighteen percent in variance, and that the price is
visible in the reported diagnostics rather than hidden.
 
\begin{figure}[t]
\centering
\includegraphics[width=\textwidth]{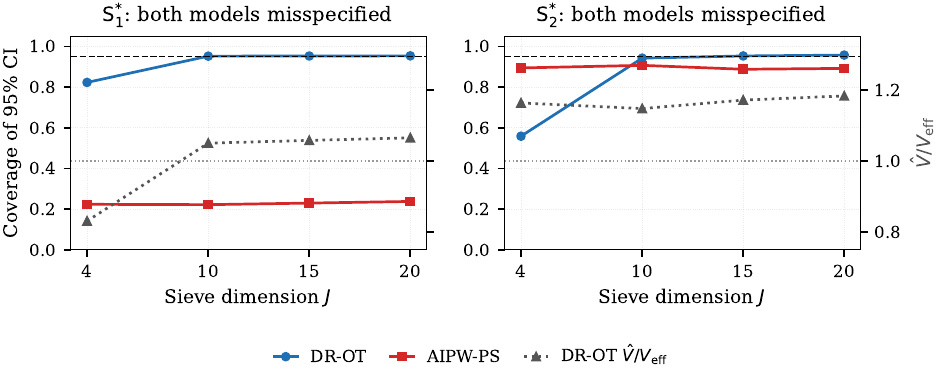}
\caption{Both nuisances misspecified ($s_{\mathrm{ov}}=1$, $n=2000$,
$\eta=0.5$). Left axis: coverage of nominal $95\%$ intervals against
calibration dimension. Right axis (grey, dotted): $\hat V/V_{\mathrm{eff}}$ for
\textsc{RiCOT}. Growing the calibration space restores nominal coverage for
\textsc{RiCOT} while the efficiency ratio plateaus above one; \textsc{AIPW-PS}
is unaffected by $J$.}
\label{fig:sim-doublemis}
\end{figure}
 
\paragraph{Robust performance under weakening overlap, and the limits of
effective sample size.}
Figure~\ref{fig:sim-overlap} follows the estimators as the overlap index
\(s_{\mathrm{ov}}\) increases from \(0.5\) to \(2.5\). The absolute bias of
\textsc{RiCOT} grows from \(8\times10^{-4}\) to \(7.8\times10^{-3}\) while its
coverage remains between \(0.930\) and \(0.955\); that of \textsc{AIPW-PS}
grows from \(1.1\times10^{-2}\) to \(5.2\times10^{-2}\) and its coverage falls
monotonically from \(0.779\) to \(0.579\). Root mean squared error of
\textsc{AIPW-PS} exceeds that of \textsc{RiCOT} by \(25\)--\(47\%\) across the
range. The multiplicative shrinkage of the semi-unbalanced formulation is doing
the work that Remark~\ref{rem:D-shrink} and Corollary~\ref{cor:E-tilt}
attribute to it.
 
Panel (c) contains a caution worth stating explicitly. As overlap deteriorates,
the effective sample size of the calibrated weights falls sharply, from \(828\)
to \(134\), while that of the sampling-score weights declines only from \(899\)
to \(644\). It would be natural to read the more stable diagnostic as the
better-behaved procedure. The opposite is true that the score weights remain
diffuse precisely because the logistic model fails to detect the covariate
shift, and their apparent stability accompanies a bias four to seven times
larger and coverage approaching \(0.58\). Effective sample size measures
concentration of the weights, not correctness, and under misspecification a
reassuring value can indicate under-correction rather than efficiency. We
recommend reporting it alongside the calibration residual and a balance check,
never alone.
 
\begin{figure}[t]
\centering
\includegraphics[width=\textwidth]{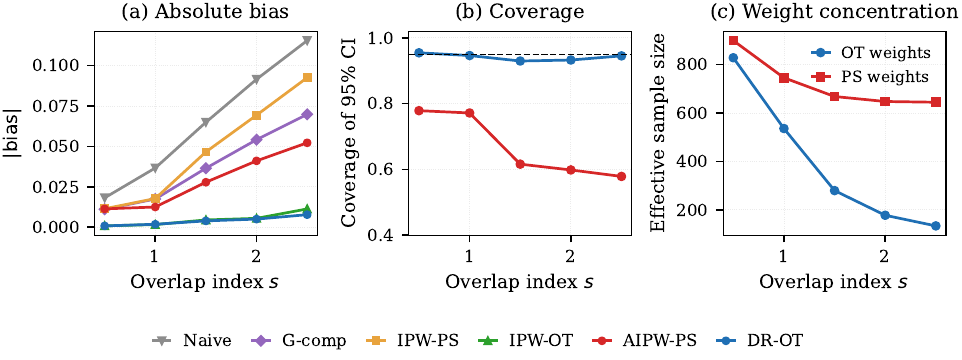}
\caption{Degradation with weakening overlap ($n=2000$, $\eta=0.5$, $J=10$;
averaged over scenarios). (a) Absolute bias. (b) Coverage of nominal $95\%$
intervals. (c) Effective sample size of the two weightings. Larger
$s_{\mathrm{ov}}$ means the target law sits further from the trial law.}
\label{fig:sim-overlap}
\end{figure}
 
\paragraph{Summary.}
The simulations support the main theoretical predictions, with two
finite-sample qualifications. Calibration is free when the sampling score is
correct and decisive when it is not; the plug-in variance attains the
efficiency bound for an adequate calibration space; growing that space restores
valid inference under double misspecification at a measurable efficiency cost;
and source relaxation delivers graceful degradation under weak overlap. The
qualifications are that \(J\) has two edges rather than one --- too coarse
leaves a variance bias that more data cannot remove, too rich renders the
calibration system unsolvable, with the residual signalling the latter --- and
that effective sample size is not a safe standalone diagnostic, since a
misspecified score can appear stable precisely because it under-corrects.
 
These simulations clarify when Riesz calibration improves transport and where
finite-sample instability can arise. We next apply the method to the motivating
real-data setting.

\section{Real Data Analysis}
\label{sec:realdata}
We applied the proposed method to all-cause mortality at 30 months in
transthyretin amyloid cardiomyopathy. The target population consisted of
patients from a real-world registry who were enrolled in 2019 or later and
received the study treatment. Because the real-world data contained treated
patients only, the corresponding placebo outcomes were transported from a
randomized controlled trial comparing the study treatment with placebo.
Detailed descriptions of the randomized trial and real-world registry,
including their study populations, eligibility criteria, and follow-up, have
been reported previously \citep{GarciaPavia2025, Maurer2018}. Baseline covariates from the entire randomized trial were used in the optimal
transport step to align the trial population with the real-world target
population, while placebo-arm outcomes were used to construct the transported
counterfactual control outcomes. Transport was based on age, sex, race/ethnicity, body
mass index, New York Heart Association class, TTR genotype, and exposure time.
The target estimand was the 30-month mortality risk difference in the treated
real-world population, defined as the difference in the proportions of
patients experiencing all-cause mortality under the study treatment and under
the corresponding transported placebo condition.

We considered six estimators, the unadjusted difference in observed outcomes
(\textsc{Naive}), outcome-regression standardization (\textsc{G-comp}),
inverse-probability weighting based on a parametric sampling propensity model
(\textsc{IPW-PS}), inverse weighting based on the calibrated OT weights
(\textsc{IPW-OT}), the conventional augmented inverse-probability weighted
estimator (\textsc{AIPW-PS}), and the proposed doubly robust
Riesz-calibrated OT estimator (\textsc{RICOT}).
Figure~\ref{fig:application-estimates} summarizes the estimated 30-month
mortality risk differences and corresponding 95\% confidence intervals for
all six methods. The unadjusted estimate was $-0.334$, while all adjusted
and transported estimators produced larger absolute risk differences. This
pattern is consistent with the difference in baseline risk between the two
populations. As shown in Figure~\ref{fig:application-overlap}, the treated
real-world population is shifted toward higher predicted placebo risk than
the randomized trial population.

Among the adjusted estimators, the \textsc{G-comp}, \textsc{AIPW-PS}, and
\textsc{RICOT} estimates were similar ($-0.534$, $-0.540$, and $-0.556$,
respectively), suggesting that the outcome regression captures much of the
relevant variation in placebo risk in this application. The \textsc{IPW-PS}
estimate was somewhat attenuated ($-0.476$), whereas \textsc{IPW-OT} produced
an estimate of $-0.578$, closer to the doubly robust estimates. This pattern
is consistent with the calibrated OT weights providing a better approximation
to the trial-to-target density ratio than the parametric sampling propensity
model in this application. The \textsc{IPW-OT} confidence interval was wider,
reflecting the greater variability of weighting alone, whereas augmentation
improved precision. The proposed \textsc{RICOT} estimator gave a 30-month
mortality risk difference of $-0.556$, with a 95\% confidence interval of
$(-0.617,-0.494)$. The sieve dimension was set to $J=11$. A sensitivity analysis across a range of values of 
$J$ is reported in Figure~\ref{fig:sieve-stable-real} in the appendix, where the estimates are seen to be stable.

\begin{figure}[ht]
\centering
\includegraphics[width=0.8\textwidth]{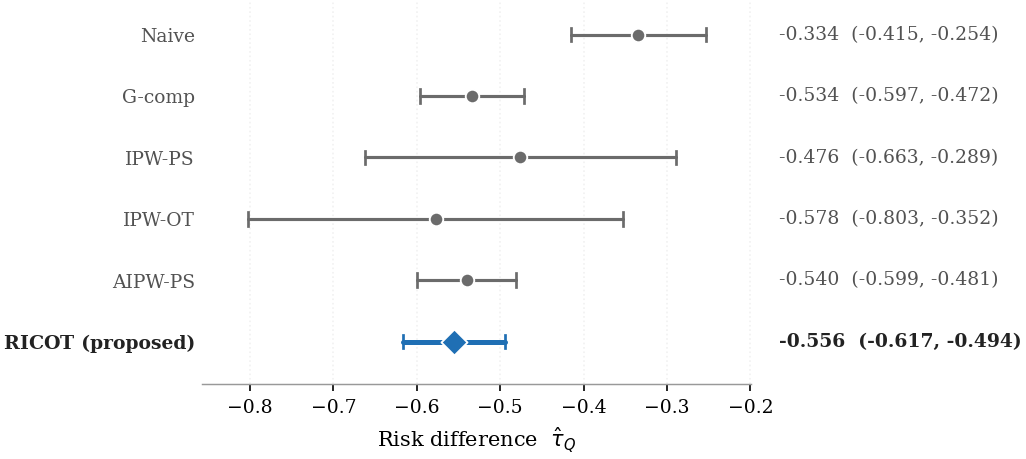}
\caption{Estimated 30-month mortality risk differences in the treated
real-world target population across the six estimators. Points denote point
estimates and horizontal lines denote 95\% confidence intervals. Negative
values favor the study treatment. 
}
\label{fig:application-estimates}
\end{figure}

\begin{figure}[H]
\centering
\includegraphics[width=0.8\textwidth]{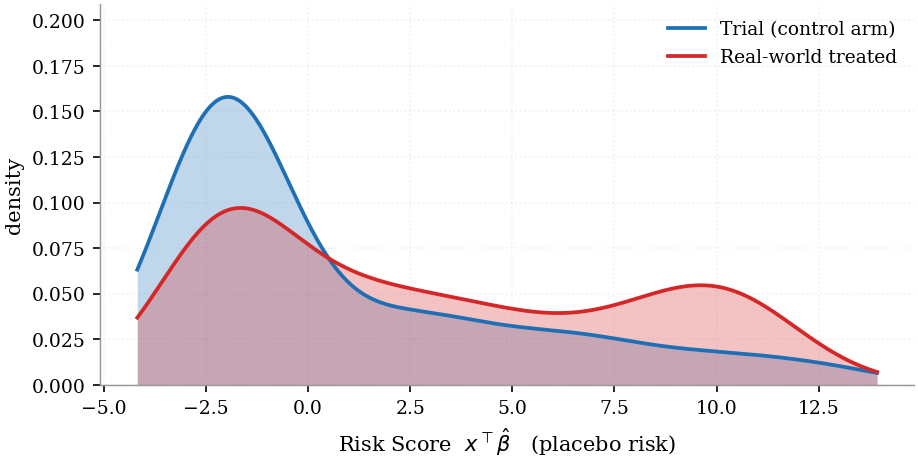}
\caption{Distribution of predicted placebo risk in the randomized trial and
the treated real-world target population. The rightward shift in the
real-world distribution indicates a greater prevalence of patients with
higher predicted placebo risk in the target population.}
\label{fig:application-overlap}
\end{figure}

\section{Extension to Rate-Scale Estimands}
\label{sec:rate}

Many clinical and real-world studies involve recurrent events, such as hospitalizations, observed over varying amounts of follow-up. In this setting, the treatment effect may be expressed through an incidence rate rather than a
fixed-time \corr{mean difference}. The population incidence rate is the expected number of events divided by the expected person-time at risk. It is therefore a ratio of population means and, in general, is not equivalent to
the average of individual event-to-time ratios. 
Rate-scale transport requires additional care because both the event count and the person-time may differ between the trial and target populations. For the treated target population, the recurrent-event count and person-time are
observed directly. Under the counterfactual control condition, however, they must be learned from the randomized trial and transported to the target population. When observed follow-up can depend on treatment, for example through death, treatment discontinuation, or other post-treatment events, person-time should not be included as a baseline transport variable. Instead, we \corr{transport the control event count and the control person-time with a single Riesz-calibrated OT weight built on the baseline covariates alone}, and combine the resulting transported population means to obtain the target control incidence rate. If person-time is predetermined and unaffected by treatment, the simpler construction that
incorporates it with the baseline covariates is sufficient.
Compared with the fixed-time \corr{mean difference}, the rate-scale analysis therefore introduces two additional statistical components. First, the numerator and denominator of the counterfactual control rate must both be estimated when person-time is treatment dependent. Second, uncertainty from both components must be propagated to the final incidence-rate contrast. \corr{The same augmented transport construction is used for each component, and double robustness operates through the shared weight.} Incidence-rate differences and incidence-rate ratios are then obtained from the transported rates, with inference based on the joint influence function and the delta method. For rate ratios, inference on the log scale provides a convenient way to construct positive confidence intervals.
Two practical points are important. The population incidence rate should not be replaced by the mean of individual event rates, since the two generally target different quantities. In addition, standardization of the counterfactual quantities must remain with respect to the target population;
moving the standardization to the weighted trial population eliminates the outcome-regression robustness branch. Formal identification conditions, double-robustness results, influence functions, and the treatment-dependent
person-time extension are provided in Appendix~\ref{app:G}. 

\section{Discussion}
\label{sec:discussion}
We developed a semiparametric framework for optimal transport weighting in trial-to-target generalization. A useful connection underlying the method is that the source marginal induced by source-relaxed entropic transport acts as a density-ratio estimator, while the same density ratio is the Riesz representer entering the efficient influence function for the transported treatment effect. Fixed regularization introduces a persistent bias in the density-ratio estimator and, in general, leads to bias in the resulting transported treatment-effect estimator. To correct this regularization-induced bias, we introduced Riesz calibration directly into the transport problem so that the resulting weights retain the flexibility of OT while satisfying the balance conditions required for semiparametric inference. With a sufficiently rich calibration space, the calibrated weights consistently estimate the target-to-trial density ratio even when the entropic and source-relaxation parameters remain fixed and positive.

We view the proposed method as a robust alternative to parametric
sampling-propensity weighting rather than as uniformly preferable. When the sampling propensity model is correctly specified, conventional AIPW can make efficient use of that parametric structure. In practice, however, the mechanism governing trial participation is unknown, and it can be difficult to know in advance which functional form adequately captures the differences between the trial and target populations. This is particularly relevant in regulatory settings, where key elements of the analysis are generally prespecified. A parametric sampling model must therefore be chosen without knowing whether it correctly represents the underlying population shift. Our approach offers an
alternative by targeting the density ratio through prespecified balance
conditions rather than relying on correct specification of a particular
parametric sampling model. The tradeoff is the slower rate and potentially greater finite-sample variability associated with nonparametric estimation.

The simulations reflected this tradeoff. Calibration reduced bias when the sampling propensity model was misspecified, while augmentation with outcome regression generally improved precision. They also showed that no single weight diagnostic tells the whole story. In practice, we recommend considering covariate balance, transport-weight behavior, effective sample size, and the calibration residual together, along with sensitivity to the transport and calibration tuning parameters.

Several boundaries of the current framework are worth stating plainly. The theory assumes a common baseline covariate space across trial and target and focuses primarily on a fixed-time outcome, and identification rests throughout on conditional transportability, an assumption that cannot be checked from the observed data. A formal sensitivity analysis for departures from this assumption is therefore the most important robustness gap, and we regard it as the natural companion to the estimation theory developed here. Two extensions stand out as substantive next steps. The rate-scale construction handles recurrent-event summaries but not general survival functionals, and extending the Riesz-calibration argument to time-to-event outcomes is the immediate methodological direction. Designs that use real-world controls to augment a randomized control arm are a second, since allowing real-world data into the control arm rather than treating it as the fixed target changes the estimand and the transport structure and raises new identification questions. Further directions include nonidentical covariate sets across trial and real-world data, high-dimensional representer estimation, and scalable transport solvers.

\bibliographystyle{plainnat}
\bibliography{refs}

\section*{Acknowledgments}
We thank Pfizer Inc. for its support. We also thank Kenneth K. Kwok, Rong Wang, and Martin Ove Carlsson for their assistance with data acquisition.

\section*{Declaration of interests}
Margaret Gamalo and Prosenjit Kundu are employees and stockholders of Pfizer Inc. The authors declare no other competing interests.

\newpage 
\renewcommand{\thesection}{S\arabic{section}}
\setcounter{section}{0}
\setcounter{equation}{0}
\numberwithin{equation}{section}
\renewcommand{\theequation}{\thesection.\arabic{equation}}
\setcounter{assumption}{0}
\renewcommand{\theassumption}{S\arabic{assumption}}
\setcounter{theorem}{0}
\renewcommand{\thetheorem}{ST\arabic{theorem}}
\setcounter{definition}{0}
\renewcommand{\thedefinition}{S\arabic{definition}}
\setcounter{remark}{0}
\renewcommand{\theremark}{SR\arabic{remark}}
\setcounter{example}{0}
\renewcommand{\theexample}{S\arabic{example}}
\makeatletter
\renewcommand{\theHsection}{S\arabic{section}}

\def\theHequation{S\arabic{section}.\arabic{equation}}
\def\theHtheorem{ST.\arabic{theorem}}
\def\theHassumption{S.\arabic{assumption}}
\def\theHdefinition{S.\arabic{definition}}
\def\theHremark{SR.\arabic{remark}}
\def\theHexample{S.\arabic{example}}
\makeatother

\begin{center}
{\large\bf SUPPLEMENT TO ``RIESZ-CALIBRATED OPTIMAL TRANSPORT FOR
TRANSPORTING TRIAL EVIDENCE TO REAL-WORLD POPULATIONS''}
\end{center}
 
\bigskip

This supplementary material contains the proofs of all results stated in the
main paper, together with the auxiliary results alluded to there and some
further discussion. The development can be summarized as follows: identification (Proposition~\ref{prop:C-id}) leads to the
efficient influence function (Theorem~\ref{thm:C-eif}), which exhibits the
density ratio as a Riesz representer and yields an exact product remainder
(Proposition~\ref{prop:C-rem}); the uncalibrated transport weight violates the
Riesz equations at fixed regularisation (Corollary~\ref{cor:D-imbalance});
calibration repairs this and, through an exponential-tilting representation
(Proposition~\ref{prop:E-Mest}), delivers an $L^2(P)$ rate
(Theorem~\ref{thm:E-rate}); and the two combine into an efficient central limit theorem (Theorem~\ref{thm:F-clt}).
 
\section{Notations, Assumptions, and Auxiliary Results}
\label{app:A}
 
\subsection{Notation and conventions}
\label{sub:A-notation}
 
\emph{Data.} Two independent samples are observed. The \emph{trial sample}
consists of $O^{\mathrm R}_i=(X_i,T_i,Y_i)$, $i=1,\dots,n$, drawn independently
from a law $P$ on $\calX\times\{0,1\}\times\R$, where $\calX\subset\R^d$,
$T_i\in\{0,1\}$ is the treatment indicator, $Y_i(a)$ denotes the potential
outcome under treatment $a\in\{0,1\}$, and $Y_i=Y_i(T_i)$. The \emph{target
sample} consists of $O^{\mathrm Q}_j=(X^Q_j,Y^{\mathrm Q}_j)$, $j=1,\dots,m$,
drawn independently from a law $Q$ on $\calX\times\R$; every target unit is
treated, so $Y_j^{\mathrm Q}=Y_j^{\mathrm Q}(1)$.
 
\emph{Expectations.} We write $\EP$ and $\EQ$ for expectations under $P$ and
$Q$, reserving the generic symbols $(X,T,Y)$ for a trial observation and
$(Z,Y^{\mathrm Q})$ for a target observation. Thus $\EP[f(X)]$ integrates $f$
against the covariate law of the trial and $\EQ[f(Z)]$ against that of the
target; no separate symbol for the covariate marginals is needed, and none is
used except in the definition of the density ratio below and in the transport
program, where measures are genuinely the object of study. The empirical
counterparts are
\begin{equation}
  \En[f(O^{\mathrm R})]\coloneq \frac1n\sum_{i=1}^n f(O_i^{\mathrm R}),
  \qquad
  \Em[f(O^{\mathrm Q})]\coloneq \frac1m\sum_{j=1}^m f(O_j^{\mathrm Q}),
  \label{eq:empirical}
\end{equation}
abbreviated $\En[f(X)]$ and $\Em[f(Z)]$ when the integrand depends only on
the covariates. For the associated centred empirical processes we write, for
instance, $(\En-\EP)\{f(O^{\mathrm R})\}
 =n^{-1}\sum_i\bigl\{f(O^{\mathrm R}_i)-\EP[f(O^{\mathrm R})]\bigr\}$.
 
\emph{Norms.} For $1\le t<\infty$ we write
$\norm{f}{L^t(P)}\coloneq \bigl(\EP[|f(X)|^t]\bigr)^{1/t}$ and
$\norm{f}{L^t(Q)}\coloneq \bigl(\EQ[|f(Z)|^t]\bigr)^{1/t}$, with $\norm{f}{\infty}$ the
supremum over $\calX$; when the argument is the full observation rather than
the covariate this is indicated explicitly. Euclidean and maximum norms on
$\R^J$ are $\norm{\cdot}{2}$ and $\norm{\cdot}{\infty}$; $\lambda_{\min}$ and
$\lambda_{\max}$ denote extreme eigenvalues of a symmetric matrix. We write
$W^{k,t}$ for the order-$k$ Sobolev space on $\calX$, and use $O_p$, $o_p$ with
respect to the joint law of the two samples, with $\xrightarrow{d}$ for weak
convergence.
 
\emph{Sample sizes.} Set $N\coloneq n\wedge m$. Throughout, the two sample sizes grow
proportionally,
\begin{equation}
  n/m\to\eta\in(0,\infty),
  \label{eq:ratio}
\end{equation}
so that $n$, $m$ and $N$ are of the same order and the distinction between
$o_p(n^{-1/2})$, $o_p(m^{-1/2})$ and $o_p(N^{-1/2})$ is immaterial; we use
whichever is most transparent at the point of use. Tuning parameters that vary
with sample size carry the index $N$, for example $J_N$ and $K_N$.
 
\emph{Nuisance quantities.} Three population quantities recur:
\begin{equation}
  e_0(x)\coloneq \P(T=0\mid X=x),
  \qquad
  A\coloneq \frac{1-T}{e_0(X)},
  \qquad
  \mu_0(x)\coloneq \EP(Y\mid X=x,\,T=0).
  \label{eq:nuisances}
\end{equation}
In a randomised trial $e_0$ is known by design; this is used repeatedly and is
what distinguishes the trial sample from an observational one. We write $p$
and $q$ for the densities of the covariate distributions under $P$ and $Q$ with
respect to a common dominating measure, and define the \emph{density ratio}
\begin{equation}
  \rdag(x)\coloneq \frac{q(x)}{p(x)},
  \qquad x\in\calX,
  \label{eq:rdag}
\end{equation}
with the convention that $\rdag$ is defined wherever $p(x)>0$.
Lemma~\ref{lem:A-riesz} restates the defining property of $\rdag$ purely in
terms of expectations, and it is that form which is used everywhere below.
 
\emph{Transport.} The cost is $c(x,z)=\norm{x-z}{2}^2$, the entropic
regularisation parameter is $\varepsilon>0$ and the marginal relaxation
parameter is $\rho>0$; neither is related to the estimand $\tau_Q$. We
abbreviate
\begin{equation}
  \gamma\coloneq \frac{\varepsilon}{\varepsilon+\rho}\in(0,1),
  \label{eq:gamma}
\end{equation}
which appears throughout as a shrinkage exponent. A coupling is represented by
its density $\pi(x,z)\ge0$ on $\calX\times\calX$, with marginals
\begin{equation}
  \pi_1(x)\coloneq \int\pi(x,z)\,dz,
  \qquad
  \pi_2(z)\coloneq \int\pi(x,z)\,dx;
  \label{eq:marginals}
\end{equation}
the letter $\pi$ is reserved for couplings, the Gaussian normalising constant
being absorbed into the density $\phi_\varepsilon$ of \eqref{eq:gibbs} below.
For nonnegative $f$ and $g$ we use the \emph{generalised} Kullback--Leibler
divergence
\begin{equation}
  \KL(f\,\|\,g)\coloneq \int\Bigl\{f\log\frac fg-f+g\Bigr\},
  \label{eq:genKL}
\end{equation}
with value $+\infty$ if $f>0$ somewhere that $g=0$, and the convention
$0\log0=0$; its first variation in $f$ is $\log(f/g)$.
 
\emph{Calibration.} The calibration basis is
$b_J(x)=(b_{J,1}(x),\dots,b_{J,J}(x))^\top$ with $b_{J,1}\equiv1$, and
$H_J\coloneq \mathrm{span}\{b_{J,1},\dots,b_{J,J}\}$. Scores of parametric submodels
are written $\zeta_P$ and $\zeta_Q$, reserving $s$ for the smoothness index;
regression residuals are denoted $\xi$, reserving $\varepsilon$ for the
entropic parameter; the calibration multiplier is $\theta$, reserving $\lambda$
for the incidence rates of Appendix~\ref{app:G}.
 
\emph{Cross-fitting.} Let $\{\calI_1,\dots,\calI_K\}$ partition the trial
indices into $K$ evaluation folds with $K$ fixed, put
$\calI_{-k}\coloneq \{1,\dots,n\}\setminus\calI_k$ and $n_{-k}\coloneq |\calI_{-k}|$, and
write
$\Enk{k}[f(O^{\mathrm R})]
 \coloneq n_{-k}^{-1}\sum_{i\in\calI_{-k}}f(O_i^{\mathrm R})$.
A superscript $(-k)$ on a fitted nuisance means that it was trained without the
observations in fold $k$; for $i\in\calI_k$ the shorthand $\muhat(X_i)$ means
$\muhat^{(-k)}(X_i)$, so that $\En$ applied to an expression involving fitted
nuisances denotes the assembled foldwise out-of-sample average.
 
\subsection{Standing assumptions}
\label{sub:A-assumptions}
 
The whole development rests on three assumptions, stated here once. The first concerns the data-generating mechanism and is what makes the estimand a functional of the observed laws; the second concerns the transport program and the calibration design; the third concerns the accuracy of the outcome regression. These are the global assumptions used for identification and for
the general construction, and they are in force throughout. Later appendices introduce two more hypotheses: two-sided overlap (Assumption~\ref{ass:E-twosided}) in Appendix~\ref{app:E}, and rate-scale transfer (Assumption~\ref{ass:G}) in Appendix~\ref{app:G}.
 
\begin{assumption}[Sampling and identification]
\label{ass:id}
The trial and target samples are mutually independent and \eqref{eq:ratio}
holds. Moreover:
\begin{enumerate}[label=\textup{(\roman*)},ref=\theassumption\textup{(\roman*)},
                  leftmargin=2.6em,itemsep=2pt,topsep=3pt]
\item\label{ass:id-cons} \textbf{Consistency.} $Y=TY(1)+(1-T)Y(0)$ almost
      surely in the trial, and $Y^{\mathrm Q}=Y^{\mathrm Q}(1)$ almost surely
      in the target sample.
\item\label{ass:id-pos} \textbf{Positivity.} There are constants
      $0<\underline e\le\overline e<1$ with
      $\underline e\le e_0(x)\le\overline e$ for $P$-almost every $x$.
\item\label{ass:id-trans} \textbf{Transportability of the control response.}
      $\EQ[Y^{\mathrm Q}(0)\mid Z=x]=\mu_0(x)$ for $Q$-almost every $x$.
\item\label{ass:id-ovl} \textbf{Overlap.} The covariate law under $Q$ is
      absolutely continuous with respect to that under $P$, and the density
      ratio \eqref{eq:rdag} is bounded,
      $\norm{\rdag}{\infty}\le\overline r<\infty$.
\item\label{ass:id-mom} \textbf{Moments.} There is $\delta>0$ with
      $\EQ|Y^{\mathrm Q}-\mu_0(Z)|^{2+\delta}<\infty$ and
      $\EP\bigl|A\,\rdag(X)\{Y-\mu_0(X)\}\bigr|^{2+\delta}<\infty$, and the
      conditional residual variance is bounded,
      $\EP[\{Y-\mu_0(X)\}^2\mid X,\,T=0]\le C_Y<\infty$ almost surely\corr{;
      moreover $\mu_0$ is bounded, $\norm{\mu_0}{\infty}\le C_0<\infty$
      (automatic for bounded outcomes)}.
\end{enumerate}
\end{assumption}
 
\begin{remark} Each condition plays a distinct role, and it is worth separating them before
they are used. Conditions \ref{ass:id-cons}--\ref{ass:id-pos} are internal to
the trial. Together they identify the control response surface $\mu_0$ on the
trial's covariate support from randomised data, with positivity guaranteeing
that the conditioning event $\{T=0,\,X=x\}$ has positive probability so that
$\mu_0(x)$ is well defined. Condition \ref{ass:id-trans} is the
cross-population assumption and is what allows exporting $\mu_0$ from the
trial to the target. It restricts only the control arm, because the treated
response in the target population is observed directly. Condition
\ref{ass:id-ovl} is a support condition ensuring that $\mu_0$, identified only where the trial has covariate support, is defined $Q$-almost everywhere. The
boundedness of $\rdag$ is slightly more than absolute continuity
requires, and is used only to transfer $L^2(P)$ rates to $L^2(Q)$ in
Appendix~\ref{app:F}. Such transfer is potentially feasible under \corr{much weaker conditions}, which we do not pursue. Condition \ref{ass:id-mom} supplies the Lindeberg and
law-of-large-numbers inputs\corr{; the boundedness of $\mu_0$ is
used only for the empirical-process envelopes of Appendix~\ref{app:F}, and in
particular gives $\mu_0\in L^2(P)$, which several proofs use tacitly.
Randomisation of treatment within the trial is a property of the design and
is deliberately not listed as an assumption. For instance, $\mu_0$ is defined directly as
the observed-data regression in \eqref{eq:nuisances}, every proof below uses
only that definition, and randomisation enters solely through the causal
reading $\mu_0(x)=\EP\{Y(0)\mid X=x\}$, which is what makes
\ref{ass:id-trans} a substantive cross-population restriction}.
\end{remark}
 
\begin{assumption}[Transport program and calibration design]
\label{ass:design}
The covariate space $\calX\subset\R^d$ is compact and convex with Lipschitz
boundary, and $c(x,z)=\norm{x-z}{2}^2$. The regularisation parameters are
\emph{fixed}, with $\varepsilon\in[\varepsilon_0,\varepsilon_1]$ for some
$0<\varepsilon_0\le\varepsilon_1<\infty$ and
$\rho/\varepsilon\le C_\rho<\infty$. In addition:
\begin{enumerate}[label=\textup{(\roman*)},ref=\theassumption\textup{(\roman*)},
                  leftmargin=2.6em,itemsep=2pt,topsep=3pt]
\item\label{ass:des-basis} \textbf{Basis.} $b_{J,1}\equiv1$; the coordinates
      are uniformly bounded, $\sup_{J,j,x}|b_{J,j}(x)|\le B<\infty$; and the
      Gram matrices $G_J\coloneq \EP[b_J(X)b_J(X)^\top]$ satisfy
      $0<c_G\le\lambda_{\min}(G_J)\le\lambda_{\max}(G_J)\le C_G<\infty$ uniformly
      in $J$.
\item\label{ass:des-approx} \textbf{Sieve approximation of the tilt.} With $L$
      the population offset defined in \eqref{eq:offset} below, there are
      \corr{$s>d/2$ (an integer, for notational simplicity)}, $C_s<\infty$ and vectors $\theta_J\in\R^J$ such that
      \[
        \bigl\|\log\rdag-\gamma L-\theta_J^\top b_J\bigr\|_{\infty}
        \;\le\; C_s\,J^{-s/d}
        \qquad\text{for every }J .
      \]
\item\label{ass:des-dual} \textbf{Quantitative interiority and dual
      stability.} With probability tending to one, the empirical calibrated
      program of Definition~\ref{def:emp} admits a solution whose plan and
      source ratios are bounded away from zero and infinity, that is,
      \[
        0<c_h\le\frac{\widehat\pi_{ij}}{a_i\corr{\omega_j}}\le C_h<\infty,
        \qquad
        0<c_r\le\widehat r(X_i)\le C_r<\infty
      \]
      for all \corr{trial units $i=1,\ldots,n$} and all $j\le m$.
      \corr{The population target dual potential $g$ of \eqref{eq:offset} is
      bounded, $\norm{g}{\infty}\le C_g<\infty$, and normalised so that
      $\EQ[g(X^Q)]=0$, matching the normalisation $\Em(v)=0$ of the empirical
      dual.} Moreover the empirical
      offset $\Lhat$ of \eqref{eq:offset-hat} satisfies
      $\norm{\Lhat-L}{\infty}=O_p(a_{L,N})$ for a deterministic sequence
      $a_{L,N}\downarrow0$.
\end{enumerate}
\end{assumption}
 
\begin{remark} Condition \ref{ass:des-basis} is the standard sieve design condition, satisfied
for instance by a tensor-product trigonometric basis on $\calX=[0,1]^d$ with
periodic boundary conditions when the trial covariate density is bounded away
from zero and infinity. Condition \ref{ass:des-approx} is the approximation
condition, imposed on the function $\log\rdag-\gamma L$ rather than on $\rdag$
itself; Remark~\ref{rem:why-tilt} explains why this is the right object, and
why an approximation condition on $\rdag$ together with a uniform smoothness
bound on the calibration span would be mutually inconsistent. \corr{The
smoothness floor $s>d/2$ is what allows the $\ell_2$ control of the tilt
coefficient to be converted into the supremum-norm control that the
exponential parametrisation requires (Step 4 of the proof of
Theorem~\ref{thm:E-rate}), and it is also the condition under which the
empirical-process arguments of Appendix~\ref{app:F} close.} Condition
\ref{ass:des-dual} is a stability requirement on the transport solve. Strict interiority alone gives positivity of the
fitted ratios but not uniform bounds, and it is the uniform bounds that
Lemma~\ref{lem:dual-bdd} converts into $\norm{v}{\infty}=O_p(1)$ and thence
into smoothness of the offset. This is the only place where the numerical
behaviour of the Sinkhorn iteration enters the statistical theory.
\end{remark}

\begin{assumption}[Cross-fitting and outcome regression]
\label{ass:nuisance}
The outcome regression $\muhat$ is cross-fitted over $K$ fixed folds as
described in Section~\ref{sub:A-notation}, so that, conditionally on the
training data, the fitted value $\muhat(X_i)$ is independent of
$Y_i-\mu_0(X_i)$ given $(X_i,T_i)$; if target outcomes are used in training,
the target sample is split so that each target observation entering $\Em$ is
out of sample for the regression evaluated at it. \corr{The fitted regression
takes values in a fixed bounded interval,
$\norm{\muhat}{\infty}\le C_\mu<\infty$ almost surely; this is automatic for
bounded outcomes and can always be arranged by truncating at a fixed interval containing the
range of $\mu_0$ --- a pointwise contraction --- without affecting the rate
below.} There is $\beta>0$ with
\[
  \norm{\muhat-\mu_0}{L^2(P)}
  =\bigl(\EP\bigl[\{\muhat(X)-\mu_0(X)\}^2\bigr]\bigr)^{1/2}
  =O_p(N^{-\beta}).
\]
\end{assumption}
 
\begin{remark} Assumption~\ref{ass:nuisance} controls only the $L^2(P)$ error; the
corresponding $L^2(Q)$ bound is not assumed but follows, since by
Lemma~\ref{lem:A-riesz} and \ref{ass:id-ovl},
\begin{equation}
  \EQ[f(Z)^2]=\EP[\rdag(X)f(X)^2]\le\overline r\;\EP[f(X)^2]
  \label{eq:PtoQ}
\end{equation}
for every $f\in L^2(P)$. The calibrated transport weight is a function of
covariates only, and is therefore \emph{not} required to be cross-fitted; the
consequences of this asymmetry are worked out in Appendix~\ref{app:F}.
\end{remark}

\subsection{Auxiliary results}
\label{sub:A-aux}
 
We collect here four elementary results used repeatedly below. Their proofs are
in Appendix~\ref{app:B}. The first is the device by which \corr{the known randomisation probability}
is converted into a statement about the control arm. It appears in every
influence-function and remainder computation later, and it is the reason
that the known randomisation probability $e_0$ is so valuable.
 
\begin{lemma}[Inverse-randomisation identity]
\label{lem:A-invrand}
Let $G$ be a measurable function of $O^{\mathrm R}=(X,T,Y)$ with
$\EP|A\,G(O^{\mathrm R})|<\infty$. Under
Assumptions~\ref{ass:id-cons}--\ref{ass:id-pos},
\begin{equation}
  \EP[A\,G(O^{\mathrm R})\mid X=x]
  \;=\;
  \EP[G(O^{\mathrm R})\mid X=x,\,T=0]
  \qquad\text{for $P$-almost every }x,
  \label{eq:invrand}
\end{equation}
and consequently
$\EP[A\,G(O^{\mathrm R})]
 =\EP\bigl(\EP[G(O^{\mathrm R})\mid X,\,T=0]\bigr)$.
In particular, for any \corr{$h\in L^2(P)$} and any square-integrable $\mu$,
\begin{equation}
  \EP\bigl[A\,h(X)\{Y-\mu(X)\}\bigr]
  \;=\;
  \EP\bigl[h(X)\{\mu_0(X)-\mu(X)\}\bigr],
  \label{eq:invrand2}
\end{equation}
\corr{both sides being finite under Assumption~\ref{ass:id-mom}.}
\end{lemma}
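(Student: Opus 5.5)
The plan is to prove the pointwise identity \eqref{eq:invrand} first, then integrate it, and finally specialise to $G=h(X)\{Y-\mu(X)\}$.

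\emph{Pointwise identity.} Write $A=(1-T)/e_0(X)$. Positivity (Assumption~\ref{ass:id-pos}) guarantees $e_0(X)\ge\underline e>0$, so $A$ is well defined and bounded by $1/\underline e$. Since $e_0(X)$ is $X$-measurable, it factors out of the conditional expectation:
\[
\EP[A\,G\mid X=x]=\frac{\EP[(1-T)G\mid X=x]}{e_0(x)}.
\]
Condition on $T$ as well. On $\{T=1\}$ the factor $1-T$ vanishes, so
\[
\EP[(1-T)G\mid X=x]=\P(T=0\mid X=x)\,\EP[G\mid X=x,T=0]=e_0(x)\,\EP[G\mid X=x,T=0].
\]
Dividing by $e_0(x)$ gives \eqref{eq:invrand}. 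This holds for $P$-a.e.\ $x$; positivity is exactly what makes the conditioning event $\{X=x,T=0\}$ non-null.

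The only care needed is integrability. The hypothesis $\EP|AG|<\infty$ makes the left side well defined. On the right, $\EP\bigl(\EP[|G|\mid X,T=0]\bigr)=\EP|AG|$ by the same computation applied to $|G|$, which justifies the manipulation. The unconditional version then follows from the tower property.

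\emph{Specialisation to \eqref{eq:invrand2}.} Take $G=h(X)\{Y-\mu(X)\}$. Conditional on $X$, the functions $h(X)$ and $\mu(X)$ are constants, so
\[
\EP[G\mid X,T=0]=h(X)\{\EP[Y\mid X,T=0]-\mu(X)\}=h(X)\{\mu_0(X)-\mu(X)\},
\]
by the definition of $\mu_0$ in \eqref{eq:nuisances}. Consistency (Assumption~\ref{ass:id-cons}) is needed only for the causal reading of $\mu_0$, not for this identity. Taking expectations gives \eqref{eq:invrand2}.

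\emph{Finiteness.} We must check $\EP|AG|<\infty$; this is the only step requiring some care. By Cauchy--Schwarz,
\[
\EP|A\,h\,(Y-\mu)|\le\underline e^{-1}\,\|h\|_{L^2(P)}\,\bigl(\EP[(Y-\mu(X))^2(1-T)]\bigr)^{1/2}.
\]
Split $Y-\mu=(Y-\mu_0)+(\mu_0-\mu)$. The first piece is controlled by the bounded conditional residual variance $C_Y$ of Assumption~\ref{ass:id-mom}. The second is controlled by $\mu_0\in L^2(P)$ (from $\|\mu_0\|_\infty\le C_0$) together with the assumed square-integrability of $\mu$. Hence both sides of \eqref{eq:invrand2} are finite.
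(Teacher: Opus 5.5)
Your proposal is correct and follows the paper's proof essentially step for step: the same conditioning on $T$ for the pointwise identity, the same Tonelli device applied to $|G|$ for integrability, and the same specialisation to $G=h(X)\{Y-\mu(X)\}$. The only difference is cosmetic. You bound $\EP|AG|$ by an unconditional Cauchy--Schwarz using the factor $1-T$ and the bound $1/\underline e$. The paper instead first passes to $\EP\bigl[|h(X)|\,\EP\{|Y-\mu(X)|\mid X,T=0\}\bigr]$ and then applies Cauchy--Schwarz. Both routes give the required finiteness.
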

 
The second result records the sense in which the density ratio is a Riesz
representer, and does so entirely in terms of expectations. It is stated
separately because the uniqueness half is what forces the limit of any regular
procedure to be $\rdag$, a point taken up in Appendix~\ref{app:F}.
 
\begin{lemma}[Riesz representation]
\label{lem:A-riesz}
Under Assumption~\ref{ass:id-ovl}, $\rdag\in L^2(P)$ and
\begin{equation}
  \EQ[h(X)]=\EP[\rdag(X)\,h(X)]
  \label{eq:riesz}
\end{equation}
\corr{for every nonnegative measurable $h$ (both sides possibly infinite), and for every $h\in L^2(P)$, both sides then being finite; neither
class of test functions contains the other, and the nonnegative case is the
primitive one, the $L^2(P)$ case following from it by splitting $h$ into
positive and negative parts.}
and $\rdag$ is the unique element of $L^2(P)$ with this property. Moreover, if
$r\in L^2(P)$ satisfies the finite-dimensional calibration equations
\begin{equation}
  \EP[r(X)\,b_J(X)]=\EQ[b_J(Z)],
  \label{eq:calib-pop}
\end{equation}
then $\EP[r(X)h(X)]=\EQ[h(Z)]$ for every $h\in H_J$; equivalently,
$\EP\bigl[\{r(X)-\rdag(X)\}h(X)\bigr]=0$ for every $h\in H_J$.
\end{lemma}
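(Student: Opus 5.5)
The plan is to prove the four assertions of Lemma~\ref{lem:A-riesz} in order: membership in $L^2(P)$, the identity for nonnegative $h$, the extension to $L^2(P)$ test functions, and finally uniqueness together with the calibration consequence.

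\emph{Membership and the nonnegative case.} Membership is immediate from Assumption~\ref{ass:id-ovl}: since $\rdag\ge0$ and $\norm{\rdag}{\infty}\le\overline r$, we have
\[
\EP[(\rdag)^2]\le\overline r\,\EP[\rdag]=\overline r\,Q_X(\calX)=\overline r<\infty .
\]
For the representation, I take absolute continuity of $Q_X$ with respect to $P_X$ as the primitive fact, so that $\rdag=dQ_X/dP_X$ is the Radon--Nikodym derivative. This agrees with $q/p$ wherever $p>0$, and $Q_X$ gives no mass to $\{p=0\}$. For nonnegative measurable $h$, the identity $\int h\,dQ_X=\int h\,\rdag\,dP_X$ then holds in $[0,\infty]$. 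This is the change-of-measure formula: it holds first for indicators by the definition of the derivative, then for simple functions by linearity, and then for general nonnegative $h$ by monotone convergence.

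\emph{Extension to $h\in L^2(P)$.} Split $h=h^+-h^-$ and apply the nonnegative case to each part. Finiteness of both sides follows from Cauchy--Schwarz, $\EP[\rdag|h|]\le\norm{\rdag}{L^2(P)}\norm{h}{L^2(P)}$, applied together with the nonnegative identity for $|h|$. Both expectations are therefore finite and the difference is legitimate.

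\emph{Uniqueness.} Suppose $r\in L^2(P)$ also satisfies $\EP[rh]=\EQ[h]$ for all $h\in L^2(P)$. Subtracting gives $\EP[(r-\rdag)h]=0$ for every $h\in L^2(P)$. Choosing $h=r-\rdag$, which lies in $L^2(P)$, yields $\norm{r-\rdag}{L^2(P)}=0$, so $r=\rdag$ $P$-almost surely.

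\emph{Calibration consequence.} Any $h\in H_J$ can be written $h=\alpha^\top b_J$. Linearity of expectation applied to \eqref{eq:calib-pop} gives $\EP[rh]=\alpha^\top\EQ[b_J]=\EQ[h]$. Each $b_{J,j}$ is bounded, either by Assumption~\ref{ass:des-basis} or by assuming $b_J\in L^2(P)$, so $H_J\subset L^2(P)$. The $L^2$ case of \eqref{eq:riesz} then applies to $h$, and subtracting it yields the orthogonality statement $\EP[(r-\rdag)h]=0$; the equivalence is the same subtraction read in reverse.

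The only delicate step is the bookkeeping on the null set $\{p=0\}$, which must be handled so that $q/p$ coincides with the Radon--Nikodym derivative. Once that is settled, the rest is routine.
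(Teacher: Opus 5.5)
Your proof is correct and follows essentially the same route as the paper. The common steps are the change-of-measure identity for nonnegative $h$, membership $\rdag\in L^2(P)$ from $(\rdag)^2\le\overline r\,\rdag$ and $\EP[\rdag]=1$, extension to $L^2(P)$ via Cauchy--Schwarz and the positive/negative split, uniqueness by testing against $h=r-\rdag$, and the sieve statement by linearity. The differences are cosmetic: you build the change-of-measure formula through the Radon--Nikodym derivative and the indicator--simple--monotone-limit argument, whereas the paper writes $\int hq=\int h\,\rdag\,p$ directly using $q=0$ on $\{p=0\}$; you also make explicit that $H_J\subset L^2(P)$ via bounded basis functions, which the paper uses tacitly.
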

 
The third result concerns the Gibbs smoothing operator that will be seen in
Appendix~\ref{app:D} to underlie the entropic transport weight. Let $\phi_\varepsilon$
denote the $\mathcal N\{0,(\varepsilon/2)I_d\}$ Lebesgue density on $\R^d$ and
define
\begin{equation}
  S_\varepsilon f\coloneq f*\phi_\varepsilon,
  \qquad\text{equivalently}\qquad
  S_\varepsilon f(x)=\E[f(x+U)],\quad U\sim\mathcal N\{0,(\varepsilon/2)I_d\},
  \label{eq:gibbs}
\end{equation}
so that $S_\varepsilon$ is Gaussian smoothing at bandwidth
$\sqrt{\varepsilon/2}$ per coordinate. Writing this operator as an expectation
rather than through its normalising constant avoids a collision with the
coupling $\pi$ and makes the expansion below transparent.
 
\begin{lemma}[Heat expansion]
\label{lem:A-heat}
Let $\mathcal K\subset\R^d$ be compact and let $f$ be four times continuously
differentiable with derivatives through order four bounded uniformly on a
neighbourhood of $\mathcal K$, and bounded globally. Then, uniformly on
$\mathcal K$,
\begin{equation}
  S_\varepsilon f=f+\frac{\varepsilon}{4}\,\Delta f+O(\varepsilon^2),
  \label{eq:heat}
\end{equation}
where $\Delta\coloneq \sum_{k=1}^d\partial^2/\partial x_k^2$.
\end{lemma}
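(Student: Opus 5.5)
The plan is a second-order Taylor expansion of $f$ about $x$ inside the expectation representation \eqref{eq:gibbs}, $S_\varepsilon f(x)=\E[f(x+U)]$ with $U\sim\mathcal N\{0,(\varepsilon/2)I_d\}$. Fix a neighbourhood $\mathcal K_\delta\coloneq\{y:\operatorname{dist}(y,\mathcal K)\le\delta\}$ on which the derivatives of $f$ through order four are bounded by a constant $M$. I would split the expectation on the event $E\coloneq\{\norm{U}{2}\le\delta\}$ and its complement, so that the Taylor argument runs on $E$ and the tail is absorbed separately.

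On $E$ the segment from $x$ to $x+U$ lies in $\mathcal K_\delta$ for every $x\in\mathcal K$. Taylor's theorem with integral (or Lagrange) remainder to third order then gives
\[
f(x+U)=f(x)+\nabla f(x)^\top U+\tfrac12 U^\top\nabla^2 f(x)U+\tfrac16 D^3f(x)[U,U,U]+R_4(x,U),
\]
with $|R_4(x,U)|\le C_dM\norm{U}{2}^4$ uniformly in $x\in\mathcal K$. Centred Gaussian moments supply the leading terms: $\E[U]=0$, $\E[UU^\top]=(\varepsilon/2)I_d$ so that $\tfrac12\E[U^\top\nabla^2f(x)U]=\tfrac{\varepsilon}{4}\Delta f(x)$, and all third moments vanish by symmetry. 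The fourth moment satisfies $\E\norm{U}{2}^4=O(\varepsilon^2)$.

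Because the expansion holds only on $E$, I would write each polynomial term's expectation as the full-space Gaussian moment minus its contribution on $E^c$, and treat $f(x+U)$ on $E^c$ separately. Since $U=\sqrt{\varepsilon/2}\,W$ with $W$ standard normal, $\P(E^c)\le\exp\{-c\delta^2/\varepsilon\}$. Cauchy--Schwarz gives $\E[\norm{U}{2}^k\mathbf 1_{E^c}]=O(\varepsilon^{k/2}e^{-c\delta^2/(2\varepsilon)})$, and global boundedness of $f$ gives $|\E[f(x+U)\mathbf 1_{E^c}]|\le\norm{f}{\infty}\P(E^c)$. Both are $o(\varepsilon^k)$ for every $k$, and in particular $O(\varepsilon^2)$. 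Every constant involved depends only on $M$, $\norm{f}{\infty}$, $d$ and $\delta$, not on $x$, so the bound is uniform on $\mathcal K$ and \eqref{eq:heat} follows.

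The only point needing care is this localisation. Derivatives are controlled only near $\mathcal K$, and the Gaussian has unbounded support, so global boundedness of $f$ together with the exponential tail bound is exactly what closes the argument. The rest is routine moment computation.
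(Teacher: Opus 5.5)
Your proof is correct and follows essentially the same route as the paper's: a third-order Taylor expansion with a fourth-order remainder inside $\E[f(x+U)]$, with Gaussian symmetry and $\E(UU^\top)=(\varepsilon/2)I_d$ supplying the leading terms. The tail event is handled, as in the paper, by global boundedness of $f$ together with Cauchy--Schwarz on the Taylor-polynomial terms; your version is only slightly more explicit about the neighbourhood $\mathcal K_\delta$ and the bookkeeping on $E^c$.
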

 
The last auxiliary result is what allows us to control the \emph{offset} in the
exponential-tilting representation of the calibrated weight. Its point is that
a \corr{log-mixture-of-exponentials} of uniformly smooth functions is uniformly
smooth, with a bound \corr{that does not depend on the mixing distribution ---
so that the empirical offset built from $m$ target points, and the population
offset built from the target law itself, inherit the smoothness of the cost by
one and the same argument.}
 
\begin{lemma}[Stability of log-sum-exp]
\label{lem:A-lse}
{Let $k\in\N$, let $(\mathcal Z,\nu)$ be a probability space, and let
$\{f_z:z\in\mathcal Z\}$ be functions $f_z:\calX\to\R$, jointly measurable in
$(x,z)$, with $\sup_{z\in\mathcal Z}\norm{f_z}{W^{k,\infty}(\calX)}\le
F<\infty$. Then
\begin{equation}
  \Bigl\|\log\int_{\mathcal Z} e^{f_z}\,d\nu(z)\Bigr\|_{W^{k,\infty}(\calX)}
  \;\le\; C(k,d,F),
  \label{eq:lse}
\end{equation}
where the constant depends only on $k$, $d$ and $F$, and in particular not on
$\nu$. Taking $\nu$ discrete with weights $\omega_1,\dots,\omega_m$ recovers
the finite-sum bound
$\norm{\log\sum_{j\le m}\omega_je^{f_j}}{W^{k,\infty}(\calX)}\le C(k,d,F)$,
with a constant free of $m$ and of the weights.}
\end{lemma}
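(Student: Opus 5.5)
Write $u(x)\coloneq\log\int_{\mathcal Z}e^{f_z(x)}\,d\nu(z)$ and let $w_x(z)\coloneq e^{f_z(x)}/\int e^{f_{z'}(x)}\,d\nu(z')$. For each fixed $x$, $w_x$ is a probability density with respect to $\nu$, so it defines a probability measure $\mu_x$ on $\mathcal Z$. The plan is to express every derivative of $u$ through order $k$ as a polynomial in cumulant-type averages of derivatives of $f_z$ under $\mu_x$. The coefficients of this polynomial should depend only on $k$ and $d$, not on $\nu$. Averages under a probability measure are bounded by the supremum of the integrand, so each term is controlled by $F$ alone.

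\emph{Order zero.} Since $|f_z(x)|\le F$ for every $z$, we have $e^{-F}\le\int e^{f_z(x)}\,d\nu\le e^{F}$, because $\nu$ is a probability measure. Hence $\|u\|_\infty\le F$.

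\emph{First derivatives.} Differentiation under the integral is justified because the derivatives of $f_z$ are uniformly bounded; a Lipschitz/absolute-continuity argument suffices for $W^{k,\infty}$, and one can use difference quotients with dominated convergence. This gives
\[
\partial_\ell u(x)=\int\partial_\ell f_z(x)\,d\mu_x(z).
\]
Next, differentiate $\mu_x$ itself. The identity $\partial_\ell w_x(z)=w_x(z)\{\partial_\ell f_z(x)-\partial_\ell u(x)\}$ gives the rule
\[
\partial_\ell\int g_z(x)\,d\mu_x=\int\partial_\ell g_z\,d\mu_x+\int g_z\,(\partial_\ell f_z-\partial_\ell u)\,d\mu_x.
\]

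\emph{Higher derivatives by induction.} The induction claim is that for each multi-index $\alpha$ with $1\le|\alpha|\le k$, $\partial^\alpha u$ is a finite sum, with the number of terms and the integer coefficients depending only on $|\alpha|$ and $d$, of products of the form $\prod_i\int\partial^{\beta_i}f_z\,d\mu_x$ and of $\mu_x$-averages of products of the $\partial^{\beta}f_z$, where every $|\beta|\le|\alpha|$. These are the joint cumulants of the derivatives of $f_z$ under $\mu_x$. The rule above preserves this form, so the induction closes. Each factor is bounded by $F$, so the sum is bounded by $C(k,d,F)$, a polynomial in $F$ of degree at most $k$. Specialising to $\nu=\sum_j\omega_j\delta_j$ then gives the finite-sum statement with a constant free of $m$ and of the weights.

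\emph{Expected obstacle.} The main difficulty is measure-theoretic rather than combinatorial. In $W^{k,\infty}$ the $k$-th derivatives exist only almost everywhere, and the null set may depend on $z$. I would handle this first by mollifying, replacing $f_z$ with $f_z*\psi_\delta$ on a slightly shrunken domain; the Lipschitz boundary and convexity of $\calX$ make the extension and shrinking harmless. For the mollified functions the induction above runs classically, with uniform bounds. Finally, let $\delta\to0$: the logarithm of the integral converges uniformly, and the uniform $W^{k,\infty}$ bounds pass to the limit by weak-$*$ compactness.
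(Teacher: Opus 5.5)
Your proposal is correct and follows essentially the paper's argument: the same Gibbs weights (the paper's $\vartheta_z(x)$), the same identity $D\vartheta_z=\vartheta_z\bigl(Df_z-\int\vartheta_w\,Df_w\,d\nu(w)\bigr)$, and the same induction in which every $z$-integral is a weighted average, so no dependence on $\nu$, on $m$ or on the weights can enter. Your mollification and weak-$*$ limit step is a useful addition: it makes rigorous the differentiation under the integral sign when the top-order derivatives exist only almost everywhere, which the paper handles with a brief appeal to dominated convergence.
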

 
\subsection{Proofs of the Auxiliary Results}
\label{app:B}
 
\subsubsection{Proof of Lemma~\ref{lem:A-invrand}}
 
Fix $x$ with $e_0(x)>0$, which holds for $P$-almost every $x$ by
Assumption~\ref{ass:id-pos}, so that the conditioning events below have
positive probability. Conditioning on $X=x$ and expanding over the two values
of $T$,
\begin{align*}
  \EP[A\,G\mid X=x]
  &=\frac{1}{e_0(x)}\,\EP[(1-T)\,G\mid X=x]\\[2pt]
  &=\frac{1}{e_0(x)}\sum_{a\in\{0,1\}}\P(T=a\mid X=x)\,(1-a)\,
      \EP[G\mid X=x,\,T=a]\\[2pt]
  &=\frac{1}{e_0(x)}\cdot e_0(x)\cdot\EP[G\mid X=x,\,T=0],
\end{align*}
the term $a=1$ dropping out because of the factor $(1-a)$. This is
\eqref{eq:invrand}; taking expectations over $X$ gives the second assertion.
 
For \eqref{eq:invrand2}, take $G(O^{\mathrm R})=h(X)\{Y-\mu(X)\}$.
\corr{The integrability hypothesis holds. Applying the first display to the
nonnegative integrand $|h(X)\{Y-\mu(X)\}|$ which is legitimate by Tonelli's
theorem, and using $|Y-\mu|\le|Y-\mu_0|+|\mu_0-\mu|$ with
Assumption~\ref{ass:id-mom} and Cauchy--Schwarz,
$$
\begin{aligned}
  \EP\bigl|A\,h(X)\{Y-\mu(X)\}\bigr|
  &=\EP\Bigl[|h(X)|\;\EP\bigl\{|Y-\mu(X)|\bigm|X,\,T=0\bigr\}\Bigr]\\
  &\le
  \sqrt{C_Y}\,\norm{h}{L^2(P)}
   +\norm{h}{L^2(P)}\norm{\mu_0-\mu}{L^2(P)},
\end{aligned}
$$
is finite. Note that the conditional bound of Assumption~\ref{ass:id-mom} is a
control-arm bound, which is exactly what the conditioning produces; no moment
condition on the treated arm is used.} Since $h$
and $\mu$ are functions of $X$ alone they pass through the conditioning, so
\[
  \EP[G\mid X=x,\,T=0]
  =h(x)\bigl[\EP(Y\mid X=x,\,T=0)-\mu(x)\bigr]
  =h(x)\{\mu_0(x)-\mu(x)\},
\]
by the definition of $\mu_0$ in \eqref{eq:nuisances}. Taking expectations over
$X$ completes the proof. \hfill$\square$
 
\subsubsection{Proof of Lemma~\ref{lem:A-riesz}}
 
We first establish the representation and then deduce square-integrability from
it.
 
{\emph{Representation.} For nonnegative measurable $h$, definition
\eqref{eq:rdag} and the fact that $q=0$ wherever $p=0$ give
$\EQ[h(Z)]=\int hq=\int h\,\rdag\,p=\EP[\rdag(X)h(X)]$, both sides possibly
infinite, by Tonelli's theorem. Taking $h\equiv1$ yields $\EP[\rdag(X)]=1$
and hence, by Assumption~\ref{ass:id-ovl},
$\EP[\rdag(X)^2]\le\overline r\,\EP[\rdag(X)]=\overline r<\infty$, so
$\rdag\in L^2(P)$. For $h\in L^2(P)$, the nonnegative case applied to $|h|$
gives $\EQ|h(Z)|=\EP[\rdag(X)|h(X)|]
\le\norm{\rdag}{L^2(P)}\norm{h}{L^2(P)}<\infty$, so both sides of
\eqref{eq:riesz} are finite, and the identity follows by splitting $h$ into
its positive and negative parts.}
 
\emph{Uniqueness.} Suppose $r'\in L^2(P)$ also satisfies
$\EQ[h(Z)]=\EP[r'(X)h(X)]$ for all $h\in L^2(P)$. Subtracting
\eqref{eq:riesz} gives $\EP[\{r'(X)-\rdag(X)\}h(X)]=0$ for all such $h$. The
choice $h=r'-\rdag$, admissible because both functions lie in $L^2(P)$, yields
$\norm{r'-\rdag}{L^2(P)}^2=0$, so $r'=\rdag$ almost surely.
 
\emph{Sieve version.} If \eqref{eq:calib-pop} holds componentwise and
$h\in H_J$, write $h=a^\top b_J$ for some $a\in\R^J$. Then
\[
  \EP[r(X)h(X)]
  =a^\top\EP[r(X)b_J(X)]
  =a^\top\EQ[b_J(Z)]
  =\EQ[h(Z)].
\]
Combining with \eqref{eq:riesz} applied to the same $h$ gives
$\EP[\{r(X)-\rdag(X)\}h(X)]=0$. \hfill$\square$
 
\subsubsection{Proof of Lemma~\ref{lem:A-heat}}
 
Fix $x\in\mathcal K$ and write $D^kf(x)$ for the $k$-th Fréchet derivative of
$f$ at $x$, so that $D^2f(x)$ is the Hessian. By \eqref{eq:gibbs},
$S_\varepsilon f(x)=\E[f(x+U)]$ with
$U\sim\mathcal N\{0,(\varepsilon/2)I_d\}$. Taylor's theorem through third order
with a fourth-order remainder gives
\[
  f(x+U)=f(x)+Df(x)[U]+\tfrac12D^2f(x)[U,U]
         +\tfrac16D^3f(x)[U,U,U]+R_4(x,U),
\]
with $|R_4(x,U)|\le C\norm{U}{2}^4$,
valid whenever the segment from $x$ to $x+U$ remains in the neighbourhood on
which the derivative bounds hold. The complementary event has Gaussian
probability exponentially small in $1/\varepsilon$ and, $f$ being globally
bounded, contributes $o(\varepsilon^2)$\corr{; the Taylor-polynomial terms on
that event are handled by Cauchy--Schwarz, a Gaussian polynomial moment times
the square root of an exponentially small probability being
$O(e^{-c/\varepsilon})$}.
 
By symmetry of the centred Gaussian the first- and third-order terms have zero
mean. Since $\E(UU^\top)=(\varepsilon/2)I_d$,
\[
  \tfrac12\,\E[D^2f(x)[U,U]]
  =\tfrac12\operatorname{tr}\{D^2f(x)\,\E(UU^\top)\}
  =\frac{\varepsilon}{4}\,\Delta f(x).
\]
Finally $\E\norm{U}{2}^4=O(\varepsilon^2)$, so $\E|R_4(x,U)|=O(\varepsilon^2)$.
All derivative and tail bounds are uniform for $x\in\mathcal K$, which gives
\eqref{eq:heat} uniformly on $\mathcal K$. \hfill$\square$
 
\subsubsection{Proof of Lemma~\ref{lem:A-lse}}
 
{ Write $L(x)\coloneq \log\int_{\mathcal Z}e^{f_z(x)}\,d\nu(z)$ and, for each $x$, $\vartheta_z(x)\coloneq \frac{e^{f_z(x)}}{\int_{\mathcal Z}e^{f_w(x)}\,d\nu(w)},$ so that $\vartheta_z(x)\ge0$ and $\int_{\mathcal Z}\vartheta_z(x)\,d\nu(z)=1.$
Thus $\vartheta_\cdot(x)$ is a probability density with respect to $\nu$ for
every $x$. All interchanges of differentiation in $x$ and integration in $z$
below are justified by dominated convergence, every integrand being bounded,
uniformly in $(x,z)$, by a fixed polynomial in $F$ times $e^{2F}$.

\emph{Zeroth order.} Since $e^{-F}\le e^{f_z(x)}\le e^{F}$ for all $z$ and
$\nu$ is a probability measure, $-F\le L(x)\le F$, whence
$\norm{L}{\infty}\le F$.

\emph{First order.} Differentiating,
$DL(x)=\int_{\mathcal Z}\vartheta_z(x)\,Df_z(x)\,d\nu(z)$, an average of
vectors of norm at most \corr{$\sqrt d\,F$} under the probability density $\vartheta_\cdot(x)$,
so $\norm{DL}{\infty}\le \corr{\sqrt d\,F}$. Note also that
$D\vartheta_z=\vartheta_z\bigl(Df_z-\int\vartheta_w\,Df_w\,d\nu(w)\bigr)$, so
\corr{$|D\vartheta_z(x)|\le2\sqrt d\,F\,\vartheta_z(x)$ pointwise}.

\emph{Higher order.} We show by induction on $|\alpha|$ that
$\norm{D^\alpha L}{\infty}$ and
$\sup_z\norm{D^\alpha\vartheta_z/\vartheta_z}{\infty}$ are bounded by
constants depending only on $(k,d,F)$, for $1\le|\alpha|\le k$. The base case
was just verified. For the inductive step, differentiating
$DL=\int\vartheta_z\,Df_z\,d\nu(z)$ repeatedly produces, by the Leibniz rule,
finite sums of terms
$\int(D^{\alpha_1}\vartheta_z)(D^{\alpha_2}f_z)\,d\nu(z)$ with
$|\alpha_1|+|\alpha_2|\le k$, the number of such terms depending only on $k$
and $d$; writing
$D^{\alpha_1}\vartheta_z=\vartheta_z\,(D^{\alpha_1}\vartheta_z/\vartheta_z)$,
each such term is a $\vartheta$-weighted average of quantities bounded by the
inductive hypothesis and by $\sup_z\norm{f_z}{W^{k,\infty}}\le F$.
Differentiating
$D\vartheta_z=\vartheta_z(Df_z-\int\vartheta_w\,Df_w\,d\nu(w))$ produces terms
of the same shape by the usual multivariate Fa\`a di Bruno bookkeeping. The
essential point is that every integration over $z$ carries the weight
$\vartheta_z$ and is therefore an average rather than a sum. As a result, no factor
involving the size of the index set --- in the discrete case, no factor of $m$
or of the weights --- ever appears.

Combining the three displays gives $\norm{L}{W^{k,\infty}}\le C(k,d,F)$, which
is \eqref{eq:lse}; the finite-sum case is the special case in which $\nu$
places mass $\omega_j$ on the index $j$. \hfill$\square$}
 
\begin{remark}
\label{rem:B-lse}
Lemma~\ref{lem:A-lse} is applied in Appendix~\ref{app:E} \corr{twice, with
$k=s$: to the empirical offset, with $\nu$ the empirical distribution of the
target covariates and $f_j(\cdot)=\{v_j-c(\cdot,X^Q_j)\}/\varepsilon$, where
the hypothesis follows from \corr{the bound $\norm{v}{\infty}\le C_v$ of
Lemma~\ref{lem:dual-bdd}, valid on the event of
Assumption~\ref{ass:des-dual}}, the smoothness of the squared Euclidean cost,
and $\varepsilon\ge\varepsilon_0>0$; and to the population offset, with
$\nu=\QX$ and $f_z(\cdot)=\{g(z)-c(\cdot,z)\}/\varepsilon$, where it follows
from $\norm{g}{\infty}\le C_g$ (Assumption~\ref{ass:des-dual}).} It is worth noting that the above lemma is applied to the \emph{offset} only, and never
to the calibration tilt $\theta^\top b_J$ which was introduced later. The tilt is handled by the
$M$-estimation argument of Appendix~\ref{app:E}. Attempting to
bound the tilt in $W^{s,\infty}$ uniformly in $J$ would be inconsistent with
the Gram condition \ref{ass:des-basis}, as Remark~\ref{rem:why-tilt}
explains.
\end{remark}

\section{Identification, the Efficient Influence Function,
and the Exact Remainder}
\label{app:C}
 
This section does three things, in increasing order of consequence. It shows
that the transported estimand
\begin{equation}
  \tau_Q\coloneq \EQ\bigl[Y^{\mathrm Q}(1)-Y^{\mathrm Q}(0)\bigr]
  \label{eq:estimand}
\end{equation}
is a functional of the two observed data distributions; it computes the
efficient influence function of that functional, and finds the density ratio
$\rdag$ sitting inside it as a Riesz representer; and it establishes an exact
algebraic identity for the error of the associated estimating functional, an
identity which is the engine of every double-robustness statement in the paper
and which also fixes the terms on which transport-based and score-based
procedures are to be compared.
 
\subsection{Identification}
\label{sub:C-id}
 
\begin{proposition}[Identification of $\tau_Q$]
\label{prop:C-id}
Under Assumption~\ref{ass:id},
\begin{equation}
  \tau_Q
  \;=\;\EQ\bigl[Y^{\mathrm Q}\bigr]-\EQ\bigl[\mu_0(Z)\bigr]
  \;=\;\EQ\bigl[Y^{\mathrm Q}-\mu_0(Z)\bigr],
  \label{eq:id}
\end{equation}
a functional of the observed data distributions of the two samples.
\end{proposition}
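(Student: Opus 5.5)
The plan is to split $\tau_Q$ into its treated and control arms by linearity and identify each arm separately. The split $\tau_Q=\EQ[Y^{\mathrm Q}(1)]-\EQ[Y^{\mathrm Q}(0)]$ requires each potential-outcome mean to be finite. For the treated arm this comes from Assumption~\ref{ass:id-cons} together with the moment bound of Assumption~\ref{ass:id-mom}, once we know $\mu_0(Z)$ is $Q$-integrable. For the control arm, integrability is part of what the transportability statement \ref{ass:id-trans} presupposes, since a conditional expectation is being asserted to equal $\mu_0$.

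\emph{Treated arm.} Consistency in the target sample, Assumption~\ref{ass:id-cons}, gives $Y^{\mathrm Q}=Y^{\mathrm Q}(1)$ $Q$-almost surely. Hence $\EQ[Y^{\mathrm Q}(1)]=\EQ[Y^{\mathrm Q}]$, which is an observed-data quantity.

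\emph{Control arm.} The first step is the tower property, $\EQ[Y^{\mathrm Q}(0)]=\EQ\bigl[\EQ\{Y^{\mathrm Q}(0)\mid Z\}\bigr]$. The second step replaces the inner conditional mean by $\mu_0(Z)$ using Assumption~\ref{ass:id-trans}; this replacement holds $Q$-almost surely.

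The only delicate point, and the main obstacle, is checking that $\mu_0(Z)$ is a well-defined function of the observed data, $Q$-almost surely. By Assumption~\ref{ass:id-pos}, $\mu_0(x)=\EP(Y\mid X=x,T=0)$ is determined by the trial law for $P$-almost every $x$. Assumption~\ref{ass:id-ovl} says $Q_X\ll P_X$, so any $P$-null exceptional set is also $Q$-null. Consequently, the version of $\mu_0$ used does not affect $\EQ[\mu_0(Z)]$.

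Integrability of $\mu_0(Z)$ under $Q$ can be handled in either of two ways:
\begin{itemize}
\item use the bound $\norm{\mu_0}{\infty}\le C_0$ from Assumption~\ref{ass:id-mom}; or
\item apply Lemma~\ref{lem:A-riesz} with $h=|\mu_0|$, which gives $\EQ|\mu_0(Z)|=\EP[\rdag|\mu_0|]\le\norm{\rdag}{L^2(P)}\norm{\mu_0}{L^2(P)}<\infty$.
\end{itemize}
Given this, the moment bound on $Y^{\mathrm Q}-\mu_0(Z)$ in Assumption~\ref{ass:id-mom} makes $Y^{\mathrm Q}$ integrable, which justifies splitting the mean of the difference into a difference of means.

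Combining the two arms yields \eqref{eq:id}. Its right-hand side involves only the law $Q$ of $(Z,Y^{\mathrm Q})$ and the functional $\mu_0$ of the trial law $P$. This is exactly the claim that $\tau_Q$ is a functional of the observed data distributions.

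Finally, as the paper notes, randomisation enters only through the causal reading $\mu_0(x)=\EP\{Y(0)\mid X=x\}$. The proof itself uses only the observed-data definition of $\mu_0$ in \eqref{eq:nuisances}.
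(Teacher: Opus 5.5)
Your proof is correct and follows the paper's argument essentially step for step: consistency gives the treated arm, the tower property plus transportability gives the control arm, positivity together with $Q_X\ll P_X$ makes $\mu_0$ well defined $Q$-almost everywhere, and boundedness of $\mu_0$ together with the $(2+\delta)$ moment justifies splitting the mean of the difference into a difference of means. One small caveat: your alternative integrability route via Lemma~\ref{lem:A-riesz} needs $\mu_0\in L^2(P)$, which under Assumption~\ref{ass:id} is itself supplied only by the bound $\norm{\mu_0}{\infty}\le C_0$, so it is not genuinely independent of your first route.
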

 
\begin{proof}
The two arms of \eqref{eq:estimand} are identified by entirely different
routes, and it is worth keeping them apart.
 
\emph{Step 1:} Every unit in the target sample
receives treatment, so consistency (Assumption~\ref{ass:id-cons}) gives
$Y^{\mathrm Q}=Y^{\mathrm Q}(1)$ almost surely, and taking expectations under
$Q$ yields $\EQ[Y^{\mathrm Q}(1)]=\EQ[Y^{\mathrm Q}]$.
 
\emph{Step 2:} Fix $x$
with $e_0(x)>0$, which holds for $P$-almost every $x$ by Assumption~\ref{ass:id-pos},
so that the conditional expectations below are well defined. { The function $\mu_0(x)=\EP[Y\mid X=x,\,T=0]$ of \eqref{eq:nuisances}
depends only on the joint law of the observed $(X,T,Y)$, so it is identified
for $P$-almost every $x$; no further condition is involved. On the event
$\{T=0\}$, consistency gives $Y=Y(0)$ pointwise, so
$\mu_0(x)=\EP[Y(0)\mid X=x,\,T=0]$; because treatment is randomised in the
trial --- a property of the design rather than a listed assumption ---
conditioning on $T=0$ does not alter the conditional law of $Y(0)$ given
$X=x$, whence $\mu_0(x)=\EP[Y(0)\mid X=x]$. This last display is
interpretive: it gives Assumption~\ref{ass:id-trans} its causal content, but
the identification argument of Step 3 uses only the definition of $\mu_0$.}
 
\emph{Step 3:} The quantity
$\EQ[Y^{\mathrm Q}(0)]$ is the mean of an unobserved potential outcome.
Conditioning on the target covariate and using
Assumption~\ref{ass:id-trans},
\[
  \EQ\bigl[Y^{\mathrm Q}(0)\bigr]
  =\EQ\Bigl[\EQ\bigl[Y^{\mathrm Q}(0)\mid X^Q\bigr]\Bigr]
  =\EQ\bigl[\mu_0(X^Q)\bigr].
\]
It remains to check that this is well defined. By Step 2, $\mu_0$ is
identified off a set $\mathcal N_0$ of $P$-probability zero; by overlap
(Assumption~\ref{ass:id-ovl}) the covariate law under $Q$ is absolutely continuous with
respect to that under $P$, so $\mathcal N_0$ also has $Q$-probability zero, and
$\mu_0$ is therefore defined and identified $Q$-almost everywhere.
 
Subtracting Step 3 from Step 1 gives \eqref{eq:id}\corr{; both terms of the
split form are finite under Assumption~\ref{ass:id-mom}, whose boundedness of
$\mu_0$ gives $\EQ|\mu_0(X^Q)|<\infty$ and, with the $(2+\delta)$ moment,
$\EQ|Y^{\mathrm Q}|<\infty$}.
\end{proof}
 
\subsection{The efficient influence function}
\label{sub:C-eif}
 
We use the standard framework of regular parametric submodels. A regular
one-dimensional submodel through $(P,Q)$ is a pair of differentiable paths
$t\mapsto P_t$ and $t\mapsto Q_t$ with $P_0=P$, $Q_0=Q$, and densities $p_t$,
$q_t$ with respect to fixed dominating measures. The associated scores are
\[
  \zeta_P(o^{\mathrm R})\coloneq \frac{\partial}{\partial t}\log p_t(o^{\mathrm R})
  \Big|_{t=0},
  \qquad
  \zeta_Q(o^{\mathrm Q})\coloneq \frac{\partial}{\partial t}\log q_t(o^{\mathrm Q})
  \Big|_{t=0},
\]
which are square-integrable and centred, $\EP[\zeta_P]=0$ and
$\EQ[\zeta_Q]=0$, by differentiating the normalisation of the submodel
densities. \corr{Throughout, submodels are taken to be differentiable in
quadratic mean with locally dominated scores, so that differentiation under
integral signs below is justified; this standard regularity convention is not
repeated at each use.} Because the two samples are independent, the two scores may be
varied independently, and the pathwise derivative of $\tau_Q$ splits
accordingly into a target-law and a trial-law contribution. The functional
$\tau_Q$ of \eqref{eq:id} depends on the target law directly, through the outer
expectation, and on the trial law indirectly, through the regression $\mu_0$.
The next lemma computes the two derivatives separately.
 
\begin{lemma}[Pathwise derivatives]
\label{lem:C-deriv}
Let $\mu_{0,t}(x)\coloneq \E_{P_t}\{Y\mid X=x,\,T=0\}$. Under Assumption~\ref{ass:id}:
\begin{enumerate}[label=\textup{(\roman*)},leftmargin=2.6em,itemsep=2pt,topsep=3pt]
\item \emph{(Target channel.)} With the trial law held fixed,
\begin{equation}
  \frac{d}{dt}\Bigl[\E_{Q_t}\bigl\{Y^{\mathrm Q}-\mu_0(Z)\bigr\}\Bigr]_{t=0}
  =\EQ\Bigl[\bigl\{Y^{\mathrm Q}-\mu_0(Z)-\tau_Q\bigr\}\,
            \zeta_Q(O^{\mathrm Q})\Bigr].
  \label{eq:Qderiv}
\end{equation}
\item \emph{(Trial channel.)} For $P$-almost every $x$,
\begin{equation}
  \dot\mu_0(x)\coloneq \frac{d}{dt}\mu_{0,t}(x)\Big|_{t=0}
  =\EP\Bigl[\bigl\{Y-\mu_0(X)\bigr\}\,\zeta_P(O^{\mathrm R})
            \;\Bigm|\;X=x,\,T=0\Bigr].
  \label{eq:Pderiv}
\end{equation}
\end{enumerate}
\end{lemma}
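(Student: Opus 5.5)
The two channels are handled separately, since the independence of the samples lets each law be perturbed with the other held fixed.

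\emph{Target channel.} With $P$ fixed, $\mu_0$ does not move, so the functional $t\mapsto\E_{Q_t}\{Y^{\mathrm Q}-\mu_0(Z)\}$ is the integral of a fixed function $f(o^{\mathrm Q})=y^{\mathrm Q}-\mu_0(z)$ against $q_t$. By the standing regularity convention (differentiability in quadratic mean with locally dominated scores) I will differentiate under the integral, using $\partial_t q_t=q_t\,\partial_t\log q_t$. This gives $\EQ[f\,\zeta_Q]$. Integrability of $f\zeta_Q$ follows from Cauchy--Schwarz, because $f\in L^2(Q)$ by Assumption~\ref{ass:id-mom} and boundedness of $\mu_0$, and $\zeta_Q\in L^2(Q)$. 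Since $\EQ[\zeta_Q]=0$, I can subtract the constant $\tau_Q$ inside the expectation at no cost, which yields \eqref{eq:Qderiv}.

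\emph{Trial channel.} I factor the trial density as $p_t(x,t',y)=p_{X,t}(x)\,p_{T\mid X,t}(t'\mid x)\,p_{Y\mid X,T,t}(y\mid x,t')$. The score then splits into orthogonal pieces $\zeta_P=\zeta_X+\zeta_{T\mid X}+\zeta_{Y\mid X,T}$, and each conditional piece has conditional mean zero. The regression $\mu_{0,t}(x)=\int y\,p_{Y\mid X,T,t}(y\mid x,0)\,dy$ depends only on the last factor evaluated at $T=0$. Differentiating under the integral therefore gives
\[
\dot\mu_0(x)=\EP[Y\,\zeta_{Y\mid X,T}\mid X=x,T=0]=\EP[\{Y-\mu_0(X)\}\,\zeta_{Y\mid X,T}\mid X=x,T=0],
\]
where the centering is free because $\zeta_{Y\mid X,T}$ has conditional mean zero. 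It then remains to replace $\zeta_{Y\mid X,T}$ by the full score $\zeta_P$. The other two pieces, $\zeta_X+\zeta_{T\mid X}$, are functions of $(X,T)$ alone, so on $\{X=x,T=0\}$ they are constants. Multiplying by $Y-\mu_0(X)$, which has conditional mean zero given $X=x,T=0$, their contribution vanishes, and this gives \eqref{eq:Pderiv}. Positivity (Assumption~\ref{ass:id-pos}) ensures the conditioning event has positive conditional probability for $P$-a.e.\ $x$, so the conditional densities are well defined.

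\emph{Main obstacle.} The delicate step is justifying the interchange of derivative and integral in the conditional expectation pointwise in $x$, since the regularity convention is stated for joint densities. I would first try to avoid pointwise issues by proving the identity in the weak form: for every bounded $h$,
\[
\frac{d}{dt}\E_{P_t}\bigl[(1-T)h(X)\{Y-\mu_{0,t}(X)\}\bigr]=0 .
\]
Differentiating this identity using the joint DQM and the conditional variance bound $C_Y$ shows that $\EP[(1-T)h(X)e_0(X)^{-1}\dot\mu_0(X)]$ equals $\EP[(1-T)h(X)e_0(X)^{-1}\{Y-\mu_0\}\zeta_P]$. Combining this with Lemma~\ref{lem:A-invrand} gives $\EP[h\,\dot\mu_0]=\EP[h\,\EP\{(Y-\mu_0)\zeta_P\mid X,T=0\}]$ for all bounded $h$, and the $P$-a.e.\ identity follows.
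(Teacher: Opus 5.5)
Your proof is correct and follows the paper's route. Part (i) is the same differentiation under the integral followed by subtracting $\tau_Q\,\EQ[\zeta_Q]=0$. In part (ii), your $\zeta_{Y\mid X,T}$ evaluated at $T=0$ is exactly the paper's conditional score $\zeta_P-\EP[\zeta_P\mid X=x,T=0]$, with the centering done in a different order. The closing weak-form detour is optional and does not remove the regularity issue: differentiating $\E_{P_t}[(1-T)h(X)\mu_{0,t}(X)]$ already requires $t\mapsto\mu_{0,t}$ to be differentiable in an integrated sense, which is the same kind of interchange the paper absorbs into its standing differentiability-in-quadratic-mean convention.
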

 
\begin{proof}
\emph{(i).} Differentiation under the integral sign, justified by
square-integrability of $\zeta_Q$ and of $Y$, gives for any integrable $f$
\[
  \frac{d}{dt}\E_{Q_t}\{f\}\Big|_{t=0}
  =\int f\,\dot q_0\,d\nu
  =\EQ[f\,\zeta_Q],
\]
since $\dot q_0=q_0\,\zeta_Q$. Applying this with
$f=Y^{\mathrm Q}-\mu_0(Z)$ and then subtracting
$\tau_Q\,\EQ[\zeta_Q]=0$ yields \eqref{eq:Qderiv}. The subtraction is
cosmetic at this stage but is what makes the influence function centred.
 
\emph{(ii).} Let $p_t(y\mid x,0)$ denote the conditional density of $Y$ given
$(X,T)=(x,0)$ under $P_t$, so that
$\mu_{0,t}(x)=\int y\,p_t(y\mid x,0)\,d\nu(y)$. \corr{Writing
$p_t(y\mid x,0)=p_t(x,0,y)/p_t(x,0)$ and differentiating the logarithm, the
conditional score is $\zeta_P$ minus the score of the marginal law of
$(X,T)$, which equals $\EP[\zeta_P\mid X=x,T=0]$; differentiating
$\int p_t(y\mid x,0)\,d\nu(y)=1$ confirms the centring:}
\[
  \frac{\partial}{\partial t}\log p_t(y\mid x,0)\Big|_{t=0}
  =\zeta_P(y,x,0)-\EP[\zeta_P\mid X=x,\,T=0].
\]
Hence
\begin{align*}
  \dot\mu_0(x)
  &=\int y\,p_0(y\mid x,0)
     \bigl[\zeta_P(y,x,0)-\EP[\zeta_P\mid X=x,T=0]\bigr]d\nu(y)\\
  &=\EP[Y\zeta_P\mid X=x,T=0]-\mu_0(x)\,\EP[\zeta_P\mid X=x,T=0]\\
  &=\EP\bigl[\{Y-\mu_0(X)\}\zeta_P\mid X=x,\,T=0\bigr],
\end{align*}
which is \eqref{eq:Pderiv}.
\end{proof}
 
\begin{theorem}[Efficient influence function]
\label{thm:C-eif}
Under Assumption~\ref{ass:id}, the efficient influence function for $\tau_Q$ in the
two-sample nonparametric model is the pair
\begin{align}
  \varphi_Q(O^{\mathrm Q})&=Y^{\mathrm Q}-\mu_0(Z)-\tau_Q,
  \label{eq:eifQ}\\[2pt]
  \varphi_P(O^{\mathrm R})&=-\,A\,\rdag(X)\bigl\{Y-\mu_0(X)\bigr\},
  \qquad A=\frac{1-T}{e_0(X)},
  \label{eq:eifP}
\end{align}
in the sense that for every regular submodel with scores
$(\zeta_P,\zeta_Q)$,
\begin{equation}
  \dot\tau_Q
  =\EQ\bigl[\varphi_Q\,\zeta_Q\bigr]
   +\EP\bigl[\varphi_P\,\zeta_P\bigr].
  \label{eq:eif}
\end{equation}
Both contributions are centered, and under \eqref{eq:ratio} the semiparametric
efficiency bound for $\sqrt m\,(\that-\tau_Q)$ is
\begin{equation}
  V_{\mathrm{eff}}
  =\Var_Q\bigl[\varphi_Q(O^{\mathrm Q})\bigr]
   +\eta^{-1}\Var_P\bigl[\varphi_P(O^{\mathrm R})\bigr].
  \label{eq:Veff}
\end{equation}
\end{theorem}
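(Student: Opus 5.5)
We will verify the pathwise-derivative identity \eqref{eq:eif} directly from Lemma~\ref{lem:C-deriv}, then show the candidates are centred and lie in the tangent space, and finally read off the bound.

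\emph{Step 1: splitting the derivative.} Write $\tau_{Q,t}=\E_{Q_t}[Y^{\mathrm Q}]-\E_{Q_t}[\mu_{0,t}(Z)]$ using Proposition~\ref{prop:C-id}. The two laws vary independently, so the product rule gives a target-channel term, in which $\mu_0$ is held fixed, and a trial-channel term $-\EQ[\dot\mu_0(Z)]$. Lemma~\ref{lem:C-deriv}(i) identifies the first term as $\EQ[\varphi_Q\zeta_Q]$.

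\emph{Step 2: transporting the trial channel.}
\begin{itemize}
\item Apply Lemma~\ref{lem:A-riesz} with $h=\dot\mu_0$ to get $\EQ[\dot\mu_0(Z)]=\EP[\rdag(X)\dot\mu_0(X)]$. This needs $\dot\mu_0\in L^2(P)$, which follows from \eqref{eq:Pderiv}: conditional Cauchy--Schwarz bounds $\dot\mu_0^2$ by $C_Y\,\EP[\zeta_P^2\mid X,T=0]$. Integrating over $X$ gives a finite bound, because $\EP[\zeta_P^2\mid X,T=0]\le \underline e^{-1}\EP[\zeta_P^2\mid X]$ by positivity.
\item Substitute \eqref{eq:Pderiv} and use the inverse-randomisation identity, Lemma~\ref{lem:A-invrand}, with $G=\rdag(X)\{Y-\mu_0(X)\}\zeta_P$. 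This gives
\[
\EP[\rdag\dot\mu_0]=\EP\bigl[A\,\rdag(X)\{Y-\mu_0(X)\}\zeta_P\bigr]=-\EP[\varphi_P\zeta_P].
\]
\item The integrability hypothesis of Lemma~\ref{lem:A-invrand} follows from boundedness of $\rdag$ and $A$, the moment condition \ref{ass:id-mom}, and $\zeta_P\in L^2$.
\end{itemize}
This establishes \eqref{eq:eif}.

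\emph{Step 3: centring and efficiency.}
\begin{itemize}
\item $\EQ[\varphi_Q]=0$ holds by Proposition~\ref{prop:C-id}.
\item $\EP[\varphi_P]=-\EP[\rdag(\mu_0-\mu_0)]=0$ by \eqref{eq:invrand2} with $\mu=\mu_0$.
\item The model is nonparametric in each sample, so the tangent space is all centred $L^2$ functions of $O^{\mathrm R}$ and of $O^{\mathrm Q}$, taken as a direct sum. The gradient $(\varphi_P,\varphi_Q)$ therefore already lies in the tangent space and is the canonical gradient.
\end{itemize}
One subtlety: $e_0$ is known by design, which restricts the trial model. The tangent space then excludes scores of $T\mid X$, so we must check that $\varphi_P$ is orthogonal to functions $a(X)\{T-(1-e_0(X))\}$. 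This holds because $\EP[\varphi_P\mid X,T]=0$, since the inner residual has conditional mean zero given $(X,T=0)$. So $\varphi_P$ lies in the restricted tangent space as well, and the gradient stays canonical.

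\emph{Step 4: the variance bound.} For an estimator based on $m$ target and $n$ trial observations, the efficient asymptotically linear representation is $(\Em-\EQ)\varphi_Q+(\En-\EP)\varphi_P$. Since $m/n\to\eta^{-1}$ by \eqref{eq:ratio} and the two samples are independent, the variance of $\sqrt m$ times this expression tends to $V_{\mathrm{eff}}$ in \eqref{eq:Veff}.

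The main obstacle is making the two-sample efficiency bound rigorous, i.e.\ the convolution theorem for independent samples at proportional rates. I would handle it by viewing the data as a product experiment, local asymptotically normal with local parameter scaled by $\sqrt m$. The trial score then contributes with weight $\eta^{-1}$, and the standard Hájek--Le Cam argument applies componentwise.
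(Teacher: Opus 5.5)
Your proposal is correct and takes essentially the same route as the paper. It uses Lemma~\ref{lem:C-deriv} for the two channels, the Riesz identity applied to $\dot\mu_0$ with the same conditional Cauchy--Schwarz and positivity argument for $\dot\mu_0\in L^2(P)$, the inverse-randomisation identity, centring via \eqref{eq:invrand2}, the tangent-space check based on $\EP[\varphi_P\mid X,T]=0$ under known $e_0$, and the two-sample convolution theorem for the bound. The differences are cosmetic: you fold $\rdag$ into $G$, and you explicitly verify the integrability hypothesis of Lemma~\ref{lem:A-invrand}, which the paper leaves implicit.
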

 
\begin{proof}
\emph{Target contribution.} This is exactly
Lemma~\ref{lem:C-deriv}(i), which by \eqref{eq:eifQ} reads
$\EQ[\varphi_Q\zeta_Q]$.
 
\emph{Trial contribution.} Perturbing $P_t$ changes $\mu_{0,t}$ while $Q$ is
held fixed, so by \eqref{eq:id}
\[
  \frac{d}{dt}\Bigl[-\EQ\bigl[\mu_{0,t}(Z)\bigr]\Bigr]_{t=0}
  =-\EQ\bigl[\dot\mu_0(Z)\bigr].
\]
Two substitutions now convert this into a trial-sample expectation. First, the
Riesz identity of Lemma~\ref{lem:A-riesz} applied to $h=\dot\mu_0$
\corr{--- admissible, since $\dot\mu_0(x)^2\le C_Y\,\EP[\zeta_P^2\mid X=x,T=0]$
by Cauchy--Schwarz and Assumption~\ref{ass:id-mom}, and the right-hand side
integrates to at most $\underline e^{-1}\EP[\zeta_P^2]<\infty$, so
$\dot\mu_0\in L^2(P)$ ---} moves the expectation from
the target law to the trial law at the cost of the factor $\rdag$:
\[
  \EQ\bigl[\dot\mu_0(Z)\bigr]=\EP\bigl[\rdag(X)\,\dot\mu_0(X)\bigr].
\]
Second, by Lemma~\ref{lem:C-deriv}(ii) the inner object is a conditional
expectation given $\{X,T=0\}$, and the inverse-randomisation identity
\eqref{eq:invrand} of Lemma~\ref{lem:A-invrand}, applied with
$G(O^{\mathrm R})=\{Y-\mu_0(X)\}\zeta_P(O^{\mathrm R})$, rewrites it as \corr{a conditional expectation given $X=x$ alone, no longer
restricted to the control arm}:
\[
  \dot\mu_0(x)
  =\EP\Bigl[A\,\bigl\{Y-\mu_0(X)\bigr\}\zeta_P(O^{\mathrm R})\Bigm|X=x\Bigr].
\]
Combining the two displays and taking the expectation over $X$,
\[
  \EQ\bigl[\dot\mu_0(Z)\bigr]
  =\EP\Bigl[A\,\rdag(X)\bigl\{Y-\mu_0(X)\bigr\}\zeta_P(O^{\mathrm R})\Bigr],
\]
so that the trial contribution equals $\EP[\varphi_P\zeta_P]$ with
$\varphi_P$ as in \eqref{eq:eifP}. Adding the two contributions gives
\eqref{eq:eif}.
 
\emph{Centring.} $\EQ[\varphi_Q]=\EQ[Y^{\mathrm Q}-\mu_0(Z)]-\tau_Q=0$ by
Proposition~\ref{prop:C-id}, and by \eqref{eq:invrand2} with
$h=\rdag$ and $\mu=\mu_0$,
$\EP[\varphi_P]=-\EP[\rdag(X)\{\mu_0(X)-\mu_0(X)\}]=0$.
 
\emph{Efficiency.} The pair \eqref{eq:eifQ}--\eqref{eq:eifP} is a gradient of
$\tau_Q$ by \eqref{eq:eif}. { Write $L_0^2(P)$ for the mean-zero
square-integrable functions of $O^{\mathrm R}$ and decompose the trial score
space as
\[
  L_0^2(P)
  =\{a(X)\}\;\oplus\;\{s(X,T):\E[s\mid X]=0\}\;\oplus\;
   \{b(O^{\mathrm R}):\E[b\mid X,T]=0\},
\]
the three summands being the scores of the covariate law, of the treatment
mechanism, and of the conditional outcome law. In the model with known
randomisation probability $e_0$ the trial tangent space is the direct sum of
the first and third summands, and the target tangent space is all of
$L_0^2(Q)$. The canonical gradient is the projection of any gradient onto the
tangent space, so it suffices to check that $(\varphi_Q,\varphi_P)$ already
lies in it. The target component visibly does. For the trial component,
$\EP[\varphi_P\mid X,T]=0$ almost surely --- on $\{T=1\}$ because $A=0$, and
on $\{T=0\}$ because $\EP[Y-\mu_0(X)\mid X,T=0]=0$ by the definition of
$\mu_0$ --- so $\varphi_P$ lies in the third summand and is in particular
orthogonal to every treatment-mechanism score. Hence the pair is the
canonical, that is efficient, influence function; see
\citet[Section~25.4]{vaart1998asymptotic} and \citet{bickel1993efficient}
for the underlying projection calculus. Note that, since $\varphi_P$ has no
treatment-mechanism component, knowledge of $e_0$ does not change the
bound.} The identity \eqref{eq:Veff} then follows from the independence
of the two samples together with \eqref{eq:ratio}, the factor $\eta^{-1}$
arising because the trial contribution is averaged over $n$ observations while
the scaling is by $\sqrt m$\corr{; the formalisation of the two-sample bound
as the variance of the canonical gradient follows the convolution-theorem
argument of \citet{bickel1993efficient}}.
\end{proof}
 
\begin{remark}[The density ratio is a Riesz representer]
\label{rem:C-riesz}
The proof shows exactly where $\rdag$ enters. Recall that $\rdag$ is the object that converts
the target-population expectation $\EQ[h(Z)]$ into a trial-population
expectation, for the particular function $h=\dot\mu_0$ traced out by
perturbations of the trial law. Since the model is nonparametric, $\dot\mu_0$
ranges over a dense subset of $L^2(P)$ \corr{provided the conditional outcome
variance $\Var(Y\mid X,T=0)$ is positive $P$-almost everywhere --- a mild
nondegeneracy condition we tacitly assume in this remark ---} so the
conversion must hold for all such $h$\corr{, which} by the uniqueness half of
Lemma~\ref{lem:A-riesz} pins down the weight uniquely as $\rdag$. 
\end{remark}
 
\subsection{The exact second-order remainder}
\label{sub:C-rem}
 
\corr{The population error of the one-step estimating functional is
controlled by the following identity, which is exact --- no expansion and no
approximation; beyond Assumption~\ref{ass:id} it uses only square-integrability
of the candidate nuisances.}
 
\begin{proposition}[Second-order remainder]
\label{prop:C-rem}
For square-integrable candidate nuisances $(\mu,r)$, define the population
estimating functional
\begin{equation}
  \Psi(\mu,r)
  \coloneq \EQ\bigl[Y^{\mathrm Q}-\mu(Z)\bigr]
    -\EP\Bigl[A\,r(X)\bigl\{Y-\mu(X)\bigr\}\Bigr].
  \label{eq:Psi}
\end{equation}
Then, under Assumption~\ref{ass:id},
\begin{equation}
  \Psi(\mu,r)-\tau_Q
  \;=\;\EP\Bigl[\bigl\{r(X)-\rdag(X)\bigr\}\bigl\{\mu(X)-\mu_0(X)\bigr\}\Bigr].
  \label{eq:remainder}
\end{equation}
\end{proposition}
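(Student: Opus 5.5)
The identity follows from three substitutions into the definition \eqref{eq:Psi}; no expansion is involved. First I would rewrite the target-law term. By Proposition~\ref{prop:C-id}, $\tau_Q=\EQ[Y^{\mathrm Q}-\mu_0(Z)]$. Both $\EQ[Y^{\mathrm Q}]$ and $\EQ[\mu_0(Z)]$ are finite under Assumption~\ref{ass:id-mom}, and $\EQ|\mu(Z)|<\infty$ by \eqref{eq:PtoQ} because $\mu\in L^2(P)$. So the first term of $\Psi(\mu,r)$ splits legitimately, and
\[
  \EQ\bigl[Y^{\mathrm Q}-\mu(Z)\bigr]-\tau_Q=\EQ\bigl[\mu_0(Z)-\mu(Z)\bigr].
\]

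Second, I would move this target expectation to the trial law using the Riesz identity of Lemma~\ref{lem:A-riesz} with $h=\mu_0-\mu$. This $h$ is admissible: $\mu_0$ is bounded, hence in $L^2(P)$, and $\mu$ is square-integrable by hypothesis, so $h\in L^2(P)$. The lemma then gives
\[
  \EQ[\mu_0(Z)-\mu(Z)]=\EP\bigl[\rdag(X)\{\mu_0(X)-\mu(X)\}\bigr].
\]

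Third, I would convert the trial-law term using the inverse-randomisation identity \eqref{eq:invrand2} of Lemma~\ref{lem:A-invrand}, taking $h=r\in L^2(P)$ and the given $\mu$:
\[
  \EP\bigl[A\,r(X)\{Y-\mu(X)\}\bigr]=\EP\bigl[r(X)\{\mu_0(X)-\mu(X)\}\bigr].
\]
The lemma also guarantees that both sides are finite. Subtracting this from the previous display gives
\[
  \Psi(\mu,r)-\tau_Q=\EP\bigl[\{\rdag(X)-r(X)\}\{\mu_0(X)-\mu(X)\}\bigr],
\]
which is \eqref{eq:remainder}, since flipping the sign of both factors leaves the product unchanged.

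The only point requiring care is integrability: every expectation that is split or subtracted must be finite, so that no $\infty-\infty$ arises. This is handled by the $L^2(P)$ membership of $r$, $\mu$ and $\mu_0$, together with Cauchy--Schwarz, which shows that the final product is integrable. The finiteness clauses already built into Lemmas~\ref{lem:A-invrand} and~\ref{lem:A-riesz} cover the remaining terms, so no new estimates are needed.
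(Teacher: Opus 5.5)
Your proposal is correct and follows the paper's proof essentially step for step: it applies the inverse-randomisation identity \eqref{eq:invrand2} with $h=r$ to the trial term, subtracts $\tau_Q$ using Proposition~\ref{prop:C-id}, and applies the Riesz identity with $h=\mu_0-\mu$ to the target term, only in a different order. Your explicit integrability bookkeeping, via \eqref{eq:PtoQ} and the boundedness of $\mu_0$, is a small addition that the paper leaves implicit.
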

 
\begin{proof}
The proof consists in simplifying the two terms of \eqref{eq:Psi} by two
different devices and observing that what is left is a product.
 
\emph{Step 1:} By \eqref{eq:invrand2} of Lemma~\ref{lem:A-invrand}, applied
with $h=r$,
\begin{equation}
  \EP\Bigl[A\,r(X)\bigl\{Y-\mu(X)\bigr\}\Bigr]
  =\EP\Bigl[r(X)\bigl\{\mu_0(X)-\mu(X)\bigr\}\Bigr].
  \label{eq:rem-step1}
\end{equation}
Note what this does and does not use: it uses \corr{only the known
randomisation probability, through $A$, together with positivity}, and
it holds \emph{whatever} $r$ is. The candidate weight is simply carried along.
 
\emph{Step 2:} Substituting \eqref{eq:rem-step1} into
\eqref{eq:Psi} and subtracting
$\tau_Q=\EQ[Y^{\mathrm Q}]-\EQ[\mu_0(Z)]$ from
Proposition~\ref{prop:C-id},
\begin{equation}
  \Psi(\mu,r)-\tau_Q
  =\EQ\bigl[\mu_0(Z)-\mu(Z)\bigr]
   -\EP\Bigl[r(X)\bigl\{\mu_0(X)-\mu(X)\bigr\}\Bigr],
  \label{eq:rem-step2}
\end{equation}
the observed target means $\EQ[Y^{\mathrm Q}]$ cancelling exactly.
 
\emph{Step 3:} By the Riesz identity \eqref{eq:riesz} of
Lemma~\ref{lem:A-riesz}, applied with $h=\mu_0-\mu$,
\[
  \EQ\bigl[\mu_0(Z)-\mu(Z)\bigr]
  =\EP\Bigl[\rdag(X)\bigl\{\mu_0(X)-\mu(X)\bigr\}\Bigr].
\]
This holds \emph{whatever} $\mu$ is. Substituting into \eqref{eq:rem-step2},
\[
  \Psi(\mu,r)-\tau_Q
  =\EP\Bigl[\bigl\{\rdag(X)-r(X)\bigr\}\bigl\{\mu_0(X)-\mu(X)\bigr\}\Bigr],
\]
and reversing the sign of both factors leaves the product unchanged, which is
\eqref{eq:remainder}.
\end{proof}
 
\begin{remark}[Why the error is a product]
\label{rem:C-product}
 \corr{The inverse-randomisation identity}
converts the trial term into $\EP[r(\mu_0-\mu)]$ irrespective of $r$, while the
Riesz identity converts the target term into $\EP[\rdag(\mu_0-\mu)]$
irrespective of $\mu$ and subtracting produces the product. Three consequences are
used repeatedly below. First, \emph{double robustness}: the right-hand side of \eqref{eq:remainder}
vanishes as soon as either factor does, so $\Psi(\mu,r)=\tau_Q$ whenever
$\mu=\mu_0$ or $r=\rdag$, with no requirement on the other nuisance beyond
square-integrability. Second, \emph{the product-rate condition}: by Cauchy--Schwarz,
$|\Psi(\mu,r)-\tau_Q|\le\norm{r-\rdag}{L^2(P)}\norm{\mu-\mu_0}{L^2(P)}$, so
root-$N$ inference requires only that the \emph{product} of the two estimation
errors be $o(N^{-1/2})$, not that either be so individually. Third,  \eqref{eq:remainder} is \emph{the same}
identity for every procedure of the form \eqref{eq:Psi}, whether $r$ is built
from a fitted sampling score or from a transport plan. 
\end{remark}
 
\section{The Uncalibrated Entropic Transport Weight}
\label{app:D}
 
\corr{The previous section} reduces everything to the approximation of $\rdag$ by the transport
weight. In this section, we answer the question for the \emph{uncalibrated}
unbalanced entropic plan, before any Riesz constraints are imposed. We present here three facts, and together they motivate the calibrated construction of the next section. The source marginal has a closed form in which\corr{, at a fixed dual
potential,} the relaxation parameter $\rho$ \corr{acts} through the exponent
$\gamma=\varepsilon/(\varepsilon+\rho)$, a power-law shrinkage towards the
uniform weight (Section~\ref{sub:D-closed}). At $\rho=0$ that weight is
exactly a self-normalised Gibbs-kernel \corr{smoothing of the ratio
$q/S_\varepsilon p$}, so that entropic transport is, in a precise sense, a
smoothing construction (Section~\ref{sub:D-kernel}). Consequently the weight converges to $\rdag$ as $\varepsilon\downarrow0$ but carries, at any \emph{fixed} $\varepsilon>0$, a bias of order $\varepsilon$ \corr{on interior compacta} whose effect on the Riesz equations we compute in closed form (Section~\ref{sub:D-expansion}). That bias is a population object. It does not diminish as the sample grows, and removing it is exactly what the calibration constraints will do.
 
\subsection{The unbalanced program and its source marginal}
\label{sub:D-closed}
 
Recall that the covariate distributions are assumed to admit
densities $p$ and $q$ on $\calX$, and the transport plan is likewise
represented by a density $\pi(x,z)$ on $\calX\times\calX$, with marginals
$\pi_1$ and $\pi_2$ as in \eqref{eq:marginals}. Recall the unbalanced entropic optimal transport from Definition~\ref{def:D-ubot}.

Existence and uniqueness in \eqref{eq:ubot} of Definition~\ref{def:D-ubot} are standard \corr{ in the theory of unbalanced entropic transport \citep{liero2018optimal,chizat2018unbalanced}}.  The cost term is
linear in $\pi$, the feasible set $\{\pi\ge0:\pi_2=q\}$ is convex, the map
$\pi\mapsto\pi_1$ is linear so that the marginal divergence is convex, and
$\pi\mapsto\KL(\pi\|p\otimes q)$ is strictly convex on its effective domain.
Lower semicontinuity and \corr{weak-$L^1$ compactness of the
Kullback--Leibler sublevel sets (uniform integrability plus the
Dunford--Pettis theorem)} give existence, and strict convexity gives
uniqueness.
 
\begin{proposition}[Closed form of the source weight]
\label{prop:D-closed}
The problem \eqref{eq:ubot} has a unique solution. \corr{Suppose the
solution is almost everywhere positive and admits a dual potential $g$ for
the target-marginal constraint with
$\int e^{\{g(z)-c(x,z)\}/\varepsilon}q(z)\,dz<\infty$ for $P$-almost every
$x$; such interiority and duality hold under two-sided bounds on $p$ and $q$
on $\calX$ by the Fenchel--Rockafellar duality theory of entropic transport
\citep{liero2018optimal,chizat2018scaling}. Then}
\begin{equation}
  \frac{\pi(x,z)}{p(x)\,q(z)}
  =\exp\Bigl\{\frac{g(z)-c(x,z)}{\varepsilon}\Bigr\}\,
   r_{\varepsilon,\rho}(x)^{-\rho/\varepsilon},
  \label{eq:D-kkt}
\end{equation}
and consequently, with $\gamma=\varepsilon/(\varepsilon+\rho)$ as in
\eqref{eq:gamma},
\begin{equation}
  r_{\varepsilon,\rho}(x)
  =\Bigl[\int\exp\Bigl\{\frac{g(z)-c(x,z)}{\varepsilon}\Bigr\}\,q(z)\,dz
   \Bigr]^{\gamma},
  \label{eq:D-source}
\end{equation}
where $g$ is determined by the target-marginal constraint
\begin{equation}
  \int\exp\Bigl\{\frac{g(z)-c(x,z)}{\varepsilon}\Bigr\}\,
       r_{\varepsilon,\rho}(x)^{-\rho/\varepsilon}\,p(x)\,dx=1
  \qquad\text{for $Q$-almost every }z.
  \label{eq:D-target}
\end{equation}
Moreover $\EP[r_{\varepsilon,\rho}(X)]=1$ exactly, for every
$\varepsilon>0$ and $\rho\ge0$.
\end{proposition}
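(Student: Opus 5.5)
Existence and uniqueness have already been argued in the paragraph preceding the statement (convexity, lower semicontinuity, Dunford--Pettis compactness, strict convexity of $\KL(\pi\|p\otimes q)$), so the plan concentrates on the first-order conditions. We take the assumed interiority and target dual $g$ as given. We then write the Lagrangian of \eqref{eq:ubot} with multiplier $g(z)$ attached to the constraint $\pi_2=q$,
\[
\mathcal L(\pi,g)=\iint c\,\pi+\varepsilon\KL(\pi\|p\otimes q)+\rho\KL(\pi_1\|p)-\int g(z)\{\pi_2(z)-q(z)\}\,dz .
\]
We then take its first variation in $\pi$ in an arbitrary direction $\delta\pi$ supported where $\pi>0$. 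Using that the first variation of the generalised KL is $\log(f/g)$, and that $\delta\pi_1=\int\delta\pi\,dz$ and $\delta\pi_2=\int\delta\pi\,dx$ are linear, stationarity reads
\[
c(x,z)+\varepsilon\log\frac{\pi(x,z)}{p(x)q(z)}+\rho\log r_{\varepsilon,\rho}(x)-g(z)=0
\]
almost everywhere. This is the continuum analogue of \eqref{eq:pf-kkt}, and exponentiating gives \eqref{eq:D-kkt} directly.

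For \eqref{eq:D-source}, we integrate \eqref{eq:D-kkt} over $z$ after multiplying by $p(x)q(z)$. This gives
\[
\pi_1(x)=p(x)\,r_{\varepsilon,\rho}(x)^{-\rho/\varepsilon}\int e^{\{g(z)-c(x,z)\}/\varepsilon}q(z)\,dz,
\]
where the integral is finite by the stated integrability hypothesis. Dividing by $p$ yields
\[
r^{1+\rho/\varepsilon}=\int e^{(g-c)/\varepsilon}q\,dz .
\]
Since $r>0$ a.e. by positivity of the plan, we raise both sides to the power $1/(1+\rho/\varepsilon)=\gamma$ to obtain \eqref{eq:D-source}. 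For \eqref{eq:D-target}, we integrate \eqref{eq:D-kkt} times $p(x)q(z)$ over $x$ and impose $\pi_2=q$, then divide by $q(z)>0$ on the support of $Q$.

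The mass identity follows from Tonelli:
\[
\EP[r_{\varepsilon,\rho}]=\int\pi_1\,dx=\iint\pi\,dx\,dz=\int\pi_2\,dz=\int q=1 .
\]
This holds for every $\varepsilon>0$ and $\rho\ge0$, because it uses only the target constraint. At $\rho=0$ the same derivation applies with the $\rho$-term absent and $\gamma=1$.

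The main obstacle is justifying the pointwise stationarity equation rather than a merely formal Euler--Lagrange identity. The KL terms are not differentiable at the boundary of their domain, and the multiplier $g$ must genuinely exist. We would lean on the hypotheses stated in the proposition: a.e. positivity and existence of $g$, supplied by Fenchel--Rockafellar duality. Given these, we perturb along $\pi+t\pi\psi$ for bounded $\psi$ with $\int\pi\psi\,dx=0$ for each $z$, so the perturbation stays feasible and positive for small $|t|$. The one-sided derivative at $t=0$ must vanish, and by dominated convergence (bounded $\psi$, finite KL) this forces the bracketed expression to be a function of $z$ alone. We identify that function with $g$, and it is only determined up to the normalisation fixed in Assumption~\ref{ass:des-dual}.
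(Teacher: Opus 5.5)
Your proposal is correct and follows the paper's proof step for step. Both arguments take stationarity of the Lagrangian with multiplier $g$ for $\pi_2=q$, exponentiate to get \eqref{eq:D-kkt}, integrate in $z$ and raise to the power $\gamma$ for \eqref{eq:D-source}, integrate in $x$ for \eqref{eq:D-target}, and obtain the mass identity from $\iint\pi=\int q=1$. Your feasible multiplicative perturbation $\pi+t\pi\psi$ with $\int\pi\psi\,dx=0$ usefully rigorises the stationarity step, which the paper simply asserts at an interior optimiser; given a.e.\ positivity and a finite objective, it actually constructs $g$ as the common value of the bracket rather than presupposing it.

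One small correction to your closing remark: in this uncalibrated program $g$ is determined uniquely $Q$-a.e.\ by $\pi$ itself, because there is no source-side multiplier to absorb an additive constant. The normalisation of Assumption~\ref{ass:des-dual} therefore plays no role here; that freedom arises only in the calibrated problem, where $b_{J,1}\equiv1$ creates it.
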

 
\begin{proof}
Existence and uniqueness were argued above. For the first-order condition,
introduce a multiplier $g(z)$ for the constraint $\pi_2=q$ and write
$r\coloneq \pi_1/p$, so that $\KL(\pi_1\|p)=\int\{r\log r-r+1\}\,p$. Since
$r(x)p(x)=\int\pi(x,z)\,dz$, an infinitesimal perturbation of $\pi(x,z)$
changes the source marginal at $x$ by the same amount, so the first variation
of the marginal divergence is $\log r(x)$; that of the plan divergence is
$\log\{\pi(x,z)/(p(x)q(z))\}$. Stationarity at an interior optimiser \corr{(interior by hypothesis)} therefore
requires
\[
  c(x,z)+\varepsilon\log\frac{\pi(x,z)}{p(x)q(z)}
  +\rho\log r(x)-g(z)=0,
\]
and exponentiating gives \eqref{eq:D-kkt}.
 
Integrating \eqref{eq:D-kkt} in $z$ and using $\int\pi(x,z)\,dz=r(x)p(x)$ gives
\[
  r(x)=r(x)^{-\rho/\varepsilon}
       \int e^{\{g(z)-c(x,z)\}/\varepsilon}\,q(z)\,dz,
\]
so that $r(x)^{1+\rho/\varepsilon}$ equals the integral\corr{, which, after raising} to the power
$\gamma=(1+\rho/\varepsilon)^{-1}$ yields \eqref{eq:D-source}. Integrating
\eqref{eq:D-kkt} in $x$ and imposing $\pi_2=q$ gives \eqref{eq:D-target}.
 
Finally, the constraint $\pi_2=q$ forces $\iint\pi=\int q=1$, and the total
mass of $\pi_1$ equals that of $\pi$; hence
$\EP[r_{\varepsilon,\rho}(X)]=\int\pi_1=1$.
\end{proof}
 
\begin{remark}[$\rho$ as a shrinkage exponent]
\label{rem:D-shrink}
Denote the part inside the bracket of \eqref{eq:D-source} by $\varrho(x)$, so that
$r_{\varepsilon,\rho}=\varrho^{\,\gamma}$ exactly. Holding the dual potential fixed, the exponent $\gamma\in(0,1)$ decreases with $\rho$ and as a consequence, compresses the
log-weight towards zero. \corr{This explains why} the relaxation
parameter performs a \emph{power-law shrinkage} of the weight towards the
uniform weight $1$. \corr{It is \emph{not} true that
$r_{\varepsilon,\rho}=r_{\varepsilon,0}^{\,\gamma}$. The dual potential $g$
depends on $\rho$ through the factor $r^{-\rho/\varepsilon}$ in
\eqref{eq:D-target}, so the displayed identity holds at fixed dual potential,
and to first order as $\rho\downarrow0$, but not exactly.} 
\end{remark}
 
\subsection{The source weight as a Gibbs-smoothed density ratio}
\label{sub:D-kernel}
 
Throughout this subsection $\calX\subseteq\R^d$ and
$c(x,z)=\norm{x-z}{2}^2$. Recall the Gibbs smoothing
operator $S_\varepsilon$ of \eqref{eq:gibbs}, and let
\begin{equation}
  \kappa_\varepsilon\coloneq \int_{\R^d}e^{-\norm{u}{2}^2/\varepsilon}\,du,
  \qquad\text{so that}\qquad
  \int e^{-\norm{x-z}{2}^2/\varepsilon}f(x)\,dx
  =\kappa_\varepsilon\,S_\varepsilon f(z)
  \label{eq:kappa}
\end{equation}
for every integrable $f$; the constant $\kappa_\varepsilon$ will cancel and is
introduced only to avoid writing the Gaussian normalisation explicitly.
 
\begin{proposition}[Kernel fixed-point representation]
\label{prop:D-kernel}
Let $\rho\ge0$ and $\gamma=\varepsilon/(\varepsilon+\rho)$. The source weight
of Definition~\ref{def:D-ubot} satisfies the exact nonlinear fixed-point
equation
\begin{equation}
  r_{\varepsilon,\rho}
  =\Bigl[
     S_\varepsilon\Bigl\{
       \frac{q}{S_\varepsilon\bigl(p\,r_{\varepsilon,\rho}^{-\rho/\varepsilon}\bigr)}
     \Bigr\}
   \Bigr]^{\gamma}.
  \label{eq:D-fixedpt}
\end{equation}
Equivalently, setting $u\coloneq r_{\varepsilon,\rho}^{1/\gamma}$, so that
$r_{\varepsilon,\rho}=u^{\gamma}$ and
$r_{\varepsilon,\rho}^{-\rho/\varepsilon}=u^{-(1-\gamma)}$,
\begin{equation}
  u=S_\varepsilon\Bigl\{\frac{q}{S_\varepsilon\bigl(p\,u^{-(1-\gamma)}\bigr)}
    \Bigr\}.
  \label{eq:D-fixedpt-u}
\end{equation}
In the special case $\rho=0$ one has $\gamma=1$, the fixed point becomes
explicit, and
\begin{equation}
  r_{\varepsilon,0}=S_\varepsilon\Bigl[\frac{q}{S_\varepsilon p}\Bigr].
  \label{eq:D-NW}
\end{equation}
\end{proposition}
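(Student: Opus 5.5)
The plan is to work directly from Proposition~\ref{prop:D-closed}, taking its interiority and duality hypotheses as given. First we solve the target-marginal constraint \eqref{eq:D-target} for the dual factor. Put $h(z)\coloneq e^{g(z)/\varepsilon}$; since $h$ depends only on $z$, it comes out of the $x$-integral in \eqref{eq:D-target}, which then reads
\[
  h(z)\int e^{-\norm{x-z}{2}^2/\varepsilon}\,p(x)\,r_{\varepsilon,\rho}(x)^{-\rho/\varepsilon}\,dx=1 .
\]
Applying \eqref{eq:kappa} with $f=p\,r_{\varepsilon,\rho}^{-\rho/\varepsilon}$ gives $h(z)=\{\kappa_\varepsilon S_\varepsilon(p\,r_{\varepsilon,\rho}^{-\rho/\varepsilon})(z)\}^{-1}$ for $Q$-almost every $z$. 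This is exactly \eqref{eq:pf-nw}.

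Next we substitute this expression for $h$ into the closed form \eqref{eq:D-source}. The bracketed integral becomes
\[
  \int e^{-\norm{x-z}{2}^2/\varepsilon}\,\frac{q(z)}{\kappa_\varepsilon S_\varepsilon(p\,r^{-\rho/\varepsilon})(z)}\,dz .
\]
We then use \eqref{eq:kappa} a second time, now integrating over $z$ with $f=q/S_\varepsilon(p\,r^{-\rho/\varepsilon})$. Because the Gaussian kernel is symmetric in $(x,z)$, the identity still holds with the roles of $x$ and $z$ swapped. This produces $\kappa_\varepsilon S_\varepsilon\{q/S_\varepsilon(p\,r^{-\rho/\varepsilon})\}(x)$, and the two factors of $\kappa_\varepsilon$ cancel. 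Raising the result to the power $\gamma$ gives \eqref{eq:D-fixedpt}.

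The $u$-form follows by pure algebra. Set $u=r^{1/\gamma}$. Then $r^{-\rho/\varepsilon}=u^{-\gamma\rho/\varepsilon}$, and $\gamma\rho/\varepsilon=\rho/(\varepsilon+\rho)=1-\gamma$. Raising \eqref{eq:D-fixedpt} to the power $1/\gamma$ therefore gives \eqref{eq:D-fixedpt-u}. For the special case $\rho=0$ we have $\gamma=1$ and $u^{0}=1$, so the inner smoothing no longer involves the unknown and the fixed point collapses to the explicit formula \eqref{eq:D-NW}.

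The main obstacle is well-definedness: the denominators must be finite and positive, and the integrals must be applicable on the whole of $\R^d$ even though $p$ and $q$ live on $\calX$. Two points need care.

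\emph{Extension to $\R^d$.} We will extend $p$ and $q$ by zero outside $\calX$, so that $S_\varepsilon$ acts on functions defined on all of $\R^d$ and \eqref{eq:kappa} applies verbatim.

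\emph{Positivity and finiteness.} Positivity of $S_\varepsilon(p\,r^{-\rho/\varepsilon})(z)$ holds because the Gaussian kernel is strictly positive and $p\,r^{-\rho/\varepsilon}>0$ on a set of positive $P$-mass; here the almost-everywhere positivity of the plan assumed in Proposition~\ref{prop:D-closed} makes $r$ positive and finite. Finiteness of $h$, together with the integrability hypothesis $\int e^{(g-c)/\varepsilon}q<\infty$, then justifies the Tonelli interchange used in the second substitution. All identities are understood $P$- and $Q$-almost everywhere.
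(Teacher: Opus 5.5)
Your proposal is correct and follows the paper's proof essentially step for step. You solve the target-marginal constraint for $h=e^{g/\varepsilon}$ via \eqref{eq:kappa}, substitute into \eqref{eq:D-source} and apply \eqref{eq:kappa} again in $z$ so that $\kappa_\varepsilon$ cancels, and then use the change of variable $u=r_{\varepsilon,\rho}^{1/\gamma}$ and the specialization $\rho=0$. Your well-definedness remarks (zero-extension of $p,q$, positivity and finiteness of the inner smoothing) are harmless additions that the paper leaves implicit; finiteness and positivity of the inner integral in fact follow directly from the constraint itself once $g$ is finite, and the second substitution involves no genuine interchange of integrals.
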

 
\begin{proof}
\emph{Step 1:} Put
$h(z)\coloneq \exp\{g(z)/\varepsilon\}$. With the squared Euclidean cost, the
target-marginal condition \eqref{eq:D-target} reads
\[
  h(z)\int e^{-\norm{x-z}{2}^2/\varepsilon}\,
       p(x)\,r_{\varepsilon,\rho}(x)^{-\rho/\varepsilon}\,dx=1
  \qquad\text{for $Q$-a.e. }z,
\]
so that, by \eqref{eq:kappa} applied to
$f=p\,r_{\varepsilon,\rho}^{-\rho/\varepsilon}$,
\begin{equation}
  h(z)=\Bigl\{\kappa_\varepsilon\,
        S_\varepsilon\bigl(p\,r_{\varepsilon,\rho}^{-\rho/\varepsilon}\bigr)(z)
       \Bigr\}^{-1}.
  \label{eq:D-h}
\end{equation}
 
\emph{Step 2:} By \eqref{eq:D-source},
\[
  r_{\varepsilon,\rho}(x)
  =\Bigl[\int h(z)\,e^{-\norm{x-z}{2}^2/\varepsilon}\,q(z)\,dz\Bigr]^{\gamma},
\]
and inserting \eqref{eq:D-h} makes the integrand
$\kappa_\varepsilon^{-1}e^{-\norm{x-z}{2}^2/\varepsilon}$ times
$q(z)/S_\varepsilon(p\,r^{-\rho/\varepsilon})(z)$. By \eqref{eq:kappa} again,
now applied in the variable $z$, the factor $\kappa_\varepsilon$ cancels and
the integral is precisely
$S_\varepsilon\{q/S_\varepsilon(p\,r^{-\rho/\varepsilon})\}(x)$. This is
\eqref{eq:D-fixedpt}.
 
\emph{Step 3:} Since
$u=r_{\varepsilon,\rho}^{1/\gamma}$ we have
$r_{\varepsilon,\rho}=u^{\gamma}$ and
\[
  r_{\varepsilon,\rho}^{-\rho/\varepsilon}
  =u^{-\gamma\rho/\varepsilon}
  =u^{-\rho/(\varepsilon+\rho)}
  =u^{-(1-\gamma)} .
\]
Raising both sides of \eqref{eq:D-fixedpt} to the power $1/\gamma$ gives
\eqref{eq:D-fixedpt-u}. Taking $\rho=0$, so that $\gamma=1$ and
$r_{\varepsilon,0}^{-\rho/\varepsilon}\equiv1$, reduces \eqref{eq:D-fixedpt} to
\eqref{eq:D-NW}.
\end{proof}
 
\begin{remark}
\label{rem:D-interp}
Identity \eqref{eq:D-NW} illustrates the  sense in which the uncalibrated
entropic weight is a density-ratio smoother. The quantity $S_\varepsilon p$ is a Gibbs-kernel smoothing of the
trial density evaluated at the \emph{target} point $z$ and the ratio
$q/S_\varepsilon p$ is therefore a density-ratio-like object with a smoothed
denominator. The outer $S_\varepsilon$ transports that ratio back to the
\emph{trial} point $x$ using the same kernel. For the squared Euclidean cost, $S_\varepsilon$ is convolution with a
Gaussian of covariance $(\varepsilon/2)I_d$, so $\sqrt{\varepsilon/2}$ is the
smoothing scale, and the self-normalisation through $S_\varepsilon p$ is the
population counterpart of the Sinkhorn column-scaling step, which is what
enforces $\EP[r_{\varepsilon,0}(X)]=1$.
 
It is worth noting that \eqref{eq:D-NW} is not exactly the
Nadaraya--Watson density-ratio estimator, which would be
$S_\varepsilon q/S_\varepsilon p$ evaluated at a common point. The entropic
construction forms the ratio at the target variable and then smooths back to
the source variable. For general $\rho>0$ the situation
is genuinely more complicated. For instance, \eqref{eq:D-fixedpt} is a nonlinear fixed-point
equation, the inner object $S_\varepsilon(p\,r^{-\rho/\varepsilon})$ being a
smoothed \emph{adaptively tilted} trial density, and there is in general no
reduction of $r_{\varepsilon,\rho}$ to a fixed power of $r_{\varepsilon,0}$.
\end{remark}

\subsection{Small-regularisation expansion and the Riesz imbalance}
\label{sub:D-expansion}

The expansion below is the only place in the supplement where smoothness of the
underlying densities is used. The hypotheses are stated locally rather than
added to the standing assumptions of Appendix~\ref{app:A}, since they play no
role in the asymptotic theory.

Recall from Proposition~\ref{prop:D-kernel} that the uncalibrated population
weight satisfies
\[
r_{\varepsilon,\rho}
=
\left[
S_\varepsilon\left\{
\frac{q}
{S_\varepsilon(p\,r_{\varepsilon,\rho}^{-\rho/\varepsilon})}
\right\}
\right]^{\varepsilon/(\varepsilon+\rho)},
\]
while \(r_{\varepsilon,0}=S_\varepsilon\{q/(S_\varepsilon p)\}\). Hence
\[
r_{\varepsilon,\rho}-\rdag
=
\{r_{\varepsilon,\rho}-r_{\varepsilon,0}\}
+
\{r_{\varepsilon,0}-\rdag\}.
\]
We expand the source-relaxation and entropic-smoothing components separately.

\begin{theorem}[Expansion of the uncalibrated weight]
\label{thm:D-laplace}
Let \(\calK\) be \corr{a compact subset of the interior of} \(\calX\). Suppose that on a neighbourhood of
\(\calK\), the densities \(p\) and \(q\) are four times continuously
differentiable with derivatives bounded through order four,
\(p\ge\underline p>0\), and \(\rdag=q/p\) is bounded above and away from zero.
\corr{Suppose also that \(p\) and \(q\) are bounded globally, that
\(p\ge\underline p_0>0\) on all of \(\calX\) --- so that
\(q/S_\varepsilon p\) is bounded on \(\calX\) uniformly in small
\(\varepsilon\) --- and that all densities are extended by zero off
\(\calX\); the Gaussian tail outside the stated neighbourhood then
contributes \(O(e^{-c/\varepsilon})\) to every smoothing below.}

For fixed \(\varepsilon>0\), put \(\delta=\rho/\varepsilon\), and suppose that
for sufficiently small \(\delta\), the positive solution
\(\delta\mapsto r_{\varepsilon,\varepsilon\delta}\) is twice continuously
differentiable as an \corr{\(\ell^\infty(\calX)\)}-valued map and remains uniformly
bounded above and away from zero \corr{on \(\calX\) (the differentiations
under \(S_\varepsilon\) and under \(\EP\) in the proof integrate over all of
\(\calX\), so uniformity on \(\calK\) alone would not suffice)}. Define
\[
\mathcal B_\varepsilon(x)
\coloneq 
S_\varepsilon\left[
\frac{
q\,S_\varepsilon\{p\log r_{\varepsilon,0}\}
}{
(S_\varepsilon p)^2
}
\right](x)
-
r_{\varepsilon,0}(x)\log r_{\varepsilon,0}(x).
\]
Then, uniformly on \(\calK\) as \(\rho/\varepsilon\to0\),
\begin{equation}
r_{\varepsilon,\rho}-r_{\varepsilon,0}
=
\frac{\rho}{\varepsilon}\mathcal B_\varepsilon
+
O_\varepsilon\left\{
\left(\frac{\rho}{\varepsilon}\right)^2
\right\},
\label{eq:D-rho-expansion}
\end{equation}
where the remainder constant may depend on the fixed value of
\(\varepsilon\). Moreover,
\(\EP[\mathcal B_\varepsilon(X)]=0\).

For the entropic component, let
\(\Delta\coloneq \sum_{k=1}^d\partial^2/\partial x_k^2\) denote the Laplacian.
Uniformly on \(\calK\) as \(\varepsilon\downarrow0\),
\begin{equation}
r_{\varepsilon,0}
=
\rdag
+
\frac{\varepsilon}{4}
\left\{
\Delta\rdag
-
\rdag\,\frac{\Delta p}{p}
\right\}
+
O(\varepsilon^2).
\label{eq:D-expansion}
\end{equation}
Consequently, combining the two local expansions,
\begin{equation}
\begin{aligned}
r_{\varepsilon,\rho}-\rdag
={}
\frac{\rho}{\varepsilon}\mathcal B_\varepsilon
+
\frac{\varepsilon}{4}
\left\{
\Delta\rdag
-
\rdag\,\frac{\Delta p}{p}
\right\}+
O_\varepsilon\left\{
\left(\frac{\rho}{\varepsilon}\right)^2
\right\}
+
O(\varepsilon^2).
\end{aligned}
\label{eq:D-total-bias}
\end{equation}
If\corr{, in addition, the differentiability radius in \(\delta\), the
two-sided bounds on the branch, and} the second derivative of the
source-relaxation solution branch \corr{are} bounded uniformly for all
sufficiently small \(\varepsilon\)\corr{, and
\(\sup_{\calK}|\mathcal B_\varepsilon|=O(1)\) as
\(\varepsilon\downarrow0\)}, the first remainder is uniform in
\(\varepsilon\), and \eqref{eq:D-total-bias} may be read in the joint
regime \(\varepsilon\to0\), \(\rho/\varepsilon\to0\), with remainder
\(O\{(\rho/\varepsilon)^2+\varepsilon^2\}\).
\end{theorem}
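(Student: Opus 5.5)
The plan treats the two components of the decomposition $r_{\varepsilon,\rho}-\rdag=\{r_{\varepsilon,\rho}-r_{\varepsilon,0}\}+\{r_{\varepsilon,0}-\rdag\}$ separately and then adds them.

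\emph{Source-relaxation component.} Work with the $u$-form \eqref{eq:D-fixedpt-u} of Proposition~\ref{prop:D-kernel}. Writing $\delta=\rho/\varepsilon$, we have $1-\gamma=\delta/(1+\delta)=\delta+O(\delta^2)$ and $r=u^{\gamma}$. By hypothesis the branch $\delta\mapsto r_\delta\coloneq r_{\varepsilon,\varepsilon\delta}$ is $C^2$ into $\ell^\infty(\calX)$, so Taylor's theorem gives $r_\delta=r_0+\delta\,\dot r_0+O_\varepsilon(\delta^2)$. It remains to identify $\dot r_0$ by differentiating the fixed point \eqref{eq:D-fixedpt} at $\delta=0$. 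Since $r^{-\delta}=1-\delta\log r_0+O(\delta^2)$, the inner smoothing is
\[
S_\varepsilon(p\,r^{-\delta})=S_\varepsilon p-\delta\,S_\varepsilon(p\log r_{\varepsilon,0})+O(\delta^2).
\]
Its reciprocal therefore contributes $+\delta\,S_\varepsilon(p\log r_{\varepsilon,0})/(S_\varepsilon p)^2$. For the outer power we use $\gamma=1-\delta+O(\delta^2)$, so that $w^{\gamma}=w-\delta\,w\log w+O(\delta^2)$ with $w=r_{\varepsilon,0}+O(\delta)$. Collecting the first-order terms yields exactly $\dot r_0=\mathcal B_\varepsilon$.

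Each differentiation passes under $S_\varepsilon$ and under integrals over all of $\calX$. This is where the two-sided bounds on the branch over $\calX$ are used, since they let us bound $\log r$, the denominators $S_\varepsilon p\ge c_\varepsilon>0$, and the $C^2$ remainder in sup-norm. The same bounds give uniformity on $\calK$ and, under the extra uniform-in-$\varepsilon$ hypotheses, the joint remainder. For the centring $\EP[\mathcal B_\varepsilon]=0$, differentiate the identity $\EP[r_\delta]=1$ from Proposition~\ref{prop:D-closed} at $\delta=0$. Exchanging derivative and expectation is justified because the branch is $C^1$ into $\ell^\infty$ and $p$ is a probability density. One can check this directly as well: by Fubini, $\int p\,S_\varepsilon[q\,S_\varepsilon(p\log r_0)/(S_\varepsilon p)^2]=\int q\,S_\varepsilon(p\log r_0)/S_\varepsilon p=\int p\log r_0\,S_\varepsilon(q/S_\varepsilon p)=\EP[r_0\log r_0]$. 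Here $S_\varepsilon$ is self-adjoint, and we use the extended-by-zero convention with a symmetric kernel.

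\emph{Entropic component.} Apply Lemma~\ref{lem:A-heat} twice. First, $S_\varepsilon p=p+\tfrac{\varepsilon}{4}\Delta p+O(\varepsilon^2)$ on a neighbourhood of $\calK$. Since $p\ge\underline p$ there, $q/S_\varepsilon p=\rdag\{1-\tfrac{\varepsilon}{4}\Delta p/p\}+O(\varepsilon^2)$. Second, apply Lemma~\ref{lem:A-heat} to the function $f_\varepsilon\coloneq q/S_\varepsilon p$ itself, which gives $S_\varepsilon f_\varepsilon=f_\varepsilon+\tfrac{\varepsilon}{4}\Delta f_\varepsilon+O(\varepsilon^2)$. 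Inside the $\varepsilon\Delta$ term we may replace $f_\varepsilon$ by $\rdag$ at cost $O(\varepsilon^2)$. Combining the two steps gives \eqref{eq:D-expansion}.

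This second heat-expansion step is the main obstacle. Lemma~\ref{lem:A-heat} needs four derivatives of $f_\varepsilon$ bounded uniformly in $\varepsilon$ near $\calK$, and global boundedness. Global boundedness follows from $p\ge\underline p_0$ on $\calX$. For the derivatives, note that $\partial^\alpha S_\varepsilon p=S_\varepsilon\partial^\alpha p$ on an interior neighbourhood, up to the $O(e^{-c/\varepsilon})$ boundary tail. I would therefore shrink to a slightly smaller neighbourhood, staying at positive distance from $\partial\calX$, and treat the tail as $O(e^{-c/\varepsilon})$ in every derivative. The quotient rule then controls the derivatives of $q/S_\varepsilon p$ uniformly.

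Adding \eqref{eq:D-rho-expansion} and \eqref{eq:D-expansion} gives \eqref{eq:D-total-bias}.
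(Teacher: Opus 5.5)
Your proposal is correct. The source-relaxation part coincides with the paper's argument. Your first-order expansion of the fixed point in $\delta$ is the same computation as the paper's chain-rule differentiation of $R_\delta=A_\delta^{1/(1+\delta)}$, giving $\dot r_0=A_0'-r_{\varepsilon,0}\log r_{\varepsilon,0}=\mathcal B_\varepsilon$. Your self-adjointness check of $\EP[\mathcal B_\varepsilon]=0$ is exactly the paper's. The paper also notes that differentiating $\EP[r_\delta]=1$, your first route, is consistent but unnecessary.

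The entropic step is ordered differently. You apply Lemma~\ref{lem:A-heat} to the $\varepsilon$-dependent quotient $f_\varepsilon=q/S_\varepsilon p$ and then replace $\Delta f_\varepsilon$ by $\Delta\rdag$. That obliges you to prove $C^4$ bounds on $q/S_\varepsilon p$ uniformly in $\varepsilon$, by commuting derivatives with $S_\varepsilon$ modulo tails and using the quotient rule. You also need the comparison $\Delta f_\varepsilon-\Delta\rdag=O(\varepsilon)$, which is the obstacle you flag.

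The paper sidesteps this. It first expands $q/S_\varepsilon p=\rdag-\frac{\varepsilon}{4}\rdag\,\Delta p/p+O(\varepsilon^2)$ in sup norm on a compact neighbourhood $\calK'$ of $\calK$. It then pushes that expansion through $S_\varepsilon$, which is linear, positive and sup-norm contractive; the region outside $\calK'$ costs only $O(e^{-c/\varepsilon})$. As a result, the heat lemma is needed only for the fixed function $\rdag$, and a crude first-order smoothing bound only for the fixed function $\rdag\,\Delta p/p$.

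Your route works under the stated hypotheses ($p\in C^4$ near $\calK$ suffices for both the uniform derivative bounds and the $O(\varepsilon)$ comparison). The paper's ordering is lighter because it needs only sup-norm control of the $\varepsilon$-dependent quotient and no derivative bookkeeping for it.

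One small imprecision: the neighbourhood you shrink to must stay at positive distance from the complement of the set where $p$ and $q$ are assumed $C^4$, not merely from $\partial\calX$, since regularity is only assumed near $\calK$. The same Gaussian-tail argument handles this.
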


\begin{proof}
We treat the source-relaxation and entropic-smoothing components separately.

\emph{Step 1:}
Fix \(\varepsilon>0\), set \(\delta\coloneq \rho/\varepsilon\), and write
\(R_\delta\coloneq r_{\varepsilon,\varepsilon\delta}\) and
\(R_0\coloneq r_{\varepsilon,0}\). Since
\(\varepsilon/(\varepsilon+\rho)=1/(1+\delta)\), the fixed-point equation
becomes
\[
R_\delta
=
\left[
S_\varepsilon\left\{
\frac{q}{S_\varepsilon(pR_\delta^{-\delta})}
\right\}
\right]^{1/(1+\delta)}.
\]
Define \(D_\delta\coloneq S_\varepsilon(pR_\delta^{-\delta})\) and
\(A_\delta\coloneq S_\varepsilon(q/D_\delta)\), so that
\(R_\delta=A_\delta^{1/(1+\delta)}\). At \(\delta=0\),
\(R_\delta^{-\delta}=1\), and hence
\(D_0=S_\varepsilon p\) and
\(A_0=S_\varepsilon\{q/(S_\varepsilon p)\}=R_0\).
Since
\(R_\delta^{-\delta}=\exp\{-\delta\log R_\delta\}\),
the chain rule gives
\[
\frac{d}{d\delta}R_\delta^{-\delta}
=
R_\delta^{-\delta}
\left\{
-\log R_\delta
-\delta\frac{R_\delta'}{R_\delta}
\right\}.
\]
At \(\delta=0\), \(R_\delta^{-\delta}=1\), and the term containing
\(R_\delta'\) vanishes because it is multiplied by \(\delta\). Therefore
\[
\left.
\frac{d}{d\delta}R_\delta^{-\delta}
\right|_{\delta=0}
=
-\log R_0.
\]
Thus, although \(R_\delta\) itself depends on \(\delta\), this dependence does
not contribute to the first-order derivative of \(R_\delta^{-\delta}\).

\emph{Step 2:}
By linearity of \(S_\varepsilon\),
\[
D_0'
=
S_\varepsilon\left[
p\left.
\frac{d}{d\delta}R_\delta^{-\delta}
\right|_{\delta=0}
\right]
=
-S_\varepsilon\{p\log R_0\}.
\]
Since \(A_\delta=S_\varepsilon(q/D_\delta)\),
\[
A_\delta'
=
S_\varepsilon\left(
-\frac{qD_\delta'}{D_\delta^2}
\right).
\]
Using \(D_0=S_\varepsilon p\) and
\(D_0'=-S_\varepsilon\{p\log R_0\}\) gives
\[
A_0'
=
S_\varepsilon\left[
\frac{
q\,S_\varepsilon\{p\log R_0\}
}{
(S_\varepsilon p)^2
}
\right].
\]
Taking logarithms of
\(R_\delta=A_\delta^{1/(1+\delta)}\) gives
\(\log R_\delta=(1+\delta)^{-1}\log A_\delta\).
Differentiating,
\[
\frac{R_\delta'}{R_\delta}
=
-\frac{\log A_\delta}{(1+\delta)^2}
+
\frac{1}{1+\delta}\frac{A_\delta'}{A_\delta}.
\]
Evaluating at \(\delta=0\) and using \(A_0=R_0\),
\[
\frac{R_0'}{R_0}
=
-\log R_0+\frac{A_0'}{R_0},
\]
and therefore
\[
R_0'
=
A_0'-R_0\log R_0
=
S_\varepsilon\left[
\frac{
q\,S_\varepsilon\{p\log R_0\}
}{
(S_\varepsilon p)^2
}
\right]
-
R_0\log R_0
=
\mathcal B_\varepsilon.
\]

\emph{Step 3:}
By the assumed twice continuous differentiability of
\(\delta\mapsto R_\delta\) as an
\(\ell^\infty(\calK)\)-valued map, Taylor's theorem around \(\delta=0\)
gives, uniformly on \(\calK\),
\[
R_\delta
=
R_0+\delta R_0'
+
O_\varepsilon(\delta^2)
=
R_0+\delta\mathcal B_\varepsilon
+
O_\varepsilon(\delta^2).
\]
Substituting \(\delta=\rho/\varepsilon\) proves
\eqref{eq:D-rho-expansion}. The notation
\(O_\varepsilon(\delta^2)\) emphasizes that this is a local expansion in
\(\delta\) with \(\varepsilon\) held fixed.

\corr{The differentiability of the solution branch is a genuine hypothesis;
we do not derive it from primitives here. At \(\delta=0\), the derivative
with respect to \(R_\delta\) inside \(R_\delta^{-\delta}\) is multiplied by
\(\delta\), which is the simplification underlying the preceding
calculation.

The identity \(\EP[\mathcal B_\varepsilon(X)]=0\) holds exactly and needs no
differentiability at all: \(S_\varepsilon\) is self-adjoint on \(L^2(dx)\)
(its kernel is symmetric), so
$$
\begin{aligned}
    \EP\left[S_\varepsilon\left\{\frac{q\,S_\varepsilon(p\log r_{\varepsilon,0})}
{(S_\varepsilon p)^2}\right\}\right]
&=\int (S_\varepsilon p)\,\frac{q\,S_\varepsilon(p\log r_{\varepsilon,0})}
{(S_\varepsilon p)^2}\\
& =\int \frac{q}{S_\varepsilon p}\,S_\varepsilon(p\log r_{\varepsilon,0})\\
&=\int S_\varepsilon\!\left(\frac{q}{S_\varepsilon p}\right)p\log r_{\varepsilon,0}\\
&=\EP[r_{\varepsilon,0}\log r_{\varepsilon,0}],
\end{aligned}
$$
and subtracting the second term of \(\mathcal B_\varepsilon\) gives zero.
(This computation is at \(\delta=0\); it is consistent with, but does not
require, differentiating the exact mass identity
\(\EP[R_\delta(X)]=1\) of Proposition~\ref{prop:D-closed}.)}
Thus the leading source-relaxation perturbation changes the shape of the
weight but has zero total mass. 

\emph{Step 4:}
We now set \(\rho=0\), so that
\(r_{\varepsilon,0}
=S_\varepsilon\{q/(S_\varepsilon p)\}\).
By Lemma~\ref{lem:A-heat},
\[
S_\varepsilon p
=
p+\frac{\varepsilon}{4}\Delta p+O(\varepsilon^2)
=
p\left\{
1+\frac{\varepsilon}{4}\frac{\Delta p}{p}
+O(\varepsilon^2)
\right\}
\]
uniformly on \(\calK\). Since \(p\ge\underline p>0\), the reciprocal may be
expanded uniformly using
\((1+u)^{-1}=1-u+O(u^2)\), yielding
\[
\frac{1}{S_\varepsilon p}
=
\frac{1}{p}
\left\{
1-\frac{\varepsilon}{4}\frac{\Delta p}{p}
\right\}
+
O(\varepsilon^2).
\]
Multiplying by \(q\) and using \(\rdag=q/p\) gives
\[
\frac{q}{S_\varepsilon p}
=
\rdag
-
\frac{\varepsilon}{4}
\rdag\frac{\Delta p}{p}
+
O(\varepsilon^2).
\]
Applying \(S_\varepsilon\) to the preceding expansion \corr{--- legitimate
because the expansion of \(q/S_\varepsilon p\) holds uniformly on a compact
neighbourhood \(\calK'\) of \(\calK\) with
\(\operatorname{dist}(\calK,\partial\calK')>0\), while outside \(\calK'\)
the integrand is bounded by the global hypothesis
\(p\ge\underline p_0\) and the Gaussian kernel contributes
\(O(e^{-c/\varepsilon})\) ---}
\[
r_{\varepsilon,0}
=
S_\varepsilon\rdag
-
\frac{\varepsilon}{4}
S_\varepsilon\left(
\rdag\frac{\Delta p}{p}
\right)
+
O(\varepsilon^2).
\]
Another application of Lemma~\ref{lem:A-heat} gives
\(S_\varepsilon\rdag
=
\rdag+(\varepsilon/4)\Delta\rdag+O(\varepsilon^2)\).
Moreover,
\[
S_\varepsilon\left(
\rdag\frac{\Delta p}{p}
\right)
=
\rdag\frac{\Delta p}{p}+O(\varepsilon).
\]
The latter remainder is multiplied by \(\varepsilon/4\), and therefore
contributes only \(O(\varepsilon^2)\). Consequently,
\[
r_{\varepsilon,0}
=
\rdag
+
\frac{\varepsilon}{4}\Delta\rdag
-
\frac{\varepsilon}{4}
\rdag\frac{\Delta p}{p}
+
O(\varepsilon^2),
\]
which proves \eqref{eq:D-expansion}.
The decomposition
\[
r_{\varepsilon,\rho}-\rdag
=
\{r_{\varepsilon,\rho}-r_{\varepsilon,0}\}
+
\{r_{\varepsilon,0}-\rdag\}
\]
is exact. Substituting \eqref{eq:D-rho-expansion} and
\eqref{eq:D-expansion} gives \eqref{eq:D-total-bias}.

The two expansions have different local parameters. The source-relaxation
expansion is in \(\delta=\rho/\varepsilon\) with \(\varepsilon\) fixed,
whereas the entropic expansion is in \(\varepsilon\). Thus neither limiting
condition is required for the exact decomposition itself; they are required
only for the corresponding first-order approximations. Under the additional
uniformity condition stated in the theorem, the two remainder terms can be
combined into \(O\{(\rho/\varepsilon)^2+\varepsilon^2\}\).
\end{proof}

Theorem~\ref{thm:D-laplace} separates the two population distortions induced
by uncalibrated regularized transport. For fixed \(\varepsilon\), source
relaxation contributes locally through
\((\rho/\varepsilon)\mathcal B_\varepsilon\), whereas entropic smoothing
contributes through the \(O(\varepsilon)\) differential term in
\eqref{eq:D-expansion}. Neither contribution is a finite-sample estimation
error.

\begin{corollary}[Riesz imbalance under regularization]
\label{cor:D-imbalance}
Under the hypotheses of Theorem~\ref{thm:D-laplace}, let
\corr{\(h\in C^2(\calX)\) have bounded derivatives and compact support in
the interior of \(\calX\), with the expansion hypotheses holding on a
neighbourhood of that support}. Then
\begin{equation}
\begin{aligned}
&\EP\bigl[r_{\varepsilon,\rho}(X)h(X)\bigr]
-
\EQ\bigl[h(X^Q)\bigr]\\
&=
\frac{\rho}{\varepsilon}
\EP\bigl[\mathcal B_\varepsilon(X)h(X)\bigr]
-
\frac{\varepsilon}{4}
\int
\nabla h\cdot
\bigl(p\nabla\rdag-\rdag\nabla p\bigr)\,dx+
O_\varepsilon\left\{
\left(\frac{\rho}{\varepsilon}\right)^2
\right\}
+
O(\varepsilon^2).
\end{aligned}
\label{eq:D-imbalance}
\end{equation}
For constant \(h\), the Riesz imbalance is exactly zero for every
\((\varepsilon,\rho)\).
\end{corollary}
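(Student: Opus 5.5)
The plan is to integrate the pointwise expansion \eqref{eq:D-total-bias} of Theorem~\ref{thm:D-laplace} against $h\,p$. First, by Lemma~\ref{lem:A-riesz} (nonnegative/bounded case, $h$ being bounded), $\EQ[h(X^Q)]=\EP[\rdag(X)h(X)]$. Hence the imbalance equals $\EP[\{r_{\varepsilon,\rho}-\rdag\}h]=\int_{\operatorname{supp}h}(r_{\varepsilon,\rho}-\rdag)\,h\,p\,dx$.

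Since $h$ has compact support $\calK$ in the interior, and the expansion hypotheses hold on a neighbourhood of $\calK$, we can substitute the expansion, which is uniform on $\calK$. The remainders $O_\varepsilon\{(\rho/\varepsilon)^2\}+O(\varepsilon^2)$ then integrate against the bounded, compactly supported $hp$ to give remainders of the same order. The source-relaxation term gives $(\rho/\varepsilon)\EP[\mathcal B_\varepsilon h]$ directly. The entropic term gives $\frac{\varepsilon}{4}\int h\,(p\Delta\rdag-\rdag\Delta p)\,dx$. Next observe that $p\Delta\rdag-\rdag\Delta p=\nabla\cdot(p\nabla\rdag-\rdag\nabla p)$, because the cross terms $\nabla p\cdot\nabla\rdag$ cancel. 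Integrating by parts once then gives $-\frac{\varepsilon}{4}\int\nabla h\cdot(p\nabla\rdag-\rdag\nabla p)\,dx$, with no boundary term since $h$ vanishes near $\partial\calX$. This step only needs $C^2$ regularity of $p$ and $\rdag$ near the support, which the hypotheses supply. Together these give \eqref{eq:D-imbalance}.

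For constant $h\equiv c$, argue exactly rather than through the expansion. Proposition~\ref{prop:D-closed} gives $\EP[r_{\varepsilon,\rho}]=1$, and $\EQ[1]=1$, so the imbalance is $c(1-1)=0$ for every $(\varepsilon,\rho)$. As a consistency check (not needed for the proof), this agrees with the expansion: $\EP[\mathcal B_\varepsilon]=0$ by Theorem~\ref{thm:D-laplace}, and $\nabla h=0$.

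The main obstacle is justifying that the uniform-on-compacts expansion may be integrated. This is exactly why $h$ is taken compactly supported in the interior: near the boundary the entropic bias has a different, larger order, and there the integration by parts would also pick up boundary fluxes. With the support restriction, both issues disappear, and uniformity on $\calK$ together with boundedness of $hp$ controls the remainders by dominated bounds.
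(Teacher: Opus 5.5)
Your proposal is correct and follows the paper's proof. Both arguments use the same steps: the Riesz identity to rewrite the imbalance as $\EP[\{r_{\varepsilon,\rho}-\rdag\}h]$, substitution of the expansion of Theorem~\ref{thm:D-laplace}, which holds uniformly on the support of $h$, the divergence identity $p\Delta\rdag-\rdag\Delta p=\nabla\cdot(p\nabla\rdag-\rdag\nabla p)$ followed by integration by parts with no boundary term, and the exact mass identity of Proposition~\ref{prop:D-closed} for constant $h$, a case you rightly treat separately because the display's hypotheses exclude it.
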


\begin{proof}
By the Riesz identity, the left-hand side of \eqref{eq:D-imbalance} equals
\[
\EP\bigl[
\{r_{\varepsilon,\rho}(X)-\rdag(X)\}h(X)
\bigr].
\]
Substituting \eqref{eq:D-total-bias} gives the source-relaxation contribution
\((\rho/\varepsilon)\EP[\mathcal B_\varepsilon h]\) and the entropic
contribution
\[
\frac{\varepsilon}{4}
\int h
\bigl\{
p\Delta\rdag-\rdag\Delta p
\bigr\}\,dx.
\]
To rewrite the latter, note that
\[
\nabla\!\cdot(p\nabla\rdag)
=
\nabla p\cdot\nabla\rdag+p\Delta\rdag,
\qquad
\nabla\!\cdot(\rdag\nabla p)
=
\nabla\rdag\cdot\nabla p+\rdag\Delta p.
\]
Subtracting these identities cancels the cross terms and gives
\[
p\Delta\rdag-\rdag\Delta p
=
\nabla\!\cdot
\bigl(p\nabla\rdag-\rdag\nabla p\bigr).
\]
Integration by parts under the stated support condition\corr{ --- the
boundary terms vanish because $h$ vanishes near $\partial\calX$ ---}
therefore gives
\[
\int h
\bigl\{
p\Delta\rdag-\rdag\Delta p
\bigr\}\,dx
=
-\int
\nabla h\cdot
\bigl(p\nabla\rdag-\rdag\nabla p\bigr)\,dx,
\]
which proves \eqref{eq:D-imbalance}.

For \(h\equiv1\), Proposition~\ref{prop:D-closed} gives
\(\EP[r_{\varepsilon,\rho}(X)]=1=\EQ[1]\) exactly, so the Riesz imbalance
vanishes for every \((\varepsilon,\rho)\)\corr{, consistently with
\(\EP[\mathcal B_\varepsilon(X)]=0\); note that constant functions are not
covered by the display, whose hypotheses exclude them, and their exact
balance is enforced by the mass constraint rather than by the interior
expansion}.
\end{proof}

\corr{\begin{remark}[Boundary effects]
\label{rem:D-boundary}
The restriction to test functions supported in the interior is not cosmetic.
Within an $O(\sqrt\varepsilon)$ collar of $\partial\calX$ the expansion
\eqref{eq:D-expansion} fails by an $O(1)$ relative amount, because
$S_\varepsilon p$ under-counts $p$ near the boundary. For an $h$ whose
support meets the collar, the boundary layer contributes an additional term
of the same $O(\varepsilon)$ order as the interior term in
\eqref{eq:D-imbalance}, proportional to the normal derivative of $h$ at the
boundary; and in $L^2(P)$ the weight bias itself is generally of order
$\varepsilon^{1/4}$ rather than $\varepsilon$, a boundary layer of width
$\sqrt\varepsilon$ carrying an $O(1)$ error. None of this affects the use
made of these results: the bias is a population object whatever its exact
order, and the calibration constraints of Appendix~\ref{app:E} remove its
estimand-relevant part exactly, at fixed $(\varepsilon,\rho)$, with no
boundary hypothesis.
\end{remark}}

\begin{remark}[Why calibration is needed]
\label{rem:D-why}
 Note that both $r_{\varepsilon,\rho}$ and $\rdag$ are population objects.
Holding $(\varepsilon,\rho)$ fixed while increasing the sample size makes an
empirical uncalibrated transport weight converge towards
$r_{\varepsilon,\rho}$, however it does not move $r_{\varepsilon,\rho}$ towards
$\rdag$. The weight error
$b_{\varepsilon,\rho}\coloneq r_{\varepsilon,\rho}-\rdag$ therefore persists, and by
Proposition~\ref{prop:C-rem} it enters the estimator through
$\EP[b_{\varepsilon,\rho}(\mu-\mu_0)]$, which is \corr{bounded by, and in
general of the order of,}
$\norm{b_{\varepsilon,\rho}}{L^2(P)}\norm{\mu-\mu_0}{L^2(P)}$\corr{; it
therefore need not be $o(N^{-1/2})$, which} blocks
root-$N$ inference. At $\rho=0$, Theorem~\ref{thm:D-laplace} shows this error
is $O(\varepsilon)$ \corr{uniformly on interior compacta; in $L^2(P)$ the
boundary layer generally makes it of the larger order $\varepsilon^{1/4}$
(Remark~\ref{rem:D-boundary}), which only strengthens the point}; for
$\rho>0$ there is an additional source-relaxation component
$r_{\varepsilon,\rho}-r_{\varepsilon,0}$\corr{, which is not $O(\rho)$
without a separate perturbation argument (Remark~\ref{rem:D-shrink}).}
 
The route taken in Appendix~\ref{app:E} is to hold $(\varepsilon,\rho)$ fixed and remove
the \emph{estimand-relevant} part of this error by constraint rather than by
undersmoothing. Calibration does not ask the finite-$J$ weight to equal $\rdag$
pointwise; it imposes $\EP[r(X)b_J(X)]=\EQ[b_J(Z)]$, equivalently
$\EP[\{r-\rdag\}b_J]=0$, so that the imbalance is annihilated exactly on
$H_J$ even though a pointwise error remains.
 
\end{remark}

\section{Riesz Calibration: Dual Structure and Convergence Rates}
\label{app:E}
 
In the previous section, we have seen that OT based weights even at the population level is geometrically sensible but
misaligned with the estimand, i.e., at fixed regularisation it violates the Riesz
equations by an amount of order $\varepsilon$ that no sample size removes. This
section repairs that by imposing the finite-dimensional Riesz equations
\emph{inside} the transport program, and then shows what the repair buys.
 
The development has a definite shape, and it is worth describing before the
details. Section~\ref{sub:E-kkt} derives the Karush--Kuhn--Tucker (KKT) system of the calibrated program and finds that the calibration multiplier enters as a multiplicative exponential tilt of the uncalibrated weight.
Section~\ref{sub:E-tilt} makes the  observation that  the
tilt is constant in the target index and as result, it factors out of the log-sum-exp, and
the calibrated weight is \emph{exactly} an offset exponentially-tilted sieve estimator, whose calibration equations are the first-order conditions of a strictly convex $M$-estimation problem in a $J$-dimensional parameter where $J$ is the size of of sieve bases. Once
that is achieved, Section~\ref{sub:E-rate} obtains the $L^2(P)$ rate by an entirely
standard sieve $M$-estimation argument, with the regularisation parameters held
fixed throughout.
 
 
\subsection{The calibrated program and its dual structure}
\label{sub:E-kkt}
 Let \(i=1,\ldots,n\) index the entire randomized trial sample. Put
\begin{equation}
  a_i\coloneq \frac1n \quad (i=1,\ldots,n),
  \qquad
  \omega_j\coloneq \frac1m \quad (j=1,\ldots,m),
  \qquad
  C_{ij}\coloneq c(X_i,X_j^Q).
  \label{eq:E-masses}
\end{equation}
For a nonnegative array
\(\Pi=(\pi_{ij})_{1\le i\le n,\,1\le j\le m}\), write
\(\alpha_i(\Pi)\coloneq \sum_j\pi_{ij}\). The discrete generalised divergences are
those of \eqref{eq:genKL} with the counting reference measures $a\otimes\omega$
and $a$. Recall that \(b_J(x)=\{b_{J,1}(x),\ldots,b_{J,J}(x)\}^{\top}\) be a
vector of \(J\) basis functions with \(b_{J,1}\equiv1\), and let
\(H_J=\mathrm{span}\{b_{J,1},\ldots,b_{J,J}\}\).
 
\begin{definition}[Empirical Riesz-calibrated transport]
\label{def:emp}
The empirical calibrated plan $\widehat\Pi=(\widehat\pi_{ij})$ solves
\begin{equation}
  \min_{\Pi\ge0}\Bigl\{
    \sum_{i=1}^n\sum_{j=1}^m C_{ij}\pi_{ij}
    +\varepsilon\,\KL\bigl(\Pi\,\|\,a\otimes\omega\bigr)
    +\rho\,\KL\bigl(\alpha(\Pi)\,\|\,a\bigr)\Bigr\}
  \label{eq:E-program}
\end{equation}
subject to the target-marginal constraints
$\sum_{i=1}^n\pi_{ij}=\omega_j$ for $j=1,\dots,m$ and the \emph{Riesz
calibration constraints}
\begin{equation}
\sum_{i=1}^n\alpha_i(\Pi)\,b_J(X_i)=\sum_{j=1}^m\omega_j\,b_J(\corr{X_j^Q}).
  \label{eq:E-calib}
\end{equation}
The calibrated weight is $\rhat(X_i)\coloneq \alpha_i(\widehat\Pi)/a_i$ for
$i=1,\ldots,n$\corr{; its out-of-sample extension to arbitrary $x\in\calX$ is
the explicit formula} \eqref{eq:E-weight} below.
\end{definition}
 
It is worth noting two remarks on the formulation. First, because \(a_i=1/n\), the calibration constraint \eqref{eq:E-calib} is exactly the empirical Riesz equation \corr{(writing $r_\Pi(X_i)\coloneq \alpha_i(\Pi)/a_i$ for a generic feasible plan, and $\rhat$ at the optimum)}:
\begin{equation}
  \sum_{i=1}^n \alpha_i(\Pi)\,b_J(X_i)
  =
  \En\bigl[\rhat(X)b_J(X)\bigr].
  \label{eq:E-calib-E}
\end{equation}
Hence \eqref{eq:E-calib} reads
\[
\En\bigl[\rhat(X)b_J(X)\bigr]
=
\E_m\bigl[b_J(X^Q)\bigr],
\]
the sample analogue of \eqref{eq:calib-pop}. Second, calibration is enforced \emph{simultaneously} with the transport cost minimisation, and this is what produces the multiplicative structure below.
\corr{Note also that, since $b_{J,1}\equiv1$, the first calibration equation
duplicates the total-mass identity already implied by the target-marginal
constraints, so \eqref{eq:E-calib} carries $J-1$ effective constraints; this
is the one-dimensional multiplier indeterminacy resolved in
Proposition~\ref{prop:E-kkt}.}
 
\begin{proposition}[KKT structure of the calibrated program]
\label{prop:E-kkt}
Suppose \eqref{eq:E-program}--\eqref{eq:E-calib} admits a strictly interior solution, as in Assumption~\ref{ass:des-dual}. Then there exist multipliers $v\in\R^m$ for the target-marginal constraints and $\lambda\in\R^J$ for
\eqref{eq:E-calib} such that
\begin{equation}
  \widehat\pi_{ij}
  =a_i\,\omega_j\,
   \exp\Bigl\{\frac{v_j-C_{ij}-\lambda^\top b_J(X_i)}{\varepsilon}\Bigr\}\,
   \rhat_i^{\,-\rho/\varepsilon},
  \label{eq:E-plan}
\end{equation}
and consequently, with $\gamma=\varepsilon/(\varepsilon+\rho)$,
\begin{equation}
  \rhat(x)
  =\Bigl[\sum_{j=1}^m\omega_j
     \exp\Bigl\{\frac{v_j-c(x,X^Q_j)-\lambda^\top b_J(x)}{\varepsilon}\Bigr\}
   \Bigr]^{\gamma}
  =\exp\Bigl\{\gamma\Lhat(x)-\frac{\lambda^\top b_J(x)}{\varepsilon+\rho}\Bigr\},
  \label{eq:E-weight}
\end{equation}
where the \emph{empirical offset} is
\begin{equation}
  \Lhat(x)\coloneq \log\sum_{j=1}^m\omega_j
    \exp\Bigl\{\frac{v_j-c(x,X^Q_j)}{\varepsilon}\Bigr\}.
  \label{eq:offset-hat}
\end{equation}
\corr{Note that \eqref{eq:E-plan} determines the plan only implicitly, since
$\rhat_i$ appears on both sides; the closed form for the weight is
\eqref{eq:E-weight}. If, in addition, the empirical Gram matrix
$\widehat G_J\coloneq \En[b_J(X)b_J(X)^\top]$ is nonsingular, the multipliers are
unique once $v$ is normalised by $\Em(v)=0$.}
\end{proposition}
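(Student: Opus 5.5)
The proof will mirror that of Proposition~\ref{prop:D-closed}, now in finite dimensions and with an extra multiplier. Form the Lagrangian of \eqref{eq:E-program}. Attach $v_j$ to the target-marginal constraint $\sum_i\pi_{ij}=\omega_j$. Attach $\lambda\in\R^J$ to the calibration constraint \eqref{eq:E-calib}, with the sign chosen so that the term $-\lambda^\top\{\sum_i\alpha_i(\Pi)b_J(X_i)-\sum_j\omega_jb_J(X_j^Q)\}$ enters the objective as $+\lambda^\top b_J(X_i)$ per unit of $\pi_{ij}$; that is, write $+\lambda^\top(\cdot)$ and absorb signs.

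Because the constraints are affine and the objective is convex and finite at the interior solution, a Slater-type condition holds, so multipliers exist. Strict interiority $\widehat\pi_{ij}>0$ means that no nonnegativity multipliers are needed.

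Next, differentiate in $\pi_{ij}$. The plan divergence contributes $\varepsilon\log\{\pi_{ij}/(a_i\omega_j)\}$. Since $\alpha_i$ depends linearly on $\pi_{ij}$ with coefficient one, the marginal divergence contributes $\rho\log(\alpha_i/a_i)$, and the calibration term contributes $\lambda^\top b_J(X_i)$. This gives exactly the stationarity equation \eqref{eq:pf-kkt}. Exponentiating and using $\alpha_i/a_i=\rhat_i$ yields \eqref{eq:E-plan}.

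To obtain \eqref{eq:E-weight}, sum \eqref{eq:E-plan} over $j$ and divide by $a_i$. The factor $\rhat_i^{-\rho/\varepsilon}$ and the tilt $e^{-\lambda^\top b_J(X_i)/\varepsilon}$ do not depend on $j$, so they pull out of the sum, leaving
\[
\rhat_i^{1+\rho/\varepsilon}=e^{-\lambda^\top b_J(X_i)/\varepsilon}\sum_j\omega_je^{(v_j-C_{ij})/\varepsilon}.
\]
Raising both sides to the power $\gamma$ gives the first display. Writing the sum as $e^{\Lhat}$ and noting $\gamma/\varepsilon=1/(\varepsilon+\rho)$ gives the second. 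The formula is then taken as the definition of the out-of-sample extension.

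For uniqueness, observe first that the primal solution is unique by strict convexity, so $\rhat$ and $\widehat\Pi$ are fixed. Suppose two multiplier pairs $(v,\lambda)$ and $(v',\lambda')$ both satisfy \eqref{eq:pf-kkt}. Subtracting the two equations gives $(\lambda-\lambda')^\top b_J(X_i)=v_j-v'_j$ for all $i,j$. The left side depends only on $i$ and the right side only on $j$, so both equal a common constant $\kappa$.

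This is the one-dimensional indeterminacy flagged after Definition~\ref{def:emp}. Since $b_{J,1}\equiv1$, shifting $\lambda_1$ by $\kappa$ and every $v_j$ by $\kappa$ preserves the equation. The normalisation $\Em(v)=\Em(v')=0$ forces $\kappa=0$. Then $(\lambda-\lambda')^\top b_J(X_i)=0$ for all $i$, which gives $(\lambda-\lambda')^\top\widehat G_J(\lambda-\lambda')=0$. Nonsingularity of $\widehat G_J$ yields $\lambda=\lambda'$, and then $v=v'$.

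The main obstacle is justifying existence of the multipliers rigorously despite the redundancy. The $b_{J,1}$ calibration row duplicates total mass, so the constraint system is rank-deficient and LICQ fails. To handle this, I would invoke convex duality: the constraints are affine, and Slater's condition holds at the interior solution. This gives existence without any independence requirement, with the redundancy resolved afterwards by the normalisation as above.
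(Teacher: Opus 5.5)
Your proposal is correct and follows essentially the same route as the paper. It uses the same Lagrangian stationarity condition, the same summation over $j$ with the $j$-free factors pulled out before taking the power $\gamma$, and the same subtract-and-separate argument for uniqueness of the multipliers. The differences are minor improvements: you justify existence of the multipliers despite the redundant $b_{J,1}$ row (affine constraints with an interior optimum suffice, which the paper takes for granted), and you apply the normalisation $\Em(v)=0$ before the Gram condition, which avoids the paper's detour through $\tilde\lambda=ce_1$.
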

 
\begin{proof}
Introduce multipliers $v_j$ for the target-marginal constraints and
$\lambda\in\R^J$ for \eqref{eq:E-calib}, and write
$\alpha_i\coloneq \alpha_i(\Pi)=\sum_j\pi_{ij}$ inside the Lagrangian
\begin{align*}
  \mathcal L(\Pi,v,\lambda)
  &=\sum_{i,j}C_{ij}\pi_{ij}
   +\varepsilon\sum_{i,j}\Bigl\{\pi_{ij}\log\frac{\pi_{ij}}{a_i\omega_j}
      -\pi_{ij}+a_i\omega_j\Bigr\}
   +\rho\sum_i\Bigl\{\alpha_i\log\frac{\alpha_i}{a_i}-\alpha_i+a_i\Bigr\}\\
  &\qquad+\lambda^\top\Bigl(\sum_i\alpha_ib_J(X_i)-\sum_j\omega_jb_J(X^Q_j)\Bigr)
   -\sum_jv_j\Bigl(\sum_i\pi_{ij}-\omega_j\Bigr),
\end{align*}
all sums over \(i\) being over \(i=1,\ldots,n\). Since $\alpha_i=\sum_j\pi_{ij}$ we have
$\partial\alpha_i/\partial\pi_{ij}=1$, so setting
$\partial\mathcal L/\partial\pi_{ij}=0$ at an interior optimiser gives
\begin{equation}
  C_{ij}+\varepsilon\log\frac{\pi_{ij}}{a_i\omega_j}
  +\rho\log\frac{\alpha_i}{a_i}+\lambda^\top b_J(X_i)-v_j=0.
  \label{eq:E-foc}
\end{equation}
At the optimiser $\alpha_i/a_i=\rhat_i$, and solving \eqref{eq:E-foc} for
$\pi_{ij}$ yields \eqref{eq:E-plan}. Summing \eqref{eq:E-plan} over $j$ and using $\alpha_i=a_i\rhat_i$, we notice
\[
  a_i\rhat_i=a_i\rhat_i^{\,-\rho/\varepsilon}
   \sum_j\omega_j e^{\{v_j-C_{ij}-\lambda^\top b_J(X_i)\}/\varepsilon},
\]
so that $\rhat_i^{\,1+\rho/\varepsilon}$ equals the sum and finally, raising to the power $\gamma=(1+\rho/\varepsilon)^{-1}$ gives the first equality in
\eqref{eq:E-weight}. For the second, observe that
$\lambda^\top b_J(X_i)$ does not depend on $j$ and therefore factors out of the
sum:
\[
  \Bigl[e^{-\lambda^\top b_J(x)/\varepsilon}
        \sum_j\omega_je^{\{v_j-c(x,X_j^Q)\}/\varepsilon}\Bigr]^{\gamma}
  =e^{-\gamma\lambda^\top b_J(x)/\varepsilon}\,e^{\gamma\Lhat(x)},
\]
and $\gamma/\varepsilon=1/(\varepsilon+\rho)$.
 
{ For uniqueness, note first that the objective in \eqref{eq:E-program} is
strictly convex in $\Pi$, so the optimal plan $\widehat\Pi$ --- and with it
$\alpha$ and $\rhat$ --- is unique. Suppose $(v,\lambda)$ and $(v',\lambda')$
both satisfy \eqref{eq:E-foc} at $\widehat\Pi$, and set $\tilde v\coloneq v-v'$,
$\tilde\lambda\coloneq \lambda-\lambda'$. Subtracting the two systems gives
$\tilde\lambda^\top b_J(X_i)=\tilde v_j$ for all $i,j$; the left side is free
of $j$ and the right side free of $i$, so both equal a constant $c$. Thus
$\tilde v_j=c$ for all $j$ and, using $b_{J,1}\equiv1$,
$(\tilde\lambda-ce_1)^\top b_J(X_i)=0$ for all $i$, where $e_1$ is the first
coordinate vector. Nonsingularity of $\widehat G_J$ forces
$\tilde\lambda=ce_1$, and the normalisation $\Em(v)=0$ forces $c=0$, whence
$(v,\lambda)=(v',\lambda')$. (Without full rank, multiplier components along
directions vanishing on the design are undetermined, although the plan and the
weight remain unique.)}
\end{proof}
 
\begin{corollary}
\label{cor:E-tilt}
Fix the target dual vector $v$ and, for $\lambda\in\R^J$, let
$\rhat(x;\lambda,v)$ denote the right-hand side of \eqref{eq:E-weight} viewed
as a function of $(\lambda,v)$. Then
\begin{equation}
  \rhat(x;\lambda,v)
  =\rhat(x;0,v)\,
   \exp\Bigl\{-\frac{\lambda^\top b_J(x)}{\varepsilon+\rho}\Bigr\}.
  \label{eq:E-tiltratio}
\end{equation}
{ If $\norm{\lambda^\top b_J}{\infty}/(\varepsilon+\rho)\le C_0$ for some
finite constant $C_0$, then
\[
  e^{-C_0}\;\le\;\frac{\rhat(x;\lambda,v)}{\rhat(x;0,v)}\;\le\;e^{C_0}
  \qquad\text{for every }x\in\calX,
\]
so calibration modifies the raw transport weight by a multiplicative factor
bounded away from zero and infinity, uniformly in $x$.}
\end{corollary}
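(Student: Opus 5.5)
The corollary follows directly from the closed form \eqref{eq:E-weight} of Proposition~\ref{prop:E-kkt}, read as a formula in $(\lambda,v)$ rather than only at the optimum. Fix $v$ and define $\rhat(x;\lambda,v)$ as the right-hand side of \eqref{eq:E-weight}. Because the multiplier term $\lambda^\top b_J(x)$ does not depend on the target index $j$, it factors out of the sum over $j$ before the outer power $\gamma$ is applied. This is exactly the computation in the proof of Proposition~\ref{prop:E-kkt}, and it gives $\rhat(x;\lambda,v)=\exp\{\gamma\Lhat(x)\}\exp\{-\lambda^\top b_J(x)/(\varepsilon+\rho)\}$, using $\gamma/\varepsilon=1/(\varepsilon+\rho)$. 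Here $\Lhat$ from \eqref{eq:offset-hat} depends on $v$ but not on $\lambda$. Setting $\lambda=0$ identifies the first factor as $\rhat(x;0,v)$, and \eqref{eq:E-tiltratio} follows at once.

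For the two-sided bound, I would first note that $\rhat(x;0,v)=\{\sum_j\omega_je^{(v_j-c(x,X_j^Q))/\varepsilon}\}^{\gamma}$ is strictly positive and finite for every $x$. The ratio is therefore well defined and equals $\exp\{-\lambda^\top b_J(x)/(\varepsilon+\rho)\}$. Under the hypothesis $\norm{\lambda^\top b_J}{\infty}/(\varepsilon+\rho)\le C_0$, the exponent lies in $[-C_0,C_0]$ for every $x\in\calX$. Monotonicity of the exponential then gives the bounds $e^{-C_0}$ and $e^{C_0}$, uniformly in $x$.

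There is no real obstacle. The only points needing care are that the factorization holds for arbitrary $\lambda$ and $v$, not just at the KKT solution, since the formula is used as a definition, and that positivity of the uncalibrated factor makes the ratio meaningful. The substantive work of checking that $\lambda^\top b_J$ is actually bounded at the solution is deferred to the rate analysis and is not part of this statement.
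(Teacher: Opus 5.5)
Your proposal is correct and follows the paper's own argument essentially verbatim: the paper also obtains \eqref{eq:E-tiltratio} by reading the second equality of \eqref{eq:E-weight} at $\lambda$ and at $0$ with $v$ held fixed, combines the inner $1/\varepsilon$ with the outer power via $\gamma/\varepsilon=1/(\varepsilon+\rho)$, and treats the two-sided bound as immediate from the hypothesis. Your added remark that $\rhat(x;0,v)>0$, so that the ratio is well defined, is a small step the paper leaves implicit.
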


\begin{proof}
Identity \eqref{eq:E-tiltratio} is the second equality of
\eqref{eq:E-weight} read at $\lambda$ and at $0$ with the same $v$; note that
the factor $1/\varepsilon$ inside the bracket \corr{combines with} the outer power
$\gamma$ \corr{via $\gamma/\varepsilon=1/(\varepsilon+\rho)$}, which is why the exponent carries $\varepsilon+\rho$ and not
$\varepsilon$. \corr{The two-sided bound is immediate from the hypothesis.}
\end{proof}
 
\begin{remark}
\label{rem:E-mult}
 Firstly, \eqref{eq:E-tiltratio} is what makes calibration
compatible with the overlap adaptation of Remark~\ref{rem:D-shrink}: a small baseline weight
remains small after a uniformly bounded multiplicative tilt. Under a
squared-loss calibration the correction would instead be additive,
$\rhat=\rhat|_{\lambda=0}+\lambda^\top b_J$, which can assign positive mass to
units with zero uncalibrated mass and so destroy the geometry of the overlap
region. That the tilt is exponential is a consequence of the $\KL$ penalties in
\eqref{eq:E-program}, and is therefore not a matter of computational
convenience.

{ Second, the hypothesis of Corollary~\ref{cor:E-tilt} is a supremum-norm
bound on the tilt \emph{function}, not on its coefficient vector. The
distinction matters: the Cauchy--Schwarz route
$|\lambda^\top b_J(x)|\le B\sqrt J\,\norm{\lambda}{2}$ would require
$\norm{\lambda}{2}/(\varepsilon+\rho)=O(J^{-1/2})$, which fails in the sieve
regime\corr{: by the Gram bound of Assumption~\ref{ass:des-basis},
$\norm{\theta_J}{2}\ge\norm{\theta_J^\top b_J}{L^2(P)}/\sqrt{C_G}
\to\norm{\psi}{L^2(P)}/\sqrt{C_G}>0$ whenever the population tilt $\psi$ is
not identically zero, so the coefficient is of constant order}. What Theorem~\ref{thm:E-rate} delivers at the realised solution, with
probability tending to one, is precisely the supremum-norm hypothesis:
$\norm{\thetahat^\top b_J}{\infty}
 \le\norm{\theta_J^\top b_J}{\infty}+o_p(1)=O_p(1)$
by Step~0 of its proof together with \eqref{eq:E-supnorm}\corr{; thus $C_0$ may be taken $O_p(1)$, the two-sided bound holding with probability tending to one}.}

Third, the comparison in \eqref{eq:E-tiltratio} is made \emph{at fixed} $v$.
It is not a comparison between the calibrated solution and the separately
re-optimised uncalibrated Sinkhorn solution, whose target dual vector will in
general differ when the calibration constraints are removed. The statement
supplied by the $\KL$ geometry is the conditional one, and we do not claim more.
\end{remark}
 
The last item of this subsection records the boundedness of the target dual by invoking Assumption~\ref{ass:des-dual}. Its proof is a two-line consequence of \eqref{eq:E-foc}, and it is
the only place where \corr{the numerical behaviour of the transport solver is touched}.
 
\begin{lemma}[Boundedness of the target dual and the offset]
\label{lem:dual-bdd}
\corr{Suppose the empirical calibrated program admits a strictly interior solution whose plan and weight ratios are bounded away from zero and infinity, that is,} $0<c_h\le\widehat\pi_{ij}/(a_i\omega_j)\le C_h$ and
$0<c_r\le\rhat_i\le C_r$ for \(i=1,\ldots,n\), $j\le m$. Then, with the normalisation $\Em(v)=0$,
\[
  \norm{v}{\infty}=O(1),
\]
and consequently, if in addition the cost satisfies
$\sup_z\norm{c(\cdot,z)}{W^{s,\infty}(\calX)}\le C_{c,s}<\infty$, the empirical offset of \eqref{eq:offset-hat} obeys
$\norm{\Lhat}{W^{s,\infty}(\calX)}\le \corr{C_L'}$\corr{, a deterministic constant,} uniformly in $m$.
\end{lemma}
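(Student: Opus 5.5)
The plan is to read the first-order condition \eqref{eq:E-foc} as an explicit formula for the duals and to sandwich every term using the assumed bounds. Rearranging \eqref{eq:E-foc} gives, for every pair $(i,j)$,
\[
  v_j-\lambda^\top b_J(X_i)
  =C_{ij}+\varepsilon\log\frac{\widehat\pi_{ij}}{a_i\omega_j}+\rho\log\rhat_i .
\]
On the event in the hypothesis, the right-hand side is bounded in absolute value by
\[
  K\coloneq \operatorname{diam}(\calX)^2+\varepsilon\max(|\log c_h|,|\log C_h|)+\rho\max(|\log c_r|,|\log C_r|).
\]
Here compactness of $\calX$ (Assumption~\ref{ass:design}) bounds $C_{ij}$, and $\varepsilon\le\varepsilon_1$ together with $\rho\le C_\rho\varepsilon_1$ keeps $K$ free of $n$, $m$ and $J$. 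Write $u_i\coloneq\lambda^\top b_J(X_i)$. The display then says $|v_j-u_i|\le K$ for all $i,j$.

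To isolate $v$, fix any trial index $i_0$. Then $|v_j-u_{i_0}|\le K$ for every $j$, so averaging over $j$ and using $\Em(v)=0$ gives $|u_{i_0}|\le K$. Consequently $|v_j|\le|v_j-u_{i_0}|+|u_{i_0}|\le 2K$ for all $j$, that is, $\norm{v}{\infty}\le2K=O(1)$. As a by-product, $\max_i|\lambda^\top b_J(X_i)|\le K$. The normalisation is exactly what removes the one-dimensional indeterminacy $(v,\lambda)\mapsto(v+c\mathbf 1,\lambda+ce_1)$ identified in Proposition~\ref{prop:E-kkt}. This is where I expect the only subtlety, because without the normalisation the bound fails trivially.

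For the offset I would apply Lemma~\ref{lem:A-lse} with $k=s$. Take $\nu$ to be the empirical measure with weights $\omega_j$ on the target indices, and set $f_j(\cdot)=\{v_j-c(\cdot,X^Q_j)\}/\varepsilon$. Then
\[
  \sup_j\norm{f_j}{W^{s,\infty}(\calX)}\le\varepsilon_0^{-1}(2K+C_{c,s})\eqqcolon F .
\]
This bound follows from the dual bound just proved together with the cost hypothesis, and the constant $v_j$ only affects the zeroth-order norm. Lemma~\ref{lem:A-lse} then yields $\norm{\Lhat}{W^{s,\infty}(\calX)}\le C(s,d,F)\eqqcolon C_L'$. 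This constant is deterministic, and the lemma explicitly makes it free of $m$ and of the weights. For the squared Euclidean cost on compact $\calX$, $C_{c,s}$ is finite automatically, since $c(\cdot,z)$ is a quadratic polynomial whose coefficients are bounded by $\operatorname{diam}(\calX)$.

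I do not expect any step to be a real obstacle. The argument is purely deterministic on the event of Assumption~\ref{ass:des-dual}, so the probabilistic statement "with probability tending to one" transfers directly to the conclusion.
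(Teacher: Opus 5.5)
Your proposal is correct and follows the paper's proof essentially step for step. You rearrange \eqref{eq:E-foc}, bound its right-hand side by a deterministic constant using compactness of $\calX$, the two-sided ratio bounds and the bounds on $(\varepsilon,\rho)$, use $\Em(v)=0$ to get $\norm{v}{\infty}\le 2K$, and then apply Lemma~\ref{lem:A-lse} with $k=s$ to the offset. The only difference is cosmetic: you average $|v_j-\lambda^\top b_J(X_{i_0})|\le K$ over $j$ for a fixed trial index, whereas the paper bounds the pairwise differences $|v_j-v_{j'}|\le 2C_1$ and then uses the zero mean, and both routes give the same constant.
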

 
\begin{proof}
Write \eqref{eq:E-foc} as
$v_j-\lambda^\top b_J(X_i)
 =C_{ij}+\varepsilon\log\{\widehat\pi_{ij}/(a_i\omega_j)\}+\rho\log\rhat_i$.
The right-hand side is bounded in absolute value by a deterministic constant
$C_1$, since the cost is bounded on the compact $\calX$, the ratios are bounded
away from zero and infinity \corr{by hypothesis}, and $\varepsilon,\rho$ are bounded
by Assumption~\ref{ass:design}. Hence for any $j,j'$ and any \corr{\(i=1,\ldots,n\),}
\[
  |v_j-v_{j'}|
  \le|v_j-\lambda^\top b_J(X_i)|+|v_{j'}-\lambda^\top b_J(X_i)|\le2C_1,
\]
so all coordinates of $v$ lie within $2C_1$ of one another; the normalisation
$\Em(v)=0$ then forces $\norm{v}{\infty}\le2C_1$.
 
For the offset, apply Lemma~\ref{lem:A-lse} \corr{(with $\nu$ the empirical distribution
of the target covariates)}, $k=s$ and
$f_j(\cdot)=\{v_j-c(\cdot,X^Q_j)\}/\varepsilon$. The hypothesis of that lemma
holds because $\norm{v}{\infty}=O(1)$, the cost is uniformly bounded in
$W^{s,\infty}$, and $\varepsilon\ge\varepsilon_0>0$; the conclusion is the
stated bound, with a constant that does not depend on $m$.
\end{proof}
 
\subsection{The calibrated weight as an exponentially tilted sieve}
\label{sub:E-tilt}
 
Everything in this section now follows from a change of variable. Set
\begin{equation}
  \theta\coloneq -\frac{\lambda}{\varepsilon+\rho}\in\R^J,
  \label{eq:E-theta}
\end{equation}
so that, by the second equality in \eqref{eq:E-weight},
\begin{equation}
  \rhat(x)=\exp\bigl\{\gamma\Lhat(x)+\theta^\top b_J(x)\bigr\}.
  \label{eq:E-tiltform}
\end{equation}
The calibrated transport weight is thus an exponential family in $\theta$ with
a data-dependent \emph{offset} $\gamma\Lhat$ carrying all the transport geometry, and a \emph{tilt} $\theta^\top b_J$ carrying all the calibration. The population counterpart replaces $\Lhat$ by
\begin{equation}
  L(x)\coloneq \log\int\exp\Bigl\{\frac{g(x^Q)-c(x,x^Q)}{\varepsilon}\Bigr\}\,q(x^Q)\,dx^Q,
  \label{eq:offset}
\end{equation}
$g$ being the population target dual potential; Assumption~\ref{ass:des-dual} is
precisely the requirement that $\Lhat$ converge to this fixed $L$ at rate
$a_{L,N}$.

{ \begin{corollary}[Sobolev bound for the population offset]
\label{cor:E-Lpop}
Under Assumptions~\ref{ass:design} and~\ref{ass:des-dual}, the population
offset satisfies $\norm{L}{W^{s,\infty}(\calX)}\le C_L$, with $C_L$ depending
only on $(s,d,C_g,C_{c,s},\varepsilon_0)$. Consequently, on the event of
Assumption~\ref{ass:des-dual} the difference $\Lhat-L$ lies in the fixed
Sobolev ball $\{\ell:\norm{\ell}{W^{s,\infty}(\calX)}\le C_L+C_L'\}$, where
$C_L'$ is the constant of Lemma~\ref{lem:dual-bdd}, while
$\norm{\Lhat-L}{\infty}=O_p(a_{L,N})$.
\end{corollary}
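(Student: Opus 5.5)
The plan is to apply Lemma~\ref{lem:A-lse} to the population offset $L$ of \eqref{eq:offset}, taking the mixing measure to be the target covariate law. This is the same route Lemma~\ref{lem:dual-bdd} uses for the empirical offset, with $\nu=\QX$ in place of the empirical distribution of the target covariates. Concretely, set $k=s$, $(\mathcal Z,\nu)=(\calX,\QX)$ and
\[
f_z(\cdot)=\frac{g(z)-c(\cdot,z)}{\varepsilon},
\]
so that $L(x)=\log\int e^{f_z(x)}\,d\QX(z)$. This is exactly the form covered by \eqref{eq:lse}. Joint measurability in $(x,z)$ holds because $c$ is continuous and $g$ is measurable.

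The step that carries the content is the uniform Sobolev bound on the family $\{f_z\}$. The zeroth-order part is $|f_z(x)|\le(C_g+\sup_{\calX\times\calX}c)/\varepsilon_0$. This uses the bound $\norm{g}{\infty}\le C_g$ from Assumption~\ref{ass:des-dual}, the boundedness of the squared Euclidean cost on the compact $\calX$, and $\varepsilon\ge\varepsilon_0$. The derivative parts do not involve $g$ at all, because $g(z)$ is constant in $x$. They are therefore controlled by $\sup_z\norm{c(\cdot,z)}{W^{s,\infty}(\calX)}\le C_{c,s}$, divided by $\varepsilon_0$. For the quadratic cost on a compact set, only derivatives up to order two are nonzero, and all of them are bounded. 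Putting these together gives $\sup_z\norm{f_z}{W^{s,\infty}}\le F\coloneq(C_g+C_{c,s})/\varepsilon_0$, up to an absolute constant. Lemma~\ref{lem:A-lse} then yields $\norm{L}{W^{s,\infty}}\le C(s,d,F)\eqqcolon C_L$, which depends only on $(s,d,C_g,C_{c,s},\varepsilon_0)$, as the statement requires. The important feature, emphasised in the lemma, is that this constant does not depend on the mixing law $\QX$. The only point needing care is confirming that the normalisation $\EQ[g]=0$ plays no role; it does not.

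For the consequence, work on the event of Assumption~\ref{ass:des-dual}. There Lemma~\ref{lem:dual-bdd} gives $\norm{\Lhat}{W^{s,\infty}}\le C_L'$. The triangle inequality in $W^{s,\infty}$ then gives $\norm{\Lhat-L}{W^{s,\infty}}\le C_L+C_L'$, so $\Lhat-L$ lies in the fixed Sobolev ball. The sup-norm rate $\norm{\Lhat-L}{\infty}=O_p(a_{L,N})$ needs no proof, since it is assumed in Assumption~\ref{ass:des-dual}. I do not expect a real obstacle here. The one subtlety is that $C_L'$ is deterministic only on that event, so the ball membership holds with probability tending to one rather than almost surely.
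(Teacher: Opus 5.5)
Your proposal is correct and follows the paper's proof. You apply Lemma~\ref{lem:A-lse} with $\nu=\QX$, $k=s$, $f_z=\{g(z)-c(\cdot,z)\}/\varepsilon$ and $F=(C_g+C_{c,s})/\varepsilon_0$, then combine the result with Lemma~\ref{lem:dual-bdd} by the triangle inequality and take the sup-norm rate directly from Assumption~\ref{ass:des-dual}; one small wording fix is that $C_L'$ is itself a deterministic constant, and it is only the bound $\norm{\Lhat}{W^{s,\infty}(\calX)}\le C_L'$ that holds on the high-probability event.
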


\begin{proof}
Apply Lemma~\ref{lem:A-lse} with $\nu=\QX$ and
$f_z(\cdot)=\{g(z)-c(\cdot,z)\}/\varepsilon$; its hypothesis holds with
$F=(C_g+C_{c,s})/\varepsilon_0$. The second assertion combines this with
Lemma~\ref{lem:dual-bdd} and Assumption~\ref{ass:des-dual}.
\end{proof}

A word on the logical status of the pair $(v,\theta)$ is in order, because the
offset and the tilt are \emph{not} variation-independent: in the calibrated
program they are the multipliers of one optimisation problem and are
determined jointly. The analysis below is accordingly a \emph{profiled}
analysis. On the event that the program has an interior solution with target
dual $v$, the realised tilt $\thetahat=-\lambda/(\varepsilon+\rho)$ is the
unique minimiser of the convex objective $M_n$ built from the realised offset
$\gamma\Lhat$ (Proposition~\ref{prop:E-Mest} below). No independence between
$\Lhat$ and the trial sample is assumed anywhere; the rate argument of
Section~\ref{sub:E-rate} replaces $\Lhat$ by the deterministic $L$ at the
outset, at a cost of $O_p(a_{L,N})$ \corr{(the offset-convergence clause of Assumption~\ref{ass:des-dual})}, with the Sobolev envelope supplied by Corollary~\ref{cor:E-Lpop}, and is otherwise a statement about fixed
functions.}

The rate \(a_{L,N}\) in Assumption~\ref{ass:des-dual} can be related to more primitive quantities by separating two distinct sources of error in the empirical transport offset. Recall that
\(\Lhat(x)=
\log\left[
m^{-1}\sum_{j=1}^m
\exp\{[v_j-c(x,X_j^Q)]/\varepsilon\}
\right]\),
where \(v_j\) is the empirical target-side dual potential associated with the target-marginal constraint for \(X_j^Q\), whereas
\(L(x)=
\log\EQ[
\exp\{[g(X^Q)-c(x,X^Q)]/\varepsilon\}
]\),
where \(g\) is the corresponding population target dual potential.

To connect these two quantities, introduce the intermediate offset
\[\widetilde L_m(x)\coloneq 
\log\left[
m^{-1}\sum_{j=1}^m
\exp\{[g(X_j^Q)-c(x,X_j^Q)]/\varepsilon\}
\right].\] Thus \(\widetilde L_m\) uses the population dual potential \(g\), as does
\(L\), but evaluates the defining integral using the empirical target distribution, as does \(\Lhat\). It therefore provides a bridge between the empirical offset \(\Lhat\) and the population offset \(L\).
Notice by the triangle inequality,
\[\norm{\Lhat-L}{\infty}
\le
\norm{\Lhat-\widetilde L_m}{\infty}
+
\norm{\widetilde L_m-L}{\infty}.\]
The first term measures the error from replacing the population dual values \(g(X_j^Q)\) by their empirical counterparts \(v_j\). The second term measures the sampling error from replacing integration with respect to the target
covariate distribution \(Q_X\) by averaging over the observed target covariates. We control these two terms separately.

\begin{lemma}[Perturbation of the transport offset]
\label{lem:E-offset-perturb}
Define
\(\Delta_{g,N}\coloneq 
\max_{1\le j\le m}|v_j-g(X_j^Q)|\). Then, for fixed \(\varepsilon>0\),
\(\norm{\Lhat-\widetilde L_m}{\infty}
\le \Delta_{g,N}/\varepsilon\).
\end{lemma}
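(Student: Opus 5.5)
The bound is a Lipschitz property of the log-sum-exp map in its arguments, uniform in $x$. The plan is to fix $x\in\calX$ and compare the two finite mixtures defining $\Lhat(x)$ and $\widetilde L_m(x)$ term by term, using only that the same target points $X_j^Q$ and the same weights $\omega_j=m^{-1}$ appear in both. The only difference is the dual value in the exponent, $v_j$ versus $g(X_j^Q)$.

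For each $j$, put $f_j(x)\coloneq \{v_j-c(x,X_j^Q)\}/\varepsilon$ and $\tilde f_j(x)\coloneq \{g(X_j^Q)-c(x,X_j^Q)\}/\varepsilon$. The cost terms cancel, so
\[
\tilde f_j(x)-\Delta_{g,N}/\varepsilon\;\le\; f_j(x)\;\le\;\tilde f_j(x)+\Delta_{g,N}/\varepsilon
\]
for every $j$ and every $x$. Exponentiating preserves these inequalities, and so does averaging with the nonnegative weights $\omega_j$. This gives
\[
e^{-\Delta_{g,N}/\varepsilon}\sum_j\omega_je^{\tilde f_j(x)}\le\sum_j\omega_je^{f_j(x)}\le e^{\Delta_{g,N}/\varepsilon}\sum_j\omega_je^{\tilde f_j(x)}.
\]
Both sums are strictly positive, so taking logarithms yields $|\Lhat(x)-\widetilde L_m(x)|\le\Delta_{g,N}/\varepsilon$. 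Since the constant does not depend on $x$, taking the supremum over $\calX$ gives the claim.

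An equivalent route is the mean value theorem applied to $t\mapsto\log\sum_j\omega_je^{\tilde f_j+t(f_j-\tilde f_j)}$. Its derivative is a softmax-weighted average of the differences $f_j-\tilde f_j$, each bounded by $\Delta_{g,N}/\varepsilon$. This is the same averaging principle used in Lemma~\ref{lem:A-lse}.

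There is no real obstacle here. The only points to check are that $\Delta_{g,N}$ is finite and that $g$ is evaluated at the same normalisation as $v$. The first is automatic, since $m$ is finite. The second matters for the bound to be useful rather than for its validity: Assumption~\ref{ass:des-dual} fixes $\EQ[g]=0$ and $\Em(v)=0$, which prevents $\Delta_{g,N}$ from absorbing an arbitrary additive constant. The genuinely hard work, showing that $\Delta_{g,N}\to0$, belongs to a separate dual-stability argument and is not needed for this lemma.
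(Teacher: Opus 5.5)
Your proof is correct and takes the same approach as the paper: the cost terms cancel, so each exponent lies within $\pm\Delta_{g,N}/\varepsilon$ of its counterpart. Exponentiating and averaging then gives $e^{-\Delta_{g,N}/\varepsilon}\widetilde S_m(x)\le S_m(x)\le e^{\Delta_{g,N}/\varepsilon}\widetilde S_m(x)$, and taking logarithms and the supremum over $x$ finishes it (your normalisation remark and mean-value alternative are accurate but not needed for the bound).
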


\begin{proof}
For each \(x\), let
\[S_m(x)\coloneq 
m^{-1}\sum_{j=1}^m
\exp\{[v_j-c(x,X_j^Q)]/\varepsilon\},\qquad \widetilde S_m(x)\coloneq 
m^{-1}\sum_{j=1}^m
\exp\{[g(X_j^Q)-c(x,X_j^Q)]/\varepsilon\}.\]
By definition of \(\Delta_{g,N}\),
\(-\Delta_{g,N}\le v_j-g(X_j^Q)\le\Delta_{g,N}\)
for every \(j\). Hence
\(e^{-\Delta_{g,N}/\varepsilon}\widetilde S_m(x)
\le S_m(x)\le
e^{\Delta_{g,N}/\varepsilon}\widetilde S_m(x)\).
Taking logarithms gives
\(-\Delta_{g,N}/\varepsilon
\le \Lhat(x)-\widetilde L_m(x)
\le \Delta_{g,N}/\varepsilon\).
Taking the supremum over \(x\in\calX\) proves the result.
\end{proof}

We next control the second term \(\norm{\widetilde L_m-L}{\infty}\). Define \(\phi_x(z)\coloneq \exp\{[g(z)-c(x,z)]/\varepsilon\}\) and
\(\mathcal F_\varepsilon\coloneq \{\phi_x:x\in\calX\}\).
By construction, \(\widetilde L_m(x)=\log\E_m[\phi_x(X^Q)]\),
whereas \(L(x)=\log\EQ[\phi_x(X^Q)]\).
Thus the only difference between \(\widetilde L_m\) and \(L\) is the replacement of the population expectation \(\EQ\) by the empirical target average \(\Em\).

Under the boundedness conditions on \(g\), \(c\), and
\(\varepsilon\ge\varepsilon_0>0\), there exist constants
\(0<\underline C<\overline C<\infty\) such that
\(\underline C\le\EQ[\phi_x(X^Q)]\le\overline C\)
uniformly over \(x\in\calX\). If
\(\sup_{x\in\calX}|(\E_m-\EQ)\phi_x|=o_p(1)\),
then \(\Em[\phi_x(X^Q)]\) is also bounded away from zero uniformly in \(x\) with probability tending to one. Since the logarithm is Lipschitz on compact subsets of \((0,\infty)\), it follows that \(\norm{\widetilde L_m-L}{\infty} \le C\sup_{x\in\calX}|(\E_m-\EQ)\phi_x|\)
with probability tending to one, for some finite constant \(C\).

Combining the preceding two bounds yields a primitive rate for the complete offset error.

\begin{proposition}[Primitive rate for the transport offset]
\label{prop:E-offset-rate}
Suppose that, \corr{under the normalisations $\Em(v)=0$ and $\EQ[g(X^Q)]=0$ of Assumption~\ref{ass:des-dual}},
\(\Delta_{g,N}
=
\max_{1\le j\le m}|v_j-g(X_j^Q)|
=
O_p(b_{g,N})\)
for some sequence \(b_{g,N}\to0\), and suppose that
\(\sup_{x\in\calX}|(\E_m-\EQ)\phi_x|
=
O_p(b_{Q,m})\)
for some sequence \(b_{Q,m}\to0\). Then
\(\norm{\Lhat-L}{\infty}
=
O_p\{b_{g,N}/\varepsilon+b_{Q,m}\}\).
Consequently, \corr{the offset-convergence clause of} Assumption~\ref{ass:des-dual} holds with
\(a_{L,N}\coloneq b_{g,N}/\varepsilon+b_{Q,m}\).
If \(\varepsilon\ge\varepsilon_0>0\) is fixed, one may equivalently take
\(a_{L,N}\coloneq b_{g,N}+b_{Q,m}\).
\end{proposition}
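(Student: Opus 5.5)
The bound comes from the triangle inequality through the intermediate offset $\widetilde L_m$, already displayed before the statement:
\[
\norm{\Lhat-L}{\infty}\le\norm{\Lhat-\widetilde L_m}{\infty}+\norm{\widetilde L_m-L}{\infty}.
\]
We control the two terms separately and add the stochastic orders.

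\emph{First term.} Lemma~\ref{lem:E-offset-perturb} gives, deterministically and for every realisation,
\[
\norm{\Lhat-\widetilde L_m}{\infty}\le\Delta_{g,N}/\varepsilon .
\]
The hypothesis $\Delta_{g,N}=O_p(b_{g,N})$ then yields $O_p(b_{g,N}/\varepsilon)$. The normalisations $\Em(v)=0$ and $\EQ[g(X^Q)]=0$ enter only here: they make $\Delta_{g,N}$ a meaningful comparison, since $v$ and $g$ are otherwise defined only up to additive constants. No further work is needed, because the lemma already holds uniformly in $x$.

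\emph{Second term.} This is the step needing care, namely the Lipschitz transfer through the logarithm.

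\begin{itemize}
\item We first fix deterministic constants $0<\underline C\le\overline C$. Since $\norm{g}{\infty}\le C_g$ by Assumption~\ref{ass:des-dual}, the cost is bounded on the compact $\calX$, and $\varepsilon\ge\varepsilon_0$, we have $\underline C\le\phi_x(z)\le\overline C$ pointwise in $(x,z)$. Hence both $\EQ\phi_x$ and $\Em\phi_x$ lie in $[\underline C,\overline C]$ for every $x$ and every sample.
\item This pointwise bound is simpler than the probabilistic argument sketched before the statement, and it removes the need for a ``with probability tending to one'' event: the empirical average of functions bounded below by $\underline C$ is itself bounded below by $\underline C$.
\item By the mean value theorem for $\log$ on $[\underline C,\overline C]$,
\[
|\widetilde L_m(x)-L(x)|\le\underline C^{-1}\,|(\Em-\EQ)\phi_x| .
\]
\item Taking the supremum over $x$ and using the hypothesis gives $O_p(b_{Q,m})$.
\end{itemize}

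\emph{Combining and the final clause.} Adding the two bounds gives $\norm{\Lhat-L}{\infty}=O_p(b_{g,N}/\varepsilon+b_{Q,m})$, using that $O_p$ terms add. This is exactly the offset-convergence clause of Assumption~\ref{ass:des-dual}, with $a_{L,N}\coloneq b_{g,N}/\varepsilon+b_{Q,m}$. For the last sentence of the statement, $\varepsilon\in[\varepsilon_0,\varepsilon_1]$ gives
\[
\varepsilon_1^{-1}b_{g,N}\le b_{g,N}/\varepsilon\le\varepsilon_0^{-1}b_{g,N}.
\]
So $b_{g,N}/\varepsilon+b_{Q,m}$ and $b_{g,N}+b_{Q,m}$ agree up to constants, and either may serve as $a_{L,N}$. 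This sequence decreases to zero, as the assumption requires.

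I expect no step to be a genuine obstacle. The only delicate points are keeping the Lipschitz constant deterministic, which the pointwise bound on $\phi_x$ handles, and noting that the constant $\underline C^{-1}$ may depend on the fixed $\varepsilon$.
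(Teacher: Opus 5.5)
Your proposal is correct and follows the paper's proof: the same triangle inequality through $\widetilde L_m$, Lemma~\ref{lem:E-offset-perturb} for the dual-perturbation term, and a Lipschitz bound on the logarithm for the empirical-average term. The one refinement is that you bound $\phi_x$ pointwise, using $\norm{g}{\infty}\le C_g$, boundedness of the cost on the compact $\calX$, and $\varepsilon\ge\varepsilon_0$; this makes the lower bound on $\Em\phi_x$ deterministic and removes the paper's with-probability-tending-to-one step, which additionally needs $\sup_{x}|(\Em-\EQ)\phi_x|=o_p(1)$.
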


\begin{proof}
The decomposition above gives
\(\norm{\Lhat-L}{\infty}
\le
\norm{\Lhat-\widetilde L_m}{\infty}
+
\norm{\widetilde L_m-L}{\infty}\).
By Lemma~\ref{lem:E-offset-perturb}, the first term is bounded by \(\Delta_{g,N}/\varepsilon\), while the target empirical-process argument bounds the second by \(C\sup_{x\in\calX}|(\E_m-\EQ)\phi_x|\). The assumed rates therefore imply
\(\norm{\Lhat-L}{\infty}
=
O_p\{b_{g,N}/\varepsilon+b_{Q,m}\}\).
For fixed \(\varepsilon\ge\varepsilon_0>0\), the factor
\(1/\varepsilon\) is absorbed into the multiplicative constant, so \(a_{L,N}\coloneq b_{g,N}+b_{Q,m}\) is an admissible rate sequence.
\end{proof}

A useful special case arises when the target empirical-process component is root-\(m\). If the class \(\mathcal F_\varepsilon\) satisfies conditions ensuring \(\sup_{x\in\calX}|(\E_m-\EQ)\phi_x|=O_p(m^{-1/2})\),
then, for fixed \(\varepsilon>0\), Proposition~\ref{prop:E-offset-rate} permits the choice \(a_{L,N}\coloneq b_{g,N}+m^{-1/2}\).
For example, such a bound follows if \(\mathcal F_\varepsilon\) is \(Q\)-Donsker with a square-integrable envelope. More generally, the Donsker condition may be replaced by any entropy or concentration condition yielding the same uniform empirical-process rate.

Although \(v_j\) is indexed by the target observations, the quantity \(b_{g,N}\) is generally a two-sample rate. The empirical dual vector is obtained from the transport problem involving both the full randomized trial covariate sample and the target sample. Thus \(b_{g,N}\) may depend on both \(n\) and \(m\). With \(N=n\wedge m\) and \(n\asymp m\), the notation \(b_{g,N}\) emphasizes this dependence.

\begin{remark}
\label{rem:E-offset-rootN}
If, in addition, \(\max_{1\le j\le m}|v_j-g(X_j^Q)|=O_p(N^{-1/2})\), \(\sup_{x\in\calX}|(\E_m-\EQ)\phi_x|=O_p(m^{-1/2})\), and \(n\asymp m\asymp N\), then, for fixed \(\varepsilon>0\), \(\norm{\Lhat-L}{\infty}=O_p(N^{-1/2})\), so one may take \(a_{L,N}\coloneq N^{-1/2}\).
This stronger conclusion requires a separate root-\(N\) stability result for the empirical dual potential and does not follow from Lemma~\ref{lem:dual-bdd}, which establishes boundedness rather than a convergence rate. \corr{Under the normalisation pairing $\Em(v)=0$,
$\EQ[g(X^Q)]=0$, one has $b_{g,N}\gtrsim m^{-1/2}$ in general, so
$a_{L,N}\asymp N^{-1/2}$ is the best rate attainable from this
decomposition.}
\end{remark}

The decomposition also clarifies the role of the regularization parameters. For fixed positive \(\varepsilon\) and \(\rho\), these parameters affect the population offset, the empirical dual problem, and the constants in the preceding bounds, but they do not generate a separate deterministic bias term
in the calibrated weight rate. If either \(\varepsilon\) or \(\rho\) is allowed to vary with \(N\), their effect on \(b_{g,N}\) and on the constants in the offset bound must be tracked explicitly.

The preceding offset bound provides the link between the transport problem and the finite-dimensional calibration problem considered next. Recall that the calibrated weight has the representation \(\rhat(x)=\exp\{\gamma\Lhat(x)+\theta^\top b_J(x)\}\). Thus, once the transport problem has produced \(\Lhat\), calibration acts only through the \(J\)-dimensional tilt parameter \(\theta\), with \(\gamma\Lhat\) playing the role of an estimated offset. At the population level the corresponding representation replaces \(\Lhat\) by \(L\).
Consequently, the discrepancy between the empirical and population calibration problems contains, in addition to ordinary trial and target sampling variation, the perturbation induced by replacing \(L\) with \(\Lhat\). Proposition~\ref{prop:E-offset-rate} controls precisely this
additional term through \(\norm{\Lhat-L}{\infty}=O_p(a_{L,N})\).
This observation allows the calibration equations to be studied as a finite-dimensional convex \(M\)-estimation problem with an estimated offset, which we formalize next.

\begin{proposition}[Calibration as convex $M$-estimation]
\label{prop:E-Mest}
Define
\begin{equation}\label{eq:E-Mn}
\begin{aligned}
  M_n(\theta)
  &\coloneq \En\Bigl[\,e^{\gamma\Lhat(X)+\theta^\top b_J(X)}\Bigr]
    -\theta^\top\E_m\bigl[b_J(X^Q)\bigr],
  \\
  M(\theta)
  &\coloneq \EP\Bigl[e^{\gamma L(X)+\theta^\top b_J(X)}\Bigr]
    -\theta^\top\EQ\bigl[b_J(X^Q)\bigr].
\end{aligned}
\end{equation}
Then:
\begin{enumerate}[label=\textup{(\roman*)},leftmargin=2.6em,itemsep=2pt,topsep=3pt]
\item $M_n$ and $M$ are convex\corr{, with Hessians \eqref{eq:E-hessian}
below; $M$ is strictly convex under Assumption~\ref{ass:des-basis}, and $M_n$
is strictly convex if and only if
$\widehat G_J=\En[b_J(X)b_J(X)^\top]$ is nonsingular}.
\item The stationarity condition $\nabla M_n(\theta)=0$ is exactly the empirical calibration constraint \eqref{eq:E-calib-E}, and $\nabla M(\theta)=0$ is exactly the population calibration equation \eqref{eq:calib-pop}, for the weight \eqref{eq:E-tiltform}\corr{ and its population counterpart}.
\item \corr{Suppose $\widehat G_J$ is nonsingular. Then $M_n$ possesses a
minimiser if and only if $\E_m[b_J(X^Q)]$ lies in the relative interior of the
convex hull $\calC_J\coloneq \mathrm{conv}\{b_J(X_i):i=1,\ldots,n\}$; in that case
the minimiser $\thetahat$ is unique, and $\rhat$ of Definition~\ref{def:emp}
is given by \eqref{eq:E-tiltform} with $\theta=\thetahat$.}
\end{enumerate}
\end{proposition}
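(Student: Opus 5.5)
The plan is to compute the derivatives of $M_n$ and $M$ directly and then read all three claims off them. For (i), differentiation under the finite sum (and under $\EP$ for $M$, justified by dominated convergence because $L$ and $b_J$ are bounded on the compact $\calX$ by Corollary~\ref{cor:E-Lpop} and Assumption~\ref{ass:des-basis}) gives
\begin{equation}
\nabla M_n(\theta)=\En\bigl[e^{\gamma\Lhat(X)+\theta^\top b_J(X)}b_J(X)\bigr]-\E_m\bigl[b_J(X^Q)\bigr],
\qquad
\nabla^2 M_n(\theta)=\En\bigl[e^{\gamma\Lhat(X)+\theta^\top b_J(X)}b_J(X)b_J(X)^\top\bigr],
\label{eq:E-hessian}
\end{equation}
and analogously for $M$ with $(\En,\E_m,\Lhat)$ replaced by $(\EP,\EQ,L)$. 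The Hessian is a positively weighted Gram matrix. For any $a\in\R^J$, $a^\top\nabla^2M_n(\theta)a=\En[w_\theta(X)(a^\top b_J(X))^2]$ with weights $w_\theta$ strictly positive at the design points. This quantity vanishes iff $a^\top b_J(X_i)=0$ for all $i$, which happens iff $a^\top\widehat G_Ja=0$. So $\nabla^2M_n$ is positive definite at every $\theta$ iff $\widehat G_J$ is nonsingular. In that case $M_n$ is strictly convex. When $\widehat G_J$ is singular, $M_n$ is affine along the null direction, so it is not strictly convex. For $M$, the weight $e^{\gamma L+\theta^\top b_J}$ is bounded below on $\calX$, so $\lambda_{\min}\ge c\,c_G>0$.

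For (ii), I set the gradient to zero. Substituting \eqref{eq:E-tiltform}, $\nabla M_n(\theta)=0$ reads $\En[\rhat(X)b_J(X)]=\E_m[b_J(X^Q)]$, which is \eqref{eq:E-calib-E}. The population case gives \eqref{eq:calib-pop} for $r=e^{\gamma L+\theta^\top b_J}$.

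For (iii), which is the main step, I use a recession/coercivity argument. Write $\bar b\coloneq\E_m[b_J(X^Q)]$. By strict convexity, a minimiser exists iff $M_n$ is coercive, i.e.\ iff for every direction $a\ne0$ the recession slope $\lim_{t\to\infty}M_n(\theta+ta)/t$ is positive. Along $a$,
\[
M_n(\theta+ta)=\En\bigl[e^{\gamma\Lhat+\theta^\top b_J}e^{t\,a^\top b_J}\bigr]-\theta^\top\bar b-t\,a^\top\bar b .
\]
There are two cases.
\begin{itemize}
\item If $\max_i a^\top b_J(X_i)>0$, the exponential dominates and the function tends to $+\infty$.
\item If $a^\top b_J(X_i)\le0$ for all $i$, the first term stays bounded, and the function tends to $+\infty$ iff $a^\top\bar b<0$.
\end{itemize}
Coercivity is therefore equivalent to the following: for every $a\ne0$ with $\max_i a^\top b_J(X_i)\le0$, we have $a^\top\bar b<0$. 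Here nonsingularity of $\widehat G_J$ ensures that $a^\top b_J(X_i)$ is not identically zero. By the separating-hyperplane characterisation of the relative interior of a polytope, this condition says exactly that $\bar b\in\mathrm{relint}\,\calC_J$.

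One subtlety is that $b_{J,1}\equiv1$ makes $\calC_J$ lie in the hyperplane $\{b_1=1\}$, as does $\bar b$. Directions $a=\pm e_1$ plus components orthogonal to the affine hull have to be handled by checking that the first coordinate supplies the growth. This is where I expect the bookkeeping to be delicate, and I would treat it by restricting to the affine hull and using nonsingularity to rule out degenerate directions.

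Uniqueness follows from strict convexity. To identify the minimiser with $\rhat$, I combine two facts. Proposition~\ref{prop:E-kkt} shows that the realised KKT multiplier gives a $\theta=-\lambda/(\varepsilon+\rho)$ whose weight \eqref{eq:E-tiltform} satisfies the calibration constraint. By (ii) that $\theta$ is a stationary point of the convex $M_n$, hence the unique minimiser $\thetahat$.
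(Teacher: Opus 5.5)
Your proposal takes essentially the same route as the paper. You use weighted-Gram Hessians for (i) and the gradient identity for (ii). For (iii) you reduce existence, via the recession function, to the cone condition $u^\top b_J(X_i)\le0$ for all $i$ $\Rightarrow$ $u^\top z<0$, with $z=\E_m[b_J(X^Q)]$. You then use separation and the KKT identification. The one local difference is how you get necessity: you use the fact that a strictly convex function with a minimiser has positive recession slope in every direction, whereas the paper checks directly that $M_n$ is unbounded below, or strictly decreasing, along any violating direction.

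The step you defer is done in the paper by the shift $\tilde u=u-(u^\top z)e_1$. Because $z$ and every $b_J(X_i)$ have first coordinate one, this shift converts a strictly separating or properly supporting functional of $\calC_J$ at $z$ into a nonzero $\tilde u$ with $\tilde u^\top b_J(X_i)\le0$ for all $i$ and $\tilde u^\top z=0$. The directions $\pm e_1$ need no separate treatment. Nonsingularity of $\widehat G_J$ enters the relative-interior direction only to rule out a nonzero $u$ that vanishes on every design point.
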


\begin{proof}
(i) Both maps are of the form
$\theta\mapsto\E[w\,e^{\theta^\top b_J}]-\theta^\top\varkappa$ with $w\ge0$. The exponential term is convex in $\theta$ as a nonnegative mixture of convex functions, and the linear term is affine. The Hessians are
\begin{equation}
  \nabla^2M_n(\theta)
  =\En\Bigl[\,e^{\gamma\Lhat+\theta^\top b_J}\,b_J b_J^\top\Bigr],
  \qquad
  \nabla^2M(\theta)
  =\EP\Bigl[e^{\gamma L+\theta^\top b_J}\,b_J b_J^\top\Bigr],
  \label{eq:E-hessian}
\end{equation}
both positive semidefinite. \corr{For fixed $\theta$ the weight
$e^{\gamma L+\theta^\top b_J}$ is bounded away from zero on the compact
$\calX$, so $\nabla^2M(\theta)\succeq c(\theta)\,G_J\succ0$ under
Assumption~\ref{ass:des-basis}; likewise
$\nabla^2M_n(\theta)\succeq c_n(\theta)\,\widehat G_J$ with $c_n(\theta)>0$,
which is positive definite exactly when $\widehat G_J$ is.}

(ii) Differentiating \eqref{eq:E-Mn},
\(\nabla M_n(\theta)
=
\En\bigl[e^{\gamma\Lhat+\theta^\top b_J}b_J\bigr]
-
\Em[b_J],\)
which vanishes exactly when
\(
\En[\rhat\,b_J]=\Em[b_J],
\)
with \(\rhat\) as in \eqref{eq:E-tiltform}. By
\eqref{eq:E-calib-E}, this is precisely \eqref{eq:E-calib}.
For the population version, differentiating \(M\) gives
\(
\nabla M(\theta)
=
\EP\bigl[e^{\gamma L+\theta^\top b_J}b_J\bigr]
-
\EQ[b_J],
\)
so \(\nabla M(\theta)=0\) is exactly the population calibration equation
\eqref{eq:calib-pop}.

{ (iii) Write $z\coloneq \E_m[b_J(X^Q)]$, and note that $z_1=1=b_{J,1}(X_i)$ for every
$i$, so $z$ and the design points all lie on the affine hyperplane
$\{u\in\R^J:u_1=1\}$; the hull $\calC_J$ has empty interior in $\R^J$, and the
relevant notion is its relative interior.

The recession function of the convex $M_n$ in the direction $u\ne0$,
\[
  M_n^\infty(u)\coloneq \lim_{t\to\infty}\frac{M_n(\theta+tu)-M_n(\theta)}{t}
  =\begin{cases}
     +\infty,&\text{if }u^\top b_J(X_i)>0\text{ for some }i,\\[2pt]
     -u^\top z,&\text{if }u^\top b_J(X_i)\le0\text{ for all }i,
   \end{cases}
\]
does not depend on $\theta$: the exponential term explodes in the first case
and is bounded in the second. If $M_n^\infty(u)>0$ for every $u\ne0$, then
$M_n$ is coercive, its sublevel sets are compact, a minimiser exists, and it
is unique by the strict convexity of part (i). Conversely, suppose
$M_n^\infty(u)\le0$ for some $u\ne0$, so that $u^\top b_J(X_i)\le0$ for all
$i$ and $u^\top z\ge0$. If $u^\top z>0$ then $M_n(\theta+tu)\to-\infty$ and no
minimiser exists. If $u^\top z=0$ then, since $\widehat G_J\succ0$ excludes
$u^\top b_J(X_i)=0$ for all $i$, at least one design point has
$u^\top b_J(X_i)<0$; hence $t\mapsto M_n(\theta+tu)$ is strictly decreasing
for \emph{every} $\theta$, and again no minimiser exists. A minimiser
therefore exists if and only if
\begin{equation}
  u^\top b_J(X_i)\le0\ \ \text{for all }i
  \quad\Longrightarrow\quad
  u^\top z<0,
  \qquad\text{for every }u\ne0.
  \label{eq:E-coercive}
\end{equation}

It remains to identify \eqref{eq:E-coercive} with
$z\in\mathrm{relint}\,\calC_J$. Suppose first that \eqref{eq:E-coercive}
holds. Then $z\in\calC_J$: otherwise, $\calC_J$ being closed, strict
separation provides $u$ with
$\max_{x\in\calC_J}u^\top x<u^\top z$, and the translated vector
$\tilde u\coloneq u-(u^\top z)e_1$ satisfies
$\tilde u^\top b_J(X_i)=u^\top b_J(X_i)-u^\top z<0$ for all $i$ (so
$\tilde u\ne0$) and $\tilde u^\top z=0$, contradicting
\eqref{eq:E-coercive} --- both computations using that all first coordinates
equal one. Similarly, if $z$ were a relative boundary point of $\calC_J$, a
supporting hyperplane proper on $\calC_J$ would provide $u$ with
$u^\top x\le u^\top z$ for all $x\in\calC_J$ and $u^\top b_J(X_i)<u^\top z$
for at least one $i$; the same translation gives $\tilde u\ne0$ with
$\tilde u^\top b_J(X_i)\le0$ for all $i$ and $\tilde u^\top z=0$, again a
contradiction. Hence $z\in\mathrm{relint}\,\calC_J$.

Conversely, suppose $z\in\mathrm{relint}\,\calC_J$ and let $u\ne0$ satisfy
$u^\top b_J(X_i)\le0$ for all $i$. Then $u^\top z\le0$, $z$ being a convex
combination of the design points. If $u^\top z=0$, the linear functional
$u^\top(\cdot)$ attains its maximum over $\calC_J$ at the relative-interior
point $z$ and is therefore constant on $\calC_J$, forcing
$u^\top b_J(X_i)=0$ for all $i$ and contradicting $\widehat G_J\succ0$. Hence
$u^\top z<0$, which is \eqref{eq:E-coercive}. Finally, the identification of
$\rhat$ with \eqref{eq:E-tiltform} at $\theta=\thetahat$ is
Proposition~\ref{prop:E-kkt} combined with \eqref{eq:E-theta}.}
\end{proof}
 
\begin{remark}[Why the tilt parametrisation]
\label{rem:why-tilt}
It is worth being explicit about the route not taken, since it is the natural first attempt and it fails. One would like to argue that $\rhat$ lies, with probability tending to one, in a fixed $W^{s,2}$ ball, and then bound $\norm{(I-\Pi_J)(\rhat-\rdag)}{L^2(P)}\le CJ^{-s/d}\norm{\rhat-\rdag}{W^{s,2}}$
for the $L^2(P)$ projection $\Pi_J$ onto $H_J$. By \eqref{eq:E-weight} this would require a bound of the form
$\sup_{\norm{\lambda}{2}\le1}\norm{\lambda^\top b_J}{W^{s,\infty}}\le B_s$ with $B_s$ independent of $J$.
 
No basis satisfies that bound together with Assumption~\ref{ass:des-basis}. Indeed, write $\mathcal B_J\coloneq \{\lambda^\top b_J:\norm{\lambda}{2}\le1\}$. The Gram condition gives $\norm{\lambda^\top b_J-\lambda'^\top b_J}{L^2(P)}^2
 =(\lambda-\lambda')^\top G_J(\lambda-\lambda')\ge c_G\norm{\lambda-\lambda'}{2}^2$,
so a $\delta$-packing of the unit sphere in $\R^J$ maps to a
$\sqrt{c_G}\,\delta$-separated subset of $\mathcal B_J$; the
\corr{packing number of $\mathcal B_J$ in $L^2(P)$ at scale $\delta$} is therefore at least $(\sqrt{c_G}/\corr{\delta})^{J-1}$, which is unbounded in $J$ at any fixed \corr{$\delta<\sqrt{c_G}$}. A uniform $W^{s,\infty}$ bound, on the other hand, would place every $\mathcal B_J$ inside one fixed Sobolev ball, which on a bounded Lipschitz domain is totally bounded in $L^2(P)$ with packing numbers \emph{independent} of $J$. The two requirements are contradictory. Concretely, for the trigonometric basis of Assumption~\ref{ass:des-basis}, a coefficient vector supported on the top frequency has $\norm{\lambda^\top b_J}{W^{s,\infty}}\asymp J^{s/d}$, and carrying that factor through the projection bound gives $J^{-s/d}\cdot J^{s/d}=O(1)$.  
 
The tilt parametrisation sidesteps this entirely. It never asks the tilt to be smooth. Smoothness is required only of the \emph{offset} $\gamma\Lhat$, which by Lemma~\ref{lem:dual-bdd} is uniformly bounded in $W^{s,\infty}$ because it is
a log-sum-exp of uniformly smooth costs; and approximation is required only of the fixed function $\log\rdag-\gamma L$, which is Assumption~\ref{ass:des-approx}. What replaces the projection argument is the $M$-estimation argument of Theorem~\ref{thm:E-rate}, which controls the $J$-dimensional parameter \corr{$\thetahat$} directly.
\end{remark}
 
\subsection{Rate of convergence for the calibrated weight}
\label{sub:E-rate}
 
Because the tilt acts on the log scale, the rate theory requires the density
ratio to be bounded away from zero as well as from above. We record this as a
separate hypothesis, since it is genuinely stronger than the overlap condition used for identification and is needed nowhere else in the supplement.
 
\begin{assumption}[Two-sided overlap]
\label{ass:E-twosided}
There are constants $0<\underline r\le\overline r<\infty$ with
$\underline r\le\rdag(x)\le\overline r$ for $P$-almost every $x$.
\end{assumption}
 
Assumption~\ref{ass:E-twosided} is used only through the boundedness of
$\log\rdag$, and hence of the population tilt $\log\rdag-\gamma L$.
 
\begin{theorem}[Calibration-driven $L^2$ rate]
\label{thm:E-rate}
Suppose Assumptions~\ref{ass:id}, \ref{ass:design} and~\ref{ass:E-twosided} hold, and \corr{let
$J=J_N\to\infty$ satisfy the growth condition
\begin{equation}
  \sqrt{J_N}\,\delta_N\to0,
  \qquad\text{where}\quad
  \delta_N\coloneq J_N^{-s/d}+\sqrt{\frac{J_N\log J_N}{N}}+a_{L,N}.
  \label{eq:E-growth}
\end{equation}}Then
\begin{equation}
  \norm{\rhat-\rdag}{L^2(P)}
  =O_p\Bigl(
      J^{-s/d}
      +\sqrt{\frac{J\log J}{N}}
      +a_{L,N}
    \Bigr)\corr{{}=O_p(\delta_N)}.
  \label{eq:E-rate}
\end{equation}
Moreover $\norm{\rhat}{\infty}=O_p(1)$, and $\rhat$ is bounded away from zero
with probability tending to one\corr{, and $\norm{\thetahat}{2}=O_p(1)$}.
\end{theorem}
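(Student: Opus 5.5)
The argument is a profiled sieve $M$-estimation in the tilt parameter of Proposition~\ref{prop:E-Mest}. \emph{Step 0} fixes the comparison point $\theta_J$ of Assumption~\ref{ass:des-approx}, so that $\psi_J\coloneq\gamma L+\theta_J^\top b_J$ satisfies $\norm{\log\rdag-\psi_J}{\infty}\le C_sJ^{-s/d}$. With $\log\rdag$ bounded by Assumption~\ref{ass:E-twosided} and $L$ bounded by Corollary~\ref{cor:E-Lpop}, the function $\theta_J^\top b_J$ is bounded in supremum norm uniformly in $J$. The Gram bound of Assumption~\ref{ass:des-basis} then gives $\norm{\theta_J}{2}\le c_G^{-1/2}\norm{\theta_J^\top b_J}{L^2(P)}=O(1)$.

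\emph{Step 1} bounds the empirical gradient at $\theta_J$. I would write
\[
\nabla M_n(\theta_J)=\En[e^{\gamma\Lhat+\theta_J^\top b_J}b_J]-\Em[b_J]
\]
and decompose it into four pieces:
\begin{itemize}
\item[(a)] the offset replacement $\En[(e^{\gamma\Lhat}-e^{\gamma L})e^{\theta_J^\top b_J}b_J]$;
\item[(b)] the trial fluctuation $(\En-\EP)[e^{\psi_J}b_J]$;
\item[(c)] the target fluctuation $(\Em-\EQ)[b_J]$;
\item[(d)] the population bias $\EP[(e^{\psi_J}-\rdag)b_J]$, which uses Lemma~\ref{lem:A-riesz} to write $\EQ[b_J]=\EP[\rdag b_J]$.
\end{itemize}
Piece (d) is the $L^2(P)$-projection coefficient vector of a function whose sup norm is $O(J^{-s/d})$. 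By Bessel's inequality under the Gram bound, its Euclidean norm is $O(J^{-s/d})$. Pieces (b) and (c) are means of bounded vectors with coordinates bounded by $B$. Matrix Bernstein, or simply second moments, gives $O_p(\sqrt{J/N})$, and $\sqrt{J\log J/N}$ is harmless. Piece (a) is the one to watch. Bounding it coordinatewise and then summing would lose a factor $\sqrt J$. Instead I write it as $\sup_{\norm{u}{2}=1}\En[\Delta\,u^\top b_J]$ with $|\Delta|\lesssim a_{L,N}$. Cauchy--Schwarz with the empirical Gram matrix then bounds it by $a_{L,N}\lambda_{\max}(\widehat G_J)^{1/2}$. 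Since $\norm{\widehat G_J-G_J}{op}=O_p(\sqrt{J\log J/N})\to0$ by matrix Bernstein, this is $O_p(a_{L,N})$. Altogether $\norm{\nabla M_n(\theta_J)}{2}=O_p(\delta_N)$.

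\emph{Step 2} establishes local strong convexity. On the ball $\{\norm{\theta-\theta_J}{2}\le \sqrt{J}^{-1}\}$, the sup-norm perturbation of the tilt is at most $B\sqrt J\norm{\theta-\theta_J}{2}\le B$. So the Hessian \eqref{eq:E-hessian} satisfies
\[
\nabla^2M_n(\theta)\succeq c\,\widehat G_J\succeq (c\,c_G/2)I
\]
with probability tending to one. The constant $c$ comes from the bounds on $\Lhat$ (Lemma~\ref{lem:dual-bdd}) and on $\theta_J^\top b_J$. The standard convexity argument applies: $M_n(\theta)>M_n(\theta_J)$ on the boundary of the ball of radius $C\norm{\nabla M_n(\theta_J)}{2}$, and this radius is $o(J^{-1/2})$ by the growth condition \eqref{eq:E-growth}. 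Hence the unique minimiser $\thetahat$ lies inside the ball and $\norm{\thetahat-\theta_J}{2}=O_p(\delta_N)$. In particular $\norm{\thetahat}{2}=O_p(1)$. Existence of $\thetahat$, and its identification with the KKT tilt, come from Proposition~\ref{prop:E-Mest}(iii), since the minimiser produced here is interior.

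\emph{Step 3} transfers the bound back to the weight. The inequality $\norm{(\thetahat-\theta_J)^\top b_J}{\infty}\le B\sqrt J\,\delta_N\to0$ gives $\log\rhat=\gamma\Lhat+\thetahat^\top b_J$ uniformly bounded, which yields the sup-norm and lower bounds on $\rhat$. Because the exponential is Lipschitz on bounded sets, $\norm{\rhat-\rdag}{L^2(P)}$ is at most a constant times the sum of three terms:
\begin{itemize}
\item $\norm{\gamma(\Lhat-L)}{\infty}=O_p(a_{L,N})$;
\item $\norm{(\thetahat-\theta_J)^\top b_J}{L^2(P)}\le C_G^{1/2}\norm{\thetahat-\theta_J}{2}$;
\item $\norm{\psi_J-\log\rdag}{\infty}=O(J^{-s/d})$.
\end{itemize}
This gives \eqref{eq:E-rate}.

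The main obstacle is Step 2. The Hessian lower bound must hold uniformly over a neighbourhood, but the exponential weight can only be controlled through the sup norm of the tilt, and that costs $\sqrt J$. This is exactly why the growth condition $\sqrt J\delta_N\to0$ appears. A second concern is that $\Lhat$ depends on the trial sample, which would break any independence needed in piece (a). Bounding (a) deterministically on the event of Assumption~\ref{ass:des-dual}, rather than through an empirical process, avoids needing that independence.
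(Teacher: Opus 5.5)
Your proposal is correct and follows the paper's proof essentially step for step. It uses the same comparison point $\theta_J$ with bounded coefficients, and the same four-way split of $\nabla M_n(\theta_J)$: the trial empirical process is indexed by the deterministic $e^{\gamma L+\theta_J^\top b_J}$, and the offset error is handled in sup norm through the empirical Gram matrix. It then applies local strong convexity on a ball of radius of order $J^{-1/2}$ and the $\sqrt J$ sup-norm transfer that uses the growth condition. The only small difference is that your second-moment bound on the two fluctuation terms gives $\sqrt{J/N}$ instead of the paper's Hoeffding-plus-union-bound $\sqrt{J\log J/N}$; this is exactly the log-free refinement the paper itself mentions in Corollary~\ref{cor:E-J}.
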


{ Condition \eqref{eq:E-growth} strengthens the familiar sieve-growth
condition $J^2\log J/N\to0$ --- which is the requirement that $\sqrt J$ times
the middle term of $\delta_N$ tend to zero --- by two further requirements
that the proof genuinely uses: $\sqrt J\,J^{-s/d}\to0$, automatic under the
smoothness floor $s>d/2$ of Assumption~\ref{ass:des-approx}, and
$\sqrt J\,a_{L,N}\to0$, a restriction on the offset error relative to the
sieve dimension. Both enter where the $\ell_2$ control of the tilt coefficient
is converted into supremum-norm control of the tilt (Steps~3 and~4 below).}

\begin{proof}
 Throughout, $C$ denotes a constant that may change from line to line and depends only on the constants appearing in Assumptions~\ref{ass:id}, \ref{ass:design}
and~\ref{ass:E-twosided}. Write
\[
  \psi\coloneq \log\rdag-\gamma L,
  \qquad
  \corr{W_L(x;\theta)\coloneq \exp\bigl\{\gamma L(x)+\theta^\top b_J(x)\bigr\},}
  \qquad
  W(x;\theta)\coloneq \exp\bigl\{\gamma\Lhat(x)+\theta^\top b_J(x)\bigr\},
\]
so that $\rhat=W(\cdot;\thetahat)$ and $\rdag=\exp(\gamma L+\psi)$\corr{; the
function $W_L$ is deterministic, while $W$ carries the estimated offset}. Let $\theta_J$ be the approximating vector of Assumption~\ref{ass:des-approx}, so that
$\norm{\psi-\theta_J^\top b_J}{\infty}\le C_sJ^{-s/d}$. \corr{We work
throughout on the event of Assumption~\ref{ass:des-dual}, whose probability
tends to one.}

\emph{Step 0: \corr{bounded exponents and design concentration}.} By
Assumption~\ref{ass:E-twosided}, we have $\norm{\log\rdag}{\infty}\le C$. \corr{By
Corollary~\ref{cor:E-Lpop}, $\norm{L}{\infty}\le C_L$, and therefore
$\norm{\psi}{\infty}\le C$ and, by the approximation bound,
$\norm{\theta_J^\top b_J}{\infty}\le C$ uniformly in $J$; moreover
$\norm{\Lhat}{\infty}=O(1)$ by Lemma~\ref{lem:dual-bdd} and
$\norm{\Lhat-L}{\infty}=O_p(a_{L,N})$.} Consequently
\corr{$W_L(\cdot;\theta_J)$, $W(\cdot;\theta_J)$} and $\rdag$ take values in a fixed compact sub-interval of $(0,\infty)$, on which the exponential and logarithm are Lipschitz. We use this repeatedly in the form
\begin{equation}
  C^{-1}|a-b|\le|e^a-e^b|\le C|a-b|
  \qquad\text{for }a,b\text{ in that interval.}
  \label{eq:E-lip}
\end{equation}
We furthermore notice that since
$c_G\norm{\theta_J}{2}^2\le\theta_J^\top G_J\theta_J
 =\norm{\theta_J^\top b_J}{L^2(P)}^2\le\norm{\theta_J^\top b_J}{\infty}^2\le C$
by Assumption~\ref{ass:des-basis}, the approximating coefficient vectors are bounded,
\begin{equation}
  \sup_J\norm{\theta_J}{2}<\infty .
  \label{eq:E-thetaJ-bdd}
\end{equation}
\corr{Finally, the empirical Gram matrix
$\widehat G_J\coloneq \En\bigl[b_J(X)b_J(X)^\top\bigr]$ is an average of $n$
independent positive semidefinite matrices with $\EP[\widehat G_J]=G_J$ and
operator norm bounded by $B^2J$; the matrix Bernstein inequality gives
\begin{equation}
  \norm{\widehat G_J-G_J}{\mathrm{op}}
  =O_p\Bigl(\sqrt{\frac{J\log J}{n}}+\frac{J\log J}{n}\Bigr)=o_p(1)
  \label{eq:E-Ghat}
\end{equation}
under \eqref{eq:E-growth}, so that, with probability tending to one,
$c_G/2\le\lambda_{\min}(\widehat G_J)\le\lambda_{\max}(\widehat G_J)\le2C_G$.}

\emph{Step 1: the score at $\theta_J$ is small.} { Using the Riesz identity
$\EP[\rdag b_J]=\EQ[b_J]$ (Lemma~\ref{lem:A-riesz}) to insert and remove
$\rdag$, decompose
\begin{equation}\label{eq:E-scoredecomp}
\begin{aligned}
  \nabla M_n(\theta_J)
  &=\underbrace{(\En-\EP)\bigl\{W_L(X;\theta_J)b_J(X)\bigr\}}_{\mathrm{(Ia)}}
  +\underbrace{\En\bigl[\{W(X;\theta_J)-W_L(X;\theta_J)\}b_J(X)\bigr]}_{\mathrm{(Ib)}}
  \\ &\qquad
  +\underbrace{\EP\bigl[\{W_L(X;\theta_J)-\rdag(X)\}b_J(X)\bigr]}_{\mathrm{(II)}}
  +\underbrace{(\EQ-\E_m)\bigl\{b_J(X^Q)\bigr\}}_{\mathrm{(III)}}.
  \end{aligned}
\end{equation}
The point of the decomposition is that the empirical process in (Ia) is
indexed by the \emph{deterministic} function $W_L(\cdot;\theta_J)$, while the
estimated offset appears only in (Ib), where it is controlled in supremum
norm; no conditioning on the dual vector is used anywhere.

For (Ia), $W_L(\cdot;\theta_J)$ is a fixed function, bounded by Step 0, so
each coordinate of $W_L(X;\theta_J)b_J(X)$ is an i.i.d.\ average of centred
variables bounded by $CB$; Hoeffding's inequality and a union bound over the
$J$ coordinates give, since $\norm{u}{2}\le\sqrt J\norm{u}{\infty}$,
$\norm{\mathrm{(Ia)}}{2}=O_p(\sqrt{J\log J/n})$.

For (Ib), by \eqref{eq:E-lip} and Step 0,
$\norm{W(\cdot;\theta_J)-W_L(\cdot;\theta_J)}{\infty}
 \le C\gamma\norm{\Lhat-L}{\infty}=O_p(a_{L,N})$; hence, for any unit vector
$u\in\R^J$, by Cauchy--Schwarz and \eqref{eq:E-Ghat},
\[
  |u^\top\mathrm{(Ib)}|
  \le\norm{W(\cdot;\theta_J)-W_L(\cdot;\theta_J)}{\infty}\,\En|u^\top b_J(X)|
  \le O_p(a_{L,N})\,\bigl\{u^\top\widehat G_Ju\bigr\}^{1/2}
  =O_p(a_{L,N}),
\]
so $\norm{\mathrm{(Ib)}}{2}=O_p(a_{L,N})$.

For (II), we avoid the crude coordinatewise bound, which would cost a spurious
factor $\sqrt J$. For any unit vector $u\in\R^J$,
\begin{align*}
  \bigl|\EP[\{W_L(X;\theta_J)-\rdag(X)\}u^\top b_J(X)]\bigr|
  &\le\norm{W_L(\cdot;\theta_J)-\rdag}{L^2(P)}\,\norm{u^\top b_J}{L^2(P)}
  \\ &\le\sqrt{C_G}\,\norm{W_L(\cdot;\theta_J)-\rdag}{L^2(P)},
\end{align*}
since $\norm{u^\top b_J}{L^2(P)}^2=u^\top G_Ju\le C_G$ by
Assumption~\ref{ass:des-basis}, and by \eqref{eq:E-lip},
$\norm{W_L(\cdot;\theta_J)-\rdag}{\infty}
 \le C\norm{\theta_J^\top b_J-\psi}{\infty}\le CJ^{-s/d}$.
Hence $\norm{\mathrm{(II)}}{2}\le CJ^{-s/d}$.

For (III), each coordinate $b_{J,j}(X^Q_\ell)-\EQ[b_{J,j}(X^Q)]$ is centred
and bounded by $2B$, so the same Hoeffding-plus-union-bound argument gives
$\norm{\mathrm{(III)}}{2}=O_p(\sqrt{J\log J/m})$.

Collecting the four bounds and using $n\asymp m\asymp N$,
\begin{equation}
  \norm{\nabla M_n(\theta_J)}{2}
  =O_p\Bigl(\sqrt{J\log J/N}+a_{L,N}+J^{-s/d}\Bigr)=O_p(\delta_N).
  \label{eq:E-score}
\end{equation}}

\emph{Step 2: local strong convexity.} By \eqref{eq:E-hessian}, for
$\theta$ in the Euclidean ball $\mathcal B\coloneq \{\norm{\theta-\theta_J}{2}\le
c_0/(B\sqrt J)\}$, where $c_0>0$ is a fixed constant, we have
$\norm{(\theta-\theta_J)^\top b_J}{\infty}\le c_0$, so that
$e^{\gamma\Lhat+\theta^\top b_J}\ge e^{-c_0}\,W(\cdot;\theta_J)\ge
\underline w>0$ on $\mathcal B$\corr{, with probability tending to one,} by Step 0. Hence
\[
  \nabla^2M_n(\theta)\succeq\underline w\,\widehat G_J,
  \qquad
  \widehat G_J\coloneq \En\bigl[b_J(X)b_J(X)^\top\bigr]
  \qquad\text{for }\theta\in\mathcal B .
\]
\corr{By \eqref{eq:E-Ghat}}, with probability tending to one,
$\lambda_{\min}(\nabla^2M_n)\ge\underline w\,c_G/2=:c_*>0$ on $\mathcal B$.

\emph{Step 3: from score to parameter.} \corr{On the event of
Assumption~\ref{ass:des-dual} the program has an interior solution, whose
multiplier supplies a stationary point of the convex $M_n$
(Proposition~\ref{prop:E-kkt} with Proposition~\ref{prop:E-Mest}(ii)); since
$\widehat G_J\succ0$ with probability tending to one by \eqref{eq:E-Ghat},
strict convexity (Proposition~\ref{prop:E-Mest}(i)) makes this stationary
point the unique minimiser $\thetahat$.} Write
\[
  s_N\coloneq \norm{\nabla M_n(\theta_J)}{2},
  \qquad
  \varrho_N\coloneq \corr{\frac{4s_N}{c_*}},
\]
so that $s_N=O_p(\delta_N)$ by \eqref{eq:E-score}. \corr{If $s_N=0$ then
$\thetahat=\theta_J$ by strict convexity and there is nothing to prove, so
suppose $s_N>0$ and} work on the event of Step~2. For a unit vector $u\in\R^J$ put $\theta=\theta_J+\varrho_Nu$; a second-order
Taylor expansion with the integral form of the remainder gives\corr{, provided
the segment stays in $\mathcal B$,}
\[
  M_n(\theta)-M_n(\theta_J)
  \ge \varrho_N\,u^\top\nabla M_n(\theta_J)+\tfrac12c_*\varrho_N^2
  \ge -\varrho_Ns_N+\tfrac12c_*\varrho_N^2\corr{= -\frac{4s^2_N}{c_*}+\frac{8s^2_N}{c_*}
  =\frac{4s^2_N}{c_*}>0 .}
\]
\corr{The segment condition $\varrho_N\le c_0/(B\sqrt J)$ holds with
probability tending to one, because $\varrho_N=O_p(\delta_N)$ and
$\sqrt J\,\delta_N\to0$ by the growth condition \eqref{eq:E-growth}.} Since $M_n$ is convex, its value \corr{everywhere} on the sphere of radius $\varrho_N$ \corr{strictly} exceeding its value at the centre forces the minimiser to lie strictly inside that sphere, so
\begin{equation}
  \norm{\thetahat-\theta_J}{2}\le\varrho_N=O_p(\delta_N).
  \label{eq:E-theta-rate}
\end{equation}

\emph{Step 4: from parameter to function.} Write
\[
  \log\rhat-\log\rdag
  =\gamma(\Lhat-L)
   +(\thetahat-\theta_J)^\top b_J
   -\bigl(\psi-\theta_J^\top b_J\bigr).
\]
The three terms are controlled respectively by
$\gamma\norm{\Lhat-L}{\infty}=O_p(a_{L,N})$, by
\[
  \norm{(\thetahat-\theta_J)^\top b_J}{L^2(P)}
  =\bigl\{(\thetahat-\theta_J)^\top G_J(\thetahat-\theta_J)\bigr\}^{1/2}
  \le\sqrt{C_G}\,\norm{\thetahat-\theta_J}{2}=O_p(\delta_N),
\]
and by $C_sJ^{-s/d}$. Hence $\norm{\log\rhat-\log\rdag}{L^2(P)}=O_p(\delta_N)$. To pass from the log scale to the weight itself we need the exponent of $\rhat$ to remain in a fixed compact interval, and this is where the \corr{growth} condition is used\corr{ a second time}. By
\eqref{eq:E-theta-rate},
\begin{equation}
  \norm{(\thetahat-\theta_J)^\top b_J}{\infty}
  \le B\sqrt J\,\norm{\thetahat-\theta_J}{2}
  =O_p\bigl(\sqrt J\,\delta_N\bigr)=o_p(1)
  \label{eq:E-supnorm}
\end{equation}
\corr{by \eqref{eq:E-growth}}, so that by Step~0 the exponent
$\gamma\Lhat+\thetahat^\top b_J$ lies, with probability tending to one, in a fixed
compact interval. Applying \eqref{eq:E-lip} once more therefore gives
\eqref{eq:E-rate}.

\emph{Step 5: boundedness.} This is now immediate. Notice that  \eqref{eq:E-supnorm} and
Step~0 show that the exponent $\gamma\Lhat+\thetahat^\top b_J$ is bounded in
probability, uniformly in $x$, and exponentiating gives
$\norm{\rhat}{\infty}=O_p(1)$ together with a positive lower bound holding with
probability tending to one. By \eqref{eq:E-thetaJ-bdd} and
\eqref{eq:E-theta-rate} we also have $\norm{\thetahat}{2}=O_p(1)$, which is the
form in which the conclusion is used in Appendix~\ref{app:F}.
\end{proof}
 
\begin{remark}[Reading the three terms]
\label{rem:E-three}
The rate \eqref{eq:E-rate} has one term for each of the three ways the
construction can go wrong, and they are worth naming. The term $J^{-s/d}$ is
\emph{sieve approximation error}: the calibration space is too coarse to
represent the population tilt, and no amount of data helps. The term
$\sqrt{J\log J/N}$ is \emph{calibration estimation error}: the $J$ moment
equations are solved on finite samples, and the error grows with $J$. These two
pull in opposite directions and are balanced in Corollary~\ref{cor:E-J}. The
third term, $a_{L,N}$, is new relative to a pure sieve analysis \corr{and is
\emph{transport-offset estimation error}: it measures how well the empirical
offset of the calibrated program approximates its population counterpart}, and it is the only
channel through which the numerical transport solve affects the statistical
rate. It is worth noting what does \emph{not} appear: \corr{no term requires
$\varepsilon$ or $\rho$ to vanish}. The $O(\varepsilon)$ imbalance of
Corollary~\ref{cor:D-imbalance} has been removed by the constraints, not by undersmoothing, and
the regularisation parameters survive only inside the offset and the exponent
$\gamma$.
\end{remark}
 
\begin{corollary}[Rate-balancing sieve dimension]
\label{cor:E-J}
\corr{Let Assumptions~\ref{ass:id}, \ref{ass:design} and~\ref{ass:E-twosided} hold}\corr{, and suppose
\begin{equation}
  a_{L,N}=o\Bigl\{\bigl(\tfrac{\log N}{N}\bigr)^{d/\{2(2s+d)\}}\Bigr\},
  \label{eq:E-aL-J}
\end{equation}}the choice
\begin{equation}
  J_N\asymp\Bigl(\frac{N}{\log N}\Bigr)^{d/(2s+d)}
  \label{eq:E-Jopt}
\end{equation}
\corr{satisfies the growth condition \eqref{eq:E-growth} --- so that Theorem~\ref{thm:E-rate} applies --- and} balances the first two terms of \eqref{eq:E-rate}, yielding
\begin{equation}
  \norm{\rhat-\rdag}{L^2(P)}
  =O_p\Bigl\{\bigl(\tfrac{\log N}{N}\bigr)^{s/(2s+d)}+a_{L,N}\Bigr\}.
  \label{eq:E-rateopt}
\end{equation}
If a vector-valued concentration inequality is used in Step~1 of the proof in
place of the coordinatewise union bound, the factors $\log J$ and $\log N$ may
be deleted, giving $J_N\asymp N^{d/(2s+d)}$ and
$\norm{\rhat-\rdag}{L^2(P)}=O_p\{N^{-s/(2s+d)}+a_{L,N}\}$ \corr{(the logarithm in the design concentration \eqref{eq:E-Ghat} remains, but affects no rate)}.
\end{corollary}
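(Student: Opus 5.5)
The plan is to reduce the corollary to Theorem~\ref{thm:E-rate}. First I verify that the proposed $J_N$ satisfies the growth condition \eqref{eq:E-growth}; then I read off the rate \eqref{eq:E-rate} with that $J_N$ substituted. Write $t_N\coloneq(\log N/N)$ and take $J_N\asymp t_N^{-d/(2s+d)}$.

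\emph{Step 1: balancing the two sieve terms.} Since $\log J_N\asymp\log N$, the two sieve terms satisfy
\[
J_N^{-s/d}\asymp t_N^{s/(2s+d)},
\qquad
\sqrt{J_N\log J_N/N}\asymp\bigl\{t_N^{-d/(2s+d)}\,t_N\bigr\}^{1/2}=t_N^{s/(2s+d)} .
\]
So both are of the same order, and $\delta_N\asymp t_N^{s/(2s+d)}+a_{L,N}$.

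\emph{Step 2: the growth condition.} I need $\sqrt{J_N}\,\delta_N\to0$, which I check term by term.
\begin{itemize}
\item For the sieve terms, $\sqrt{J_N}\,t_N^{s/(2s+d)}\asymp t_N^{(2s-d)/\{2(2s+d)\}}$. This tends to zero exactly because $s>d/2$, the smoothness floor in Assumption~\ref{ass:des-approx}. Equivalently, $J_N^2\log J_N/N\to0$.
\item For the offset term, $\sqrt{J_N}\,a_{L,N}\asymp t_N^{-d/\{2(2s+d)\}}a_{L,N}$. This tends to zero precisely by hypothesis \eqref{eq:E-aL-J}, which was stated for this purpose.
\end{itemize}
Theorem~\ref{thm:E-rate} therefore applies and gives \eqref{eq:E-rateopt}.

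\emph{Step 3: removing the logarithms.} This is the only step needing real work, and I expect it to be the main obstacle. In Step~1 of the proof of Theorem~\ref{thm:E-rate}, the terms (Ia) and (III) were bounded by Hoeffding plus a union bound, which gave $\sqrt{J\log J/N}$. I would replace this with a second-moment bound on the vector norm:
\[
\EP\norm{(\En-\EP)\{W_L(X;\theta_J)b_J(X)\}}{2}^2\le n^{-1}\EP\bigl[W_L^2\,\norm{b_J}{2}^2\bigr]\le C B^2 J/n .
\]
Markov's inequality then yields $O_p(\sqrt{J/N})$, and the same computation handles (III). Term (II) carries no logarithm, and (Ib) does not either once the Gram eigenvalues are controlled.

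The design concentration \eqref{eq:E-Ghat} still carries its $\log J$. It is used only to make $\lambda_{\min}(\widehat G_J)\ge c_G/2$, which requires $J\log J/n\to0$. That holds for $J_N\asymp N^{d/(2s+d)}$ because $d/(2s+d)<1/2$.

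Redoing Step~2 with $\delta_N=J^{-s/d}+\sqrt{J/N}+a_{L,N}$, the choice $J_N\asymp N^{d/(2s+d)}$ balances the two sieve terms at $N^{-s/(2s+d)}$. The growth condition again follows from $s>d/2$ together with the offset hypothesis. Since the remaining Steps~2--5 of that proof are unchanged, this gives the log-free rate.
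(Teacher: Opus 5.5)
Your Steps 1 and 2 are exactly the paper's proof. You balance $J^{-s/d}$ against $\sqrt{J\log J/N}$ using $\log J_N\asymp\log N$, then verify $\sqrt{J_N}\,\delta_N\to0$ term by term: the sieve terms via the floor $s>d/2$, and the offset term via \eqref{eq:E-aL-J}. For the log-free refinement the paper says only that one deletes the logarithms and repeats the computation. Your explicit version is a correct way to make that precise: the bound $\E\norm{(\En-\EP)\{W_L b_J\}}{2}^2\le CB^2J/n$ with Markov for (Ia), the same for (III), and \eqref{eq:E-Ghat} kept only to control the eigenvalues of $\widehat G_J$.

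One claim in your Step 3 fails as written, however. With $J_N\asymp N^{d/(2s+d)}$, the offset part of the growth condition is $N^{d/\{2(2s+d)\}}a_{L,N}\to0$. But \eqref{eq:E-aL-J} gives only $(N/\log N)^{d/\{2(2s+d)\}}a_{L,N}\to0$, which is weaker by a factor $(\log N)^{d/\{2(2s+d)\}}$. So the growth condition does not follow from the stated offset hypothesis. You must replace that hypothesis by its log-free analogue $a_{L,N}=o(N^{-d/\{2(2s+d)\}})$. This is what deleting the logarithms tacitly does in the paper, and it holds automatically under root-$N$ offset stability, since $d/\{2(2s+d)\}<1/2$.
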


\begin{proof}
Ignoring constants, write $\mathrm a(J)=J^{-s/d}$ and
$\mathrm b(J)=\sqrt{J\log J/N}$. The optimiser is polynomial in $N$, so
$\log J\asymp\log N$ and it suffices to solve
$J^{-s/d}\asymp\sqrt{J\log N/N}$. Squaring gives
$J^{-2s/d}\asymp J\log N/N$, so $J^{1+2s/d}\asymp N/\log N$; since
$1+2s/d=(2s+d)/d$ this is \eqref{eq:E-Jopt}. Substituting into
$\mathrm a(J_N)$ gives
$J_N^{-s/d}\asymp(\log N/N)^{s/(2s+d)}$, and $\mathrm b(J_N)$ has the same
order by construction, which is \eqref{eq:E-rateopt}.

{ It remains to verify \eqref{eq:E-growth} for this choice. First,
$\sqrt{J_N}\,J_N^{-s/d}\asymp(N/\log N)^{(d-2s)/\{2(2s+d)\}}\to0$ because
$s>d/2$ (Assumption~\ref{ass:des-approx}). Second, observe the fact that
$J_N^2\log J_N/N\asymp N^{(d-2s)/(2s+d)}(\log N)^{(2s-d)/(2s+d)}\to0$, again
because $s>d/2$, so $\sqrt{J_N}\cdot\sqrt{J_N\log J_N/N}\to0$. Third,
$\sqrt{J_N}\,a_{L,N}\asymp(N/\log N)^{d/\{2(2s+d)\}}a_{L,N}\to0$ is exactly
\eqref{eq:E-aL-J}.} Deleting the logarithmic
factors and repeating the computation gives the sharper statement.
\end{proof}
 
\begin{corollary}[Admissible sieve dimensions for efficiency]
\label{cor:E-window}
Suppose in addition that Assumption~\ref{ass:nuisance} holds with rate exponent $\beta>0$,
that $a_{L,N}=o(N^{\beta-1/2})$, and that $J=J_N$ satisfies \corr{\eqref{eq:E-growth} together with}
\begin{equation}
  J_N^{-s/d}=o\bigl(N^{\beta-1/2}\bigr)
  \qquad\text{and}\qquad
  J_N\log J_N=o\bigl(N^{2\beta}\bigr).
  \label{eq:E-window}
\end{equation}
Then the product-rate condition
$\norm{\rhat-\rdag}{L^2(P)}\,\norm{\widehat\mu_0-\mu_0}{L^2(P)}=o_p(N^{-1/2})$
holds. For polynomial $J_N=N^{\kappa}$ the admissible window is, up to
logarithmic slack,
\begin{equation}
  \frac{d(1-2\beta)}{2s}<\kappa<\corr{\min\Bigl\{2\beta,\ \frac12\Bigr\}},
  \label{eq:E-kappa}
\end{equation}
\corr{the upper edge $1/2$ arising from the growth condition
\eqref{eq:E-growth}. For $\beta\ge1/4$ the window is nonempty whenever
$s>d(1-2\beta)$, which is implied by the smoothness floor $s>d/2$; for
$0<\beta<1/4$ it is nonempty} precisely when
$s>d(1-2\beta)/(4\beta)$. In particular, with the rate-optimal choice
\eqref{eq:E-Jopt}, the product-rate condition holds whenever
\begin{equation}
  \beta+\frac{s}{2s+d}>\frac12 .
  \label{eq:E-eff}
\end{equation}
If the outcome regression attains the usual series rate
$N^{-s_\mu/(2s_\mu+d)}$ for a smoothness index $s_\mu>0$, condition
\eqref{eq:E-eff} becomes
\begin{equation}
  \frac{s}{2s+d}+\frac{s_\mu}{2s_\mu+d}>\frac12
  \qquad\Longleftrightarrow\qquad
  s\,s_\mu>\frac{d^2}{4}.
  \label{eq:E-twosob}
\end{equation}
\end{corollary}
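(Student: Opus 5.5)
The plan is to reduce everything to Theorem~\ref{thm:E-rate} combined with Assumption~\ref{ass:nuisance}. Write $b_N=\norm{\rhat-\rdag}{L^2(P)}$ and $a_N=\norm{\widehat\mu_0-\mu_0}{L^2(P)}$. Since $J_N$ satisfies \eqref{eq:E-growth}, Theorem~\ref{thm:E-rate} gives $b_N=O_p(J_N^{-s/d}+\sqrt{J_N\log J_N/N}+a_{L,N})$, and Assumption~\ref{ass:nuisance} gives $a_N=O_p(N^{-\beta})$. The product is therefore $O_p(N^{-\beta})$ times three terms, and I will show each of these three is $o(N^{\beta-1/2})$. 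The bias term is handled by the first condition in \eqref{eq:E-window}. The offset term is handled by the hypothesis $a_{L,N}=o(N^{\beta-1/2})$. For the variance term, squaring the requirement $\sqrt{J\log J/N}=o(N^{\beta-1/2})$ gives $J\log J=o(N^{2\beta})$, which is the second condition. Multiplying each bound by $N^{-\beta}$ then yields $o_p(N^{-1/2})$.

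Next I will derive the polynomial window. Set $J_N=N^\kappa$. The bias condition $N^{-\kappa s/d}=o(N^{\beta-1/2})$ holds iff $\kappa s/d>1/2-\beta$, i.e.\ $\kappa>d(1-2\beta)/(2s)$. The variance condition holds iff $\kappa<2\beta$, with the logarithm absorbed as slack. For \eqref{eq:E-growth} I will check its three parts:
\begin{itemize}
\item $\sqrt J\,J^{-s/d}\to0$ is automatic because $s>d/2$;
\item $J^2\log J/N\to0$ requires $\kappa<1/2$;
\item $\sqrt J\,a_{L,N}\to0$ is a separate requirement on $a_{L,N}$.
\end{itemize}
The last part is the one I expect to be delicate: $a_{L,N}=o(N^{\beta-1/2})$ alone does not imply it when $\kappa/2>1/2-\beta$. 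Since \eqref{eq:E-growth} is assumed in the statement, I will treat it as given and state the upper edge as $\min\{2\beta,1/2\}$. Nonemptiness of the window is then a comparison of endpoints. If $\beta\ge1/4$, the upper edge is $1/2$, and we need $d(1-2\beta)/(2s)<1/2$, i.e.\ $s>d(1-2\beta)$; this follows from $s>d/2$ because $1-2\beta\le1/2$. If $\beta<1/4$, the upper edge is $2\beta$, and we need $d(1-2\beta)/(2s)<2\beta$, i.e.\ $s>d(1-2\beta)/(4\beta)$.

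For the rate-optimal choice, Corollary~\ref{cor:E-J} gives $b_N=O_p((\log N/N)^{s/(2s+d)}+a_{L,N})$. The product with $N^{-\beta}$ is $o_p(N^{-1/2})$ provided $\beta+s/(2s+d)>1/2$, where strict inequality absorbs the logarithm; the $a_{L,N}$ term is covered by the hypothesis as before. I will note that Corollary~\ref{cor:E-J} also needs \eqref{eq:E-aL-J}, and take that as part of applying it.

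Finally, \eqref{eq:E-twosob} is algebra. Substitute $\beta=s_\mu/(2s_\mu+d)$. Using $1/2-s/(2s+d)=d/\{2(2s+d)\}$, the condition becomes $2s_\mu(2s+d)>d(2s_\mu+d)$. Expanding, this is $4ss_\mu+2s_\mu d>2s_\mu d+d^2$, which reduces to $ss_\mu>d^2/4$. The only real obstacle is bookkeeping the interaction between the growth condition and the offset rate; everything else is direct substitution.
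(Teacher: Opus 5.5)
Your proposal is correct and follows the paper's proof essentially step for step. Both arguments use the same termwise reduction of the product via Theorem~\ref{thm:E-rate} and Assumption~\ref{ass:nuisance}, the same substitution $J_N=N^\kappa$ with the upper edge $1/2$ coming from $J_N^2\log J_N=o(N)$, the same endpoint comparison for nonemptiness, and the same algebra for \eqref{eq:E-twosob}.

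Your extra bookkeeping is accurate and slightly more explicit than the paper. You note that $\sqrt{J_N}\,a_{L,N}\to0$ is not implied by $a_{L,N}=o(N^{\beta-1/2})$ when $\kappa>1-2\beta$, and that invoking Corollary~\ref{cor:E-J} needs \eqref{eq:E-aL-J}. The paper simply treats that component as part of the assumed growth condition.
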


\begin{proof}
By Theorem~\ref{thm:E-rate} and Assumption~\ref{ass:nuisance}, the product is
$O_p\{(J^{-s/d}+\sqrt{J\log J/N}+a_{L,N})N^{-\beta}\}$, and we require each of
the three terms multiplied by $N^{-\beta}$ to be $o(N^{-1/2})$. For the first
this is $J^{-s/d}=o(N^{\beta-1/2})$; for the second,
$\sqrt{J\log J/N}\,N^{-\beta}=o(N^{-1/2})$ becomes, after multiplying by
$N^{1/2+\beta}$ and squaring, $J\log J=o(N^{2\beta})$; for the third it is the
assumed bound on $a_{L,N}$. This gives \eqref{eq:E-window}.

With $J_N=N^\kappa$ the first condition reads $\kappa s/d>1/2-\beta$, that is
$\kappa>d(1-2\beta)/(2s)$, and the second reads $\kappa<2\beta$ up to
logarithmic slack. { The growth condition \eqref{eq:E-growth}
additionally requires $J_N^2\log J_N=o(N)$, that is $\kappa<1/2$ up to
logarithms (at $\kappa=1/2$ the middle term of $\sqrt J\,\delta_N$ is of
order $\sqrt{\log N}$ and fails). Together these give \eqref{eq:E-kappa}. The
window is nonempty exactly when $d(1-2\beta)/(2s)<\min\{2\beta,1/2\}$: for
$\beta<1/4$ the binding edge is $2\beta$ and the condition is
$s>d(1-2\beta)/(4\beta)$; for $\beta\ge1/4$ the binding edge is $1/2$ and the
condition is $s>d(1-2\beta)$, which $s>d/2$ implies since
$d(1-2\beta)\le d/2$ there.}

Under \eqref{eq:E-Jopt} the weight error has exponent $s/(2s+d)$ up to
logarithms, so the product with $N^{-\beta}$ is $o(N^{-1/2})$ when
$\beta+s/(2s+d)>1/2$, which is \eqref{eq:E-eff}. Substituting
$\beta=s_\mu/(2s_\mu+d)$ and multiplying \eqref{eq:E-eff} by the positive
quantity $2(2s+d)(2s_\mu+d)$ gives
$2s(2s_\mu+d)+2s_\mu(2s+d)>(2s+d)(2s_\mu+d)$; expanding both sides, \corr{the
terms $2sd$ and $2s_\mu d$ appear on both sides and cancel, the term $d^2$
moves to the right,} and what remains is
$4ss_\mu>d^2$, which is \eqref{eq:E-twosob}. At equality the polynomial orders
only match $N^{-1/2}$ and the logarithmic factors are no longer negligible, so
strict inequality or additional undersmoothing is required.

\corr{Finally, the component $\sqrt{J_N}\,a_{L,N}\to0$ of \eqref{eq:E-growth} restricts the offset
error relative to the sieve dimension; under the root-$N$ offset stability of
Remark~\ref{rem:E-offset-rootN} it holds automatically throughout the window
\eqref{eq:E-kappa}, since $\kappa<1$.}
\end{proof}
 
\begin{remark}[Where $\varepsilon$, $\rho$ and $J$ each act]
\label{rem:E-tuning}
The theory assigns the three tuning parameters sharply different roles, and
this is the practical content of the appendix. The calibration dimension $J$ is
the \emph{rate-driving} parameter: it alone appears in \eqref{eq:E-rate}
through terms that must be balanced, and Corollaries~\ref{cor:E-J}
and~\ref{cor:E-window} determine it. The entropic parameter $\varepsilon$ and
the relaxation $\rho$ are \emph{not} required to vanish; they select which
stable member of the calibrated family is computed, and they enter the
asymptotics only through the offset, whose stability is
Assumption~\ref{ass:des-dual}. Accordingly no condition of the form
$\varepsilon_N=o(N^{-\varkappa})$ appears anywhere above. Should one
nevertheless wish to let $\varepsilon_N$ or $\rho_N$ vary with $N$, the
constants in Lemma~\ref{lem:dual-bdd} and the sequence $a_{L,N}$ must be
tracked explicitly; there is no universal power law in $\varepsilon$ or $\rho$
implied by calibration alone.
 
\corr{The growth condition \eqref{eq:E-growth} imposed in
Theorem~\ref{thm:E-rate} is, up to logarithms, the requirement
$J_N=o(\sqrt{N/\log N})$ familiar from the main text, together with the
smoothness floor $s>d/2$ and the offset restriction
$\sqrt J\,a_{L,N}\to0$.} It
arises for a concrete reason, and it is worth being explicit that it is needed
for the $L^2(P)$ rate itself and not merely for the supremum-norm conclusion:
the parametrisation is exponential, so passing from the log scale --- on which
the $M$-estimation argument operates --- to the weight requires the exponent to
remain in a fixed compact interval, and converting the $\ell_2$ bound on
$\thetahat-\theta_J$ into the necessary supremum-norm bound on the tilt costs a
factor $\sqrt J$. A sharper argument, replacing the uniform Lipschitz step by a
localised exponential inequality in $L^2(P)$, \corr{might relax the middle
requirement towards} $J\log J/N\to0$; we have preferred the conservative hypothesis. This is also the sense in which a sieve that is
too rich for the sample is diagnosable: by
Proposition~\ref{prop:E-Mest}(iii), the calibration program ceases to have a
solution as soon as $\Em[b_J(Z)]$ leaves the \corr{relative interior of the}
convex hull of the design points, and the residual of \eqref{eq:E-calib} is exactly the numerical
signature of that failure.
\end{remark}

\section{Asymptotic Linearity, Efficiency, and Inference}
\label{app:F}
 
The ingredients are now in place: Appendix~\ref{app:C} supplies the influence function
and the exact remainder identity, Appendix~\ref{app:E} the rate at which the calibrated
weight approaches the Riesz representer. This appendix assembles them. We
begin with an exact algebraic decomposition of the estimation error
(Section~\ref{sub:F-expansion}), pass to the efficient central limit theorem
and a variance estimator requiring no resampling
(Section~\ref{sub:F-clt}), and close with double robustness and with a result
delimiting what calibration on a \emph{fixed} sieve can and cannot deliver
(Section~\ref{sub:F-dr}). The last of these is the sharpest statement in the
appendix: calibration is an orthogonality device, and orthogonality on a
finite-dimensional space buys consistency but not efficiency.
 
\subsection{The one-step estimator and its exact expansion}
\label{sub:F-expansion}
 
\begin{definition}[One-step estimator]
\label{def:F-onestep}
Let $\muhat$ be the cross-fitted outcome regression of Assumption~\ref{ass:nuisance} and
$\rhat$ the calibrated weight of Definition~\ref{def:emp}. The one-step estimator is
\begin{equation}
  \that
  \coloneq \Em\bigl[Y^{\mathrm Q}-\muhat(Z)\bigr]
    -\En\Bigl[A\,\rhat(X)\bigl\{Y-\muhat(X)\bigr\}\Bigr].
  \label{eq:F-onestep}
\end{equation}
\corr{On the trial side the foldwise convention of Section~\ref{sub:A-notation} applies:
$\muhat(X_i)$ means $\muhat^{(-k)}(X_i)$ for $i\in\calI_k$. On the target
side $\muhat$ denotes the fold average $\bar\mu_0=K^{-1}\sum_k\muhat^{(-k)}$,
as in the main text, so that \eqref{eq:F-onestep} coincides with the
estimator of Section~\ref{sec:estimation} of the main text.}
\end{definition}

Throughout we write $\dmu\coloneq \muhat-\mu_0$ and $\dr\coloneq \rhat-\rdag$\corr{, and
$\EP$, $\EQ$ applied to expressions containing fitted nuisances denote
integration over a fresh observation with the fitted functions held fixed
(the frozen-argument convention); with this convention every identity below
is exact}.
 
\begin{theorem}[Exact stochastic decomposition]
\label{thm:F-decomp}
Under Assumption~\ref{ass:id} \corr{(with $\dmu,\dr\in L^2(P)$, supplied on an event
of probability tending to one by Assumption~\ref{ass:nuisance} and Theorem~\ref{thm:E-rate})},
\begin{equation}
  \that-\tau_Q
  =(\Em-\EQ)\bigl\{Y^{\mathrm Q}-\mu_0(Z)\bigr\}
   -(\En-\EP)\Bigl[A\,\rdag(X)\bigl\{Y-\mu_0(X)\bigr\}\Bigr]
   +R_{1n}+R_{2n}+R_{3n},
  \label{eq:F-decomp}
\end{equation}
where
\begin{align}
  R_{1n}&=-(\Em-\EQ)\bigl\{\dmu(Z)\bigr\},
  \label{eq:F-R1}\\
  R_{2n}&=(\En-\EP)\bigl\{A\,\rhat(X)\dmu(X)\bigr\}
          -(\En-\EP)\Bigl[A\,\dr(X)\bigl\{Y-\mu_0(X)\bigr\}\Bigr],
  \label{eq:F-R2}\\
  R_{3n}&=\EP\bigl[\dr(X)\,\dmu(X)\bigr].
  \label{eq:F-R3}
\end{align}
The identity is exact: no expansion or approximation is involved.
\end{theorem}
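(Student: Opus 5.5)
The plan is a purely algebraic verification: add and subtract population quantities, then collect the terms. I begin from the definition \eqref{eq:F-onestep}. On the trial side I write $\muhat=\mu_0+\dmu$ and $\rhat=\rdag+\dr$, so that
\[
A\rhat\{Y-\muhat\}=A\rdag\{Y-\mu_0\}+A\,\dr\{Y-\mu_0\}-A\rhat\,\dmu .
\]
On the target side I write $Y^{\mathrm Q}-\muhat(Z)=\{Y^{\mathrm Q}-\mu_0(Z)\}-\dmu(Z)$. Each empirical average $\Em$ or $\En$ is then split as $(\Em-\EQ)+\EQ$ or $(\En-\EP)+\EP$, with $\EP$ and $\EQ$ read under the frozen-argument convention. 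This is legitimate because $\dmu,\dr\in L^2(P)$ on the stated event, and Lemma~\ref{lem:A-invrand} guarantees that the expectations are finite.

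The centred pieces give the stated terms directly. The term $(\Em-\EQ)\{Y^{\mathrm Q}-\mu_0\}$ is the first leading term. The term $-(\Em-\EQ)\dmu$ is $R_{1n}$. The term $-(\En-\EP)[A\rdag(Y-\mu_0)]$ is the second leading term. The remaining trial pieces, $+(\En-\EP)\{A\rhat\dmu\}-(\En-\EP)[A\dr(Y-\mu_0)]$, are exactly $R_{2n}$, with signs tracked from the minus in front of $\En$. What is left is the deterministic part,
\[
\EQ[Y^{\mathrm Q}-\mu_0(Z)]-\EQ[\dmu(Z)]-\EP[A\rdag(Y-\mu_0)]-\EP[A\dr(Y-\mu_0)]+\EP[A\rhat\dmu].
\]
I evaluate it term by term. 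The first term is $\tau_Q$ by Proposition~\ref{prop:C-id}. The third and fourth terms vanish by \eqref{eq:invrand2} with $\mu=\mu_0$. The fifth term equals $\EP[\rhat\,\dmu]$, again by Lemma~\ref{lem:A-invrand}; the factor $A$ integrates out because $\rhat\dmu$ depends only on $X$ and $\EP[A\mid X]=1$. The second term equals $-\EP[\rdag\dmu]$ by the Riesz identity of Lemma~\ref{lem:A-riesz}. Adding these gives $\tau_Q+\EP[(\rhat-\rdag)\dmu]=\tau_Q+R_{3n}$. Equivalently, this is Proposition~\ref{prop:C-rem} applied at the frozen pair $(\muhat,\rhat)$.

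The step needing most care is not the algebra but the justification of the frozen-argument convention. On the target side $\muhat$ is the fold average $\bar\mu_0$, whose training used only trial controls, so it is independent of the target sample and $\EQ$ of it is well defined. The weight $\rhat$, however, depends on the very trial covariates being averaged. The identity remains exact because $\EP$ denotes integration over a fresh observation with the fitted functions held fixed; the dependence of $\rhat$ on the sample is pushed entirely into $R_{2n}$, where the later empirical-process arguments handle it.

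One further point needs checking for the cross-fitted $\muhat$: the fold-specific fits $\muhat^{(-k)}$ on the trial side differ from $\bar\mu_0$ on the target side. I would define $\dmu$ foldwise in $R_{2n}$ and $R_{3n}$ and let $(\En-\EP)$ act foldwise, so that exactness holds fold by fold. With $K$ fixed, $R_{3n}$ then becomes an average over folds of the corresponding products, which is consistent with the convention of Section~\ref{sub:A-notation}.
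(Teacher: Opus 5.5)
Your proof is correct and is essentially the paper's argument. The paper splits the trial term into four pieces, centring $A\rdag\dmu$ and $A\dr\dmu$ separately and then recombining them into $(\En-\EP)\{A\rhat\dmu\}$; you keep $A\rhat\dmu$ intact and obtain $R_{3n}$ as $\EP[\rhat\dmu]-\EP[\rdag\dmu]$, which is the same computation.

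Your cross-fitting remark is more careful than the paper's proof, but the exactness does not really hold fold by fold. The target term involves $\bar\mu_0=K^{-1}\sum_k\muhat^{(-k)}$, so cancelling $\EQ[\bar\mu_0-\mu_0]$ against $\sum_k (n_k/n)\,\EP[\rdag\{\muhat^{(-k)}-\mu_0\}]$ requires equal fold sizes, or else $\bar\mu_0$ weighted by $n_k/n$. The paper also assumes this tacitly. Without it, an extra small but nonzero term survives.
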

 
\begin{proof}
Substitute $\muhat=\mu_0+\dmu$ in the target term of \eqref{eq:F-onestep} and
centre:
\begin{equation}
  \Em\bigl[Y^{\mathrm Q}-\muhat(Z)\bigr]
  =(\Em-\EQ)\bigl\{Y^{\mathrm Q}-\mu_0(Z)\bigr\}+\tau_Q
   -\Em\bigl[\dmu(Z)\bigr],
  \label{eq:F-target}
\end{equation}
using $\EQ[Y^{\mathrm Q}-\mu_0(Z)]=\tau_Q$ from Proposition~\ref{prop:C-id}.
 
For the trial term, substitute $\rhat=\rdag+\dr$ and
$Y-\muhat=(Y-\mu_0)-\dmu$ and expand into four pieces:
\begin{equation}
  -\En\bigl[A\rhat(Y-\muhat)\bigr]
  =-\En\bigl[A\rdag(Y-\mu_0)\bigr]
   +\En\bigl[A\rdag\dmu\bigr]
   -\En\bigl[A\dr(Y-\mu_0)\bigr]
   +\En\bigl[A\dr\dmu\bigr].
  \label{eq:F-fourpieces}
\end{equation}
We centre each piece in turn, using \corr{\eqref{eq:invrand} and} \eqref{eq:invrand2} of Lemma~\ref{lem:A-invrand} to compute
the population values.
 
The first piece has population value
$\EP[A\rdag(Y-\mu_0)]=\EP[\rdag(\mu_0-\mu_0)]=0$, so it equals
$-(\En-\EP)[A\rdag(Y-\mu_0)]$, the leading trial term of \eqref{eq:F-decomp}.
The third piece likewise has population value zero and equals
$-(\En-\EP)[A\dr(Y-\mu_0)]$; note that this is legitimate for any fixed $\dr$,
since \eqref{eq:invrand2} holds whatever the multiplier.
 
The second piece has population value
$\EP[A\rdag\dmu]=\EP[\rdag\dmu]=\EQ[\dmu(Z)]$, the first equality by
\corr{\eqref{eq:invrand} with $G$ a function of $X$ alone} and the second by the Riesz
identity of Lemma~\ref{lem:A-riesz}. Hence
$\En[A\rdag\dmu]=(\En-\EP)[A\rdag\dmu]+\EQ[\dmu(Z)]$. The fourth piece has
population value $\EP[A\dr\dmu]=\EP[\dr\dmu]=R_{3n}$, so
$\En[A\dr\dmu]=(\En-\EP)[A\dr\dmu]+R_{3n}$.
 
It remains to collect. The term $-\Em[\dmu]$ from \eqref{eq:F-target} and the
term $+\EQ[\dmu]$ from the second piece combine to
$-(\Em-\EQ)\{\dmu\}=R_{1n}$. The two centred empirical processes
$(\En-\EP)[A\rdag\dmu]$ and $(\En-\EP)[A\dr\dmu]$ combine, by linearity, into
the single term $(\En-\EP)[A\rhat\dmu]$, which together with
$-(\En-\EP)[A\dr(Y-\mu_0)]$ is exactly $R_{2n}$; each of these two empirical
processes is used once and once only. What is left is $R_{3n}$ and the two
leading terms, giving \eqref{eq:F-decomp}.
\end{proof}
 
\begin{lemma}[Negligibility of the remainders]
\label{lem:F-rem}
Suppose Assumptions~\ref{ass:id}, \ref{ass:design}, \ref{ass:nuisance} and~\ref{ass:E-twosided} hold, that $J=J_N$ satisfies
\corr{the growth condition \eqref{eq:E-growth} of Theorem~\ref{thm:E-rate} together with
$J_N\log N=o(\sqrt N)$}, that the trial residual is almost surely bounded,
\begin{equation}
  |Y-\mu_0(X)|\le C_R<\infty \quad\text{almost surely on }\{T=0\},
  \label{eq:F-bddres}
\end{equation}
and that\corr{, with $\delta_N$ the deterministic rate of Theorem~\ref{thm:E-rate} and
$N^{-\beta}$ that of Assumption~\ref{ass:nuisance},}
\begin{equation}
  \sqrt{J_N\log N}\;\corr{\delta_N}\to0,
  \qquad
  \sqrt{J_N\log N}\;\corr{N^{-\beta}}\to0.
  \label{eq:F-EPcond}
\end{equation}
Then $R_{1n}=o_p(N^{-1/2})$ and $R_{2n}=o_p(N^{-1/2})$. If in addition
\begin{equation}
  \norm{\dr}{L^2(P)}\,\norm{\dmu}{L^2(P)}=o_p(N^{-1/2}),
  \label{eq:F-product}
\end{equation}
then $R_{3n}=o_p(N^{-1/2})$ as well.
\end{lemma}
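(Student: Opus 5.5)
The plan is to treat the three remainders separately. $R_{3n}$ is immediate. $R_{1n}$ and the parts of $R_{2n}$ that involve $\dmu$ but not $\dr$ are handled by conditional Chebyshev arguments made possible by cross-fitting. The parts involving the non-cross-fitted weight $\rhat$ are handled by a maximal inequality over the exponential-tilt class of Proposition~\ref{prop:E-Mest}.

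For $R_{3n}$, Cauchy--Schwarz gives $|R_{3n}|\le\norm{\dr}{L^2(P)}\norm{\dmu}{L^2(P)}$, which is $o_p(N^{-1/2})$ by \eqref{eq:F-product}.

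For $R_{1n}$, condition on the trial sample. The fold average $\bar\mu_0$ is then a fixed function, and by Assumption~\ref{ass:nuisance} the target observations are out of sample for it. Hence $(\Em-\EQ)\{\dmu(Z)\}$ is a centred average of i.i.d.\ terms with conditional variance at most $m^{-1}\EQ[\dmu^2]$. By \eqref{eq:PtoQ} this is at most $m^{-1}\overline r\,\norm{\dmu}{L^2(P)}^2=o_p(m^{-1})$, and conditional Chebyshev gives $R_{1n}=o_p(N^{-1/2})$.

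For $R_{2n}$, write $\rhat=\rdag+\dr$ in the first empirical process. This gives three pieces: $(\En-\EP)\{A\rdag\dmu\}$, $(\En-\EP)\{A\dr\dmu\}$ and $-(\En-\EP)\{A\dr(Y-\mu_0)\}$.

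The first piece is handled fold by fold. Conditionally on $\calI_{-k}$ the integrand is fixed, and it is bounded by $\overline r\,\underline e^{-1}|\dmu|$. Its conditional variance is therefore $O(\norm{\dmu}{L^2(P)}^2)=o_p(1)$, and with $K$ fixed, Chebyshev gives $o_p(n^{-1/2})$.

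The second and third pieces cannot be conditioned away, because $\dr$ depends on the whole trial sample. On the event of Assumption~\ref{ass:des-dual} and Theorem~\ref{thm:E-rate}, I would place $\rhat$ in the random-free class
\[
\mathcal F_N\coloneq\bigl\{x\mapsto e^{\gamma\ell(x)+\theta^\top b_J(x)}-\rdag(x):\ \norm{\ell}{W^{s,\infty}}\le C_L+C_L',\ \norm{\ell-L}{\infty}\le Ma_{L,N},\ \norm{\theta-\theta_J}{2}\le M\delta_N\bigr\}.
\]
By Corollary~\ref{cor:E-Lpop}, \eqref{eq:E-theta-rate} and \eqref{eq:E-supnorm}, $\rhat\in\mathcal F_N$ with probability at least $1-\eta$ for $M$ large. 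The growth condition \eqref{eq:E-growth} keeps the exponent in a fixed compact interval, so every $f\in\mathcal F_N$ has a uniform envelope. By the argument of Step~4 of Theorem~\ref{thm:E-rate}, each $f$ also satisfies $\norm{f}{L^2(P)}\lesssim\delta_N$.

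For the third piece I multiply by $A(Y-\mu_0)$, which is bounded by \eqref{eq:F-bddres}. For the second piece, I work within a fold, condition on $\calI_{-k}$, and multiply by $A\,\dmu$, which is bounded by $C_\mu+C_0$.

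The main obstacle is the entropy bound for these product classes. The $\theta$-part is a $J$-dimensional Lipschitz image of a Euclidean ball, with $\log N(\epsilon)\lesssim J\log(\delta_N/\epsilon)$. The $\ell$-part is a subset of a $W^{s,\infty}$ ball, with sup-norm entropy $\lesssim\epsilon^{-d/s}$, which is integrable because $s>d/2$. A Bernstein-type maximal inequality with bounded envelope (van der Vaart--Wellner, Lemma~3.4.2) then gives
\[
\sqrt n\,\E\sup_{\mathcal F_N}|(\En-\EP)f|\lesssim \delta_N\sqrt{J\log N}+\delta_N^{1-d/(2s)}+\frac{J\log N}{\sqrt n}.
\]
The first term vanishes by \eqref{eq:F-EPcond}, the second because $\delta_N\to0$ and $s>d/2$, and the third because $J_N\log N=o(\sqrt N)$.

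For the second piece, if the relevant $L^2$ radius is taken as $N^{-\beta}$ rather than $\delta_N$, the same bound applies with $N^{-\beta}$ in place of $\delta_N$ in the first term, and the second condition in \eqref{eq:F-EPcond} covers it.

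Combining the three pieces gives $R_{2n}=o_p(N^{-1/2})$. If the joint $(\ell,\theta)$ entropy computation proves awkward, the fallback is to first replace $\Lhat$ by $L$ inside the class, leaving only the $\theta$-indexed family, and to control the offset perturbation separately through the Sobolev-ball term above.
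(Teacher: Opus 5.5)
Your proposal follows the paper's proof essentially step for step. It uses the same three-way split of $R_{2n}$, foldwise conditional Chebyshev (via cross-fitting) for $R_{1n}$ and for $(\En-\EP)\{A\rdag\dmu\}$, and Cauchy--Schwarz for $R_{3n}$. It also uses a bracketing maximal inequality over a deterministic class (the paper's $\mathcal R_J$) that combines a $J$-dimensional tilt ball with a $W^{s,\infty}$ ball for the offset, whose entropy integral converges because $s>d/2$.

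One small caution: at radius $\sigma$ that maximal inequality also produces a term of order $\sigma^{-d/s}/\sqrt n$. This term is harmless at $\sigma\asymp\delta_N\ge J^{-s/d}$, where it is at most $J/\sqrt n$, but not at $\sigma=N^{-\beta}$ alone. So for the $A\dr\dmu$ piece, either keep the radius $\delta_N$ (valid since $\dmu$ is bounded) or use the floored radius $\norm{\dmu^{(-k)}}{L^2(P)}\vee J^{-s/d}$ as the paper does.
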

 
\begin{proof}
{ Throughout we work on the event $\mathcal E_N$ --- of probability tending to one
--- on which the conclusions of Theorem~\ref{thm:E-rate} and Corollary~\ref{cor:E-Lpop}
hold with a fixed constant $K$:
$\norm{\thetahat-\theta_J}{2}\le K\delta_N$,
$\norm{(\thetahat-\theta_J)^\top b_J}{\infty}\le1$ by \eqref{eq:E-supnorm},
$\norm{\Lhat-L}{\infty}\le K a_{L,N}$, and $\Lhat-L$ lies in the fixed
Sobolev ball $\mathcal B_W\coloneq \{\ell:\norm{\ell}{W^{s,\infty}(\calX)}\le
C_L+C_L'\}$. Since every assertion is probabilistic, this loses no
generality.

\emph{Step 0: a deterministic function class.} Factor the fitted weight as
$\rhat=e^{\gamma(\Lhat-L)}\,W_L(\cdot;\thetahat)$, with
$W_L(x;\theta)=e^{\gamma L(x)+\theta^\top b_J(x)}$ as in the proof of
Theorem~\ref{thm:E-rate}, and define
\small{\[
  \mathcal R_J\coloneq \Bigl\{x\mapsto e^{\gamma\ell(x)}W_L(x;\theta):\
  \ell\in\mathcal B_W,\ \norm{\ell}{\infty}\le K a_{L,N},\
  \norm{\theta-\theta_J}{2}\le K\delta_N,\
  \norm{(\theta-\theta_J)^\top b_J}{\infty}\le1\Bigr\}.
\]}
The class is deterministic, and on $\mathcal E_N$ it contains $\rhat$; the
randomness of $(\thetahat,\Lhat)$ enters only through membership. Three
properties are used repeatedly. \emph{Envelope:} by Step~0 of Theorem~\ref{thm:E-rate}
the exponent of any member is bounded, so
$\sup_{r\in\mathcal R_J}\norm{r}{\infty}\le U<\infty$ with $U$ fixed.
\emph{Radius:} by \eqref{eq:E-lip},
$\norm{r-\rdag}{L^2(P)}\le
 C\{\norm{\ell}{\infty}+\norm{(\theta-\theta_J)^\top b_J}{L^2(P)}
 +\norm{\theta_J^\top b_J-\psi}{\infty}\}
 \le C\{Ka_{L,N}+\sqrt{C_G}\,K\delta_N+C_sJ^{-s/d}\}\le\sigma_N$
for a deterministic $\sigma_N\asymp\delta_N$, and $\sigma_N\ge cJ^{-s/d}$ by
construction. \emph{Entropy:} the tilt component is a $J$-dimensional
family, Lipschitz in $\theta$ on the stated ball, so its $L^2(P)$ bracketing
numbers obey $\log N_{[\,]}(\epsilon)\le CJ\log(CBJ/\epsilon)$; the offset
component $\{e^{\gamma\ell}:\ell\in\mathcal B_W\}$ obeys
$\log N_{[\,]}(\epsilon)\le C\epsilon^{-d/s}$ by the classical entropy bound
for Sobolev balls on a bounded Lipschitz domain
\citep[Corollary~2.7.2]{vaart1996weak}; and entropies of uniformly bounded
classes add under products. Hence
\begin{equation}
  \log N_{[\,]}\bigl(\epsilon,\mathcal R_J,L^2(P)\bigr)
  \le C\bigl\{J\log(CBJ/\epsilon)+\epsilon^{-d/s}\bigr\},
  \label{eq:F-entropy}
\end{equation}
and since $s>d/2$ (Assumption~\ref{ass:des-approx}) the entropy integral
$\int_0^{\sigma}\sqrt{1+\log N_{[\,]}(\epsilon)}\,d\epsilon
 \lesssim\sigma\sqrt{J\log N}+\sigma^{1-d/(2s)}$ converges.

\emph{(i) The term $R_{1n}$.} Let $\calT_R$ denote the $\sigma$-field
generated by the entire trial sample. The target-side prediction
$\bar\mu_0$ is $\calT_R$-measurable, the target sample is independent of
$\calT_R$, and conditionally on $\calT_R$ the summands of
$-m^{-1}\sum_j\{\dmu(X^Q_j)-\EQ\dmu\}$ are centred and independent, so
\[
  \Var\bigl[\sqrt m\,R_{1n}\mid\calT_R\bigr]
  \le\EQ\bigl[\dmu(Z)^2\bigr]
  \le\overline r\,\norm{\dmu}{L^2(P)}^2=o_p(1)
\]
by \eqref{eq:PtoQ}. For $u>0$,
$\P(\sqrt m|R_{1n}|>u)
 \le\E\bigl[\min\{u^{-2}\Var(\sqrt mR_{1n}\mid\calT_R),1\}\bigr]\to0$
by dominated convergence, whence $R_{1n}=o_p(m^{-1/2})$.

\emph{(ii) The second term of $R_{2n}$.} Consider the deterministic class
\[
  \mathcal G_J
  \coloneq \Bigl\{(x,t,y)\mapsto\tfrac{1-t}{e_0(x)}\{r(x)-\rdag(x)\}\{y-\mu_0(x)\}:
     r\in\mathcal R_J\Bigr\}.
\]
Every member is centred given $X$ (by \eqref{eq:invrand}, each $r-\rdag$
being bounded), has envelope
$\underline e^{-1}(U+\overline r)C_R$ by \eqref{eq:F-bddres}, and $L^2(P)$
norm at most $\sqrt{C_Y/\underline e}\,\sigma_N$; multiplying by the fixed
bounded function $(x,t,y)\mapsto\tfrac{1-t}{e_0(x)}\{y-\mu_0(x)\}$ rescales
brackets by its supremum and so preserves the entropy bound
\eqref{eq:F-entropy} up to constants. The bracketing maximal inequality
\citep[Lemma~3.4.2]{vaart1996weak} gives, unconditionally,
\begin{equation}
  \E\Bigl[\sup_{g\in\mathcal G_J}\bigl|\sqrt n(\En-\EP)g\bigr|\Bigr]
  \;\lesssim\;
  \sigma_N\sqrt{J\log N}+\sigma_N^{1-d/(2s)}
  +\frac{J\log N+\sigma_N^{-d/s}}{\sqrt n}.
  \label{eq:F-maximal}
\end{equation}
The first term vanishes by the first half of \eqref{eq:F-EPcond}; the second
because $s>d/2$ and $\sigma_N\to0$; in the third, $J\log N/\sqrt n\to0$ is
the hypothesis $J_N\log N=o(\sqrt N)$, and
$\sigma_N^{-d/s}\le CJ$ because $\sigma_N\ge cJ^{-s/d}$. Since
$\rhat\in\mathcal R_J$ on $\mathcal E_N$,
$(\En-\EP)[A\dr(Y-\mu_0)]=o_p(n^{-1/2})$. No conditioning on the dual
vector, and no independence between $\Lhat$ and the trial sample, is used
anywhere: the class is fixed in advance.

\emph{(iii) The first term of $R_{2n}$.} Decompose
$(\En-\EP)[A\rhat\dmu]=(\En-\EP)[A\rdag\dmu]+(\En-\EP)[A\dr\dmu]$ and
treat both summands foldwise, writing $\En=\sum_k(n_k/n)\E_{n,k}$. For the
first, condition on the training $\sigma$-field $\calT_k$ of fold $k$:
$\dmu^{(-k)}$ is then a fixed function, the fold-$k$ observations are
i.i.d.\ and independent of $\calT_k$, and conditional Chebyshev with
$\EP[A^2f(X)]\le\underline e^{-1}\EP[f(X)]$ gives
\[
  \Var\bigl[\sqrt{n_k}\,(\E_{n,k}-\EP)[A\rdag\dmu^{(-k)}]\mid\calT_k\bigr]
  \le\underline e^{-1}\,\overline r^{\,2}\,
     \norm{\dmu^{(-k)}}{L^2(P)}^2
     \;+\;\underline e^{-1}\overline r^{\,2}C_Y\norm{\dmu^{(-k)}}{L^2(P)}^2
  =o_p(1),
\]
where the two contributions bound the conditional variances of the
$X$-measurable part and of the residual part. For the second summand,
condition again on $\calT_k$: the function $\dmu^{(-k)}$ is fixed and bounded
by $C_\mu+C_0$ (Assumptions~\ref{ass:nuisance} and \ref{ass:id-mom}), and the class
$\{A(r-\rdag)\dmu^{(-k)}:r\in\mathcal R_J\}$ is deterministic given
$\calT_k$, with bounded envelope
$\underline e^{-1}(U+\overline r)(C_\mu+C_0)$, entropy
\eqref{eq:F-entropy}, and $L^2(P)$ radius
$\varsigma_k\coloneq C\bigl(\norm{\dmu^{(-k)}}{L^2(P)}\vee J^{-s/d}\bigr)$, a
$\calT_k$-measurable quantity to which the conditional form of
\eqref{eq:F-maximal} applies. The resulting bound,
$\varsigma_k\sqrt{J\log N}+\varsigma_k^{1-d/(2s)}
 +\{J\log N+\varsigma_k^{-d/s}\}/\sqrt{n_k}$,
is $o_p(1)$ by the second half of \eqref{eq:F-EPcond} (note
$\sqrt{J\log N}\,J^{-s/d}\to0$ under \eqref{eq:F-EPcond} as well), $s>d/2$,
$J\log N=o(\sqrt N)$, and $\varsigma_k^{-d/s}\le CJ$ by the floor
$J^{-s/d}$ built into the radius.

\emph{(iv) The term $R_{3n}$.} Under the frozen-argument convention,
Cauchy--Schwarz gives the upper bound 
$|R_{3n}|\le\norm{\dr}{L^2(P)}\norm{\dmu}{L^2(P)}$, which is
$o_p(N^{-1/2})$ by \eqref{eq:F-product}.}
\end{proof}
 
\begin{remark}[Why $\rhat$ need not be cross-fitted, and what it costs]
\label{rem:F-crossfit}
The two nuisances make different demands, and the proof above shows exactly
why. Cross-fitting is required for $\muhat$, whose errors are correlated with
the very residuals it is used to centre; it is what makes the conditional
Chebyshev arguments in (i) and (iii) available. It is \emph{not} required for
$\rhat$, which is a function of covariates and of the target sample only and is
therefore free of the trial outcomes. What replaces cross-fitting is the
empirical-process control \eqref{eq:F-maximal}, available because
Theorem~\ref{thm:E-rate} \corr{and Corollary~\ref{cor:E-Lpop} confine $\rhat$, with
probability tending to one, to the deterministic class $\mathcal R_J$, whose
entropy \eqref{eq:F-entropy} combines a $J$-dimensional tilt with a fixed
Sobolev ball for the estimated offset}.
 
The price is the pair of conditions \eqref{eq:F-EPcond}, and it is worth
recording what they amount to. Under the rate-optimal choice
$J_N\asymp(N/\log N)^{d/(2s+d)}$ of Corollary~\ref{cor:E-J} and the rate
\eqref{eq:E-rateopt}, the first condition reads, up to logarithmic factors,
\[
  N^{d/\{2(2s+d)\}}\cdot N^{-s/(2s+d)}
  =N^{(d-2s)/\{2(2s+d)\}}\to0
  \qquad\Longleftrightarrow\qquad
  s>\frac d2 ,
\]
\corr{together with the offset component
$\sqrt{J_N\log N}\,a_{L,N}\to0$, a $\sqrt{\log N}$-strengthening of the
$\sqrt{J_N}\,a_{L,N}\to0$ already contained in \eqref{eq:E-growth},
automatic under the root-$N$ offset stability of
Remark~\ref{rem:E-offset-rootN}; the smoothness floor $s>d/2$ is part of
Assumption~\ref{ass:des-approx}, so at the rate-optimal sieve the first condition is
automatic}. The second
condition is implied by \corr{$J_N\log N=o(N^{2\beta})$, which for
polynomial $J_N$ coincides with the second condition in \eqref{eq:E-window}
and follows from \eqref{eq:E-Jopt} whenever $\beta+s/(2s+d)>1/2$}. It should be said plainly that the assertion
``$\rhat$ lies in a finite-dimensional and hence Donsker class'' is correct at
\emph{fixed} $J$ but is not by itself enough when $J=J_N\to\infty$: the entropy
of $\mathcal R_{J_N}$ grows, and \eqref{eq:F-EPcond} is what compensates.
 
A second price is the bounded-residual condition \eqref{eq:F-bddres}, which the
maximal inequality needs and which the $(2+\delta)$-moment condition of
Assumption~\ref{ass:id-mom} does not supply, a bounded second moment not being a bounded
envelope. Condition \eqref{eq:F-bddres} is harmless for the bounded and count
outcomes of the applications we have in mind, and it may be dispensed with in
either of two ways: by a truncation argument combined with a maximal inequality
for classes with an $L^{2+\delta}$ envelope, or --- more simply --- by
cross-fitting $\rhat$ as well. In the latter case $\dr$ is fixed on the
evaluation fold given \corr{the foldwise training $\sigma$-field $\calT_k$},
the empirical-process argument of (ii) is replaced by the \corr{foldwise}
conditional Chebyshev bound
\[
  \Var\Bigl\{\sqrt {n_k}(\E_{n,k}-\EP)\bigl[A\,\dr(X)\{Y-\mu_0(X)\}\bigr]
    \;\Big|\;\calT_k\Bigr\}
  \le C_Y\underline e^{-\corr{1}}\norm{\dr}{L^2(P)}^2=o_p(1),
\]
and both \eqref{eq:F-bddres} and \eqref{eq:F-EPcond} may be dropped. We have
preferred the non-cross-fitted version because it is what one computes in
practice and because it costs a factor $K$ less in solver time.
\end{remark}
 
\subsection{The efficient central limit theorem and variance estimation}
\label{sub:F-clt}
 
\begin{theorem}[Asymptotic linearity and efficient CLT]
\label{thm:F-clt}
Suppose the hypotheses of Lemma~\ref{lem:F-rem} hold. Then
\begin{equation}
  \that-\tau_Q
  =\frac1m\sum_{j=1}^m\varphi_Q(O_j^{\mathrm Q})
   +\frac1n\sum_{i=1}^n\varphi_P(O_i^{\mathrm R})
   +o_p(N^{-1/2}),
  \label{eq:F-AL}
\end{equation}
with $\varphi_Q$ and $\varphi_P$ the efficient influence functions
\eqref{eq:eifQ}--\eqref{eq:eifP} of Theorem~\ref{thm:C-eif}, and consequently
\begin{equation}
  \sqrt m\,(\that-\tau_Q)\;\xrightarrow{d}\;
  \mathcal N\bigl(0,\,V_{\mathrm{eff}}\bigr),
  \qquad
  V_{\mathrm{eff}}=\Var_Q[\varphi_Q]+\eta^{-1}\Var_P[\varphi_P],
  \label{eq:F-CLT}
\end{equation}
so that $\that$ attains the semiparametric efficiency bound.
\end{theorem}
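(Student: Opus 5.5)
The proof assembles results already in hand. It starts from the exact decomposition of Theorem~\ref{thm:F-decomp}. The leading trial term there is $-(\En-\EP)[A\rdag(X)\{Y-\mu_0(X)\}]$, which equals $(\En-\EP)\varphi_P$ by \eqref{eq:eifP}. It is also $\En\varphi_P$, because $\EP[\varphi_P]=0$ by the centring step in the proof of Theorem~\ref{thm:C-eif}. Likewise the leading target term $(\Em-\EQ)\{Y^{\mathrm Q}-\mu_0(Z)\}$ equals $\Em\varphi_Q$, since $\EQ[Y^{\mathrm Q}-\mu_0(Z)]=\tau_Q$ by Proposition~\ref{prop:C-id}. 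Lemma~\ref{lem:F-rem} gives $R_{1n},R_{2n},R_{3n}=o_p(N^{-1/2})$ under its hypotheses, including the product condition \eqref{eq:F-product}, which I read as part of those hypotheses. Together these yield the asymptotic linear representation \eqref{eq:F-AL}. The decomposition holds on an event of probability tending to one, on which $\dmu,\dr\in L^2(P)$. That restriction is harmless for an $o_p$ statement.

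For the distributional limit, I multiply \eqref{eq:F-AL} by $\sqrt m$ and treat the two averages separately. The term $\sqrt m\,\Em\varphi_Q$ is a normalised i.i.d.\ sum of centred variables with finite $(2+\delta)$ moment (Assumption~\ref{ass:id-mom}), so the Lindeberg CLT gives the limit $\mathcal N(0,\Var_Q\varphi_Q)$. Next, $\sqrt m\,\En\varphi_P=\sqrt{m/n}\cdot\sqrt n\,\En\varphi_P$. Here $\sqrt{m/n}\to\eta^{-1/2}$ by \eqref{eq:ratio}, and $\sqrt n\,\En\varphi_P\to\mathcal N(0,\Var_P\varphi_P)$, again by the $(2+\delta)$ moment condition. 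Because $m$ and $n$ are of order $N$, the remainder $\sqrt m\cdot o_p(N^{-1/2})$ is $o_p(1)$.

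The two samples are independent, so the joint characteristic function of the two normalised sums factorises. They therefore converge jointly to independent normals, and the continuous mapping theorem (summation) together with Slutsky's lemma gives \eqref{eq:F-CLT} with $V_{\mathrm{eff}}=\Var_Q\varphi_Q+\eta^{-1}\Var_P\varphi_P$. Finally, $\varphi_Q$ and $\varphi_P$ are the efficient influence functions of Theorem~\ref{thm:C-eif}, and $V_{\mathrm{eff}}$ is exactly the bound \eqref{eq:Veff}. Attainment of the semiparametric efficiency bound then follows.

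All the technical difficulty sits in Lemma~\ref{lem:F-rem}, the maximal inequality for the non-cross-fitted $\rhat$, which is assumed here. The only point needing care is the two-sample scaling. Joint convergence must come from independence rather than from a single-sample CLT, and the factor $\eta^{-1}$ arises only through $\sqrt{m/n}\to\eta^{-1/2}$ multiplying the trial average.
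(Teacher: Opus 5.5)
Your proposal is correct and follows the paper's proof essentially step for step. You identify the leading terms of the exact decomposition as $\Em\varphi_Q$ and $\En\varphi_P$ via the centring identities, and remove $R_{1n},R_{2n},R_{3n}$ with Lemma~\ref{lem:F-rem} (the paper likewise reads the product condition \eqref{eq:F-product} as part of its hypotheses); you then combine two independent i.i.d.\ central limit theorems through $\sqrt{m/n}\to\eta^{-1/2}$, factorising characteristic functions and Slutsky, and the only cosmetic difference is that you invoke Lindeberg with the $(2+\delta)$ moments where the classical i.i.d.\ CLT with finite variance already suffices.
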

 
\begin{proof}
By Theorem~\ref{thm:F-decomp} and Lemma~\ref{lem:F-rem}, the three remainders
are $o_p(N^{-1/2})$, so it suffices to identify the two leading terms. Centering
the target term by $\EQ[Y^{\mathrm Q}-\mu_0(Z)]=\tau_Q$ gives
\[
  (\Em-\EQ)\bigl\{Y^{\mathrm Q}-\mu_0(Z)\bigr\}
  =\frac1m\sum_{j=1}^m\bigl\{Y_j^{\mathrm Q}-\mu_0(X^Q_j)-\tau_Q\bigr\}
  =\frac1m\sum_{j=1}^m\varphi_Q(O_j^{\mathrm Q}),
\]
and since $\EP[A\rdag(Y-\mu_0)]=0$ the trial term equals
$n^{-1}\sum_i\varphi_P(O_i^{\mathrm R})$ with $\varphi_P$ as in
\eqref{eq:eifP}. This is \eqref{eq:F-AL}.
 
Multiplying by $\sqrt m$,
\[
  \sqrt m\,(\that-\tau_Q)
  =\underbrace{\frac1{\sqrt m}\sum_{j}\varphi_Q(O_j^{\mathrm Q})}_{S_m^{\mathrm Q}}
   +\sqrt{\tfrac mn}\;
    \underbrace{\frac1{\sqrt n}\sum_{i}\varphi_P(O_i^{\mathrm R})}_{S_n^{\mathrm P}}
   +o_p(1).
\]
The summands of $S_m^{\mathrm Q}$ are i.i.d., centred \corr{(the
``Centring'' step in the proof of Theorem~\ref{thm:C-eif})}, with finite \corr{variance}
by Assumption~\ref{ass:id-mom}, so the \corr{classical i.i.d.}\ central limit theorem gives
$S_m^{\mathrm Q}\xrightarrow{d}\mathcal N(0,\Var_Q\varphi_Q)$; the same argument
gives $S_n^{\mathrm P}\xrightarrow{d}\mathcal N(0,\Var_P\varphi_P)$, and
$\sqrt{m/n}\to\eta^{-1/2}$. The two samples are independent, so
$S_m^{\mathrm Q}$ and $S_n^{\mathrm P}$ are independent, and \corr{joint
convergence follows directly, the characteristic functions factorising};
Slutsky's lemma disposes of the $o_p(1)$ term. The limit variance is \eqref{eq:F-CLT}, which is the bound
of Theorem~\ref{thm:C-eif} because $(\varphi_Q,\varphi_P)$ is the canonical gradient.
\end{proof}
 
\begin{theorem}[Variance consistency and Wald intervals]
\label{thm:F-var}
Under the hypotheses of Theorem~\ref{thm:F-clt}, define
\begin{equation}
  \widehat\varphi_{Q,j}\coloneq Y_j^{\mathrm Q}-\muhat(X^Q_j)-\that,
  \qquad
  \widehat\varphi_{P,i}
  \coloneq -\,\frac{1-T_i}{e_0(X_i)}\,\rhat(X_i)\bigl\{Y_i-\muhat(X_i)\bigr\},
  \label{eq:F-ifhat}
\end{equation}
and
\begin{equation}
  \widehat V
  \coloneq \frac1m\sum_{j=1}^m\widehat\varphi_{Q,j}^{\,2}
   +\frac mn\cdot\frac1n\sum_{i=1}^n\widehat\varphi_{P,i}^{\,2}.
  \label{eq:F-Vhat}
\end{equation}
Then $\widehat V\xrightarrow{p}V_{\mathrm{eff}}$, and consequently\corr{,
provided $V_{\mathrm{eff}}>0$,}
$\sqrt m(\that-\tau_Q)/\widehat V^{1/2}\xrightarrow{d}\mathcal N(0,1)$, so that
$\that\pm z_{\corr{1-}\alpha/2}\widehat V^{1/2}/\sqrt m$ is an asymptotically valid
$(1-\alpha)$ confidence interval.
\end{theorem}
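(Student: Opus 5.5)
The plan is to show separately that the two empirical second moments in \eqref{eq:F-Vhat} converge to $\Var_Q[\varphi_Q]$ and $\Var_P[\varphi_P]$, and then conclude with $m/n\to\eta^{-1}$ and Slutsky. The Wald statement is then immediate from Theorem~\ref{thm:F-clt}, the continuous mapping theorem and $V_{\mathrm{eff}}>0$. For each component I will write the estimated influence value as the true one plus an error, $\widehat\varphi=\varphi+\Delta$. I will then use the elementary bound
\[
\Bigl|\tfrac1m\textstyle\sum_j\widehat\varphi_{Q,j}^2-\tfrac1m\sum_j\varphi_{Q,j}^2\Bigr|
\le\tfrac1m\textstyle\sum_j\Delta_j^2+2\bigl(\tfrac1m\sum_j\varphi_{Q,j}^2\bigr)^{1/2}\bigl(\tfrac1m\sum_j\Delta_j^2\bigr)^{1/2},
\]
and its trial analogue. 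With this bound it suffices to show two things: the oracle averages obey a law of large numbers, and the averaged squared errors are $o_p(1)$. The oracle laws of large numbers are immediate from the i.i.d.\ structure and the $(2+\delta)$ moments of Assumption~\ref{ass:id-mom}, together with the centring $\EQ[\varphi_Q]=\EP[\varphi_P]=0$ from Theorem~\ref{thm:C-eif}.

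\emph{Target component.} Here $\Delta_j=-\dmu(X^Q_j)-(\that-\tau_Q)$. By Theorem~\ref{thm:F-clt}, $\that-\tau_Q=O_p(N^{-1/2})$, so the constant part contributes $o_p(1)$. For the other part, $\Em[\dmu(Z)^2]$ is handled by conditioning on the trial $\sigma$-field $\calT_R$, exactly as in step (i) of Lemma~\ref{lem:F-rem}. Given $\calT_R$, the quantity $\bar\mu_0$ is fixed, so $\E[\Em\dmu^2\mid\calT_R]=\EQ[\dmu^2]\le\overline r\,\norm{\dmu}{L^2(P)}^2=o_p(1)$ by \eqref{eq:PtoQ}. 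A conditional Markov inequality then gives $\Em\dmu^2=o_p(1)$.

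\emph{Trial component.} Write $\widehat\varphi_{P,i}-\varphi_{P,i}=-A_i\{\dr(X_i)(Y_i-\mu_0(X_i))-\rhat(X_i)\dmu(X_i)\}$. For the averaged square it is enough to bound two terms: $\En[A^2\dr^2(Y-\mu_0)^2]$ and $\En[A^2\rhat^2\dmu^2]$. For the second, Theorem~\ref{thm:E-rate} gives $\norm{\rhat}{\infty}=O_p(1)$ and $A\le\underline e^{-1}$, so it is at most $O_p(1)\,\En[\dmu^2]$. The foldwise conditional argument (fix $\calT_k$, so $\dmu^{(-k)}$ is fixed) gives expectation $\norm{\dmu^{(-k)}}{L^2(P)}^2=o_p(1)$. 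The first term is the main obstacle, because $\rhat$ is not cross-fitted and so $\dr$ depends on the same $X_i$. I would handle it using the bounded-residual condition \eqref{eq:F-bddres} from the hypotheses of Lemma~\ref{lem:F-rem}, which bounds the term by $C\En[\dr^2]$. On the event $\mathcal E_N$, $\rhat$ lies in the deterministic class $\mathcal R_J$, so $\En[\dr^2]\le\EP[\dr^2]+\sup_{r\in\mathcal R_J}|(\En-\EP)(r-\rdag)^2|$. The first piece is $O_p(\delta_N^2)=o_p(1)$. For the supremum, the squared class is uniformly bounded and inherits the bracketing entropy \eqref{eq:F-entropy}, so the maximal inequality \eqref{eq:F-maximal} bounds it by $n^{-1/2}$ times a quantity that is $o(\sqrt n)$ under $J_N\log N=o(\sqrt N)$. (If one prefers, a cruder uniform law of large numbers over $\mathcal R_J$ suffices, since only $o_p(1)$ is needed.)

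Combining the two components, $\widehat V=\Var_Q\varphi_Q+(m/n)\Var_P\varphi_P+o_p(1)\to V_{\mathrm{eff}}$. The studentised statistic then follows from Theorem~\ref{thm:F-clt} and Slutsky's lemma.
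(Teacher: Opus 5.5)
Your proposal is correct and follows essentially the paper's proof. It uses the same target/trial split, the same conditional-on-$\calT_R$ argument with \eqref{eq:PtoQ} for $\Em[\dmu(Z)^2]$, and the same uniform bracketing step over the deterministic class $\mathcal R_J$ to handle the non-cross-fitted weight in $\En[A^2\dr^2(Y-\mu_0)^2]$. The one difference is minor and works in your favour: for the $A\rhat\dmu$ piece you pull out $\norm{\rhat}{\infty}^2=O_p(1)$ and apply a foldwise conditional Markov bound, whereas the paper runs a foldwise Glivenko--Cantelli argument over $\{A^2r^2(\dmu^{(-k)})^2:r\in\mathcal R_J\}$, so your route is slightly simpler.
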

 
\begin{proof}
We treat the two components separately, and in each case the pattern is the
same: show that the estimated influence function converges to the true one in
$L^2$, transfer this to the empirical second moments by Cauchy--Schwarz, and
apply an ordinary law of large numbers to the true influence function.
 
\emph{Target component.} From \eqref{eq:F-ifhat} and \eqref{eq:eifQ},
$\widehat\varphi_Q-\varphi_Q=-\dmu-(\that-\tau_Q)$, so by
$(a+b)^2\le2a^2+2b^2$,
\[
  \Em\bigl[(\widehat\varphi_Q-\varphi_Q)^2\bigr]
  \le2\,\Em\bigl[\dmu(Z)^2\bigr]+2(\that-\tau_Q)^2=o_p(1),
\]
the first term by \eqref{eq:PtoQ} and a conditional law of large numbers, the
second by the consistency implied by Theorem~\ref{thm:F-clt}. Cauchy--Schwarz
then gives
\[
  \bigl|\Em[\widehat\varphi_Q^{\,2}]-\Em[\varphi_Q^2]\bigr|
  \le\bigl(\Em[(\widehat\varphi_Q-\varphi_Q)^2]\bigr)^{1/2}
     \bigl(\Em[(\widehat\varphi_Q+\varphi_Q)^2]\bigr)^{1/2}
  =o_p(1),
\]
the second factor being $O_p(1)$ under the $(2+\delta)$-moment condition, and
the law of large numbers gives
$\Em[\varphi_Q^2]\xrightarrow{p}\Var_Q[\varphi_Q]$.
 
\emph{Trial component.} Adding and subtracting,
\[
  \widehat\varphi_P-\varphi_P
  =-A\,\dr(X)\bigl\{Y-\mu_0(X)\bigr\}+A\,\rhat(X)\,\dmu(X).
\]
For the first term, positivity and the conditional residual-variance bound of
Assumption~\ref{ass:id-mom} give
$\EP[A^2\dr^2(Y-\mu_0)^2]\le C_Y\underline e^{-\corr{1}}\norm{\dr}{L^2(P)}^2=o_p(1)$;
for the second, positivity and $\norm{\rhat}{\infty}=O_p(1)$ from
Theorem~\ref{thm:E-rate} give
$\EP[A^2\rhat^2\dmu^2]=O_p(\norm{\dmu}{L^2(P)}^2)=o_p(1)$. Hence
$\norm{\widehat\varphi_P-\varphi_P}{L^2(P)}=o_p(1)$ \corr{in the
frozen-argument sense. Because $\rhat$ is not cross-fitted, transferring this
to the empirical second moment requires one further uniform step: on the
event $\mathcal E_N$ of Lemma~\ref{lem:F-rem},
$(\widehat\varphi_P-\varphi_P)^2
 \le2A^2(Y-\mu_0)^2(r-\rdag)^2+2A^2r^2\dmu^2$ with $r=\rhat\in\mathcal R_J$,
and the classes $\{A^2(y-\mu_0)^2(r-\rdag)^2:r\in\mathcal R_J\}$ and,
foldwise, $\{A^2r^2(\dmu^{(-k)})^2:r\in\mathcal R_J\}$ are uniformly bounded
--- by \eqref{eq:F-bddres} and the bounded ranges of $\muhat$ and $\mu_0$ ---
with bracketing entropy \eqref{eq:F-entropy} up to constants; a bracketing
Glivenko--Cantelli bound then gives
$\En[(\widehat\varphi_P-\varphi_P)^2]
 =\EP[(\widehat\varphi_P-\varphi_P)^2]+o_p(1)=o_p(1)$ under
$J\log N=o(n)$, which holds a fortiori. Cauchy--Schwarz then transfers this
to $|\En[\widehat\varphi_P^{\,2}]-\En[\varphi_P^2]|=o_p(1)$}, while
$\En[\varphi_P^2]\xrightarrow{p}\Var_P[\varphi_P]$ \corr{by the law of large
numbers, $\EP[\varphi_P^2]$ being finite by Assumption~\ref{ass:id-mom}}.
 
\emph{Combination.} Since $m/n\to\eta^{-1}$,
$\widehat V\xrightarrow{p}\Var_Q[\varphi_Q]+\eta^{-1}\Var_P[\varphi_P]
 =V_{\mathrm{eff}}$, and the interval statement follows from
Theorem~\ref{thm:F-clt} and Slutsky's lemma.
\end{proof}
 
\begin{corollary}[Primitive sufficient conditions]
\label{cor:F-primitive}
Suppose Assumptions~\ref{ass:id}, \ref{ass:design}, \ref{ass:nuisance} and~\ref{ass:E-twosided} hold, that \corr{the
bounded-residual condition \eqref{eq:F-bddres} holds}, that
$\norm{\muhat-\mu_0}{L^2(P)}=O_p(N^{-\beta})$, that
\corr{$\sqrt{J_N\log N}\,a_{L,N}\to0$ and} $a_{L,N}=o(N^{\beta-1/2})$, and
that $J_N$ is chosen according to \eqref{eq:E-Jopt} with
$\beta+s/(2s+d)>1/2$ \corr{(the floor $s>d/2$ being part of
Assumption~\ref{ass:des-approx})}. Then \corr{all growth conditions of
Lemma~\ref{lem:F-rem}}, \eqref{eq:F-EPcond} and \eqref{eq:F-product} hold,
and consequently \eqref{eq:F-CLT} and
$\widehat V\xrightarrow{p}V_{\mathrm{eff}}$ hold.
\end{corollary}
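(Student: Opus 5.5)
The plan is a verification exercise. Once the hypotheses of Lemma~\ref{lem:F-rem} are checked, Theorem~\ref{thm:F-clt} and Theorem~\ref{thm:F-var} deliver \eqref{eq:F-CLT} and $\widehat V\xrightarrow{p}V_{\mathrm{eff}}$ directly. So I would take the rate-optimal choice $J_N\asymp(N/\log N)^{d/(2s+d)}$ of \eqref{eq:E-Jopt} and verify four things in order: the growth condition \eqref{eq:E-growth}, the condition $J_N\log N=o(\sqrt N)$, the two parts of \eqref{eq:F-EPcond}, and the product condition \eqref{eq:F-product}. The bounded-residual condition \eqref{eq:F-bddres} is assumed outright.

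\emph{Growth conditions.} For \eqref{eq:E-growth} I would follow the proof of Corollary~\ref{cor:E-J}. The terms $\sqrt{J_N}J_N^{-s/d}$ and $\sqrt{J_N}\sqrt{J_N\log J_N/N}$ vanish because $s>d/2$. The offset term $\sqrt{J_N}a_{L,N}\to0$ is implied by the assumed $\sqrt{J_N\log N}\,a_{L,N}\to0$. Next, $J_N\log N\asymp N^{d/(2s+d)}(\log N)^{2s/(2s+d)}$, and $d/(2s+d)<1/2$ exactly when $s>d/2$, so $J_N\log N=o(\sqrt N)$.

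\emph{Conditions \eqref{eq:F-EPcond}.} By Corollary~\ref{cor:E-J}, $\delta_N\asymp(\log N/N)^{s/(2s+d)}+a_{L,N}$. The polynomial part of $\sqrt{J_N\log N}\,\delta_N$ is of order $N^{(d-2s)/\{2(2s+d)\}}$ times a power of $\log N$, which tends to zero since $s>d/2$. The offset part is the assumed $\sqrt{J_N\log N}\,a_{L,N}\to0$. For the second condition, $\sqrt{J_N\log N}\,N^{-\beta}\to0$ is equivalent to $J_N\log N=o(N^{2\beta})$, that is, $d/(2s+d)<2\beta$ up to logarithms. This is the step I expect to be most delicate. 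The hypothesis $\beta+s/(2s+d)>1/2$ gives $\beta>d/\{2(2s+d)\}$, hence $2\beta>d/(2s+d)$ strictly. Strictness makes the logarithmic factors harmless, so the condition holds.

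\emph{Product condition.} By Theorem~\ref{thm:E-rate} and Assumption~\ref{ass:nuisance},
\[
\norm{\dr}{L^2(P)}\norm{\dmu}{L^2(P)}=O_p\bigl\{(\log N/N)^{s/(2s+d)}N^{-\beta}+a_{L,N}N^{-\beta}\bigr\}.
\]
The first term is $o(N^{-1/2})$ because $\beta+s/(2s+d)>1/2$ strictly, which again absorbs the logarithms; equivalently this is Corollary~\ref{cor:E-window}. The second term is $o(N^{-1/2})$ by the assumption $a_{L,N}=o(N^{\beta-1/2})$.

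With all hypotheses of Lemma~\ref{lem:F-rem} verified, Theorems~\ref{thm:F-clt} and~\ref{thm:F-var} give the conclusion.
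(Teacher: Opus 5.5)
Your proposal is correct and follows the paper's proof essentially step for step. The paper likewise checks the growth condition \eqref{eq:E-growth} using $s>d/2$ and $\sqrt{J_N}\,a_{L,N}\to0$, gets \eqref{eq:F-product} from Theorem~\ref{thm:E-rate} via Corollary~\ref{cor:E-window}, verifies both halves of \eqref{eq:F-EPcond} from $s>d/2$ and $2\beta>d/(2s+d)$, and confirms $J_N\log N=o(\sqrt N)$ from $s>d/2$ before invoking Theorems~\ref{thm:F-clt} and~\ref{thm:F-var}; your explicit tracking of the logarithmic factors is a sound but inessential refinement.
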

 
\begin{proof}
Theorem~\ref{thm:E-rate} and Corollary~\ref{cor:E-J} give
$\delta_N=O\{(\log N/N)^{s/(2s+d)}+a_{L,N}\}$
\corr{(their hypotheses hold: the growth condition \eqref{eq:E-growth}
follows from \eqref{eq:E-Jopt}, $s>d/2$ and $\sqrt{J_N}\,a_{L,N}\to0$, the
latter implied by $\sqrt{J_N\log N}\,a_{L,N}\to0$)}. Condition
\eqref{eq:F-product} is then Corollary~\ref{cor:E-window} under
$\beta+s/(2s+d)>1/2$ and $a_{L,N}=o(N^{\beta-1/2})$. The first half of
\eqref{eq:F-EPcond} is the computation displayed in
Remark~\ref{rem:F-crossfit}, valid because $s>d/2$ \corr{together with the
assumed $\sqrt{J_N\log N}\,a_{L,N}\to0$}; the second half follows from
\corr{\eqref{eq:E-Jopt} and $2\beta>d/(2s+d)$, which is
$\beta+s/(2s+d)>1/2$}. Finally $J_N\asymp(N/\log N)^{d/(2s+d)}$ satisfies
\corr{$J_N\log N=o(\sqrt N)$} precisely when \corr{$d/(2s+d)<1/2$}, that is $s>d/2$.
\end{proof}
 
\subsection{Double robustness, and the limits of fixed-sieve calibration}
\label{sub:F-dr}
 
\begin{theorem}[Double robustness]
\label{thm:F-dr}
Suppose Assumption~\ref{ass:id} holds and that
\corr{$\that-\Psi(\muhat,\rhat)=o_p(1)$, with $\Psi$ the population
estimating functional \eqref{eq:Psi} evaluated at the fitted nuisances under
the frozen-argument convention --- a stochastic-equicontinuity condition
supplied, for instance, by the empirical-process bounds of
Lemma~\ref{lem:F-rem}, by cross-fitting both nuisances, or by
$\norm{\rhat}{\infty}=O_p(1)$ combined with foldwise conditional laws of
large numbers}. Then $\that\xrightarrow{p}\tau_Q$ if
\emph{either}
\begin{enumerate}[label=\textup{(\Alph*)},leftmargin=2.6em,itemsep=2pt,topsep=3pt]
\item $\norm{\dr}{L^2(P)}=o_p(1)$ and $\norm{\dmu}{L^2(P)}=O_p(1)$, \emph{or}
\item $\norm{\dmu}{L^2(P)}=o_p(1)$ and $\norm{\dr}{L^2(P)}=O_p(1)$.
\end{enumerate}
\corr{Beyond this displacement condition, }neither Assumption~\ref{ass:E-twosided} nor any rate condition is required.
\end{theorem}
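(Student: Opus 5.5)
The proof reduces to the exact remainder identity of Proposition~\ref{prop:C-rem} together with the assumed displacement condition. I would split
\[
  \that-\tau_Q=\bigl\{\that-\Psi(\muhat,\rhat)\bigr\}+\bigl\{\Psi(\muhat,\rhat)-\tau_Q\bigr\}.
\]
The first bracket is $o_p(1)$ by hypothesis. The rest of the argument concerns only the second, deterministic-given-the-fits, term.

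For the second bracket, I apply Proposition~\ref{prop:C-rem} under the frozen-argument convention. Conditionally on the fitted functions, $\muhat$ and $\rhat$ are fixed candidate nuisances, so the identity applies pathwise. The only requirement is square integrability. On the event where $\norm{\dmu}{L^2(P)}$ and $\norm{\dr}{L^2(P)}$ are finite, this follows from $\mu_0\in L^2(P)$ (Assumption~\ref{ass:id-mom}) and $\rdag\in L^2(P)$ (Lemma~\ref{lem:A-riesz}). Under either (A) or (B), that event has probability tending to one, since both norms are $O_p(1)$. On that event
\[
  \Psi(\muhat,\rhat)-\tau_Q=\EP\bigl[\dr(X)\,\dmu(X)\bigr].
\]
Cauchy--Schwarz then gives the bound $\norm{\dr}{L^2(P)}\norm{\dmu}{L^2(P)}$, which is $o_p(1)\cdot O_p(1)=o_p(1)$ in either case.

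One technical point needs checking: that the inverse-randomisation identity \eqref{eq:invrand2} used inside Proposition~\ref{prop:C-rem} remains valid for a random $\mu=\muhat$. With the fold-specific fits, the target side uses $\bar\mu_0$ and the trial side uses $\muhat^{(-k)}$. I would handle this by noting that $\Psi$ is affine in $\mu$ for fixed $r$. So $\Psi(\muhat,\rhat)$ can be read as the fold average $K^{-1}\sum_k\Psi(\muhat^{(-k)},\rhat)$, with the remainder bounded by the average of the foldwise products. Alternatively, one can simply define $\norm{\dmu}{L^2(P)}$ foldwise, as the convention of Definition~\ref{def:F-onestep} implicitly does. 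Either way, each foldwise product is $o_p(1)$.

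No rate condition is used, and neither is Assumption~\ref{ass:E-twosided}; boundedness of $\rdag$ enters only through Lemma~\ref{lem:A-riesz}. I expect the only real obstacle to be this bookkeeping, namely the frozen-argument reading of $\Psi$ at data-dependent nuisances and the foldwise averaging. The probabilistic content is entirely absorbed by the displacement hypothesis.
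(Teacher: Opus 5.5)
Your proposal is correct and follows the paper's proof essentially verbatim. You split off the assumed displacement $\that-\Psi(\muhat,\rhat)=o_p(1)$, apply the exact remainder identity of Proposition~\ref{prop:C-rem} under the frozen-argument convention, and bound the resulting product by Cauchy--Schwarz. Your extra foldwise bookkeeping, which reads $\Psi(\muhat,\rhat)$ as $K^{-1}\sum_k\Psi(\muhat^{(-k)},\rhat)$ via affinity in $\mu$, is a refinement the paper leaves implicit; it is exact when the folds have equal size.
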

 
\begin{proof}
By Proposition~\ref{prop:C-rem}, $\Psi(\muhat,\rhat)-\tau_Q=\EP[\dr\,\dmu]$ exactly, so
Cauchy--Schwarz gives
$|\Psi(\muhat,\rhat)-\tau_Q|\le\norm{\dr}{L^2(P)}\norm{\dmu}{L^2(P)}$. Under
(A) the first factor is $o_p(1)$ and the second $O_p(1)$; under (B) the roles
are exchanged. Either way the product is $o_p(1)$, and adding the
$o_p(1)$ empirical fluctuation gives $\that\xrightarrow{p}\tau_Q$.
\end{proof}
 
The remaining results delimit what calibration on a fixed sieve achieves. The
first says that a limiting weight other than $\rdag$ is incompatible with
regularity in the unrestricted model; the second draws the consequence for
fixed $J$. Both require that the trial-side tangent space be rich enough to
generate arbitrary perturbations of the control regression, which we state as a
hypothesis of the proposition rather than as a standing assumption.
 
\begin{proposition}[A noncanonical limiting weight cannot be efficient]
\label{prop:F-noncanonical}
Suppose that for every bounded $h\in L^2(P)$ there is a regular submodel
$t\mapsto P_t$, with $Q$ held fixed, whose control-regression derivative
satisfies $\dot\mu_0=h$ almost surely. Let $r_\infty\in L^2(P)$ and put
\[
  \varphi_P^{\,r_\infty}(O^{\mathrm R})
  \coloneq -A\,r_\infty(X)\bigl\{Y-\mu_0(X)\bigr\}.
\]
If an estimator is regular in the unrestricted two-sample model and
asymptotically linear with target component $\varphi_Q$ and trial component
$\varphi_P^{\,r_\infty}$, then $r_\infty=\rdag$ almost surely.
\end{proposition}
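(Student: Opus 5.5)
The argument rests on the standard fact that, for a regular and asymptotically linear estimator, the influence function must be a gradient of the parameter along every regular submodel. The plan is to compare that gradient requirement with the efficient influence function of Theorem~\ref{thm:C-eif}, and then to use the richness hypothesis to force $r_\infty=\rdag$ through the uniqueness half of Lemma~\ref{lem:A-riesz}.

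\emph{Step 1: the gradient identity.} Regularity in the unrestricted two-sample model, combined with asymptotic linearity, implies (\citet[Section~25.3]{vaart1998asymptotic}) that for every regular submodel with scores $(\zeta_P,\zeta_Q)$,
\[
\dot\tau_Q=\EQ[\varphi_Q\zeta_Q]+\EP[\varphi_P^{\,r_\infty}\zeta_P].
\]
Restrict to submodels that hold $Q$ fixed, so that $\zeta_Q=0$. Theorem~\ref{thm:C-eif} then gives
\[
\EP\bigl[(\varphi_P^{\,r_\infty}-\varphi_P)\zeta_P\bigr]=0
\]
for every such trial score.

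\emph{Step 2: convert to covariate-space orthogonality.} Set $\Delta\coloneq r_\infty-\rdag$, so that $\varphi_P^{\,r_\infty}-\varphi_P=-A\,\Delta(X)\{Y-\mu_0(X)\}$. Apply the inverse-randomisation identity \eqref{eq:invrand} with $G=\Delta(X)\{Y-\mu_0(X)\}\zeta_P$, and then Lemma~\ref{lem:C-deriv}(ii):
\[
\EP\bigl[A\,\Delta\{Y-\mu_0\}\zeta_P\bigr]=\EP\bigl[\Delta(X)\,\EP[\{Y-\mu_0\}\zeta_P\mid X,T=0]\bigr]=\EP[\Delta\,\dot\mu_0].
\]
Integrability holds because $\Delta\in L^2(P)$ and $\dot\mu_0\in L^2(P)$, as shown in the proof of Theorem~\ref{thm:C-eif}. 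Step~1 therefore yields $\EP[\Delta\,\dot\mu_0]=0$ along every $Q$-fixed regular submodel.

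\emph{Step 3: use richness to conclude.} By hypothesis, for every bounded $h$ there is a $Q$-fixed regular submodel with $\dot\mu_0=h$. Hence $\EP[\Delta h]=0$ for all bounded $h$. Bounded functions are dense in $L^2(P)$, for instance via the truncations $h_k=(\Delta\wedge k)\vee(-k)$. Taking $h=h_k$ and letting $k\to\infty$, monotone (or dominated) convergence gives $\EP[\Delta^2]=0$, so $r_\infty=\rdag$ almost surely. Equivalently, $r_\infty$ satisfies the Riesz identity of Lemma~\ref{lem:A-riesz} on a dense class and is pinned down by its uniqueness half.

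\emph{Main obstacle.} The delicate point is Step~1: the precise sense in which regularity forces the influence function to be a gradient in a two-sample model. Here $\sqrt m$-scaling applies, with trial scores entering through the factor $\eta^{-1/2}$, and the two channels must be separated. I will handle this by taking $Q$-fixed paths indexed by $t/\sqrt n$. Regularity then makes the limit law of $\sqrt m(\hat\tau-\tau_{Q,t})$ invariant, while asymptotic linearity together with Le Cam's third lemma shifts the limit mean by $\eta^{-1/2}\EP[\varphi_P^{\,r_\infty}\zeta_P]$. Equating this shift with the rescaled derivative $\eta^{-1/2}\dot\tau_Q$ gives exactly the trial-channel gradient identity used above.
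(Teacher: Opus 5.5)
Your proposal is correct and follows essentially the same route as the paper. Both arguments take the gradient identity from regularity (the paper cites van der Vaart's Lemma 25.23, while you sketch it via Le Cam's third lemma along $Q$-fixed paths), reduce the trial channel to a comparison of $\EP[r_\infty h]$ with $\EP[\rdag h]$ using the inverse-randomisation identity and Lemma~\ref{lem:C-deriv}(ii), and conclude from the richness hypothesis. The remaining differences are cosmetic: you obtain the true derivative by subtracting the canonical gradient of Theorem~\ref{thm:C-eif} instead of recomputing $-\EQ[h]=-\EP[\rdag h]$ directly, and your truncation $h_k$ gives $\EP[(r_\infty-\rdag)^2]=0$ directly, without the separate density step.
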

 
\begin{proof}
Fix a bounded $h\in L^2(P)$ and a submodel as in the hypothesis. By
\eqref{eq:id} the true pathwise derivative along this submodel is
$\dot\tau_Q=-\EQ[h(Z)]=-\EP[\rdag(X)h(X)]$, the second equality by the
Riesz identity of Lemma~\ref{lem:A-riesz}.
 
On the other hand, regularity with trial-side influence function
$\varphi_P^{\,r_\infty}$ forces the induced derivative to equal
$\EP[\varphi_P^{\,r_\infty}\zeta_P]$ \corr{\citep[Lemma~25.23]{vaart1998asymptotic};
along these submodels $Q$ is held fixed, so $\zeta_Q=0$ and the target
component contributes nothing}. Computing this exactly as in the proof
of Theorem~\ref{thm:C-eif}, first conditioning on $\{X,T=0\}$ via Lemma~\ref{lem:A-invrand} and then
applying Lemma~\ref{lem:C-deriv}(ii),
\[
  \EP\bigl[\varphi_P^{\,r_\infty}\zeta_P\bigr]
  =-\EP\Bigl[r_\infty(X)\,
     \EP\bigl[\{Y-\mu_0(X)\}\zeta_P\mid X,\,T=0\bigr]\Bigr]
  =-\EP\bigl[r_\infty(X)h(X)\bigr].
\]
Equating the two expressions gives $\EP[\{r_\infty-\rdag\}h]=0$ for every
bounded $h\in L^2(P)$, hence for every $h\in L^2(P)$ by density. Taking
$h=r_\infty-\rdag$ yields $\norm{r_\infty-\rdag}{L^2(P)}^2=0$.
\end{proof}
 
\begin{corollary}[Fixed-sieve calibration gives restricted representation only]
\label{cor:F-fixedJ}
Let $r_J\in L^2(P)$ satisfy the calibration equations \eqref{eq:calib-pop} for
a \emph{fixed} $J$. Then $\EP[r_Jh]=\EQ[h]$ for every $h\in H_J$, so the
trial-side score built from $r_J$ correctly represents the pathwise derivative
along every submodel with $\dot\mu_0\in H_J$. If, however, $r_J\ne\rdag$, then
under the hypothesis of Proposition~\ref{prop:F-noncanonical} that score is
\emph{not} the efficient influence-function component in the unrestricted
model. \corr{Thus the pair $(\varphi_Q,\varphi_P^{\,r_J})$ represents the
pathwise derivative only on the submodel $\{\dot\mu_0\in H_J\}$: an estimator
that is asymptotically linear with this influence pair is regular
\emph{within that submodel} but, by
Proposition~\ref{prop:F-noncanonical}, cannot be regular in the unrestricted
model; the convolution theorem therefore does not apply to it, and its
asymptotic variance $\Var_Q[\varphi_Q]+\eta^{-1}\Var_P[\varphi_P^{\,r_J}]$ is
not comparable to the bound --- it can even fall below $V_{\mathrm{eff}}$,
the hallmark of irregularity. Whether the fixed-$J$ estimator is
asymptotically linear at all is a separate question: by the calibration
orthogonality the product remainder is
$R_{3n}=\EP[\dr\,(I-\Pi_{H_J})\dmu]$, of order
$c_J\norm{(I-\Pi_{H_J})\dmu}{L^2(P)}$ with $\Pi_{H_J}$ the $L^2(P)$
projection, and this is $o_p(N^{-1/2})$ only under additional conditions on
$\dmu$; we do not pursue them.}
\end{corollary}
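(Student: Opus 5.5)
The argument has three pieces: restricted representation, failure of efficiency, and the remainder identity. Each comes almost directly from a result already established.

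\emph{Restricted representation.} Since $r_J$ satisfies \eqref{eq:calib-pop}, the sieve half of Lemma~\ref{lem:A-riesz} gives $\EP[r_Jh]=\EQ[h]$ for every $h\in H_J$, and equivalently $\EP[(r_J-\rdag)h]=0$ on $H_J$. Now take any regular submodel with $\dot\mu_0\in H_J$. The computation in the proof of Theorem~\ref{thm:C-eif} applies verbatim. Lemma~\ref{lem:C-deriv}(ii) and the inverse-randomisation identity \eqref{eq:invrand} give
\[
\EP[\varphi_P^{\,r_J}\zeta_P]=-\EP[r_J\dot\mu_0].
\]
Calibration then replaces this by $-\EQ[\dot\mu_0]$, which is exactly the trial contribution to $\dot\tau_Q$. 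So $(\varphi_Q,\varphi_P^{\,r_J})$ is a gradient along every such submodel.

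\emph{Non-efficiency when $r_J\ne\rdag$.} Suppose the score built from $r_J$ were the efficient trial component. Then an efficient estimator, which is regular in the unrestricted model, would be asymptotically linear with trial component $\varphi_P^{\,r_J}$. Proposition~\ref{prop:F-noncanonical} would then force $r_J=\rdag$ almost surely, a contradiction. The same proposition, read contrapositively, shows that any estimator asymptotically linear with this pair cannot be regular in the unrestricted model. It stays regular within $\{\dot\mu_0\in H_J\}$ by the first step together with \citet[Lemma~25.23]{vaart1998asymptotic}. Because it is not regular, the convolution theorem gives no lower bound, so its variance $\Var_Q[\varphi_Q]+\eta^{-1}\Var_P[\varphi_P^{\,r_J}]$ need not dominate $V_{\mathrm{eff}}$. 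I would support this remark with a simple example in which $\Var_P[\varphi_P^{\,r_J}]<\Var_P[\varphi_P]$, such as $r_J\equiv1$ when $\rdag$ is nonconstant; I would not prove it in generality.

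\emph{Remainder identity.} Start from \eqref{eq:F-R3}: $R_{3n}=\EP[\dr\,\dmu]$. Write $\dmu=\Pi_{H_J}\dmu+(I-\Pi_{H_J})\dmu$. Calibration orthogonality, i.e.\ $\dr\perp H_J$ for the sample-limit weight, kills the first piece, leaving $R_{3n}=\EP[\dr\,(I-\Pi_{H_J})\dmu]$. Cauchy--Schwarz with $c_J=\norm{r_J-\rdag}{L^2(P)}$ then gives the stated order. The step needing the most care is applying the orthogonality to the fitted $\rhat$ instead of its limit $r_J$. Empirical calibration only yields $\En[\rhat b_J]=\Em[b_J]$, so I would write $\rhat=r_J+(\rhat-r_J)$ and absorb the second part as a lower-order term. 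This makes the display an order statement rather than an exact identity, which is all the corollary claims, and further conditions are explicitly not pursued.
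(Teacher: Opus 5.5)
Your argument for the two assertions the paper actually proves matches the paper's proof: the sieve half of Lemma~\ref{lem:A-riesz} plus the derivative computation from the proof of Proposition~\ref{prop:F-noncanonical} gives the restricted representation, and that proposition in contrapositive form gives non-efficiency. The paper offers no argument for the rest, so your sketches of regularity within $\{\dot\mu_0\in H_J\}$ and of the remainder are extra, and they are reasonable; the one repair needed is your variance example. Since $\Var_P[\varphi_P^{\,r}]=\EP[r^2\sigma^2/e_0]$ with $\sigma^2(x)=\Var(Y\mid X=x,T=0)$, the choice $r_J\equiv1$ (calibrated when $J=1$) falls below $\Var_P[\varphi_P]=\EP[(\rdag)^2\sigma^2/e_0]$ only under an extra condition such as $\sigma^2/e_0$ constant, where $\EP[\rdag]=1$ forces $\EP[(\rdag)^2]>1$; nonconstancy of $\rdag$ alone does not suffice.
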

 
\begin{proof}
The first assertion is the sieve half of Lemma~\ref{lem:A-riesz}, and the derivative
computation in the proof of Proposition~\ref{prop:F-noncanonical} shows that
for $\dot\mu_0=h\in H_J$ the induced derivative $-\EP[r_Jh]$ agrees with the
true derivative $-\EQ[h]$. The second assertion is
Proposition~\ref{prop:F-noncanonical} in contrapositive form: since
$r_J\ne\rdag$, the pair $(\varphi_Q,\varphi_P^{\,r_J})$ cannot be the canonical
gradient of $\tau_Q$ in the unrestricted model, so the corresponding estimator
is not efficient there.
\end{proof}
 
\begin{remark}[Calibration, consistency, and efficiency]
\label{rem:F-summary}
It is worth putting the three preceding results side by side, because together
they explain why the sieve must grow and what is lost if it does not.
Calibration is an \emph{orthogonality device}: it enforces
$\EP[\{r_J-\rdag\}h]=0$ for $h\in H_J$ and thereby \corr{sharpens the product
remainder \eqref{eq:remainder} to
$\EP[\dr\,(I-\Pi_{H_J})\dmu]$, annihilating its component in the calibration
space --- this refinement is what a fixed calibration space genuinely buys,
though the proof of Theorem~\ref{thm:F-dr} itself uses only the
Cauchy--Schwarz bound}. \corr{Consistency under branch (B) of
Theorem~\ref{thm:F-dr} then survives at fixed $J$}: unless the sieve already
identifies the Riesz representer, the calibrated weight converges
\corr{in probability} to some
$r_J\ne\rdag$, so that $\norm{\rhat-\rdag}{L^2(P)}$ need not vanish but tends
to \corr{$c_J\coloneq \norm{r_J-\rdag}{L^2(P)}>0$}; double robustness through branch (B) survives regardless,
the remainder being \corr{at most} $c_J\norm{\dmu}{L^2(P)}\to0$.
 
It is not enough for anything more. Branch (A) of Theorem~\ref{thm:F-dr},
asymptotic linearity with the canonical gradient, and efficiency all require
$\norm{\rhat-\rdag}{L^2(P)}\to0$, which by Corollary~\ref{cor:F-fixedJ} cannot
happen at fixed $J$ unless the sieve already contains the representer. This is
why Theorem~\ref{thm:E-rate} is stated with $J=J_N\to\infty$, and why the growth of the
calibration space --- not the vanishing of the regularisation --- is the
asymptotic mechanism of the whole construction.
\end{remark}

\section{Competing Estimators}
\label{app:estimators}

This appendix collects the estimators compared in Sections~\ref{sec:sims}
and~\ref{sec:realdata}. All target the same estimand
$\tau_Q = \E_Q[Y] - \E_{Q}[\mu_0(X^Q)]$ and differ only in which nuisances
they use and how the weight is constructed. Throughout,
$\bar{Y}_{\mathrm{Q}} = m^{-1}\sum_{j=1}^m Y_j^{\mathrm{Q}}$ is the observed
treated mean in the target sample; $\Pnz$ denotes the empirical measure of
the $n_0$ RCT controls, $\Pnz f = n_0^{-1}\sum_{i:T_i=0} f(O_i)$; $\muhat$
is the cross-fitted control-arm outcome regression; and each weighting
estimator uses a weight $\hat{r}$ evaluated at the control-arm covariates.
Because the trial is randomized, we adopt the arm-conditional convention
throughout, normalising control-arm sums by the realised arm size $n_0$;
with $e_0$ known this is equivalent to weighting by
$(1-T)/e_0(X)$ up to the realised-versus-expected arm-size ratio, and it is
the convention implemented in the software.

\paragraph{Unadjusted (\textsc{Naive}).} The raw difference between the
target treated mean and the trial control mean,
\begin{equation}
\hat\tau_{\textsc{n}}
= \bar{Y}_{\mathrm{Q}} - \Pnz Y .
\label{eq:app-naive}
\end{equation}
It estimates $\E_Q[Y(1)] - \E_P[Y(0)]$, which differs from $\tau_Q$ by
$\E_{\PX}[(\rdag - 1)\,\mu_0]$: the covariate-shift gap. It is consistent
only when $\QX = \PX$.

\paragraph{G-computation (\textsc{G-comp}).} Outcome modelling alone,
\begin{equation}
\hat\tau_{\textsc{g}}
= \frac{1}{m}\sum_{j=1}^m \bigl\{Y_j^{\mathrm{Q}} - \muhat(X^Q_j)\bigr\},
\label{eq:app-gcomp}
\end{equation}
the plug-in version of the identification formula \eqref{eq:identification}.
Consistent if and only if $\muhat$ is consistent for $\mu_0$; no weight
enters.

\paragraph{Sampling-score weighting (\textsc{IPW-PS}).} A logistic
membership model $\hat\pi_S$ is fitted to the pooled covariate sample (full
trial versus target), and the density-ratio weight is the scaled odds,
\begin{equation}
\hat{r}_{\mathrm{ps}}(x)
= \frac{\hat\pi_S(x)}{1 - \hat\pi_S(x)}\cdot\frac{n}{m},
\qquad
\hat\tau_{\textsc{ips}}
= \bar{Y}_{\mathrm{Q}} - \Pnz\bigl[\hat{r}_{\mathrm{ps}}\,Y\bigr].
\label{eq:app-ipwps}
\end{equation}
Consistent when the sampling score is correctly specified, i.e.\ when
$\log\rdag$ lies in the span of the score model; under misspecification
$\hat r_{\mathrm{ps}}$ converges to a wrong limit and the bias is $O(1)$.

\paragraph{Transport weighting (\textsc{IPW-OT}).} The same
Horvitz--Thompson form with the calibrated transport weight of
Definition~\ref{def:rcot},
\begin{equation}
\hat\tau_{\textsc{iot}}
= \bar{Y}_{\mathrm{Q}} - \Pnz\bigl[\hat{r}\,Y\bigr].
\label{eq:app-ipwot}
\end{equation}
No outcome model enters; consistency requires only the weight leg.

\paragraph{Augmented sampling-score weighting (\textsc{AIPW-PS}).} The
one-step form of Definition~\ref{eq:drot} with
$\hat{r}_{\mathrm{ps}}$ in place of the calibrated weight,
\begin{equation}
\hat\tau_{\textsc{aps}}
= \frac{1}{m}\sum_{j=1}^m \bigl\{Y_j^{\mathrm{Q}} - \muhat(X^Q_j)\bigr\}
- \Pnz\bigl[\hat{r}_{\mathrm{ps}}\{Y - \muhat(X)\}\bigr].
\label{eq:app-aipwps}
\end{equation}
Doubly robust in the classical sense: consistent if the sampling score
\emph{or} the outcome model is correct. When both are misspecified the
product remainder of Proposition~\ref{prop:remainder} is $O(1)$ and no
amount of data removes the bias.

\paragraph{Proposed (\textsc{RiCOT}).} The one-step estimator of
Definition~\ref{eq:drot} with the Riesz-calibrated transport weight,
\begin{equation}
\hat\tau_{\textsc{drot}}
= \frac{1}{m}\sum_{j=1}^m \bigl\{Y_j^{\mathrm{Q}} - \muhat(X^Q_j)\bigr\}
- \Pnz\bigl[\hat{r}\{Y - \muhat(X)\}\bigr],
\label{eq:app-drot}
\end{equation}
whose properties are the subject of Sections~\ref{sec:riesz}
and~\ref{sec:implementation}.

Three implementation conventions make the comparison fair. First, the same
cross-fitted $\muhat$ enters every estimator that uses an outcome model, so
\textsc{AIPW-PS} versus \textsc{RiCOT} differ in exactly one ingredient: how
the weight is built. Second, both weightings are learned from the same
covariate data --- the full trial sample against the full target sample,
which randomization licenses since both trial arms share the covariate law
$\PX$ --- and evaluated on the control arm. The calibration constraints of
Definition~\ref{def:rcot} accordingly pin the moments of $\hat r$ over the
full trial sample; restricted to the control arm the weighted mean deviates
from one by sampling noise only, and self-normalised (H\'ajek) versions of
\eqref{eq:app-ipwps}--\eqref{eq:app-ipwot} put the two weightings on equal
footing when that deviation matters in small samples. Third, interval
estimates for the augmented pair use the plug-in variance
of Equation~\ref{eq:plugin} applied with the respective weight; in the
simulations coverage is reported for this pair, the pair for which the
theory under study makes a coverage claim.
\section{Rate-Scale Estimands for Count Outcomes}
\label{app:G}

Many clinical and pharmacoepidemiologic endpoints are recurrent event counts, such as hospitalizations, observed over follow-up times that differ \corr{across individuals}. Transporting such endpoints requires additional care because a population incidence rate is a ratio of population means rather than a mean of individual event-to-time ratios. In addition, person-time may itself depend on treatment.

We therefore consider two settings. We first treat the simpler case in which person-time is predetermined or can otherwise validly be incorporated with the baseline transport covariates. We then consider the more general setting in which person-time may depend on treatment, as can occur when follow-up is affected by death, treatment discontinuation, or another post-treatment event. In the latter case, person-time is itself a counterfactual quantity and must be transported together with the event count.

\subsection{Predetermined or treatment-independent person-time}
\label{sub:G-setup}

Each trial participant contributes $(X_i,T_i,t_i,Y_i)$, where $t_i>0$ is person-time and $Y_i\in\{0,1,2,\ldots\}$ is an event count. Each target participant contributes $(X_j^Q,t_j^Q,Y_j^Q)$. In this subsection, person-time is assumed to be predetermined with respect to treatment, or more generally to satisfy the conditions required for it to be included in the transport covariates.

Define the augmented covariate
\begin{equation}
W=(X,t),
\qquad
e_{0,W}(w)=P(T=0\mid W=w),
\qquad
A_W=\frac{1-T}{e_{0,W}(W)},
\label{eq:G-augmented}
\end{equation}
and let
\[
\mu_{0,W}(w)=\E_P(Y\mid W=w,T=0),
\qquad
r_W^\dagger=\frac{dQ_W}{dP_W}.
\]
When the same person-time applies under treatment and control, define
\begin{equation}
\lambda_a
=
\frac{\E_Q\{Y^Q(a)\}}{\E_Q(t)},
\qquad a\in\{0,1\},
\label{eq:G-rates}
\end{equation}
with
$\mathrm{IRR}=\lambda_1/\lambda_0$
and
$\mathrm{IRD}=\lambda_1-\lambda_0$.

\begin{assumption}[Identification on the rate scale]
\label{ass:G}
The following conditions hold.
\begin{enumerate}[label=\textup{(\roman*)},
                  leftmargin=2.6em,itemsep=2pt,topsep=3pt]

\item \corr{Treatment positivity holds given $W$:
$\underline e\le e_{0,W}(w)\le\overline e$ for $P_W$-almost every $w$. When
person-time is predetermined and randomization depends on $X$ alone --- the
design case --- $e_{0,W}=e_0$ and $A_W=A$, so this is nothing beyond the
mean-scale positivity. Overlap holds at the augmented covariate:
$Q_W\ll P_W$ with $\norm{r_W^\dagger}{\infty}\le\overline r_W<\infty$; where
the weighting branch of Proposition~\ref{prop:G-dr} is supported through the
rate theory of Appendix~\ref{app:E}, the two-sided bound
$0<\underline r_W\le r_W^\dagger$ of Assumption~\ref{ass:E-twosided} is required as well.}

\item The conditional control count regression transports across populations,
so that
\[
\E_Q\{Y^Q(0)\mid W=w\}
=
\E_P(Y\mid W=w,T=0)
=
\mu_{0,W}(w)
\]
for $Q_W$-almost every $w$.

\item \corr{Person-time is unaffected by treatment, $t(0)=t(1)=t$ almost
surely, and --- by the trial design --- treatment is randomized independently
of $\{Y(0),Y(1),t\}$ given $X$. This is what permits $t$ to enter the
transport covariate, and it makes the common denominator in
\eqref{eq:G-rates} well defined.}

\item \corr{$0<\E_Q(t)<\infty$, $\Var_Q(t)<\infty$, and the $W$-analogues
of the moment conditions in Assumption~\ref{ass:id-mom} hold; in particular
$\mu_{0,W}$ is bounded.}
\end{enumerate}
\end{assumption}

\corr{Consistency ($Y=TY(1)+(1-T)Y(0)$ in the trial and
$Y^{\mathrm Q}=Y^{\mathrm Q}(1)$ in the target) is inherited from
Assumption~\ref{ass:id}, which remains in force; it is used below when the treated
rate is identified from observed target means. As on the mean scale,
randomization itself is a property of the trial design and is not listed.}

Let $\widehat\mu_{0,W}$ denote the estimated control-count regression and let
$\widehat r_W$ denote the calibrated transport weight constructed with $W$ in
place of $X$. As in the mean-scale analysis, the transport weight is estimated
from the full trial and target covariate samples; treatment and outcome enter
separately through the augmentation term. The transported control-count mean is
estimated by
\begin{equation}
\widehat\theta_0
=
\E_m\!\left[\widehat\mu_{0,W}(X^Q,t^Q)\right]
+
\En\!\left[
A_W\widehat r_W(W)
\{Y-\widehat\mu_{0,W}(W)\}
\right].
\label{eq:G-theta0}
\end{equation}
The corresponding rate estimators are
\begin{equation}
\widehat\lambda_0
=
\frac{\widehat\theta_0}{\E_m(t^Q)},
\qquad
\widehat\lambda_1
=
\frac{\E_m(Y^Q)}{\E_m(t^Q)},
\label{eq:G-ratehat}
\end{equation}
with
$\widehat{\mathrm{IRR}}
=\widehat\lambda_1/\widehat\lambda_0$
and
$\widehat{\mathrm{IRD}}
=\widehat\lambda_1-\widehat\lambda_0$.

\corr{Because $b_{J,1}\equiv1$ (Assumption~\ref{ass:des-basis}), the empirical
Riesz calibration constraint gives}
\begin{equation}
\En\{\widehat r_W(W)\}=1
\label{eq:G-intercept}
\end{equation}
whenever the calibration program is feasible. The inverse-randomization factor
$A_W$ is part of the augmentation term and not of the transport-weight
normalization. \corr{In particular, if $\widehat\mu_{0,W}$ lies in the
calibration span $H_J$, the constraint gives
$\En[\widehat r_W\widehat\mu_{0,W}]=\E_m[\widehat\mu_{0,W}]$ exactly, so the
target- and trial-standardized versions of the regression term then
coincide; no exact normalization of $\En[A_W\widehat r_W]$ holds, and none
is needed --- the augmentation is centred by the inverse-randomization
identity, not by a normalization.}

\begin{proposition}[Double robustness on the rate scale]
\label{prop:G-dr}
Under Assumption~\ref{ass:G} and the same nuisance-estimation and empirical
process or cross-fitting conditions used for the mean-scale doubly robust
estimator, with $W$ replacing $X$,
$\widehat\lambda_0\xrightarrow{p}\lambda_0$
if either the calibrated transport weight is consistently estimated or the
control-count regression is consistently estimated. Consequently,
$\widehat{\mathrm{IRD}}\xrightarrow{p}\mathrm{IRD}$ and, provided
$\lambda_0>0$,
$\widehat{\mathrm{IRR}}\xrightarrow{p}\mathrm{IRR}$.
\end{proposition}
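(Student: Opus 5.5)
The plan is to reduce Proposition~\ref{prop:G-dr} to the mean-scale argument of Theorem~\ref{thm:F-dr}, run with the augmented covariate $W=(X,t)$ in place of $X$. The rate-scale structure then only enters through a continuous-mapping step at the end.

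\emph{Step 1: identify the population target of $\widehat\theta_0$.} Put $\theta_0\coloneq\EQ\{Y^Q(0)\}$. By Assumption~\ref{ass:G}(ii) and the tower property, $\theta_0=\EQ[\mu_{0,W}(W^Q)]$, where $W^Q=(X^Q,t^Q)$, exactly as in Proposition~\ref{prop:C-id}. Positivity at $W$ (Assumption~\ref{ass:G}(i)) makes the inverse-randomisation identity of Lemma~\ref{lem:A-invrand} available with $A_W$. Overlap $Q_W\ll P_W$ with bounded $r_W^\dagger$ gives the Riesz identity of Lemma~\ref{lem:A-riesz} at $W$.

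\emph{Step 2: exact product remainder for the control-count functional.} Define
\[
\Theta(\mu,r)\coloneq\EQ[\mu(W^Q)]+\EP\bigl[A_W\,r(W)\{Y-\mu(W)\}\bigr].
\]
Repeating the three steps of Proposition~\ref{prop:C-rem} verbatim (note the sign change relative to $\Psi$) gives
\[
\Theta(\mu,r)-\theta_0=-\EP\bigl[\{r-r_W^\dagger\}\{\mu-\mu_{0,W}\}\bigr].
\]
Cauchy--Schwarz then bounds $|\Theta(\widehat\mu_{0,W},\widehat r_W)-\theta_0|$ by the product of the two $L^2(P_W)$ nuisance errors. This product is $o_p(1)$ whenever one factor is $o_p(1)$ and the other is $O_p(1)$. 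On the weighting branch this uses Theorem~\ref{thm:E-rate} applied at $W$, which is where the two-sided bound in Assumption~\ref{ass:G}(i) enters.

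\emph{Step 3: stochastic displacement and the rates.} The hypothesis ``same empirical-process or cross-fitting conditions'' is read as supplying $\widehat\theta_0-\Theta(\widehat\mu_{0,W},\widehat r_W)=o_p(1)$ under the frozen-argument convention. This is the analogue of the displacement condition in Theorem~\ref{thm:F-dr}, obtained via Lemma~\ref{lem:F-rem}(i)--(ii) or by cross-fitting both nuisances. Together with Step 2 this gives $\widehat\theta_0\xrightarrow{p}\theta_0$. Separately, the law of large numbers gives $\E_m(t^Q)\to\EQ(t)\in(0,\infty)$ and $\E_m(Y^Q)\to\EQ(Y^Q)=\EQ\{Y^Q(1)\}$, the last equality by consistency. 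Assumption~\ref{ass:G}(iii) ensures $\EQ(t)$ is the common denominator for both $\lambda_0$ and $\lambda_1$. The continuous mapping theorem then yields $\widehat\lambda_0\to\lambda_0$ and $\widehat\lambda_1\to\lambda_1$, and hence $\widehat{\mathrm{IRD}}\to\mathrm{IRD}$. For the ratio, $\lambda_0>0$ keeps $x\mapsto\lambda_1/x$ continuous at $\lambda_0$, so $\widehat{\mathrm{IRR}}\to\mathrm{IRR}$.

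The main obstacle is Step 3's displacement term when $\widehat r_W$ is not cross-fitted. The maximal inequality of Lemma~\ref{lem:F-rem} needs a bounded residual and an envelope. Count outcomes $Y$ are not bounded a priori, so I would first try to invoke the cross-fitted route of Remark~\ref{rem:F-crossfit}, where only the conditional variance bound is needed. Failing that, I would use truncation combined with the $W$-moment conditions of Assumption~\ref{ass:G}(iv). A secondary check is that $t$ being a continuous covariate does not break the sieve and Sobolev conditions of Assumption~\ref{ass:design}. This is absorbed by assuming they hold for $W$ on a compact domain.
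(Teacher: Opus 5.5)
Your proposal is correct and follows essentially the same route as the paper: you derive the exact product remainder $-\E_P[\{r-r_W^\dagger\}\{\mu-\mu_{0,W}\}]$ at the augmented covariate from the inverse-randomisation and Riesz identities, then use the law of large numbers for $\E_m(t^Q)$ and $\E_m(Y^Q)$ and continuous mapping for $\widehat\lambda_0$, $\widehat{\mathrm{IRD}}$ and $\widehat{\mathrm{IRR}}$. The paper also does not prove the passage from $\Theta(\widehat\mu_{0,W},\widehat r_W)$ to $\widehat\theta_0$; it takes it as a hypothesis, via the empirical-process or cross-fitting conditions of Appendix~\ref{app:F}, so the bounded-residual obstacle you flag is covered by the statement's assumptions rather than resolved in the proof.
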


\begin{proof}
The denominator follows directly from the law of large numbers,
\[
\E_m(t^Q)\xrightarrow{p}\E_Q(t)\in(0,\infty).
\]

For the numerator, \eqref{eq:G-theta0} is the same augmented transported-mean
estimator used on the mean scale, with $W$ replacing $X$. For generic nuisance
functions $(\mu,r)$, its population version is
\[
\E_Q\{\mu(W)\}
+
\E_P\!\left[
A_Wr(W)\{Y-\mu(W)\}
\right].
\]
The inverse-randomization identity and the Riesz identity continue to hold
under Assumption~\ref{ass:G}. Hence
\begin{align}
&
\E_Q\{\mu(W)\}
+
\E_P\!\left[A_Wr(W)\{Y-\mu(W)\}\right]
-
\E_Q\{Y^Q(0)\}
\nonumber\\
&\qquad
=
-
\E_P\!\left[
\{r(W)-r_W^\dagger(W)\}
\{\mu(W)-\mu_{0,W}(W)\}
\right].
\label{eq:G-remainder}
\end{align}

The remainder is therefore a product of errors in the transport-weight and
outcome-regression components. Under the weighting branch it vanishes by
consistency of the calibrated transport weight, together with the stated
regularity conditions \corr{(the passage from the population identity to the
sample statement is the empirical-process content of Appendix~\ref{app:F}, applied at
$W$)}. Under the outcome-regression branch it vanishes by
consistency of $\widehat\mu_{0,W}$ and the corresponding boundedness or
integrability conditions. Thus
\[
\widehat\theta_0
\xrightarrow{p}
\E_Q\{Y^Q(0)\}.
\]
Combining numerator and denominator consistency gives
$\widehat\lambda_0\xrightarrow{p}\lambda_0$ by the continuous mapping theorem.
The treated rate is a ratio of two observed target means and is therefore
consistent for $\lambda_1$\corr{, using target-sample consistency
$Y^{\mathrm Q}=Y^{\mathrm Q}(1)$ from Assumption~\ref{ass:id}}. The results for the
incidence-rate difference and ratio follow by continuous mapping.
\end{proof}

\corr{\begin{remark}[What ``with $W$ replacing $X$'' involves]
\label{rem:G-transfer}
The hypothesis of Proposition~\ref{prop:G-dr} instantiates the mean-scale
conditions at the augmented covariate, and four things change materially.
The dimension grows: with $W\in\R^{d+1}$, the approximation exponent of
Assumption~\ref{ass:des-approx} becomes $J^{-s/(d+1)}$, the smoothness floor
$s>(d+1)/2$, and the balanced sieve
$J_N\asymp(N/\log N)^{(d+1)/(2s+d+1)}$. The design condition requires the
$t$-range to be a compact interval bounded away from zero. The quadratic
transport cost becomes unit-dependent --- $\norm{(x,t)-(x',t')}{2}^2$
changes with the units of person-time --- so $t$ must be standardized with
the other covariates by the prespecified scaling of Algorithm~\ref{alg:main} of the main text; the
population limit $r_W^\dagger$ is unit-invariant, but the finite-sample
weight and the diagnostics are not. Finally, when trial follow-up is nearly
deterministic --- administratively fixed follow-up being the leading case ---
the $t$-coordinate of the calibration design degenerates, the Gram condition
of Assumption~\ref{ass:des-basis} fails at $W$, and the calibration program becomes
infeasible unless the target person-time moments happen to match; this is
detected by the calibration residual of Section~\ref{sec:implementation} of the main text, and it is precisely the
situation in which the construction of
Section~\ref{sub:G-cautions}, which keeps $t$ out of the transport
covariate, is the more robust choice.
\end{remark}}

\subsection{Two cautions and treatment-dependent person-time}
\label{sub:G-cautions}

The first caution concerns the estimand itself. It may be tempting to apply
the transported-mean estimator directly to the individual rate $Y/t$, but
doing so generally targets a different population quantity.

\begin{proposition}[Unit-level rates target a different estimand]
\label{prop:G-unitrate}
\corr{Assume $0<\E_Q(t)<\infty$ and $\E_Q\{Y^Q(0)/t\}<\infty$ --- the latter
is not implied by Assumption~\ref{ass:G}, since person-time is not bounded
away from zero. Then}
\begin{equation}
\E_Q\!\left\{\frac{Y^Q(0)}{t}\right\}
=
\lambda_0
-
\frac{
\operatorname{Cov}_Q\{t,Y^Q(0)/t\}
}{
\E_Q(t)
}.
\label{eq:G-unitrate}
\end{equation}
Thus, the mean of individual rates equals the ratio-of-means incidence rate
only when person-time is uncorrelated with the individual event rate.
\end{proposition}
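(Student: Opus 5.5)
The identity is a covariance manipulation applied to the two random variables $t$ and $R\coloneq Y^Q(0)/t$ under $Q$. The starting point is the pointwise relation $Y^Q(0)=t\,R$, which holds because $t>0$ almost surely. Taking $Q$-expectations gives
\[
\E_Q\{Y^Q(0)\}=\E_Q(tR)=\E_Q(t)\,\E_Q(R)+\Cov_Q(t,R).
\]
Dividing by $\E_Q(t)\in(0,\infty)$ and recalling the definition \eqref{eq:G-rates} of $\lambda_0$ yields
\[
\lambda_0=\E_Q(R)+\Cov_Q(t,R)/\E_Q(t).
\]
Rearranging this gives exactly \eqref{eq:G-unitrate}. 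The closing sentence of the proposition then follows at once: the two quantities coincide if and only if $\Cov_Q(t,R)=0$.

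The only step that needs care is showing that every term is finite, so that the decomposition $\E(tR)=\E t\,\E R+\Cov(t,R)$ is legitimate rather than an $\infty-\infty$ expression. I would handle it as follows.
\begin{itemize}
\item[(i)] $\E_Q(R)<\infty$ is assumed, and $R\ge0$ because counts are nonnegative.
\item[(ii)] $\E_Q(t)<\infty$ is assumed.
\item[(iii)] $\E_Q(tR)=\E_Q\{Y^Q(0)\}$ is finite by Assumption~\ref{ass:G}(iv) together with the moment conditions inherited from Assumption~\ref{ass:id-mom}. These make $\lambda_0$ well defined, since, as in Proposition~\ref{prop:C-id}, $\E_Q\{Y^Q(0)\}=\E_Q\{\mu_{0,W}(W)\}$ with $\mu_{0,W}$ bounded.
\end{itemize}
I would then take $\Cov_Q(t,R)$ to mean $\E_Q(tR)-\E_Q(t)\E_Q(R)$. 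With (i)–(iii) in hand this is a difference of finite numbers, so no second-moment condition on $R$ is needed; the ordinary covariance definition agrees with it whenever that definition applies.

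The main obstacle is this integrability bookkeeping, not the algebra itself. The proposition explicitly notes that $t$ is not bounded away from zero, so $R$ can be heavy-tailed, and that is why the hypothesis $\E_Q\{Y^Q(0)/t\}<\infty$ is stated separately and must be invoked in (i).
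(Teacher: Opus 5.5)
Your proof is correct and follows the paper's argument exactly: set $R=Y^Q(0)/t$, expand $\E_Q(tR)=\E_Q(t)\E_Q(R)+\Cov_Q(t,R)$, divide by $\E_Q(t)$ and rearrange. Your integrability check is a sound addition that the paper leaves implicit: finiteness of $\E_Q\{Y^Q(0)\}$, from transportability and boundedness of $\mu_{0,W}$ under Assumption~\ref{ass:G}, is what keeps the covariance identity from being an $\infty-\infty$ expression.
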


\begin{proof}
Let $R=Y^Q(0)/t$, so that $Y^Q(0)=tR$. The covariance identity gives
\[
\E_Q\{Y^Q(0)\}
=
\E_Q(t)\E_Q(R)
+
\operatorname{Cov}_Q(t,R).
\]
Dividing by $\E_Q(t)$ and rearranging yields
\eqref{eq:G-unitrate}.
\end{proof}

\corr{The proposition compares two population functionals. That the
transported-mean estimator applied to the individual rate targets the
left-hand side of \eqref{eq:G-unitrate} follows from
Assumption~\ref{ass:G}(ii) precisely because $t$ is a coordinate of $W$:
conditioning on $W=(x,t)$ fixes $t$, so
$\E_Q\{Y^Q(0)/t\mid W\}=\mu_{0,W}(W)/t$ transports whenever $\mu_{0,W}$
does. The covariance in \eqref{eq:G-unitrate} involves the counterfactual
joint law under $Q$ and is displayed for interpretation, not as an estimable
correction.}

The second caution concerns where the outcome-regression standardization is
performed. For double robustness, the regression component must be averaged
over the target population. Replacing this target standardization by a
weighted trial average causes the augmentation term to cancel algebraically.

\begin{proposition}[Trial-side standardization collapses the augmentation]
\label{prop:G-collapse}
Suppose the target standardization in \eqref{eq:G-theta0} is replaced by its
trial-weighted analogue,
\[
\widehat\theta_0^{\,\mathrm{trial}}
=
\En\!\left[
A_W\widehat r_W(W)\widehat\mu_{0,W}(W)
\right]
+
\En\!\left[
A_W\widehat r_W(W)
\{Y-\widehat\mu_{0,W}(W)\}
\right].
\]
Then
\[
\widehat\theta_0^{\,\mathrm{trial}}
=
\En\!\left[A_W\widehat r_W(W)Y\right]
\]
identically. The resulting estimator therefore has no
outcome-regression robustness branch.
\end{proposition}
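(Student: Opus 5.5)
The identity is purely algebraic, so the plan is to expand the second empirical average by linearity of $\En$ and cancel term by term; no probabilistic input, rate condition, or earlier lemma should be needed. Write $\omega_i\coloneq A_{W,i}\widehat r_W(W_i)$ for the composite trial weight, which is a fixed number for each $i$ once the fitted nuisances are in hand. The two summands of $\widehat\theta_0^{\,\mathrm{trial}}$ are then $n^{-1}\sum_i\omega_i\widehat\mu_{0,W}(W_i)$ and $n^{-1}\sum_i\omega_i\{Y_i-\widehat\mu_{0,W}(W_i)\}$. Distributing the second sum gives $n^{-1}\sum_i\omega_iY_i-n^{-1}\sum_i\omega_i\widehat\mu_{0,W}(W_i)$, and the regression term cancels exactly against the first summand. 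What remains is $\En[A_W\widehat r_W(W)Y]$, which is the displayed identity.

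The one point to check carefully is that the two occurrences of $\widehat\mu_{0,W}$ are evaluated at the same points with the same fitted function. Under the foldwise convention of Section~\ref{sub:A-notation}, $\widehat\mu_{0,W}(W_i)$ means $\widehat\mu_{0,W}^{(-k)}(W_i)$ for $i\in\calI_k$ in both places. Both terms are indexed by the same $i$ and carry the same multiplier $\omega_i$, so the cancellation holds observation by observation. It therefore holds for any fitted regression whatsoever, cross-fitted or not, and whether or not the calibration program is feasible.

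For the interpretive sentence, the plan is to observe that the right-hand side $\En[A_W\widehat r_W(W)Y]$ contains no outcome regression at all. It is the pure inverse-weighted (IPW-OT-type) transported control mean. Its probability limit is therefore $\EP[A_Wr_\infty Y]=\EP[r_\infty\mu_{0,W}]$ by Lemma~\ref{lem:A-invrand}, where $r_\infty$ is the limit of $\widehat r_W$. This equals $\EQ\{Y^Q(0)\}$ only if $r_\infty=r_W^\dagger$, because the difference is $\EP[(r_\infty-r_W^\dagger)\mu_{0,W}]$, which need not vanish. Consequently, consistency of $\widehat\mu_{0,W}$ alone cannot rescue the estimator, and the outcome-regression branch of Proposition~\ref{prop:G-dr} is lost.

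There is no genuine obstacle here. The only care needed is to keep the same $\widehat\mu_{0,W}$ in both terms and to say explicitly that target-side standardization in \eqref{eq:G-theta0} is what prevents this cancellation: there the regression term is averaged under $\E_m$, not under the same $\omega_i$.
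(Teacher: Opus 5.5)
Your proposal is correct and matches the paper's proof: expand the augmentation term by linearity so that the two $\En[A_W\widehat r_W\widehat\mu_{0,W}]$ averages cancel exactly, leaving $\En[A_W\widehat r_W(W)Y]$, which no longer involves the outcome regression. Your population-limit step via Lemma~\ref{lem:A-invrand} goes slightly further than the paper's observation that the estimator is free of $\widehat\mu_{0,W}$, since it shows the bias $\EP[(r_\infty-r_W^\dagger)\mu_{0,W}]$ survives unless the weight itself is consistent.
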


\begin{proof}
Expanding the augmentation term gives
\begin{align*}
\widehat\theta_0^{\,\mathrm{trial}}
&=
\En[A_W\widehat r_W\widehat\mu_{0,W}]
+
\En[A_W\widehat r_WY]
-
\En[A_W\widehat r_W\widehat\mu_{0,W}]\\
&=
\En[A_W\widehat r_WY].
\end{align*}
The cancellation is exact. Since the resulting estimator no longer depends
on $\widehat\mu_{0,W}$, correctness of the outcome regression cannot compensate
for a misspecified transport weight. This is why the standardization term in
\eqref{eq:G-theta0} must be evaluated under the target empirical distribution.
\end{proof}

\corr{The exact cancellation is special to the $A_W$-weighted trial average.
The variant without $A_W$, which replaces the standardization term by
$\En[\widehat r_W\widehat\mu_{0,W}]$, does not collapse algebraically ---
indeed if $\widehat\mu_{0,W}\in H_J$ it coincides with $\widehat\theta_0$
exactly, by the calibration constraint. But its \emph{population} target
does collapse: by the inverse-randomization identity,
$\E_P[r\mu]+\E_P[A_Wr\{Y-\mu\}]=\E_P[r\,\mu_{0,W}]$ for every $(\mu,r)$,
which is free of $\mu$, so the outcome-regression branch is destroyed by
\emph{any} trial-side standardization: the error
$\E_P[(r-r_W^\dagger)\mu_{0,W}]$ then vanishes only if the weight is
consistent. This population argument is what underlies the statement in
Section~\ref{sec:rate} of the main text that moving the standardization to the weighted
trial population eliminates the outcome-regression robustness branch.}

When observed person-time can depend on treatment, it cannot be included in the baseline transport covariates. This situation is particularly relevant for recurrent-event endpoints when follow-up may be affected by death, treatment discontinuation, or other post-treatment events. The counterfactual control event count and counterfactual control person-time must then both be transported from the randomized trial.

\begin{assumption}[Identification with treatment-dependent person-time]
\label{ass:G-time}
The following conditions hold.
\begin{enumerate}[label=\textup{(\roman*)},
                  leftmargin=2.6em,itemsep=2pt,topsep=3pt]

\item \corr{Treatment positivity and trial-to-target overlap hold for the
baseline covariates $X$ as in Assumption~\ref{ass:id}; by the trial design, treatment
is randomized independently of the joint potential outcomes
$\{Y(0),Y(1),t(0),t(1)\}$ given $X$; and person-time obeys consistency,
$t=T\,t(1)+(1-T)\,t(0)$.}

\item The conditional control event-count mean transports across populations,
so that
\[
\E_Q\{Y^Q(0)\mid X=x\}
=
\E_P(Y\mid X=x,T=0)
=
\mu_0(x)
\]
for $Q_X$-almost every $x$.

\item The conditional control person-time mean also transports,
\[
\E_Q\{t(0)\mid X=x\}
=
\E_P(t\mid X=x,T=0)
=
\nu_0(x)
\]
for $Q_X$-almost every $x$.

\item \corr{The relevant count and person-time moments are finite, and
$0<\E_Q\{t(0)\}<\infty$, $0<\E_Q\{t(1)\}<\infty$.}
\end{enumerate}
\end{assumption}

Let
\[
A=\frac{1-T}{P(T=0\mid X)}
\]
and let $\widehat r$ be the calibrated target-to-trial transport weight based
only on the baseline covariates. Let $\widehat\mu_0$ estimate the conditional
control event-count mean and let $\widehat\nu_0$ estimate the conditional
control person-time mean. Define
\begin{align}
\widehat\theta_Y
&=
\E_m\{\widehat\mu_0(X^Q)\}
+
\En\!\left[
A\widehat r(X)
\{Y-\widehat\mu_0(X)\}
\right],
\label{eq:G-thetaX}
\\
\widehat\theta_t
&=
\E_m\{\widehat\nu_0(X^Q)\}
+
\En\!\left[
A\widehat r(X)
\{t-\widehat\nu_0(X)\}
\right].
\label{eq:G-thetat}
\end{align}
The transported counterfactual control rate is
\begin{equation}
\widehat\lambda_0^{\,\mathrm{tt}}
=
\frac{\widehat\theta_Y}{\widehat\theta_t}.
\label{eq:G-rate-tt}
\end{equation}
The treated rate remains directly estimable from the target sample as
\[
\widehat\lambda_1^{\,\mathrm{tt}}
=
\frac{\E_m(Y^Q)}{\E_m(t^Q)}.
\]
\corr{The corresponding estimands are the arm-specific rates
$\lambda_a^{\mathrm{tt}}=\E_Q\{Y^Q(a)\}/\E_Q\{t(a)\}$, with
$\mathrm{IRR}^{\mathrm{tt}}=\lambda_1^{\mathrm{tt}}/\lambda_0^{\mathrm{tt}}$
and
$\mathrm{IRD}^{\mathrm{tt}}=\lambda_1^{\mathrm{tt}}-\lambda_0^{\mathrm{tt}}$;
the superscript records that each arm now carries its own person-time
denominator, so these are contrasts of rates rather than of counts over a
common exposure, and no cancellation of person-time is available in the
ratio (contrast Remark~\ref{rem:G-irr} below).}

\begin{proposition}[Rate with treatment-dependent person-time]
\label{prop:G-time}
Under Assumption~\ref{ass:G-time} and the corresponding regularity conditions
for the two augmented transported-mean estimators,
\[
\widehat\lambda_0^{\,\mathrm{tt}}
\xrightarrow{p}
\frac{\E_Q\{Y^Q(0)\}}{\E_Q\{t(0)\}}.
\]
In particular, consistency holds if either the calibrated transport weight is
consistent, or both the control event-count regression and the control
person-time regression are consistent. Consequently, the corresponding
incidence-rate difference and incidence-rate ratio are consistent whenever
their denominators are positive.
\end{proposition}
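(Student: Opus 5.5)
The plan is to treat numerator and denominator of $\widehat\lambda_0^{\,\mathrm{tt}}$ separately. Each of $\widehat\theta_Y$ and $\widehat\theta_t$ is an instance of the mean-scale augmented transported-mean estimator with baseline covariate $X$. We then combine the two limits by the continuous mapping theorem, using $\E_Q\{t(0)\}>0$ from Assumption~\ref{ass:G-time}(iv). The treated rate needs only the law of large numbers and target-sample consistency.

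\emph{Step 1 (population identities).} For generic $(\mu,r)$, define
\[
\Theta_Y(\mu,r)\coloneq\E_Q\{\mu(X)\}+\EP[A\,r(X)\{Y-\mu(X)\}],
\]
and $\Theta_t(\nu,r)$ analogously with $t$ in place of $Y$. By \eqref{eq:invrand2} of Lemma~\ref{lem:A-invrand}, applied once with outcome $Y$ and once with outcome $t$, the trial terms equal $\EP[r(\mu_0-\mu)]$ and $\EP[r(\nu_0-\nu)]$. The identity is purely about $(X,T,\cdot)$ and the definitions of $\mu_0,\nu_0$ as control-arm regressions, so it applies verbatim to $t$. The Riesz identity of Lemma~\ref{lem:A-riesz} converts $\E_Q\{\mu_0-\mu\}$ into $\EP[\rdag(\mu_0-\mu)]$. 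Exactly as in Proposition~\ref{prop:C-rem}, this gives
\[
\Theta_Y(\mu,r)-\E_Q\{\mu_0(X)\}=-\EP[(r-\rdag)(\mu-\mu_0)],
\]
and the same product form for $\Theta_t(\nu,r)-\E_Q\{\nu_0(X)\}$. Assumption~\ref{ass:G-time}(ii)--(iii) and the tower property identify $\E_Q\{\mu_0(X)\}=\E_Q\{Y^Q(0)\}$ and $\E_Q\{\nu_0(X)\}=\E_Q\{t(0)\}$, as in Proposition~\ref{prop:C-id}.

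\emph{Step 2 (sample to population).} We need $\widehat\theta_Y-\Theta_Y(\widehat\mu_0,\widehat r)=o_p(1)$, and likewise for $\widehat\theta_t$. This is the stochastic-equicontinuity hypothesis of Theorem~\ref{thm:F-dr}, supplied by the ``corresponding regularity conditions''. It is proved by the decomposition of Theorem~\ref{thm:F-decomp} together with the arguments for $R_{1n},R_{2n}$ in Lemma~\ref{lem:F-rem}, or more simply by cross-fitting, now with outcome $t$ as well. Then Cauchy--Schwarz bounds each remainder by $\norm{\widehat r-\rdag}{L^2(P)}$ times the corresponding regression error.

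Consistency of the weight gives both limits simultaneously, provided both regression errors are $O_p(1)$. This is where the single shared weight matters. Otherwise both $\widehat\mu_0$ and $\widehat\nu_0$ must be consistent, with $\widehat r$ bounded in $L^2(P)$. That yields the two branches of the statement, and explains why in the regression branch both regressions are required.

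\emph{Step 3 (ratio and contrasts).} Since $\widehat\theta_t\to\E_Q\{t(0)\}>0$, the map $(a,b)\mapsto a/b$ is continuous there, so $\widehat\lambda_0^{\,\mathrm{tt}}\to\lambda_0^{\mathrm{tt}}$. The treated rate converges by the law of large numbers, using $Y^Q=Y^Q(1)$ and $t^Q=t(1)$ in the target sample with $\E_Q\{t(1)\}>0$. The IRD and, when $\lambda_0^{\mathrm{tt}}>0$, the IRR then follow by continuous mapping.

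The main obstacle is Step 2 for the person-time component. The bounded-residual condition \eqref{eq:F-bddres} and the moment bounds of Assumption~\ref{ass:id-mom} were stated for $Y$, and person-time need not be bounded. I would first try to invoke the cross-fitted variant of Remark~\ref{rem:F-crossfit}, which needs only a conditional variance bound for $t-\nu_0(X)$ given $\{X,T=0\}$. I would read that bound into the ``corresponding regularity conditions'' of the statement.
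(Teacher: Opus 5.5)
Your proposal is correct and follows the paper's route: apply the mean-scale double-robustness argument (the exact product remainder together with the stochastic-equicontinuity condition of Theorem~\ref{thm:F-dr}) separately to $\widehat\theta_Y$ and $\widehat\theta_t$, use the shared weight for the weighting branch, and finish by continuous mapping with $\E_Q\{t(0)\}>0$. One small correction: the paper notes that joint consistency of $\widehat\mu_0$ and $\widehat\nu_0$ is sufficient but not necessary in the regression branch, since the remainders can also vanish through orthogonality, so your ``must'' in Step~2 claims more than the proposition requires.
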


\begin{proof}
Both \eqref{eq:G-thetaX} and \eqref{eq:G-thetat} are augmented
transported-mean estimators based on the baseline covariates $X$. Applying the
mean-scale double-robustness argument to the event count gives $\widehat\theta_Y
\xrightarrow{p}
\E_Q\{Y^Q(0)\}$ whenever either $\widehat r$ or $\widehat\mu_0$ is consistent. Applying the
same argument to person-time gives
$\widehat\theta_t
\xrightarrow{p}
\E_Q\{t(0)\}$
whenever either $\widehat r$ or $\widehat\nu_0$ is consistent.

Because the same transport weight enters both components, both limits hold if
$\widehat r$ is consistent. \corr{If the weighting model is not consistent,
it suffices that both $\widehat\mu_0$ and $\widehat\nu_0$ be consistent; the
exact remainders can also vanish through orthogonality, as in
\eqref{eq:G-remainder}, so joint consistency is sufficient rather than
necessary.} Since
$\corr{\E_Q}\{t(0)\}>0$, the continuous mapping theorem gives
\[
\widehat\lambda_0^{\,\mathrm{tt}}
\xrightarrow{p}
\frac{\E_Q\{Y^Q(0)\}}{\E_Q\{t(0)\}}.
\]
The treated rate is a ratio of observed target means, and consistency of the
rate difference and rate ratio follows by continuous mapping.
\end{proof}

\subsection{Influence-function inference on the rate scale}
\label{sub:G-if}

For the predetermined-person-time case, let
\[
M_0=\E_Q\{Y^Q(0)\},
\qquad
M_t=\E_Q(t),
\]
\corr{and record the efficient influence-function pair for the transported
control event-count mean $M_0$, obtained from Theorem~\ref{thm:C-eif} by dropping the
observed treated-arm term and reversing the sign of the control component:
\[
\varphi_Q^{M_0}(O^{\mathrm Q})=\mu_{0,W}(W^Q)-M_0,
\qquad
\varphi_P^{M_0}(O^{\mathrm R})=A_W\,r_W^\dagger(W)\{Y-\mu_{0,W}(W)\},
\]
with $W^Q=(X^Q,t^Q)$.} The target-sample influence function for $M_t$ is
$t-M_t$, with no trial-sample component. Assuming \corr{$M_0>0$ and $M_t>0$
(the latter is Assumption~\ref{ass:G}(iv))}, the delta method
applied to
$\log\lambda_0=\log M_0-\log M_t$
gives
\begin{equation}
\varphi_Q^{\log\lambda_0}
=
\frac{\varphi_Q^{M_0}}{M_0}
-
\frac{t-M_t}{M_t},
\qquad
\varphi_P^{\log\lambda_0}
=
\frac{\varphi_P^{M_0}}{M_0}.
\label{eq:G-logIF}
\end{equation}
{ The corresponding asymptotic variance keeps the target-side covariance ---
the two target components are functions of the same observation:
\[
V_{\log\lambda_0}
=\Var_Q\Bigl[\frac{\varphi_Q^{M_0}}{M_0}-\frac{t-M_t}{M_t}\Bigr]
+\eta^{-1}\,\frac{\Var_P\bigl[\varphi_P^{M_0}\bigr]}{M_0^{2}},
\]
whose first term expands with the cross term
$-2\,\Cov_Q\{\mu_{0,W}(W^Q),\,t^Q\}/(M_0M_t)$ --- generically nonzero
precisely in the settings Proposition~\ref{prop:G-unitrate} warns about. A
plug-in estimator follows by replacing nuisance functions and population
moments by their estimates.}

Influence functions for the incidence-rate difference follow by
\corr{pointwise subtraction of} the treated- and control-rate influence
functions \corr{within each sample; the two rates share the target sample and
the denominator $\E_m(t^Q)$, so their sampling errors are dependent and
variances may not simply be added}. For the incidence-rate ratio,
it is convenient to work on the log scale and subtract the corresponding
log-rate influence functions. Confidence intervals can then be
back-transformed to the original scale.

{ \begin{remark}[The rate ratio does not involve person-time]
\label{rem:G-irr}
When person-time is predetermined, $\E_Q(t)$ cancels exactly in the ratio:
$\mathrm{IRR}=\lambda_1/\lambda_0
 =\E_Q\{Y^Q(1)\}/\E_Q\{Y^Q(0)\}=M_1/M_0$ with $M_1=\E_Q[Y^{\mathrm Q}]$.
Consequently $\log\mathrm{IRR}=\log M_1-\log M_0$ has influence functions
\[
\varphi_Q^{\log\mathrm{IRR}}
=\frac{Y^{\mathrm Q}-M_1}{M_1}-\frac{\varphi_Q^{M_0}}{M_0},
\qquad
\varphi_P^{\log\mathrm{IRR}}=-\frac{\varphi_P^{M_0}}{M_0},
\]
the $(t-M_t)/M_t$ terms cancelling identically: the incidence-rate ratio
requires no person-time moments at all, and only the incidence-rate
difference genuinely involves $M_t$. This sharpens the contrast with the
treatment-dependent case, where the two arms carry different denominators and
nothing cancels.
\end{remark}

For treatment-dependent person-time, write $\theta_Y=\E_Q\{Y^Q(0)\}$ and
$\theta_t=\E_Q\{t(0)\}$, with influence pairs
\[
\varphi_Q^{Y}=\mu_0(X^Q)-\theta_Y,\quad
\varphi_P^{Y}=A\,\rdag(X)\{Y-\mu_0(X)\},
\quad
\varphi_Q^{t}=\nu_0(X^Q)-\theta_t,\quad
\varphi_P^{t}=A\,\rdag(X)\{t-\nu_0(X)\},
\]
obtained exactly as for $M_0$. The delta method applied to
$\log\lambda_0^{\mathrm{tt}}=\log\theta_Y-\log\theta_t$ gives, in each
channel, $\varphi^{\log}=\varphi^{Y}/\theta_Y-\varphi^{t}/\theta_t$, and
\[
V_{\log\lambda_0^{\mathrm{tt}}}
=\Var_Q\Bigl[\frac{\varphi_Q^{Y}}{\theta_Y}
              -\frac{\varphi_Q^{t}}{\theta_t}\Bigr]
+\eta^{-1}\,
\Var_P\Bigl[\frac{\varphi_P^{Y}}{\theta_Y}
             -\frac{\varphi_P^{t}}{\theta_t}\Bigr].
\]
Both channels carry covariances that must not be dropped: on the trial side
$\Cov_P\{\varphi_P^{Y},\varphi_P^{t}\}
 =\E_P\bigl[A^2\rdag(X)^2\{Y-\mu_0(X)\}\{t-\nu_0(X)\}\bigr]$ is generically
large and positive --- longer follow-up accompanies more events --- and on
the target side $\mu_0(X^Q)$ and $\nu_0(X^Q)$ are correlated through the
shared covariate. The requisite moments and the positivity of $\theta_t$ and
of $\E_Q\{t(1)\}$ are Assumption~\ref{ass:G-time}(iv).}

\begin{figure}[H]
\centering
\includegraphics[width=0.8\textwidth]{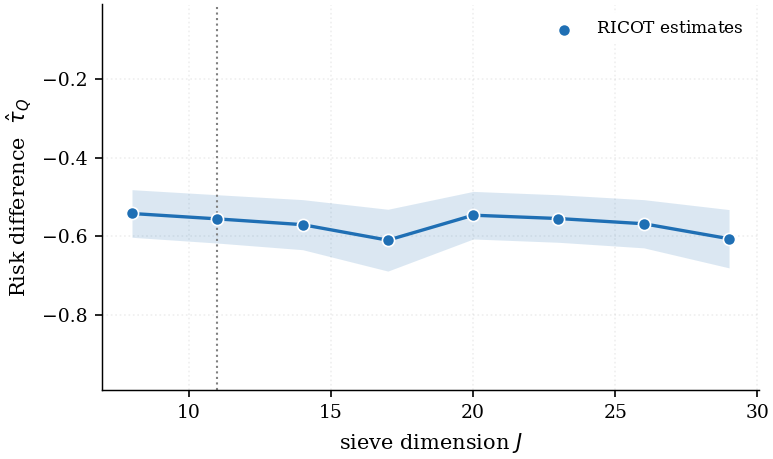}
\caption{Sensitivity of the estimated risk difference $\hat{\tau}_Q$ to the
sieve dimension $J$. Points show the point estimate at each value of $J$ for
which the calibration problem was solved, and the shaded band the associated
95\% confidence interval; the dotted vertical line marks the value used in the
main analysis. The estimate varies only marginally over the range considered,
remaining well within the confidence bands throughout.}
\label{fig:sieve-stable-real}
\end{figure}

\end{document}